\newtheorem{thm}{Theorem}
\crefname{theorem}{Theorem}{Theorems}
\Crefname{theorem}{Theorem}{Theorems}
\crefname{corollary}{Corollary}{Corollaries}
\Crefname{corollary}{Corollary}{Corollaries}
\Crefname{chapter}{Chapter}{Chapters}
\crefname{chapter}{Chapter}{Chapters}
\newtheorem{ex}{Example}
\crefname{example}{example}{examples}
\Crefname{example}{Example}{Examples}
\newtheorem{lem}{Lemma}
\crefname{lemma}{Lemma}{lemmas}
\Crefname{Lemma}{Lemma}{Lemmas}
\newtheorem{prop}{Proposition}
\crefname{proposition}{Proposition}{Propositions}
\Crefname{Proposition}{Proposition}{Propositions}
\newtheorem{coro}{Corollary}
\crefname{corollary}{corollary}{corollaries}
\Crefname{Corollary}{Corollary}{Corollaries}
\newtheorem{defi}{Definition}
\crefname{definition}{definition}{definitions}
\Crefname{Definition}{Definition}{Definitions}
\newtheorem{rem}{Remark}
\crefname{remark}{remark}{remarks}
\Crefname{remark}{Remark}{Remarks}
\newenvironment{enumerateList}{ \begin{enumerate}[label=(\roman*),wide=0pt, labelindent=\parindent]}{\end{enumerate}}
\newtheoremstyle{styledef}
  {6pt}% space above
  {6pt}% space below
  {\sffamily}%body font
  {0em}%indent amount
  {\bfseries}%Theorem head
  {}%punctuation
  {1.5em}%space after theorem head
  {}
\theoremstyle{styledef}
\newenvironment{hyp}[1]{%\begin{sf}\refstepcounter{hyp#1}
\begin{enumerate}[label=\textbf{\sf(#1\arabic*)},resume=hyp#1]\begin{sf}}
{\end{sf}\end{enumerate}}%\end{sf}}
\crefname{hyp}{}{ass}
\Crefname{hyp}{}{Ass}
\normalfont\fontsize{12}{15}\bfseries}{\thesection}{1em}{}
\normalfont\fontsize{10}{15}\bfseries}{\thesubsection}{1em}{}
\title{Monotonic Alpha-divergence Minimisation for Variational Inference}
\author{
  Kam\'elia~Daudel \\
1: LTCI, Télécom Paris \\
Institut Polytechnique de Paris, France \\
2: Department of Statistics \\
University of Oxford, United Kingdom \\
\texttt{kamelia.daudel@stats.ox.ac.uk} \\
 \And
Randal~Douc \\
SAMOVAR, Télécom SudParis \\
Institut Polytechnique de Paris, France \\
\texttt{randal.douc@telecom-sudparis.eu} \\
\AND
François Roueff \\
LTCI, Télécom Paris \\
Institut Polytechnique de Paris, France \\
\texttt{francois.roueff@telecom-paris.fr} \\
}
\renewenvironment{leftbar}[2][\hsize]
{
    
    \MakeFramed{\hsize#1\advance\hsize-\width\FrameRestore}
}
{\endMakeFramed}
\newcommand{\kdtxt}[1]{{\sffamily {\color{cyan}{#1}}}}
\newcommand{\frtxt}[1]{{ {\color{red}{#1}}}}%\sffamily
\newcommand{\careful}[1]{\begin{leftbar}{yellow} {\sffamily {\color{red}{\textbf{Check that this comment has been addressed:}}} #1 {\color{red}{\textbf{(end)}}}}\end{leftbar}}
\renewcommand{\frtxt}[1]{{{#1}}}
\renewcommand{\kdtxt}[1]{{{#1}}}
\renewcommand{\careful}[1]{}
\newcommandx{\admiss}[1][1=f]{\mathsf{A}_{#1}}
\newcommandx{\amu}[1][1=\mu_n]{a_{#1}}
\newcommandx{\hamu}[1][1=\mu_n]{\hat a_{#1}}
\newcommand{\argmin}{\mathop{\mathrm{argmin}}}
\newcommand{\argmax}{\mathop{\mathrm{argmax}}}
\newcommand{\as}{\mathrm{a.s.}}
\newcommandx{\aux}[3][1=\mu, 2=\cte, 3=\alpha]{h_{#1, #2}^{(\alpha)}}
\newcommandx{\binfty}[1][1=\alpha]{|b|_{\infty, #1}}
\newcommandx{\balpha}[1][1=\alpha]{|b|_{#1}}
\newcommandx{\blbd}[2][1=(\lbd{}), 2=j]{ {b_{#2, \alpha}{#1}}}
\newcommandx{\bmuf}[2][1=\mu, 2=\alpha]{ {b_{#1, #2}}}
\newcommandx{\bmufk}[2][1=\mu, 2=\alpha]{\hat{b}_{#1, #2, M}}
\newcommandx{\couple}[2][1=\PQ, 2=\PP]{(#1 || #2)}
\newcommandx{\argcond}[2]{\lrb{\left.#1 \,\right|\, #2}}
\newcommand{\Cov}{\mathbb{C}\mathrm{ov}}
\newcommand{\cte}{\kappa}
\newcommandx{\cteaux}[1][1=\alpha]{L_{#1, 4}}
\newcommandx{\cteinf}[1][1=\alpha]{L_{#1, 2}}
\newcommandx{\cteLipSto}[2][1=N, 2=M]{|b|_{\hmu_{#1}, #2, \alpha}}
\newcommandx{\cteLipStoInf}[1][1=\alpha]{B_{#1}}
\newcommandx{\ctemono}[1][1=\alpha]{L_{#1, 1}}
\newcommandx{\ctestar}[1][1=\alpha]{L_{#1, 5}}
\newcommandx{\ctesup}[1][1=\alpha]{L_{#1, 3}}
\newcommand{\data}{\mathscr{D}}
\newcommandx{\diverg}[1][1=\alpha]{D_{#1}}
\newcommandx{\divergR}[1][1=\alpha]{D^{(\mathrm{AR})}_{#1}}
\newcommandx{\Domain}[1][1=\alpha]{\Delta_{#1}}
\newcommand{\eqdef}{:=}
\newcommand{\eqsp}{\;}
\newcommandx{\falpha}[1][1=\alpha]{f_{#1}}
\newcommand{\frho}{f_{\alpha, \varrho}}
\newcommandx{\aei}[1][1=\alpha]{$(#1, \Gamma)$-}
\newcommandx{\GammaAlpha}[1][1=\alpha]{ \Gamma}
\renewcommand{\geq}{\geqslant}
\newcommandx{\gmuf}[1][1=\mu]{ {g_{#1}}}
\newcommand{\hmu}{\hat{\mu}^{M}}
\newcommandx{\iteration}[1][1=\alpha]{\mathcal{I}_{#1}}
\newcommandx{\iterationK}[1][1=\alpha]{\hat{\mathcal{I}}_{#1, M}}
\newcommandx{\lbd}[2][1=]{
    \ifthenelse{\equal{#1}{}}
    {{\boldsymbol{\lambda}_{#2}}}
    {\lambda_{#1,#2}}
    }
\newcommandx{\lbdp}[2][1=]{
    \ifthenelse{\equal{#1}{}}
    {{\boldsymbol{\lambda'}_{#2}}}
    {\lambda_{#1,#2}}
    }
\renewcommand{\leq}{\leqslant}
\newcommand{\lrav}[1]{\left|#1 \right|}
\newcommand{\lr}[1]{\left(#1 \right)}
\newcommand{\lrb}[1]{\left[#1 \right]}
\newcommand{\lrc}[1]{\left\{#1 \right\}}
\newcommand{\lrcb}[1]{\left\{#1 \right\}}
\newcommand{\meas}[1]{\mathrm{M}_{#1}}
\newcommandx{\mixture}[2][1=(y), 2=\lbd{}]{q_{#2, \Theta}#1}
\newcommandx{\norm}[2][1=\infty]{\|#2\|_{#1}}
\newcommand{\nset}{\mathbb N}
\newcommand{\nstar}{\mathbb{N}^\star}
\newcommand{\PE}{\mathbb E}
\newcommandx{\posterior}[1][1=y]{p(#1|\mathscr{D})}
\newcommand{\PP}{\mathbb P}
\newcommand{\PQ}{\mathbb Q}
\newcommandx{\Psif}[1][1=\alpha]{\Psi_{#1}}
\newcommandx{\PsifAR}[1][1=\alpha]{\Psi_{#1}^{(AR)}}
\newcommandx{\ratio}[2][1=\alpha, 2=n]{\varphi_{#2}^{(#1)}}
\newcommandx{\normratio}[2][1=\alpha, 2=n]{\check{\varphi}_{#2}^{(#1)}}
\newcommandx{\normratiot}[2][1=\alpha, 2=n]{\check{\varphi}_{j,#2}^{(#1)}}
\newcommand{\changevar}{\Upsilon}
\newcommand{\changevarcomp}{\upsilon}
\newcommand{\gradA}[1]{#1^{\nabla}}
\newcommand{\intE}{\mathrm{Int}E}
\newcommandx{\respa}[3][1=y, 2=j, 3=\alpha]{\hat\varphi_{#2,n}^{(#3)} (#1)}
\newcommandx{\respat}[3][1=y, 2=\alpha, 3=j]{\varphi_{#3, n}^{(#2)} (#1)}
\newcommandx{\ratiogen}[2][1=\alpha, 2=j]{\varphi_{#2, n}^{(#1)}}
\renewcommand{\rho}{\varrho}
\newcommand{\rmd}{\mathrm d}
\newcommand{\rme}{\mathrm e}
\newcommand{\Rset}{\mathbb{R}}
\newcommand{\rset}{\mathbb{R}}
\newcommand{\set}[2]{\lrc{#1\eqsp: \eqsp #2}}
\newcommand{\simplex}{\mathcal{S}}
\newcommandx{\thetat}[2][1=j, 2=t]{\theta_{#1,#2}}
\newcommandx{\thetav}[2][1=]{
    \ifthenelse{\equal{#1}{}}
    {{\boldsymbol{\theta}_{#2}}}
    {\lambda_{#1,#2}}
    }
\newcommand{\Tset}{{\mathsf{T}}}
\newcommand{\Tsigma}{\mathcal{T}}
\newcommand{\Yset}{\mathsf Y}
\newcommand{\Ysigma}{\mathcal Y}
\newcommand{\Zset}{\mathsf{Z}}
\newcommand{\Zsigma}{\mathcal{Z}}
\newcommandx{\pscal}[3][1=]{\left\langle #2,#3\right\rangle_{#1}}
\newcommandx{\normev}[2][1=]{\left\|#2\right\|_{#1}}
\newcommand{\normop}[1]{\left\|#1\right\|_{\mathrm{op}}}
\newcommand{\parammixproba}{\mathrm{M}}
\def\mt{m}
\def\nut{a}
\begin{document}
\maketitle

\begin{abstract}
In this paper, we introduce a novel family of iterative algorithms which carry out $\alpha$-divergence minimisation in a Variational Inference context. They do so by ensuring a systematic decrease at each step in the $\alpha$-divergence between the variational and the posterior distributions. In its most general form, the variational distribution is a mixture model and our framework allows us to simultaneously optimise the weights and components parameters of this mixture model. Our approach permits us to build on various methods previously proposed for $\alpha$-divergence minimisation such as Gradient or Power Descent schemes and we also shed a new light on an integrated Expectation Maximization algorithm. Lastly, we provide empirical evidence that our methodology yields improved results on several multimodal target distributions and on a real data example.
\end{abstract}

%\tableofcontents
%\newpage

\section{Introduction}
\label{sec:introduction}

Bayesian inference tasks often induce intractable and hard-to-compute posterior densities which need to be approximated. Among the class of approximating methods, Variational inference methods \citep[for example Variational Bayes][]{Jordan1999, beal.phd} have attracted a lot of attention as they have empirically been shown to be widely applicable to many high-dimensional machine-learning problems \citep{JMLR:v14:hoffman13a, 2014arXiv1401.0118R, kingma2014autoencoding}.

These optimisation-based methods introduce a simpler variational family $\mathcal{Q}$ and find the best approximation to the unknown posterior density among this family in terms of a certain measure of dissimilarity, the most common choice of measure of dissimilarity being the exclusive Kullback-Leibler divergence \citep{2016arXiv160100670B, 8588399}.

\careful{[R1] The paper needs to improve the writing. There are some awkward sentences such as `the forward Kullback-Leibler is known...'}

However, the exclusive Kullback-Leibler divergence is known to have some drawbacks. Indeed, its \textit{zero-forcing} behavior is responsible for returning variational approximations with light tails that severely underestimate the posterior covariance and that are unable to capture multimodality within the posterior density \citep{divergence-measures-and-message-passing, jerfel2021variational, prangle2021distilling}. This is especially inconvenient when the variational family $\mathcal{Q}$ does not exactly match the posterior density \citep{pmlr-v80-yao18a, NEURIPS2019_07a4e20a} and even more so if one wants to appeal to Importance Sampling methods to approximate integrals of interest in a Bayesian Inference setting \citep{jerfel2021variational, prangle2021distilling}.

To avoid this hurdle, advances in Variational Inference turned to more general families of divergences such as the $\alpha$-divergence \citep{zhu-rohwer-alpha-div, ZhuRohwer} and R\'{e}nyi's $\alpha$-divergence \citep{renyi1961,2012arXiv1206.2459V}. These families of divergences both recover the exclusive Kullback-Leibler when $\alpha \to 1$ and thanks to the hyperparameter $\alpha$, they provide a more flexible framework that can bypass the difficulties associated to the exclusive Kullback-Leibler divergence by choosing $\alpha <1$ \citep{divergence-measures-and-message-passing}. For this reason, they have notably been used in \cite{power-ep, 2015arXiv151103243H, 2016arXiv160202311L, NIPS2017_6866, NIPS2018_7816, daudel2020infinitedimensional, daudel2021mixture}.

In the spirit of Variational Inference methods based on the $\alpha$-divergence, we propose in this paper to build a framework for $\alpha$-divergence minimisation in a Variational Inference context. The particularity of our work is that our algorithms will ensure a monotonic decrease at each step in the $\alpha$-divergence between the variational and the posterior distributions. In addition, our work will apply to variational families as wide as the class of mixture models. The paper is then organised as follows:
\begin{itemize}[label=$\bullet$,wide=0pt, labelindent=\parindent]
\item In \Cref{sec:optimPb}, we introduce some notation and state the optimisation problem we aim at solving in terms of the posterior density, the variational density $q \in \mathcal{Q}$ and the $\alpha$-divergence.
\item In \Cref{sec:Theta}, we consider the typical Variational
  Inference case where $q$ belongs to a parametric family. In this
  particular case, we state in \Cref{thm:EMtheta} conditions which
  ensure a systematic decrease in the $\alpha$-divergence at each step
  for all $\alpha \in [0,1)$. We then show in \Cref{coro:argmax} that
  these conditions are satisfied for a well-chosen iterative
  scheme and we call the resulting approach the \textit{maximisation approach}. This approach is particularly
  convenient, a fact that we illustrate over \Cref{ex:Gaussian} and
  \Cref{lem:MF}. Furthermore, we derive in \Cref{coro:gradientDescent}
  additional iterative schemes satisfying the conditions of
  \Cref{thm:EMtheta} under the name \textit{gradient-based approach}, which we then use to underline the links between
  our approach and Gradient Descent schemes for
  $\alpha$-divergence and R\'{e}nyi's $\alpha$-divergence
  minimisation.

\item In \Cref{sec:MM}, we extend the results from \Cref{sec:Theta} to the more general case of mixture models. We derive in Theorems \ref{thm:EM:MixtureModel} and \ref{thm:WeightsMixture} conditions to simultaneously optimise the weights and the components parameters of a given mixture model, all the while maintaining the systematic decrease in the $\alpha$-divergence initially enjoyed in \Cref{sec:Theta}. These conditions are then met in \Cref{coro:argminMixtureModel} and \ref{coro:GDMixtureModel}, which respectively generalise the maximisation approach of \Cref{coro:argmax} and the gradient-based approach of \Cref{coro:gradientDescent} to the case of mixture models. 

\item \Cref{sec:related} is devoted to related work. We explain in more detail the links between our framework and existing Variational Inference methods for $\alpha$-divergence minimisation. We also connect our approach to the Power Descent algorithm from \cite{daudel2020infinitedimensional} and provide in \Cref{thm:admiss} additional monotonicity results which go beyond the case $\alpha \in [0,1)$. In addition, we obtain that an integrated Expectation Maximization algorithm introduced in \cite{cappe2008adaptive} can be recovered as a special case of our framework.

\item In \Cref{sec:expo-family}, we state generic results that solve our maximisation and gradient-based approaches when the variational family is based on the exponential family. More specifically, we introduce in \Cref{sec:expo-family-notation} additional notation for the exponential family and we recall several of its useful properties. \Cref{sec:argm-solut-param-expo} and \Cref{sec:gradient-smoothness-expo-family} then focus on the maximisation approach and gradient-based approach respectively and notably provide the theoretical justification behind some special cases mentioned earlier in the paper, such as Gaussian densities in \Cref{ex:Gaussian}. %\Cref{sec:gradient-smoothness-expo-family} revisits the gradient-based approach and compare it to the maximisation approach in a well-understood specific case. 
Lastly, \Cref{sec:partial-mixture-expo-family} extends once again the maximisation approach in order to incorporate variational families that were not previously included in our framework, yet draw on the exponential family (such as multivariate Student's $t$ densities in \Cref{ex:student-mixture}). 
%\careful{[self] to be continued. Will require a bit of rewriting and shortening} %In \Cref{sec:partial-mixture-expo-family}, we study a more general setting which we call a linear mixture of the  exponential family distribution. It bears some similarities with the extension examined in \Cref{sec:MM} in a general context as our \Cref{ex:finite-mixtire-expo-family} shows, however it is different as it fully exploits the algebraic structure of the family distribution and allows us to reach different models such as multivariate Student distributions, see \Cref{ex:elliptical}.  In this extension we are again able to propose an argmax approach for updating the parameters through an iterative scheme and to solve it in a practical way.  

%    }
     
\item In \Cref{sec:practicalities}, we focus on the important case of Gaussian Mixture Models (GMMs) and we discuss practical implementations of our algorithms for GMMs optimisation.

\item Finally, we show in \Cref{subsec:Exp} that our enhanced framework has empirical benefits that permit us to improve on the existing algorithms presented in \Cref{sec:related} when we consider a variety of multimodal targets as well as a real data example.% and we provide numerical experiments to compare our results to those obtained using a typical Adaptive Importance Sampling algorithm.
\end{itemize}

\section{Notation and Optimisation Problem}
\label{sec:optimPb}

Let $(\Yset,\Ysigma, \nu)$ be a measured space, where $\nu$ is a $\sigma$-finite measure on $(\Yset, \Ysigma)$. Assume that we have access to some observed variables $\data$ generated from a probabilistic model $p(\cdot|y)$ parameterised by a hidden random variable $y \in \Yset$ that is drawn from a certain prior $p_0$. The posterior density of the latent variable $y$ given the data $\data$ is then given by
$$
p(y | \data) = \frac{p(y, \data)}{p(\data)} = \frac{ p_0(y) p(\data|y)}{p(\data)} \eqsp,
$$
where $p(\data) = \int_\Yset p_0(y) p(\data|y) \nu(\rmd y)$ is called the {\em marginal likelihood} or {\em model evidence}. For many interesting choices of models, sampling directly from the posterior density is impossible and the marginal likelihood is also unknown or too costly to compute.

To address this difficulty, Variational Inference methods approximate the posterior density by a simpler probability density belonging to a given variational family $\mathcal{Q}$ (from which it is easy to sample from). They select the best probability density in $\mathcal{Q}$ by solving an optimisation problem that involves a certain measure of dissimilarity, which is chosen to be the $\alpha$-divergence in this paper due to its advantages compared to the more traditional exclusive Kullback-Leibler divergence \citep[see for example][]{divergence-measures-and-message-passing}. 

More precisely, let us denote by $\PP$ the probability measure on $(\Yset, \Ysigma)$ with corresponding density $p(\cdot|\data)$ with respect to $\nu$. As for the variational family, let us denote by $\PQ$ the probability measure on $(\Yset, \Ysigma)$ with associated density $q \in \mathcal{Q}$ with respect to $\nu$. We let $\falpha$ be the convex function on $(0, +\infty)$ defined by $\falpha[0](u) = -\log(u)$, $\falpha[1](u) = u \log u$ and $\falpha(u) = \frac{1}{\alpha(\alpha-1)} \left[  u^\alpha -1 \right]$ for all $\alpha \in \rset \setminus \lrcb{0, 1}$. Then, assuming that both $\PQ$ and $\PP$ are absolutely continuous w.r.t. $\nu$, the $\alpha$-divergence between $\PQ$ and $\PP$ \citep[extended by continuity to the cases $\alpha = 0$ and $\alpha = 1$ as for example done in][]{alpha-beta-gamma} is given by
\begin{align}\label{eq:gen:divQ}
\diverg\couple[\PQ][\PP] = \int_\Yset \falpha\left(\frac{q(y)}{p(y|\data)} \right) p(y|\data) \nu(\rmd y) \eqsp,
\end{align}
and the Variational Inference optimisation problem we aim at solving is
$\inf_{q \in \mathcal{Q}} \diverg \couple[\PQ][\PP]$. Notably, it can easily be proven that this optimisation problem is equivalent to solving
\begin{align}
\label{eq:objective}
\inf_{q \in \mathcal{Q}} \Psif(q; p)\eqsp \quad \mbox{with} \quad p(y) = p(y, \data) \quad \mbox{for all } y \in \Yset \eqsp,
\end{align}
where, for all measurable non-negative function $p$ on $(\Yset,\Ysigma)$ and for all $q \in \mathcal{Q}$, we have set
\begin{align*}%\label{eq:def-psi}
\Psif(q; p) = \int_\Yset \falpha \left(\frac{q(y)}{p(y)} \right) p(y) \nu (\rmd y) \eqsp.
\end{align*}
As the unknown constant $p(\data)$ does not appear anymore in the optimisation problem \eqref{eq:objective}, this formulation is often preferred. Therefore, we consider the general optimisation problem
\begin{align}\label{eq:GeneralProblem}
\inf_{q \in \mathcal{Q}} \Psif(q; p)\eqsp,
\end{align}
where $p$ is any measurable positive function on $(\Yset,\Ysigma)$. Note that we may drop the dependency on $p$ in $\Psif$ for notational ease and when no ambiguity occurs. 

\careful{[R1] In Section 3 there is a change of notation. $\Psi_\alpha$ only depends on the kernel, but not on two distributions, as in Section 2.
}

At this stage, we are left with the choice of the variational family $\mathcal{Q}$ appearing in the optimisation problem \eqref{eq:GeneralProblem}. The natural idea in Variational Inference and the starting point of our approach is then to work within a parametric family : letting $(\Tset,\Tsigma)$ be a measurable space, $K:(\theta,A) \mapsto \int_A k(\theta,y)\nu(\rmd y)$ be a Markov transition kernel on $\Tset \times \Ysigma$ with kernel density $k$ defined on $\Tset \times \Yset$, we consider a parametric family of the form
$$
\mathcal{Q} = \set{q: y \mapsto k(\theta,y)}{\theta \in \Tset} \eqsp.
$$

\section{An Iterative Algorithm for Optimising $\Psif(k(\theta, \cdot); p)$}
\label{sec:Theta}

In this section, our goal is to define iterative procedures which optimise \kdtxt{$\Psif(k(\theta, \cdot); p)$} with respect to $\theta$ and which are such that they ensure a \textit{systematic decrease} in $\Psif$ at each step. For this purpose, we start by introducing some mild conditions on $k$, $p$ and $\nu$ that will be used throughout the paper.
\begin{hyp}{A}
  \item \label{hyp:positive} The kernel density $k$ on $\Tset\times\Yset$, the function $p$ on $\Yset$ and
  the $\sigma$-finite measure $\nu$ on $(\Yset,\Ysigma)$ satisfy, for all
  $(\theta,y) \in \Tset \times \Yset$, $k(\theta,y)>0$, $p(y)\geq 0$ and $0< \int_\Yset p(y) \nu(\rmd y)<\infty$.
\end{hyp}
From this point onwards, and as announced in the previous section, we will drop the dependency on $p$ in $\Psif$ for notational convenience. Let us now construct a sequence $(\theta_n)_{n \geq 1}$ valued in $\Tset$ such that the sequence $(\Psif(k(\theta_n, \cdot)))_{n \geq 1}$ is decreasing. The core idea of our approach will rely on the following proposition.
\begin{prop}\label{prop:inegThetaMain} Assume \ref{hyp:positive}. For all $\alpha \in [0,1)$ and all $\theta, \theta' \in \Tset$ such that \kdtxt{$\Psif(k (\theta', \cdot) )< \infty$}, it holds that
\begin{align}\label{eq:ineqThetaMain}
\Psif(k(\theta, \cdot))& \leq \int_\Yset \frac{k(\theta', y)^\alpha p(y)^{1-\alpha} }{\alpha-1} \log \lr{\frac{k(\theta, y)}{k(\theta', y)}} \nu(\rmd y) + \Psif(k (\theta', \cdot) ) \eqsp.
\end{align}
Moreover, if $\alpha=0$, this inequality is an equality and, if $\alpha\in(0,1)$, equality in~(\ref{eq:ineqThetaMain}) is equivalent to having $k(\theta, y)=k(\theta', y)$ for $\nu$-a.e. $y\in\{p>0\}$.
\end{prop}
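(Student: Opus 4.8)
The plan is to prove the inequality by applying a carefully chosen convexity (or concavity) estimate to $f_\alpha$ at a well-placed point, so that the linear (tangent-type) term collapses to exactly the logarithmic integral in \eqref{eq:ineqThetaMain}. Concretely, fix $\alpha \in [0,1)$ and $\theta, \theta' \in \Tset$ with $\Psif(k(\theta', \cdot)) < \infty$. Write $u = k(\theta, y)/p(y)$ and $u' = k(\theta', y)/p(y)$ pointwise on $\{p > 0\}$ (the set $\{p = 0\}$ contributes nothing by \ref{hyp:positive} and the form of $\Psif$). For $\alpha \in (0,1)$ the function $f_\alpha(u) = \frac{1}{\alpha(\alpha-1)}(u^\alpha - 1)$ is convex, but the relevant auxiliary function here is $u \mapsto u^\alpha$, which is \emph{concave} on $(0,\infty)$; the key scalar inequality I would use is the elementary bound obtained from concavity of $t \mapsto t^\alpha$ composed with $\log$, namely that for $a, b > 0$,
$$
\frac{b^\alpha - a^\alpha}{\alpha} \le a^\alpha \log\!\lr{\frac{b}{a}} \quad\Longleftrightarrow\quad \frac{b^\alpha}{\alpha} - a^\alpha \log\!\lr{\frac{b}{a}} \le \frac{a^\alpha}{\alpha},
$$
which is just the statement that the concave function $s \mapsto \rme^{\alpha s}/\alpha$ (in $s = \log t$) lies below its tangent at $s = \log a$. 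Rearranging $f_\alpha$ in terms of $u^\alpha$ and applying this with $a = u'$, $b = u$, then multiplying by $p(y) \ge 0$ and integrating against $\nu$, should produce exactly \eqref{eq:ineqThetaMain}: the $u^\alpha$-terms reassemble into $\Psif(k(\theta,\cdot))$ on the left and $\Psif(k(\theta',\cdot))$ on the right, while the tangent term becomes $\int \frac{k(\theta',y)^\alpha p(y)^{1-\alpha}}{\alpha - 1}\log(k(\theta,y)/k(\theta',y))\,\nu(\rmd y)$ after tracking the sign of $\alpha - 1 < 0$.

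For the case $\alpha = 0$, $f_0(u) = -\log u$ is affine in $\log u$, so the tangent inequality above is an equality; indeed the right-hand side of \eqref{eq:ineqThetaMain} with $\alpha = 0$ is $\int_\Yset (-1)\log(k(\theta,y)/k(\theta',y))\, p(y)^{1}\, k(\theta',y)^{0}\,\nu(\rmd y)/(\text{coeff})$, wait—more directly, plugging $\alpha = 0$ into both sides and using $\Psif[0](k(\theta,\cdot)) = \int (-\log(k(\theta,y)/p(y)))\,p(y)\,\nu(\rmd y)$ gives a telescoping identity, so equality holds with no conditions. For $\alpha \in (0,1)$, the scalar inequality $\frac{b^\alpha}{\alpha} - a^\alpha\log(b/a) \le \frac{a^\alpha}{\alpha}$ is strict unless $a = b$ (strict concavity of $s \mapsto \rme^{\alpha s}$), so equality in the integrated bound forces $k(\theta,y) = k(\theta',y)$ for $\nu$-a.e.\ $y \in \{p > 0\}$, and conversely that equality of the integrands is clearly sufficient.

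The main obstacle I anticipate is not the scalar inequality itself but the integrability bookkeeping: I must make sure every integral appearing is well-defined in $[-\infty, +\infty]$ and that the rearrangement from the pointwise bound to \eqref{eq:ineqThetaMain} does not create an $\infty - \infty$. The hypothesis $\Psif(k(\theta',\cdot)) < \infty$ controls $\int u'^\alpha p\, \rmd\nu$ (equivalently $\int k(\theta',y)^\alpha p(y)^{1-\alpha}\,\nu(\rmd y) < \infty$), which is what makes the tangent term and the right-hand side finite; the left-hand side $\Psif(k(\theta,\cdot))$ is allowed to be $+\infty$, in which case the pointwise inequality still integrates correctly since the bound is from above. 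I would therefore state the pointwise inequality on $\{p>0\}$ first, note both sides' integrands have a common sign structure relative to the finite control term, integrate, and only then rearrange — being explicit that the logarithmic term $a^\alpha \log(b/a)$ has integral well-defined in $[-\infty,\infty)$ because its positive part is dominated (using $\log t \le t - 1$ and the finiteness of $\int u'^\alpha p\,\rmd\nu$ together with $\int u'^{\alpha-1}\cdot u \cdot p\,\rmd\nu$-type terms, or more cleanly by noting the whole right-hand side equals $\Psif(k(\theta,\cdot))$ plus a nonnegative remainder pointwise). Closing this gap carefully is the crux; the algebra is routine.
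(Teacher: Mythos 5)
Your overall route is the paper's route: bound the increment $\falpha(u)-\falpha(u')$ by a term linear in $\log(u/u')$, i.e.\ a tangent-type bound for $t\mapsto t^\alpha$ in the variable $s=\log t$, which is nothing other than $\log(t^\alpha)\leq t^\alpha-1$. The genuine problem is that your key scalar inequality is stated in the wrong direction, because the function $s\mapsto \rme^{\alpha s}/\alpha$ is \emph{strictly convex} on $\rset$ for $\alpha\in(0,1)$ (its second derivative is $\alpha\,\rme^{\alpha s}>0$), not concave as you claim. A convex function lies \emph{above} its tangent, so the correct bound is $a^\alpha\log(b/a)\leq \frac{b^\alpha-a^\alpha}{\alpha}$, with equality if and only if $a=b$ --- the reverse of what you wrote. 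A quick check: $\alpha=1/2$, $a=1$, $b=4$ gives $\frac{b^\alpha-a^\alpha}{\alpha}=2$ while $a^\alpha\log(b/a)=\log 4\approx1.39$, so your displayed inequality fails. This matters: if you take your stated inequality with $a=u'$, $b=u$, divide by $\alpha-1<0$ and integrate against $p\,\rmd\nu$, you obtain exactly the \emph{reverse} of \eqref{eq:ineqThetaMain}, so the proof as written does not go through.

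The fix is immediate and brings you back to the paper's argument: use $\log(u^\alpha)\leq u^\alpha-1$ (equivalently the convexity tangent bound above), apply it pointwise in the identity $\Psif(k(\theta,\cdot))=\int_\Yset \lr{\tfrac{k(\theta',y)}{p(y)}}^{\alpha}\tfrac{(k(\theta,y)/k(\theta',y))^{\alpha}-1}{\alpha(\alpha-1)}\,p(y)\,\nu(\rmd y)+\Psif(k(\theta',\cdot))$, and the division by $\alpha(\alpha-1)<0$ then flips the bound into \eqref{eq:ineqThetaMain}. With this correction your treatment of $\alpha=0$ (an exact identity) and of the equality case (strictness of the scalar inequality unless $u=u'$, restricted to $\{p>0\}$ via \ref{hyp:positive} since the weight $k(\theta',y)^\alpha p(y)^{1-\alpha}$ vanishes only where $p$ does) coincides with the paper's proof; your additional integrability bookkeeping is harmless but unnecessary, since $\Psif(k(\theta',\cdot))<\infty$ makes the decomposition above well defined and the bound is applied pointwise to its integrand.
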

\begin{proof} We treat the two cases $\alpha = 0$ and $\alpha \in (0,1)$ separately.

\begin{enumerate}[label=(\alph*),wide=0pt, labelindent=\parindent]
  \item Case $\alpha = 0$ with $\falpha[0](u) = - \log(u)$ for all $u > 0$. This case is immediate since
\begin{align*}
\Psif[0](k(\theta, \cdot)) = - \int_\Yset p(y) \log \left( \frac{k(\theta, y)}{ k(\theta', y)} \right) \nu(\rmd y)  +  \Psif[0](k(\theta', \cdot)) \eqsp.
\end{align*}

\item Case $\alpha \in (0,1)$ with $\falpha(u) = \frac{1}{\alpha(\alpha-1)} \lrb{u^\alpha -1}$ for all $u > 0$. We have that
\begin{align*}
\Psif(k(\theta, \cdot)) &= \int_\Yset \frac{\lrb{\lr{ \frac{k(\theta, y)}{p(y)}}^\alpha - 1}}{\alpha(\alpha-1)}  p(y) \nu(\rmd y) \\
&= \int_\Yset \lr{\frac{k(\theta', y)}{p(y)}}^{\alpha} \frac{\lrb{\lr{ \frac{k(\theta, y)}{k(\theta', y)}}^\alpha - 1}}{\alpha(\alpha-1)}  p(y) \nu(\rmd y) + \Psif(k(\theta', \cdot))
\end{align*}
Furthermore, using that $\log(u^\alpha) \leq u^\alpha - 1$ for
  all $u > 0$ with equality only if $u=1$, and since $\alpha \in
(0,1)$, this inequality becomes
$$
\frac{u^\alpha-1}{\alpha(\alpha -1)} \leq \frac{\log(u^\alpha)}{\alpha(\alpha -1)} =\frac{\log(u)}{\alpha -1}     \eqsp.
$$
Applying this with $u=k(\theta, y)/k(\theta', y)$ in the previous
expression, we get that
\begin{align*}
\Psif(k(\theta, \cdot))& \leq \int_\Yset \frac{k(\theta', y)^\alpha p(y)^{1-\alpha} }{\alpha-1} \log \lr{\frac{k(\theta, y)}{k(\theta', y)}} \nu(\rmd y) + \Psif(k (\theta', \cdot) )
\end{align*}
which is exactly \eqref{eq:ineqThetaMain}. Moreover, the
  equality holds if $k(\theta, y)/k(\theta', y)=1$ for $\nu$-a.e. $y$
  such that $k(\theta', y)^\alpha p(y)^{1-\alpha}>0$, which,
  by~\ref{hyp:positive}, only happens if $p(y)>0$.
\end{enumerate}
\end{proof}
This result then allows us to deduce \Cref{thm:EMtheta} below.
\begin{thm}\label{thm:EMtheta} Assume \ref{hyp:positive}. Let $\alpha \in [0,1)$ and starting from an initial $\theta_1 \in \Tset$, let $(\theta_{n})_{n \geq 1}$ be defined iteratively such that for all $n \geq 1$,
\begin{align}\label{eq:ineqThetaMainTwo}
\int_\Yset \frac{k(\theta_n, y)^\alpha p(y)^{1-\alpha} }{\alpha-1} \log \lr{\frac{k(\theta_{n+1}, y)}{k(\theta_{n}, y)}} \nu(\rmd y) \leq 0 \eqsp.
\end{align}
Further assume that $\Psif(k(\theta_1, \cdot)) < \infty$. Then,
$\Psif(k(\theta_{n+1}, \cdot)) \leq \Psif(k(\theta_{n}, \cdot))$ for
all $n \geq 1$ with an equality occurring only
  if~(\ref{eq:ineqThetaMainTwo}) is also an equality.
\end{thm}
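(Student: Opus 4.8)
The plan is to run a single induction on $n$, using \Cref{prop:inegThetaMain} as the only substantive ingredient. The statement I would carry through the induction is: for every $n\geq 1$, $\Psif(k(\theta_n,\cdot))<\infty$, and (for $n\geq 2$) $\Psif(k(\theta_n,\cdot))\leq\Psif(k(\theta_{n-1},\cdot))$. The base case $n=1$ is nothing but the standing assumption $\Psif(k(\theta_1,\cdot))<\infty$.

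For the inductive step I would assume $\Psif(k(\theta_n,\cdot))<\infty$; this finiteness is exactly the hypothesis needed to invoke \Cref{prop:inegThetaMain} with $\theta=\theta_{n+1}$ and $\theta'=\theta_n$, which yields
\[
\Psif(k(\theta_{n+1},\cdot)) \;\leq\; I_n + \Psif(k(\theta_n,\cdot)), \qquad I_n := \int_\Yset \frac{k(\theta_n,y)^\alpha p(y)^{1-\alpha}}{\alpha-1}\,\log\!\lr{\frac{k(\theta_{n+1},y)}{k(\theta_n,y)}}\,\nu(\rmd y).
\]
By construction of the sequence, \eqref{eq:ineqThetaMainTwo} is precisely the statement $I_n\leq 0$, so $\Psif(k(\theta_{n+1},\cdot))\leq\Psif(k(\theta_n,\cdot))$; the right-hand side being finite, so is the left-hand side, and the induction closes. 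This already gives the monotonicity claim for all $n\geq1$.

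For the equality clause I would argue its contrapositive. Suppose $\Psif(k(\theta_{n+1},\cdot))=\Psif(k(\theta_n,\cdot))$ for some $n$. Since this common value is finite, I may subtract it from both sides of the displayed inequality to obtain $0\leq I_n$, which together with $I_n\leq0$ forces $I_n=0$, i.e.\ \eqref{eq:ineqThetaMainTwo} holds with equality; equivalently, $\Psif$ decreases \emph{strictly} at step $n$ whenever \eqref{eq:ineqThetaMainTwo} is strict.

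There is no genuinely hard step; the only point requiring a little care is the bookkeeping around infinite values. \Cref{prop:inegThetaMain} is only available at reference points $\theta'$ with $\Psif(k(\theta',\cdot))<\infty$, so the induction must propagate this finiteness forward — and it does, precisely because the chain of inequalities dominates $\Psif(k(\theta_{n+1},\cdot))$ by the finite number $\Psif(k(\theta_n,\cdot))$. The cancellation in the equality step is likewise legitimate only because the common value is a genuine real number and not $-\infty$; this holds because each $k(\theta,\cdot)$ is a probability density, so by Hölder's inequality $\int_\Yset k(\theta,y)^\alpha p(y)^{1-\alpha}\nu(\rmd y)<\infty$ when $\alpha\in(0,1)$, while for $\alpha=0$ the integrand defining $\Psif[0](k(\theta,\cdot))$ is bounded below by an integrable function, so $\Psif(k(\theta,\cdot))>-\infty$ throughout.
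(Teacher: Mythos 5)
Your proof is correct and follows essentially the same route as the paper: iterate \Cref{prop:inegThetaMain} with $\theta=\theta_{n+1}$, $\theta'=\theta_n$, propagate the finiteness of $\Psif(k(\theta_n,\cdot))$ forward, and deduce from an equality $\Psif(k(\theta_{n+1},\cdot))=\Psif(k(\theta_n,\cdot))$ that the chain of inequalities, hence \eqref{eq:ineqThetaMainTwo}, is an equality. Your additional check that $\Psif(k(\theta_n,\cdot))>-\infty$ (via H\"older for $\alpha\in(0,1)$ and the pointwise bound $p\log(p/q)\geq p-q$ for $\alpha=0$) is a legitimate piece of bookkeeping that the paper leaves implicit here and only justifies later, via Jensen's inequality, in the appendix.
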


\begin{proof}  Applying
    \Cref{prop:inegThetaMain} iteratively for $n=1,2,\dots$ with $\theta = \theta_{n+1}$ and
  $\theta' = \theta_n$ in \eqref{eq:ineqThetaMain} combined with
  \eqref{eq:ineqThetaMainTwo},   we get that $\Psif(k(\theta_{n},
  \cdot))<\infty$ and
  $$
  \Psif(k(\theta_{n+1}, \cdot)) \leq \int_\Yset \frac{k(\theta_n,
    y)^\alpha p(y)^{1-\alpha} }{\alpha-1} \log
  \lr{\frac{k(\theta_{n+1}, y)}{k(\theta_{n}, y)}} \nu(\rmd
  y)+\Psif(k(\theta_{n}, \cdot)) \leq 
  \Psif(k(\theta_{n}, \cdot))\;.
  $$
Suppose now that
    $\Psif(k(\theta_{n+1}, \cdot)) = \Psif(k(\theta_{n}, \cdot))$ for
    some $n\geq1$. Then we have equalities in the previous display and
    thus an equality in~(\ref{eq:ineqThetaMainTwo}) and
    in~(\ref{eq:ineqThetaMain}) with $\theta = \theta_{n+1}$ and
    $\theta' = \theta_n$.
  % By \Cref{prop:inegThetaMain}, this equality
  %   case and the assumptions $\alpha>0$ and $p>0$ give us
  %   that $k(\theta_{n+1}, y)=k(\theta_{n}, y)$ for $\nu$-a.e. $y$,
  %   which concludes the proof.
\end{proof}
%\begin{rem} Under \ref{hyp:positive}, $\Psif$ is properly defined for all values of $\alpha$. Notice however that we can extend \ref{hyp:positive} by removing the assumption that $p(y)$ is positive for all $y \in \Yset$ when $\alpha \in [0,1)$. As we will also go beyond the case $\alpha \in [0,1)$ later in the paper, this assumption on $p$ is maintained for convenience.
%\end{rem}

At this point, we seek to find iterative schemes satisfying \eqref{eq:ineqThetaMainTwo}. \kdtxt{To do so, for all $\alpha \in [0,1)$, all $n \geq 1$ and all $y \in \Yset$, we introduce the notation:
\begin{align}\label{eq:ratio}
&\ratio(y) = {k(\theta_n, y)^{\alpha} p(y)^{1-\alpha}}  \\
&\normratio=\frac{\ratio}{\int\ratio\;\rmd\nu} \eqsp \nonumber.
\end{align}
Note that under \ref{hyp:positive}, the normalisation in $\normratio$ satisfies $0 < \int\ratio\;\rmd\nu < \infty$, where the right inequality in particular follows from Jensen's inequality applied to the strictly concave function $u \mapsto u^{1-\alpha}$. We then state our first corollary.}

\begin{coro}[\kdtxt{Maximisation approach}]\label{coro:argmax}  Assume \ref{hyp:positive}. Let $\alpha \in [0,1)$ \kdtxt{ and let $(b_n)_{n \geq 1}$ be a non-negative sequence. S}tarting from an initial $\theta_1 \in \Tset$, let $(\theta_{n})_{n \geq 1}$ be defined iteratively as follows
\kdtxt{
\begin{align}
\theta_{n+1} &= \argmax_{\theta \in \Tset} \int_\Yset \lrb{ \ratio(y) + b_n k(\theta_n , y)} \log \lr{\frac{k(\theta, y)}{k(\theta_{n}, y)}} \nu(\rmd y) \eqsp, \quad n \geq 1 \eqsp \kdtxt{,} \label{eq:updateTheta}
\end{align}
where we assume that this argmax is uniquely defined at each
step. Then \eqref{eq:ineqThetaMainTwo} holds at all times $n \geq 1$
and we can apply \Cref{thm:EMtheta}.}
\end{coro}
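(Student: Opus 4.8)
The plan is to exploit one simple fact: the integral being maximised in~\eqref{eq:updateTheta} vanishes at $\theta=\theta_n$, so its maximiser $\theta_{n+1}$ automatically produces a non-negative value, and the extra non-negative ``slack'' term $b_n k(\theta_n,\cdot)$ can then be absorbed by Jensen's inequality. Fix $n\geq1$ and write $I_n(\theta)$ for the objective on the right-hand side of~\eqref{eq:updateTheta}, i.e. $I_n(\theta)=\int_\Yset\lrb{\ratio(y)+b_n k(\theta_n,y)}\log\lr{k(\theta,y)/k(\theta_n,y)}\,\nu(\rmd y)$, recalling from~\eqref{eq:ratio} that $\ratio(y)=k(\theta_n,y)^\alpha p(y)^{1-\alpha}$ (with $0<\int_\Yset\ratio\,\rmd\nu<\infty$ by~\ref{hyp:positive}). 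Since $\log\lr{k(\theta_n,y)/k(\theta_n,y)}=0$ for every $y\in\Yset$, we have $I_n(\theta_n)=0$; and because $\theta_{n+1}$ attains the maximum of $I_n$ over $\Tset$ (assumed to exist and be unique), this gives $I_n(\theta_{n+1})\geq I_n(\theta_n)=0$.

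Next I would decompose $I_n(\theta_{n+1})=A_n+b_n B_n$, where $A_n=\int_\Yset\ratio(y)\log\lr{k(\theta_{n+1},y)/k(\theta_n,y)}\,\nu(\rmd y)$ and $B_n=\int_\Yset k(\theta_n,y)\log\lr{k(\theta_{n+1},y)/k(\theta_n,y)}\,\nu(\rmd y)$. Because $K$ is a Markov kernel, $k(\theta_n,\cdot)$ and $k(\theta_{n+1},\cdot)$ are probability densities with respect to $\nu$, so Jensen's inequality applied to the concave function $\log$ yields $B_n\leq\log\int_\Yset k(\theta_{n+1},y)\,\nu(\rmd y)=\log 1=0$ (equivalently, $B_n$ is minus a Kullback--Leibler divergence, hence non-positive).

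Combining the two steps concludes the argument: since $b_n\geq0$ we have $b_n B_n\leq0$, hence $A_n=I_n(\theta_{n+1})-b_n B_n\geq I_n(\theta_{n+1})\geq0$. Now $\alpha\in[0,1)$ forces $\alpha-1<0$, so dividing the inequality $A_n\geq0$ by $\alpha-1$ reverses its sense: $\int_\Yset\frac{k(\theta_n,y)^\alpha p(y)^{1-\alpha}}{\alpha-1}\log\lr{k(\theta_{n+1},y)/k(\theta_n,y)}\,\nu(\rmd y)=A_n/(\alpha-1)\leq0$, which is exactly~\eqref{eq:ineqThetaMainTwo}. As $n\geq1$ was arbitrary, \eqref{eq:ineqThetaMainTwo} holds at every step, and (provided $\Psif(k(\theta_1,\cdot))<\infty$) \Cref{thm:EMtheta} applies and gives the monotone decrease. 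Note that the reasoning is uniform in $\alpha\in[0,1)$ and needs no separate treatment of $\alpha=0$ (where $\ratio=p$).

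I do not anticipate a genuine obstacle: the whole content is the short chain ``$I_n(\theta_{n+1})\geq0$, then $B_n\leq0$ by Jensen, then flip the sign because $\alpha-1<0$''. The one place calling for a little care is the measure-theoretic bookkeeping in the second paragraph, namely checking that the maximised integral really does split as $A_n+b_n B_n$ with each term well defined; this follows from the finiteness of $I_n(\theta_{n+1})$ together with the non-negativity of $\ratio$ and of $k(\theta_n,\cdot)$.
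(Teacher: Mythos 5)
Your proof is correct and follows essentially the same route as the paper: the objective in~\eqref{eq:updateTheta} vanishes at $\theta=\theta_n$ so its maximum is non-negative, the term $b_n\int k(\theta_n,y)\log\lr{k(\theta_{n+1},y)/k(\theta_n,y)}\nu(\rmd y)$ is non-positive because it is minus a Kullback--Leibler divergence, and dividing by $\alpha-1<0$ flips the sign to yield~\eqref{eq:ineqThetaMainTwo}. The only difference is cosmetic (you decompose the maximised value after invoking the argmax, whereas the paper bounds the left-hand side of the rewritten condition before invoking it), so nothing further is needed.
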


\kdtxt{
\begin{proof} Under \ref{hyp:positive} and using that $\alpha \in [0,1)$, we can rewrite \eqref{eq:ineqThetaMainTwo} as
\begin{align*} %\label{eq:targetCoroExtend}
\int_\Yset   \ratio(y) \log \lr{\frac{k(\theta_{n+1}, y)}{k(\theta_{n}, y)}} \nu(\rmd y) \geq 0 \eqsp.
\end{align*}
Since $\int_\Yset k(\theta_n, y) \log \lr{\frac{k(\theta_{n+1}, y)}{k(\theta_{n}, y)}} \nu(\rmd y) = - \diverg[1] \couple[K(\theta_n, \cdot)][K(\theta_{n+1}, \cdot)] \leq 0$, we also obtain that 
%\begin{multline*}
%\int_\Yset \ratio(y) \log \lr{\frac{k(\theta_{n+1}, y)}{k(\theta_{n}, y)}} \nu(\rmd y) \geq \int_\Yset \ratio(y) \log \lr{\frac{k(\theta_{n+1}, y)}{k(\theta_{n}, y)}} \nu(\rmd y) \\ + b_n \int_\Yset k(\theta_n, y) \log \lr{\frac{k(\theta_{n+1}, y)}{k(\theta_{n}, y)}} \nu(\rmd y)
%\end{multline*}
%that is
\begin{align*}
  \int_\Yset   \ratio(y) \log \lr{\frac{k(\theta_{n+1}, y)}{k(\theta_{n}, y)}} \nu(\rmd y) \geq \int_\Yset \lrb{   \ratio(y) + b_n  k(\theta_n, y)} \log \lr{\frac{k(\theta_{n+1}, y)}{k(\theta_{n}, y)}} \nu(\rmd y) \eqsp.
\end{align*}
\kdtxt{
Thus, \eqref{eq:ineqThetaMainTwo} holds at all times $n$ if the following condition is satisfied:
\begin{align*}
%\label{eq:targetCoroExtendBis}
  \int_\Yset \lrb{   \ratio(y) + b_n k(\theta_n , y)} \log\left(\frac{k(\theta_{n+1}, y)}{k(\theta_{n}, y)}\right) \nu(\rmd y)\geq0 \eqsp, \quad n \geq 1 \eqsp,
\end{align*}  
which is implied by the definition of $\theta_{n+1}$ in \eqref{eq:updateTheta}.}
  \end{proof}
}
We now make three comments regarding \Cref{coro:argmax}.
\begin{itemize}%[label=\arabic*)]
\item The r.h.s. of \eqref{eq:updateTheta} can be simplified, yielding the equivalent iterative scheme
\begin{align*}
\theta_{n+1} &= \argmax_{\theta \in \Tset} \int_\Yset \lrb{ \ratio(y)
               + b_n k(\theta_n , y)} \log k(\theta, y) \nu(\rmd y) \eqsp, \quad n \geq 1 \eqsp , %\label{eq:updateThetabis}
\end{align*}
provided that  $\int_\Yset \lrb{ \ratio(y)
+ b_n k(\theta_n , y)} |\log k(\theta_n, y)| \nu(\rmd
y)<\infty$ for all $n \geq 1$. 
\item  While it suffices to find any $\theta_{n+1}$ so that \eqref{eq:ineqThetaMainTwo} holds to obtain a systematic decrease in $\Psif$, defining $\theta_{n+1}$ as in~(\ref{eq:updateTheta}) enables us to solve this argmax problem for notable choices of kernel density $k$. A remarkable aspect is indeed that \eqref{eq:updateTheta} is written as a maximisation problem involving the logarithm of the ratio $k(\theta, y) / k(\theta_n, y)$ whereas $\Psi_\alpha$ is not directly expressed in terms of logarithm of this ratio for $\alpha\notin\{0,1\}$. As a result, we can use the \textit{maximisation approach} of \Cref{coro:argmax} to derive simple update rules for $(\theta_n)_{n \geq 1}$. An example and a lemma are provided thereafter to illustrate this fact and more general results regarding exponential families will follow later in \Cref{sec:expo-family}.

\item The sequence $(b_n)_{n\geq1}$ appearing in \eqref{eq:updateTheta} is arbitrary and can as a result be chosen by the practitioner. It is responsible for introducing a regularisation term in the argmax problem \eqref{eq:updateTheta} so that the larger $b_n$ is, the closer $\theta_{n+1}$ is to $\theta_n$.
\end{itemize}
We next provide a motivating example where the maximisation approach is applicable.

\begin{ex}[Maximisation approach for a Gaussian density] \label{ex:Gaussian} \hspace{-0.3cm} We consider a $d$-dimen-sional Gaussian density, in which case $\Yset=\rset^d$, $\nu$ is the
    $d$-dimensional Lebesgue measure and $k(\theta, y) = \mathcal{N}(y; m, \Sigma)$, where
  $\theta = (m, \Sigma) \in \Tset$ denotes the mean and covariance
  matrix of the Gaussian density. We let the parameter set $\Tset$ include all possible means $m$ in $\rset^d$ and all possible positive-definite covariance matrices $\Sigma$. Finally, denoting by $\normev{\cdot}$ the Euclidean norm, we assume that the non-negative
  function $p$ defined on $\Yset$ satisfies
  \begin{align}\label{eq:gaussian-p-cond}
       0 <  \int_\Yset \left(1+\|y \|^{2/(1-\alpha)}\right)\,p(y)\;\rmd
         y<\infty\;.
  \end{align}
  Then~\ref{hyp:positive} holds. Furthermore, starting from any
  $\theta_1 = (m_1, \Sigma_1) \in \Tset$ so that $\Psif(k(\theta_1, \cdot)) < \infty$ and denoting
    $\theta_n = (m_n, \Sigma_n)$ for all $n \geq 1$, it holds that: for all $n \geq 1$ and all $\gamma_n\in(0,1]$, there exists $b_n\geq0$ such that the argmax problem \eqref{eq:updateTheta} has a unique solution defined by
  \begin{align}\label{eq:update-gaussian}
         \begin{split}
       & m_{n+1}= \gamma_n\,\hat{m}_n +
         (1-\gamma_n)\,m_n \;,     \\
&\Sigma_{n+1} =  \gamma_{n}\, \widehat{\Sigma}_{n} + (1 -\gamma_{n}) \,\Sigma_{n} + \gamma_{n} (1 -\gamma_{n})\,
\lr{\widehat{m}_{n}-m_{n}} \lr{\widehat{m}_{n}-m_{n}}^T  \eqsp, 
       \end{split}
  \end{align}
  where, using the definition of $\normratio$ in \eqref{eq:ratio}, we set
     \begin{align*}
    &\widehat{m}_n =\int_\Yset y\, \normratio(y)\;\nu(\rmd y)\;,\\
    &\widehat{\Sigma}_{n}= \int_\Yset yy^T\, \normratio(y)\;\nu(\rmd y) -\widehat{m}_n\widehat{m}_n^T\;.
     \end{align*}   
    Here, the detailed derivation of \eqref{eq:update-gaussian} is deferred to \Cref{sec:argm-solut-param-expo} as we focus on interpreting this result in light of \Cref{coro:argmax} (in particular we will see in \Cref{coro:GaussianTemp} that the case $\gamma_n=1$ corresponding to $b_n=0$ will require an additional non-degenerate condition on $p$). 
    By \Cref{coro:argmax}, the systematic decrease in $(\Psif(k(\theta_n,\cdot)))_{n\geq1}$ holds for any choice of sequence $(\gamma_n)_{n\geq1}$ valued in $(0,1]$ in the updates \eqref{eq:update-gaussian}. In addition, $\gamma_n$ permits us to build a tradeoff at time $n \geq 1$ between selecting an update close to the current parameter (that is, taking $\gamma_n$ close to zero) and choosing the Gaussian density with exactly the same mean and covariance matrix as $\normratio$ (that is, setting $\gamma_n=1$). This is a key idea that is linked to the regularisation term appearing in \eqref{eq:updateTheta} and that we will revisit several times throughout the paper.
  % yields the following update formulas (see
  % \Cref{detailedDeriGaussian} for detailed derivations): \kdtxt{
  %   \begin{align*}
  %     \forall n \geq 1 \eqsp, \quad m_{n+1}&= (1-\gamma_n) m_n + \gamma_n \frac{\int_\Yset \ratio(y) y ~\nu(\rmd y)}{\int_\Yset \ratio(y) \nu(\rmd y)}  \\
  %     \Sigma_{n+1} &= (1-\gamma_n) \tilde{\Sigma}_n + \gamma_n {\frac{\int_\Yset \ratio(y)  (y-m_{n\kdtxt{+1}})(y-m_{n\kdtxt{+1}})^T  \nu(\rmd y)}{\int_\Yset \ratio(y) \nu(\rmd y)}} \eqsp,
  % \end{align*}
  % where $\tilde{\Sigma}_n = \Sigma_n + (m_{n+1}-m_n)(m_{n+1}-m_n)^T$ and $\gamma_n = 1 - b_n ( \int_\Yset \ratio(y) \nu(\rmd y) +b_n)^{-1}$ so that $\gamma_n \in (0,1]$. Note that since we can choose any $b_n \geq 0$, the systematic decrease in $\Psif$ holds for all $\gamma_n \in (0,1]$ in the two update formulas above.
  % }
\end{ex}
The maximisation approach of \Cref{coro:argmax} is also applicable to the commonly-used \emph{mean-field} variational family, which approximates the true density by a density with independent components parameterised separately. The following lemma indeed states that the global argmax problem can be separated into component-wise argmax problems in that case.
  
\begin{lem}[Maximisation approach for the mean-field family]\label{lem:MF} A generic member of the mean-field variational family is given by $k(\theta,y) = \prod_{\ell =1}^{L} k^{(\ell)}(\theta^{(\ell)}, y^{(\ell)})$ with $\theta = (\theta^{(1)}, \ldots, \theta^{(L)}) \in \Tset$ and $y=(y^{(1)}, \ldots, y^{(L)}) \in \Yset$. Then, starting from $\theta_1 \in \Tset$ and denoting $\theta_n = (\theta_n^{(1)}, \ldots, \theta_n^{(L)})$ for all $n \geq 1$, solving \eqref{eq:updateTheta} yields the following update formulas:
    \begin{align*}
    \theta^{(\ell)}_{n+1}&= \argmax_{\theta^{(\ell)}} \int_\Yset \lrb{  \ratio(y) + b_n k(\theta_n, y)} \log \lr{\frac{k^{(\ell)}(\theta^{(\ell)}, y^{(\ell)})}{k^{(\ell)}(\theta_n^{(\ell)}, y^{(\ell)})}   } \nu(\rmd y) \eqsp, \quad \kdtxt{\ell = 1 \ldots L}, \quad n \geq 1 \eqsp.
    \end{align*}
  \end{lem}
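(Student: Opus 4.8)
The plan is to exploit the product structure of the mean-field kernel to show that the global argmax in \eqref{eq:updateTheta} decouples across the $L$ blocks. First I would substitute $k(\theta,y) = \prod_{\ell=1}^L k^{(\ell)}(\theta^{(\ell)}, y^{(\ell)})$ and $k(\theta_n,y) = \prod_{\ell=1}^L k^{(\ell)}(\theta_n^{(\ell)}, y^{(\ell)})$ into the objective of \eqref{eq:updateTheta}. Using the identity $\log \prod_{\ell} a_\ell - \log\prod_\ell b_\ell = \sum_\ell \log(a_\ell/b_\ell)$, the term $\log\lr{k(\theta,y)/k(\theta_n,y)}$ becomes $\sum_{\ell=1}^L \log\lr{k^{(\ell)}(\theta^{(\ell)}, y^{(\ell)})/k^{(\ell)}(\theta_n^{(\ell)}, y^{(\ell)})}$, so the whole integral splits as a sum over $\ell$ of integrals
$$
\int_\Yset \lrb{\ratio(y) + b_n k(\theta_n,y)} \log \lr{\frac{k^{(\ell)}(\theta^{(\ell)}, y^{(\ell)})}{k^{(\ell)}(\theta_n^{(\ell)}, y^{(\ell)})}} \nu(\rmd y) \eqsp.
$$
Each summand depends only on the single block variable $\theta^{(\ell)}$, not on the other blocks $\theta^{(\ell')}$ for $\ell' \neq \ell$.

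The second step is the standard observation that maximising a sum of functions each depending on a disjoint set of variables is equivalent to maximising each summand separately. Concretely, if $F(\theta) = \sum_{\ell=1}^L F_\ell(\theta^{(\ell)})$ and each $F_\ell$ attains its maximum at some $\theta^{(\ell)}_\star$, then $\theta_\star = (\theta^{(1)}_\star, \ldots, \theta^{(L)}_\star)$ maximises $F$; conversely any maximiser of $F$ must maximise each $F_\ell$ in its block. Applying this with $F_\ell(\theta^{(\ell)})$ equal to the block integral above gives exactly the claimed component-wise update formulas for $\theta^{(\ell)}_{n+1}$. I would also note that this presupposes $\Tset$ itself has the product form $\Tset = \Tset^{(1)} \times \cdots \times \Tset^{(L)}$ (so that the blocks can be optimised independently without coupling constraints), which is implicit in the mean-field setup, and that the per-block argmax is uniquely attained, inherited from the uniqueness assumption in \Cref{coro:argmax}.

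The only mild technical point — and the closest thing to an obstacle — is justifying that the integral genuinely separates, i.e. that Fubini/Tonelli-type manipulations are legitimate and that each block integral is well-defined (not an $\infty - \infty$). Since $\ratio(y) + b_n k(\theta_n, y) \geq 0$ under \ref{hyp:positive} and the logarithm term can have either sign, one should split into positive and negative parts of $\log\lr{k^{(\ell)}(\theta^{(\ell)}, y^{(\ell)})/k^{(\ell)}(\theta_n^{(\ell)}, y^{(\ell)})}$, or simply invoke the same integrability caveat already used in the first comment following \Cref{coro:argmax} (namely $\int_\Yset [\ratio(y) + b_n k(\theta_n,y)]\,|\log k(\theta_n,y)|\,\nu(\rmd y) < \infty$ and its analogue for the candidate $\theta$). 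Once integrability is granted, the decomposition and the block-wise argmax argument are purely formal, so this is a short lemma whose substance is entirely the product-to-sum structure of the log-likelihood.
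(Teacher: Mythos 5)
Your argument is correct and is exactly the (immediate) reasoning the paper relies on: the paper states \Cref{lem:MF} without proof because the log of the product kernel splits into a sum of block-wise terms, each depending on a single $\theta^{(\ell)}$, so the argmax in \eqref{eq:updateTheta} decouples componentwise. Your added remarks on the product structure of $\Tset$, uniqueness of each block argmax, and the integrability caveat already noted after \Cref{coro:argmax} are sensible but do not change the substance.
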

The maximisation approach is not the only way to satisfy \eqref{eq:ineqThetaMainTwo}. Indeed, this can also be achieved by taking a \textit{gradient-based} approach and relying on additional smoothness conditions (see \Cref{appendix:smooth} for the definition of $\beta$-smoothness), as written in \Cref{coro:gradientDescent}. %while \Cref{coro:argmax} has a convenient formulation and corresponds to the intuitive choice so that \eqref{eq:ineqThetaMainTwo} holds,

\begin{coro}[Gradient-based approach] \label{coro:gradientDescent}
  Assume \ref{hyp:positive}. \kdtxt{Let $\Tset \subseteq \rset^d$ be a
    convex set, l}et $\alpha \in [0,1)$ and let $(\gamma_n)_{n \geq 1}$
  be valued in $(0,1]$. Starting from an initial $\theta_1 \in \Tset$, let $(\theta_{n})_{n \geq 1}$ be defined iteratively as follows
\begin{align}
\theta_{n+1} &= \theta_n - \frac{\gamma_n}{\beta_n} \nabla g_n(\theta_n)\eqsp, \quad n \geq 1 \eqsp, \label{eq:updateThetaGD}
\end{align}
where $(g_n)_{n \geq 1}$ is the sequence of functions defined by: for all $n \geq 1$ and all $\theta \in \Tset$,
\begin{align}\label{eq:coroDefGn}
g_n(\theta) = \int_\Yset \frac{\kdtxt{\ratio(y)} }{\alpha-1} \log \lr{\frac{k(\theta, y)}{k(\theta_{n}, y)}} \nu(\rmd y) \eqsp
\end{align}
and $g_n$ is assumed to be $\beta_n$-smooth. Then \eqref{eq:ineqThetaMainTwo} holds for all $n \geq 1$ and we can apply \Cref{thm:EMtheta}.
\end{coro}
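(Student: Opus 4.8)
The plan is to verify that the gradient update in \eqref{eq:updateThetaGD} satisfies \eqref{eq:ineqThetaMainTwo}, i.e. that
\begin{align*}
\int_\Yset \frac{\ratio(y)}{\alpha-1} \log \lr{\frac{k(\theta_{n+1}, y)}{k(\theta_{n}, y)}} \nu(\rmd y) = g_n(\theta_{n+1}) \leq 0 \eqsp,
\end{align*}
where the equality is just the definition of $g_n$ in \eqref{eq:coroDefGn} evaluated at $\theta = \theta_{n+1}$ (note $g_n(\theta_n) = 0$). So the claim reduces to showing $g_n(\theta_{n+1}) \leq g_n(\theta_n) = 0$, and the natural tool is the standard descent lemma for $\beta_n$-smooth functions.

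First I would recall (citing \Cref{appendix:smooth}) that if $g_n$ is $\beta_n$-smooth on the convex set $\Tset$, then for all $\theta, \theta' \in \Tset$,
\begin{align*}
g_n(\theta') \leq g_n(\theta) + \pscal{\nabla g_n(\theta)}{\theta' - \theta} + \frac{\beta_n}{2} \normev{\theta' - \theta}^2 \eqsp.
\end{align*}
Then I would apply this with $\theta = \theta_n$ and $\theta' = \theta_{n+1} = \theta_n - \frac{\gamma_n}{\beta_n} \nabla g_n(\theta_n)$, which by convexity of $\Tset$ and $\gamma_n \in (0,1]$ indeed lies in $\Tset$ (it is a convex combination of $\theta_n$ with the point $\theta_n - \frac{1}{\beta_n}\nabla g_n(\theta_n)$ — one should check this latter point is in $\Tset$, or more cleanly invoke whatever convention \Cref{appendix:smooth} uses; alternatively the step-size bound may need $\Tset = \rset^d$ or the update may be understood with a projection). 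Substituting the update gives
\begin{align*}
g_n(\theta_{n+1}) \leq g_n(\theta_n) - \frac{\gamma_n}{\beta_n} \normev{\nabla g_n(\theta_n)}^2 + \frac{\beta_n}{2} \cdot \frac{\gamma_n^2}{\beta_n^2} \normev{\nabla g_n(\theta_n)}^2 = g_n(\theta_n) - \frac{\gamma_n}{\beta_n}\lr{1 - \frac{\gamma_n}{2}} \normev{\nabla g_n(\theta_n)}^2 \eqsp.
\end{align*}
Since $\gamma_n \in (0,1]$, the factor $1 - \gamma_n/2 \geq 1/2 > 0$, so the subtracted term is non-negative and $g_n(\theta_{n+1}) \leq g_n(\theta_n)$. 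Finally, $g_n(\theta_n) = \int_\Yset \frac{\ratio(y)}{\alpha-1} \log(1)\,\nu(\rmd y) = 0$, which closes the argument and lets us invoke \Cref{thm:EMtheta}.

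The main obstacle is not the descent-lemma calculation itself (which is routine) but rather the well-definedness issues: one must ensure $g_n$ is actually finite and differentiable with the gradient one expects (interchanging $\nabla$ and $\int$ requires domination hypotheses hidden in the $\beta_n$-smoothness assumption on $g_n$), and one must ensure the iterate $\theta_{n+1}$ stays in $\Tset$ so that the smoothness inequality applies — this is why convexity of $\Tset$ and $\gamma_n \leq 1$ are assumed, and I would make explicit that $\theta_n - \frac{1}{\beta_n}\nabla g_n(\theta_n) \in \Tset$ is part of what $\beta_n$-smoothness on $\Tset$ buys us (or state it as an implicit assumption). A secondary point worth a sentence is the sign bookkeeping around the factor $1/(\alpha-1) < 0$ for $\alpha \in [0,1)$: it is already absorbed into the definition of $g_n$, so nothing extra is needed, but it is the reason the condition \eqref{eq:ineqThetaMainTwo} becomes exactly ``$g_n(\theta_{n+1}) \leq 0$'' rather than its reverse.
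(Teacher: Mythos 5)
Your proof is correct and follows essentially the same route as the paper: the descent-lemma computation you carry out inline is exactly what the paper factors out as \Cref{lemma:inegThetaGD} in \Cref{appendix:smooth}, after which both arguments conclude from $g_n(\theta_{n+1}) \leq g_n(\theta_n) = 0$ and the definition of $g_n$. Your side remarks on well-definedness and on $\theta_{n+1}$ remaining in $\Tset$ are reasonable caveats, but the paper treats these as implicitly covered by the $\beta_n$-smoothness assumption on $\Tset$, just as you suggest.
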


\begin{proof}
  Since $\gamma_n \in (0,1]$ and $g_n$ is a $\beta_n$-smooth function
  by assumption, we can apply \Cref{lemma:inegThetaGD} and we obtain
  that for all $n \geq 1$,
\begin{align*}
g_n(\theta_n) - g_n \lr{\theta_n- \frac{\gamma_n}{\beta_n} \nabla g_n(\theta_n)} \geq \frac{\gamma_n}{2\beta_n} \normev{ \nabla g_n(\theta_n)}^2 \eqsp.
\end{align*}
Thus, by definition of $\theta_{n+1}$ in \eqref{eq:updateThetaGD}, we have
$0 = g_n(\theta_n) \geq g_n(\theta_{n+1})$, which in turn implies \eqref{eq:ineqThetaMainTwo} and the proof is concluded.
\end{proof}
Let us now reflect on the implications of \Cref{coro:gradientDescent}. Under common differentiability assumptions, we can write: for all $n \geq 1$ and all $\kdtxt{\theta '} \in \Tset$
\begin{align*}
\nabla g_n(\theta') = \int_\Yset \frac{\kdtxt{\ratio(y)}}{\alpha-1} \left.\frac{\partial\log k(\theta, y)}{\partial\theta}\right|_{(\theta,y) = (\theta',y)}  \nu(\rmd y) \eqsp.
\end{align*}
Then, \eqref{eq:updateThetaGD} becomes
\begin{align}\label{eq:updateThetaGDTwo}
\theta_{n+1} &= \theta_n - \frac{\gamma_n}{\beta_n} \int_\Yset
               \frac{\kdtxt{\ratio(y)}}{\alpha-1}
               \left.\frac{\partial\log k(\theta, y)}{\partial\theta}\right|_{(\theta,y) = (\theta_n,y)} \nu(\rmd y) \eqsp, \quad n \geq 1 \eqsp.% \\
%\theta_{n+1} &= \theta_n - \frac{\gamma_n }{\beta_n} \left(\frac{1}{\alpha-1}\frac{ \int_\Yset \kdtxt{\ratio(y)} \nabla(\log k(\theta, y))\kdtxt{|_{\theta = \theta_n}} \nu(\rmd y)}{\int_\Yset  \kdtxt{\ratio(y)} \nu(\rmd y)} \right) \eqsp, \quad n \geq 1 \eqsp \label{eq:updateThetaGDTwoRenyi} \eqsp.
\end{align}
Under this form, the iterative scheme \eqref{eq:updateThetaGDTwo} bears similarities with Gradient Descent iterations for $\alpha$-divergence and R\'{e}nyi's $\alpha$-divergence minimisation. Indeed, given a learning rate policy $(r_n)_{n \geq 1}$ and setting $p = p(\cdot, \data)$, such Gradient Descent iterations are given respectively by
\begin{align*}
& \theta_{n+1} = \theta_n - r_n \int_\Yset \frac{\ratio(y)}{\alpha-1} \left.\frac{\partial\log k(\theta, y)}{\partial\theta}\right|_{(\theta,y) = (\theta_n,y)} \nu(\rmd y) \eqsp, \quad n \geq 1 \\
& \theta_{n+1} = \theta_n - r_n \int_\Yset  \frac{\normratio(y)}{\alpha -1} \left.\frac{\partial\log k(\theta, y)}{\partial\theta}\right|_{(\theta,y) = (\theta_n,y)} \nu(\rmd y) \eqsp, \quad n \geq 1
\end{align*}
(we refer to \Cref{subsec:GDsteps} for details regarding how these updates are obtained). Building on this comment, let us now give an example where the conditions on $(g_n)_{n \geq 1}$ from \Cref{coro:gradientDescent} are satisfied and show how Gradient Descent steps (in that case for R\'{e}nyi's $\alpha$-divergence minimisation) can originate from our gradient-based approach.

%Let us now provide an example where the conditions on $(g_n)_{n \geq 1}$ from \Cref{coro:gradientDescent} are satisfied, which will be used to reflect on the implications of this corollary. 

%This establishes the link between our approach and Gradient Descent algorithms for $\alpha$-divergence and R\'{e}nyi's $\alpha$-divergence optimisation. 

\begin{ex}[Gradient-based approach for a Gaussian density]\label{ex:ATwostatisfied} \hspace{-0.25cm} We consider the case of a $d$-dimensional Gaussian density with \kdtxt{$k(\theta, y) = \mathcal{N}(y; m, \Sigma)$, where this time $\theta = m$. Further assume that $\Tset \subseteq \rset^d$ is a convex subset and that $\Sigma = \sigma^2  \boldsymbol{I_d}$, where $\sigma^2> 0$ and $\boldsymbol{I_d}$ denotes the $d$-dimensional identity matrix.} Finally, denoting by $\normev{\cdot}$ the Euclidean norm, we assume that the non-negative function $p$ defined on $\Yset$ satisfies
\begin{align} \label{eq:cond-gradient-gauss-ex}
0< \int_\Yset \lr{1 + \| y \|^{1/(1-\alpha)}} p(y) \rmd y < \infty \eqsp.
\end{align}  
Then, \ref{hyp:positive} holds.   
%under usual differentiability assumptions: \kdtxt{for all $\theta \in \Tset$,}   
%$$
%  \nabla g_n(\theta) =  \frac{\sigma^{-2} }{\alpha-1} {\int_\Yset \kdtxt{\ratio(y)} (y - \theta\kdtxt{'}) \nu(\rmd y)}
%  $$
%  so that by 
Furthermore, denoting $\theta_n = m_n$ and setting $\beta_n = \sigma^{-2}(1-\alpha)^{-1}\int \ratio \rmd \nu$ for all $n \geq 1$, % and by denoting by $\normev{\cdot}$ the Euclidean norm, we can write for all $\theta, \theta' \in \Tset$ and all $n \geq 1$
%    $$
%    \normev{\nabla g_n(\theta) - \nabla g_n(\theta')} \leq \beta_n \normev{\theta - \theta'} \eqsp.
%    $$
%Hence, 
the conditions on $(g_n)_{n \geq 1}$ from \Cref{coro:gradientDescent} are satisfied so that the mean of the Gaussian density $\mathcal{N}(y; m, \Sigma)$ can be optimised as follows: for all $n \geq 1$ and all $\gamma_n \in [0,1)$
    \begin{align}
    \kdtxt{m_{n+1}} %&= \kdtxt{m_n} + \gamma_n \int_\Yset \normratio(y) (y - \kdtxt{m_n}) \nu(\rmd y) \nonumber \\
    &= \gamma_n \hat{m}_n + \lr{1 - \gamma_n} m_n  \eqsp, \label{eq:MeanGaussianUpdateToRecover}
    \end{align}
    %where $\normratio$ is the normalised version of $\ratio$, that is $\normratio=\ratio/\int\ratio\;\rmd\nu$ and 
    where $\hat{m}_n = \int_\Yset y ~ \normratio(y) ~ \nu(\rmd y)$ (we refer to \Cref{coro:GaussianTempGrad} for the detailed derivation). Setting $p = p(\cdot, \data)$, the update \eqref{eq:MeanGaussianUpdateToRecover} can notably be seen as a Gradient Descent step for R\'{e}nyi's $\alpha$-divergence minimisation with a learning rate $r_n = \sigma^{2} (1-\alpha) \gamma_n$, where $\gamma_n \in [0,1)$.
\end{ex}
We now make two important comments.
\begin{itemize}
  \item \Cref{coro:gradientDescent} sheds light on the links between our approach and the more traditional Gradient Descent methodology for optimising objective functions based on the $\alpha$-divergence in Variational Inference \citep{2016arXiv160202311L, NIPS2017_6866}.
  
  Unlike the usual Gradient Descent methodology, \Cref{coro:gradientDescent} requires a smoothness condition on $g_n$. Smoothness conditions are tools that are often used to obtain stronger convergence guarantees for Variational Inference algorithms \citep[see for example][]{pmlr-v80-buchholz18a, 10.1214/19-AOS1855}. Yet, it can be difficult to satisfy these smoothness assumptions in practice, even if some results have been derived for specific variational families when using the exclusive Kullback-Leibler divergence as the objective function \citep{domke2020provable}. 
  
  To the best of our knowledge, no such results have been proven for $\alpha$-divergence minimisation. In our case, we obtain that a smoothness condition on $g_n$ translates into a systematic decrease in $\Psif$. As we shall see in the forthcoming section, being able to establish monotonicity results on $(\theta_n)_{n \geq 1}$ will come in handy as we try to go beyond the framework considered in \Cref{sec:Theta}.

  \item To put things into perspective with the maximisation approach from earlier, observe that the updates on the means in Examples \ref{ex:Gaussian} and \ref{ex:ATwostatisfied} coincide. The difference between the two examples is due to the fact that the former provides an update for the covariance matrix as well, \textit{without sacrificing the monotonicity of the overall algorithm}. It is in fact hard to derive a covariance matrix update in the Gaussian case using the gradient-based approach (see \Cref{sec:gradient-smoothness-expo-family} for details). In that sense, the maximisation approach provides an interesting alternative that can bypass this difficult smoothness assumption on $g_n$. We will further delve into this aspect as we reach \Cref{sec:expo-family}.
\end{itemize}

So far, the $\log$ function appearing in \Cref{thm:EMtheta} considerably eased the derivation of iterative update formulas for well-chosen kernel densities $k$ that can exploit this log structure (Examples \ref{ex:Gaussian}-\ref{ex:ATwostatisfied} and \Cref{lem:MF}). Yet, these choices of kernel densities can be too restrictive to fully capture complex and multimodal posterior densities. Since working with a larger variational family might lead to more accurate approximations of the posterior density, an idea is then to investigate whether the iterative update formulas from \Cref{sec:Theta} can be generalised to the mixture model variational family (for example whether we can extend the updates for a Gaussian kernel from \Cref{ex:Gaussian} to Gaussian Mixture Models). 

As we shall see in the next section, further theoretical developments will be required in order to derive valid iterative schemes that optimise both the mixture weights and the mixture components parameters of a given mixture model.  

\section{Extension to Mixture Models}
\label{sec:MM}

Let us first formally define the class of mixture
models we are going to be working with. Given $J \in \nstar$, we introduce the simplex of $\rset^J$:
$$
\simplex_J = \set{ \lbd{}= (\lambda_1, \ldots, \lambda_J) \in \rset^J}{ \forall j \in \lrcb{1, \ldots , J}, \eqsp \lambda_j \geq 0 \eqsp \mbox{and} \eqsp \sum_{j=1}^J \lambda_j = 1} \eqsp,
$$
we define $\simplex_J^+ = \set{ \lbd{} \in \simplex_J}{ \forall j \in \lrcb{1, \ldots , J}, \eqsp \lambda_j > 0}$  %The Dirac measure on $(\Tset, \Tsigma)$ is denoted by $\delta_\theta$ where $\theta \in \Tset$. 
and we denote $\Theta = (\theta_1, \ldots, \theta_J) \in \Tset^J$. %and $\mu_{\lbd{}, \Theta} = \sum_{j = 1}^{J} \lambda_j \delta_{\theta_j}$ for all $\lbd{} \in \simplex_J$ and all $\Theta \in \Tset^J$
We consider the mixture model variational family given by
\begin{align}\label{eq:def:MM}
\mathcal{Q} = \set{q: y \mapsto \mu_{\lbd{}, \Theta}k(y) \eqdef \sum_{j=1}^{J} \lambda_j k(\theta_j,y)}{\lbd{} \in \simplex_J, \Theta \in \Tset^J}
\end{align}
that is, we are interested in solving the optimisation problem
\begin{align*}
\inf_{\lbd{} \in \simplex_J, \Theta \in \Tset^J} \Psif(\mu_{\lbd{}, \Theta} k)\eqsp,
\end{align*}
with $J >1$. Let us next denote $\lbd{n} = (\lambda_{j,n})_{1 \leq j \leq J}$ and $\Theta_n = (\theta_{j,n})_{1 \leq j \leq J}$ for all $n \geq 1$. For convenience, we also introduce the shorthand notation $\mu_n k = \mu_{\lbd{n}, \Theta_n} k$ and
\begin{align}\label{eq:respat}
& \respat[y] = k(\theta_{j,n}, y) \lr{\frac{\mu_n k(y)}{  p(y)}}^{\alpha-1} \\
& \normratiot = \frac{\ratiogen}{\int \ratiogen \rmd \nu } \nonumber
\end{align}
for all $\alpha \in [0,1)$, all $j = 1 \ldots J$, all $n \geq 1$ and all $y \in \Yset$.

Our goal in this section is to derive iterative schemes for the mixture weights and the mixture components parameters $(\lbd{n}, \Theta_n)_{n \geq 1}$ ensuring that the sequence $(\Psif(\mu_n k))_{n \geq 1}$ is decreasing. As \Cref{thm:EMtheta} holds for any choice of parametric family, a first idea is to apply \Cref{thm:EMtheta} to the variational family \eqref{eq:def:MM}, which gives \Cref{coro:MixtureModel} below. %, that is replace $k(\theta_{n+1}, \cdot)$ by $\mu_{n+1} k$ and $k(\theta_n, \cdot)$ by $\mu_n k$ in \eqref{eq:ineqThetaMainTwo}
\begin{coro}\label{coro:MixtureModel}
  Assume \ref{hyp:positive}. Let $J \in \nstar$, let $\alpha \in [0,1)$ and starting from an initial parameter set $(\lbd{1},\Theta_1) \in \simplex_J^+ \times \Tset^J$, let $(\lbd{n},\Theta_{n})_{n \geq 1}$ be defined iteratively such that for all $n \geq 1$,
  \begin{align}\label{eq:ineqMixtureMain}
    \int_\Yset \frac{(\mu_n k(y))^\alpha p(y)^{1-\alpha} }{\alpha-1} \log \lr{\frac{\mu_{n+1} k(y)}{\mu_n k(y)}} \nu(\rmd y) \leq 0 \eqsp.
  \end{align}
  Further assume that $\Psif(\mu_1  k) < \infty$. Then, $\Psif(\mu_{n+1}k) \leq \Psif(\mu_n k)$ for all $n \geq 1$.
  \end{coro}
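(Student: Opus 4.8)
The plan is to recognise that Corollary~\ref{coro:MixtureModel} is just \Cref{thm:EMtheta} applied to one particular parametric family: the mixture family \eqref{eq:def:MM} itself, regarded as a single parametric model. Concretely, I would take as parameter space $\Tset' := \simplex_J \times \Tset^J$, endowed with the product $\sigma$-field, and as kernel density the map $\tilde k$ on $\Tset' \times \Yset$ defined by $\tilde k\bigl((\lbd{}, \Theta), y\bigr) = \mu_{\lbd{}, \Theta} k(y) = \sum_{j=1}^J \lambda_j\, k(\theta_j, y)$, so that each member of $\mathcal{Q}$ in \eqref{eq:def:MM} is exactly the density $y \mapsto \tilde k(\vartheta, y)$ for the parameter $\vartheta = (\lbd{}, \Theta) \in \Tset'$.

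First I would check that the triple $(\tilde k, p, \nu)$ satisfies \ref{hyp:positive}. The requirements on $p$ and $\nu$ are untouched, so only the positivity of $\tilde k$ needs attention: for any $(\lbd{}, \Theta) \in \Tset'$, the constraint $\sum_{j=1}^J \lambda_j = 1$ forces $\lambda_{j_0} > 0$ for some index $j_0$, whence $\tilde k\bigl((\lbd{}, \Theta), y\bigr) \geq \lambda_{j_0}\, k(\theta_{j_0}, y) > 0$ using \ref{hyp:positive} for $k$ together with non-negativity of the remaining terms. It is worth stressing that this works over the whole closed simplex $\simplex_J$, so one does not need to know whether the iterates stay in $\simplex_J^+$ beyond $n = 1$. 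I would also note in passing that $\int_\Yset \tilde k(\vartheta, y)\,\nu(\rmd y) = \sum_{j=1}^J \lambda_j = 1$ since $K$ is a Markov kernel, so $\tilde K : (\vartheta, A) \mapsto \int_A \tilde k(\vartheta, y)\,\nu(\rmd y)$ is itself a Markov transition kernel, making the reduction fully legitimate.

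Then I would simply feed the sequence $\vartheta_n := (\lbd{n}, \Theta_n) \in \Tset'$, $n \geq 1$, into \Cref{thm:EMtheta}. Under the identification $\tilde k(\vartheta_n, \cdot) = \mu_n k$, hypothesis \eqref{eq:ineqThetaMainTwo} of \Cref{thm:EMtheta} reads verbatim as the assumed inequality \eqref{eq:ineqMixtureMain}, and the finiteness requirement $\Psif(\tilde k(\vartheta_1, \cdot)) < \infty$ is precisely $\Psif(\mu_1 k) < \infty$. \Cref{thm:EMtheta} then yields $\Psif(\mu_{n+1} k) = \Psif(\tilde k(\vartheta_{n+1}, \cdot)) \leq \Psif(\tilde k(\vartheta_n, \cdot)) = \Psif(\mu_n k)$ for all $n \geq 1$, which is the claim.

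I do not expect a genuine obstacle in this argument: the only point demanding a little care is the positivity of $\tilde k$ on the boundary of the simplex, handled above, together with the (essentially trivial) observation that \Cref{thm:EMtheta} was stated for an arbitrary parametric family and hence applies with $\Tset'$ in place of $\Tset$. The real content of \Cref{sec:MM} — and the reason it deserves separate treatment — lies not in this reduction but in the subsequent problem of designing explicit, jointly feasible updates $(\lbd{n}, \Theta_n) \mapsto (\lbd{n+1}, \Theta_{n+1})$ that actually enforce \eqref{eq:ineqMixtureMain} for the weights and the component parameters simultaneously; that is what Theorems~\ref{thm:EM:MixtureModel} and~\ref{thm:WeightsMixture} will take care of.
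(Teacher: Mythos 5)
Your proposal is correct and is exactly the paper's route: the text introduces \Cref{coro:MixtureModel} precisely by remarking that \Cref{thm:EMtheta} ``holds for any choice of parametric family'' and applying it to the mixture family \eqref{eq:def:MM} with parameter $(\lbd{},\Theta)$, under which \eqref{eq:ineqThetaMainTwo} becomes \eqref{eq:ineqMixtureMain}. Your explicit check of \ref{hyp:positive} for the mixture kernel (positivity on the whole simplex via some $\lambda_{j_0}>0$) is a careful spelling-out of a step the paper leaves implicit, not a different argument.
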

  Observe that we are in a less favourable situation in \Cref{coro:MixtureModel} with $J > 1$ compared to the cases we previously studied in \Cref{sec:Theta}. Indeed, we now have a ratio of sums inside the $\log$ function in \eqref{eq:ineqMixtureMain}, meaning that the approach from \Cref{thm:EMtheta} to derive simple iterative schemes does not directly transfer to the variational family \eqref{eq:def:MM} for choices of kernel densities $k$ identified in Examples \ref{ex:Gaussian}-\ref{ex:ATwostatisfied} and \Cref{lem:MF}. However, by carefully exploiting the condition \eqref{eq:ineqMixtureMain}, we are able to overcome this difficulty in our second main theorem below.
  
\begin{thm}\label{thm:EM:MixtureModel} Assume \ref{hyp:positive}. Let \kdtxt{$J \in \nstar$}, $\alpha \in [0,1)$ and starting from an initial  parameter set $(\lbd{1},\Theta_1) \in \simplex_J^+ \times \Tset^J$, let $(\lbd{n},\Theta_{n})_{n \geq 1}$ be defined iteratively such that for all $n \geq 1$,
\begin{align}
&\int_\Yset \sum_{j=1}^{J}  \lambda_{j,n} \frac{\respat[y]}{\alpha-1} \log \lr{\frac{\lambda_{j,n+1}}{\lambda_{j,n}}} \nu(\rmd y) \leq 0 \label{eq:posMixtureW} \\
&\int_\Yset \sum_{j=1}^{J}  \lambda_{j,n} \frac{\respat[y]}{\alpha-1} \log \lr{\frac{k(\theta_{j,n+1}, y)}{k(\theta_{j,n}, y)}} \nu(\rmd y) \leq 0 \label{eq:posMixtureP} \eqsp.
\end{align}
\kdtxt{Further assume that $\Psif(\mu_1 k) < \infty$. Then, $\Psif(\mu_{n+1} k) \leq \Psif(\mu_n k)$ for all $n \geq 1$.}
\end{thm}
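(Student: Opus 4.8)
The plan is to reduce the statement to \Cref{coro:MixtureModel}: since $\Psif(\mu_1 k)<\infty$ is assumed, it suffices to show that hypotheses \eqref{eq:posMixtureW} and \eqref{eq:posMixtureP} together imply \eqref{eq:ineqMixtureMain} at every step $n\geq1$, that is
$$
\int_\Yset \frac{(\mu_n k(y))^\alpha\,p(y)^{1-\alpha}}{\alpha-1}\,\log\lr{\frac{\mu_{n+1} k(y)}{\mu_n k(y)}}\,\nu(\rmd y)\leq0\eqsp.
$$
First I would record two elementary facts. By \ref{hyp:positive}, $\mu_n k(y)=\sum_{j=1}^{J}\lambda_{j,n}k(\theta_{j,n},y)>0$ for every $y$ (at least one weight is positive and every $k(\theta,y)>0$), so all the ratios below are meaningful; and, arguing by induction on $n$, \eqref{eq:posMixtureW} forces $\lambda_{j,n}>0$ for all $j$, since otherwise a summand $\frac{\lambda_{j,n}\respat[y]}{\alpha-1}\log(\lambda_{j,n+1}/\lambda_{j,n})$ would integrate to $+\infty$ (recall $\respat[y]>0$ on $\{p>0\}$, a set of positive $\nu$-measure, and $1/(\alpha-1)<0$).

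The key step exploits the mixture structure pointwise. For fixed $y$, write
$$
\frac{\mu_{n+1} k(y)}{\mu_n k(y)}=\sum_{j=1}^{J}w_{j,n}(y)\,\frac{\lambda_{j,n+1}k(\theta_{j,n+1},y)}{\lambda_{j,n}k(\theta_{j,n},y)}\eqsp,\qquad w_{j,n}(y)\eqdef\frac{\lambda_{j,n}k(\theta_{j,n},y)}{\mu_n k(y)}\eqsp,
$$
where $(w_{j,n}(y))_{1\leq j\leq J}$ is a probability vector. By concavity of $\log$ and Jensen's inequality,
$$
\log\lr{\frac{\mu_{n+1} k(y)}{\mu_n k(y)}}\;\geq\;\sum_{j=1}^{J}w_{j,n}(y)\,\log\lr{\frac{\lambda_{j,n+1}k(\theta_{j,n+1},y)}{\lambda_{j,n}k(\theta_{j,n},y)}}\eqsp.
$$
Since $\alpha\in[0,1)$, the prefactor $(\mu_n k(y))^\alpha p(y)^{1-\alpha}/(\alpha-1)$ is non-positive, so multiplying the last display by it reverses the inequality; integrating over $\Yset$ then bounds the left-hand side of \eqref{eq:ineqMixtureMain} from above by
$$
\int_\Yset\sum_{j=1}^{J}\frac{(\mu_n k(y))^\alpha p(y)^{1-\alpha}}{\alpha-1}\,w_{j,n}(y)\,\log\lr{\frac{\lambda_{j,n+1}k(\theta_{j,n+1},y)}{\lambda_{j,n}k(\theta_{j,n},y)}}\,\nu(\rmd y)\eqsp.
$$

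To conclude, I would use the identity
$$
\frac{(\mu_n k(y))^\alpha p(y)^{1-\alpha}}{\alpha-1}\,w_{j,n}(y)=\frac{\lambda_{j,n}}{\alpha-1}\,k(\theta_{j,n},y)\,(\mu_n k(y))^{\alpha-1}p(y)^{1-\alpha}=\frac{\lambda_{j,n}\,\respat[y]}{\alpha-1}\eqsp,
$$
which matches the coefficient in \eqref{eq:posMixtureW}--\eqref{eq:posMixtureP} by the definition \eqref{eq:respat} of $\respat[y]$, together with the factorisation $\log\lr{\frac{\lambda_{j,n+1}k(\theta_{j,n+1},y)}{\lambda_{j,n}k(\theta_{j,n},y)}}=\log\lr{\frac{\lambda_{j,n+1}}{\lambda_{j,n}}}+\log\lr{\frac{k(\theta_{j,n+1},y)}{k(\theta_{j,n},y)}}$. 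This splits the upper bound into exactly the sum of the left-hand sides of \eqref{eq:posMixtureW} and \eqref{eq:posMixtureP}, each assumed $\leq0$; hence \eqref{eq:ineqMixtureMain} holds for all $n$ and \Cref{coro:MixtureModel} yields $\Psif(\mu_{n+1}k)\leq\Psif(\mu_n k)$.

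The main obstacle is not conceptual but measure-theoretic bookkeeping: one has to make sure the Jensen bound and the splitting of the logarithm never create an ill-defined $\infty-\infty$. This is handled by observing that, once the $1/(\alpha-1)<0$ factor is absorbed, each integrand has a definite sign and vanishes on $\{p=0\}$, so its integral is well defined in $[-\infty,+\infty]$; and that finiteness of the two quantities in \eqref{eq:posMixtureW}--\eqref{eq:posMixtureP} (forced by their being $\leq0$) legitimises adding the two bounds. Everything else is the same elementary manipulation already used in the proof of \Cref{prop:inegThetaMain}.
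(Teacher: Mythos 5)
Your proposal is correct and follows essentially the same route as the paper's proof: reduce to \Cref{coro:MixtureModel}, apply Jensen's inequality to the convex map $u\mapsto\frac{1}{\alpha-1}\log(u)$ on the pointwise mixture decomposition of $\mu_{n+1}k/\mu_n k$, identify the resulting coefficients with $\lambda_{j,n}\respat[y]/(\alpha-1)$ via \eqref{eq:respat}, and split the logarithm so that \eqref{eq:posMixtureW} and \eqref{eq:posMixtureP} together yield \eqref{eq:ineqMixtureMain}. Your additional remarks on the positivity of the weights and the sign/well-definedness of the integrals are harmless refinements of the same argument, not a different approach.
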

\kdtxt{
\begin{proof} By \Cref{coro:MixtureModel}, we can conclude if we show that \eqref{eq:posMixtureW} and \eqref{eq:posMixtureP} together imply \eqref{eq:ineqMixtureMain}. 
    %, that is that 
    %\begin{align}\label{eq:ineqMixtureMain:proof}
    %  \int_\Yset \frac{(\mu_{n} k(y))^\alpha p(y)^{1-\alpha} }{\alpha-1} \log \lr{\frac{\mu_{n+1}k(y)}{\mu_{n}k(y)}} \nu(\rmd y) \leq 0 \eqsp.
    %\end{align}
    To show this, first observe that since the function $ u \mapsto \frac{1}{\alpha -1} \log(u)$ is convex and $\alpha \in [0,1)$, Jensen's inequality implies that: for all $y \in \Yset$ and all $n \geq 1$,
  \begin{align*}
    \frac{1}{\alpha -1}  \log \lr{\frac{\mu_{n+1}k(y)}{\mu_{n}k(y)}} & = \frac{1}{\alpha -1}  \log \lr{\sum_{j=1}^J \frac{\lambda_{j,n} k(\theta_{j,n}, y)}{ \sum_{\ell = 1}^J \lambda_{\ell,n} k(\theta_{\ell,n},y)} \frac{ \lambda_{j,n+1} k(\theta_{j,n+1}, y)}{\lambda_{j,n} k(\theta_{j,n}, y)}} \nonumber \\ & \leq \sum_{j = 1}^J  \frac{\lambda_{j,n} k(\theta_{j,n}, y)}{ \sum_{\ell = 1}^J \lambda_{\ell,n} k(\theta_{\ell,n}, y)} \frac{1}{\alpha -1} \log \lr{\frac{\lambda_{j, n+1} k(\theta_{j, n+1},y)}{ \lambda_{j,n} k(\theta_{j,n}, y)}} %\label{eq:jensen:MM:proof}
  \end{align*}
  that is:
  \begin{align*}
    \frac{1}{\alpha -1}  \log \lr{\frac{\mu_{n+1}k(y)}{\mu_{n}k(y)}} \leq  \sum_{j = 1}^J \frac{\lambda_{j,n}}{\alpha -1} \frac{ k(\theta_{j,n}, y)}{\mu_n k(y)} \log \lr{\frac{\lambda_{j, n+1} k(\theta_{j, n+1},y)}{ \lambda_{j,n} k(\theta_{j,n}, y)}} \eqsp.
    \end{align*}
  Multiplying by $(\mu_n k(y))^{\alpha} p(y) ^{1-\alpha}$ on both sides, integrating with respect to $\nu(\rmd y)$ and using the definition of $\respat[y]$ in \eqref{eq:respat}, this in turn implies that: for all $n \geq 1$,
  \begin{multline*}
  \int_\Yset \frac{(\mu_n k(y))^\alpha p(y)^{1-\alpha} }{\alpha-1} \log \lr{\frac{\mu_{n+1} k(y)}{\mu_n k(y)}} \nu(\rmd y) \\ \leq  \int_\Yset \sum_{j = 1}^J \lambda_{j,n} \frac{\respat[y]}{\alpha -1} \log \lr{\frac{\lambda_{j, n+1} k(\theta_{j, n+1},y)}{ \lambda_{j,n} k(\theta_{j,n}, y)}} \nu(\rmd y) \eqsp. %\label{eq:toreach} 
  \end{multline*}
  As a consequence, the condition
  \begin{align}\label{eq:inter:MM:proof}
   \int_\Yset \sum_{j = 1}^J \lambda_{j,n} \frac{\respat[y]}{\alpha -1} \log \lr{\frac{\lambda_{j, n+1} k(\theta_{j, n+1},y)}{ \lambda_{j,n} k(\theta_{j,n}, y)}} \nu(\rmd y) \leq 0
  \end{align}  
  implies \eqref{eq:ineqMixtureMain}. The condition \eqref{eq:inter:MM:proof} in itself is then straightforwardly implied by \eqref{eq:posMixtureW} and \eqref{eq:posMixtureP} and the proof is concluded.
 \end{proof}
}
% We now plan on finding iterative schemes which satisfy \eqref{eq:posMixtureW} and \eqref{eq:posMixtureP}. 
Strikingly, \eqref{eq:posMixtureW} does not depend on $\Theta_{n+1}$ nor does \eqref{eq:posMixtureP} depend on $\lbd{n+1}$ in \Cref{thm:EM:MixtureModel}. This means that we can treat these two conditions separately and thus that the weights and components parameters of the mixture can be optimised simultaneously. This result was far from immediate by looking at the initial condition \eqref{eq:ineqMixtureMain} and, as we shall see laterly in \Cref{subsec:Exp}, will lead to reduced computational power in practice. 

In addition, we have recovered in \eqref{eq:posMixtureP} the key property used in \Cref{sec:Theta}: compared to the condition \eqref{eq:ineqMixtureMain} which involved the $\log$ of weighted sums of kernel densities, \eqref{eq:posMixtureP} considers a sum of logs of kernel densities. This suggests that we can extend the updates derived in \Cref{sec:Theta} for well-chosen kernel densities $k$ to the more general case of mixture models.

Observe finally that the dependency in $\lambda_{j,n+1}$ appearing in \eqref{eq:posMixtureW} is simpler than the dependency in $\theta_{j,n+1}$ appearing in \eqref{eq:posMixtureP} and that is expressed through the kernel density $k$. As a result, we will first study \eqref{eq:posMixtureW}, in the hope of deriving iterative update formulas for the mixture weights that do not require a specific choice of kernel density $k$. Interestingly, while the natural idea is to perform direct optimisation of the left-hand side of \eqref{eq:posMixtureW}, we will derive a more general expression for the mixture weights \kdtxt{update}, which shall induce numerical advantages later illustrated in \Cref{subsec:Exp}.

\subsection{Choice of $(\lbd{n})_{n \geq 1}$}
\label{sec:choice-lbdn_n}

In the following theorem, we identify an update formula which satisfies \eqref{eq:posMixtureW}, regardless of the choice of the kernel \kdtxt{density} $k$.

\begin{thm}\label{thm:WeightsMixture} Assume \ref{hyp:positive}. Let $\alpha \in [0,1)$, let $(\eta_n)_{n \geq 1}$ be valued in $(0,1]$, let $(\cte_n)_{n \geq 1}$ be such that $(\alpha -1) \cte_n \geq 0$ at all times $n \geq 1$ and let $(\Theta_n)_{n \geq 1}$ be any sequence valued in $\Tset^J$. Starting from an initial $\lbd{1} \in \simplex_J^+$, let $(\lbd{n})_{n \geq 1}$ be defined iteratively such that for all $n \geq 1$
\begin{align}
\lambda_{j,n+1} &=  \frac{\lambda_{j,n} \lrb{\int_\Yset \respat[y] \nu(\rmd y) + (\alpha-1) \cte_n}^{\eta_n}}{\sum_{\ell=1}^J \lambda_{\ell,n} \lrb{\int_\Yset \respat[y][\alpha][\ell] \nu(\rmd y) + (\alpha-1) \cte_n}^{\eta_n}} \eqsp, \quad j = 1 \ldots J\;. \label{eq:aPMC:updateOne}
\end{align}
Then \eqref{eq:posMixtureW} holds.
\end{thm}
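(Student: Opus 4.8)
The key observation is that the factor $\log(\lambda_{j,n+1}/\lambda_{j,n})$ in \eqref{eq:posMixtureW} does not depend on $y$, so the first move is to integrate out $y$: setting $w_{j,n} := \int_\Yset \respat[y]\,\nu(\rmd y)$, condition \eqref{eq:posMixtureW} reads $\tfrac{1}{\alpha-1}\sum_{j=1}^{J}\lambda_{j,n} w_{j,n}\log(\lambda_{j,n+1}/\lambda_{j,n})\leq 0$. Under \ref{hyp:positive} one has $0 < w_{j,n} < \infty$: positivity because $\respat[y]>0$ on $\{p>0\}$, which carries positive $\nu$-mass, and finiteness because $\mu_n k \geq \lambda_{j,n} k(\theta_{j,n},\cdot)$ together with $\alpha-1<0$ gives $\respat[y]\leq \lambda_{j,n}^{\alpha-1}k(\theta_{j,n},y)^\alpha p(y)^{1-\alpha}$, whose integral is finite by Hölder's inequality (the same argument as in the remark following \eqref{eq:ratio}). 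Writing $v_{j,n} := w_{j,n} + (\alpha-1)\cte_n$, the hypothesis $(\alpha-1)\cte_n\geq 0$ yields $v_{j,n}\geq w_{j,n}>0$; a trivial induction on \eqref{eq:aPMC:updateOne} also shows $\lbd{n}\in\simplex_J^+$ for all $n$, so every quantity below is well defined. Since $\alpha\in[0,1)$, \eqref{eq:posMixtureW} is then equivalent to $\Phi_n\geq0$, where
\begin{align*}
\Phi_n := \sum_{j=1}^{J}\lambda_{j,n}\,w_{j,n}\,\log\lr{\frac{\lambda_{j,n+1}}{\lambda_{j,n}}}.
\end{align*}

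Next I would plug in the update rule. From \eqref{eq:aPMC:updateOne}, $\log(\lambda_{j,n+1}/\lambda_{j,n}) = \eta_n\log v_{j,n} - \log Z_n$ with $Z_n := \sum_{\ell=1}^{J}\lambda_{\ell,n}v_{\ell,n}^{\eta_n}$, so that
\begin{align*}
\Phi_n = \eta_n\sum_{j=1}^{J}\lambda_{j,n}\,w_{j,n}\,\log v_{j,n} \;-\; \lr{\sum_{j=1}^{J}\lambda_{j,n}\,w_{j,n}}\log Z_n.
\end{align*}
The first trick is to linearise $\log Z_n$ in $\eta_n$: as $t\mapsto t^{\eta_n}$ is concave on $(0,\infty)$ for $\eta_n\in(0,1]$, Jensen's inequality with the probability weights $(\lambda_{\ell,n})_\ell$ gives $Z_n\leq \bar v^{\,\eta_n}$, hence $\log Z_n\leq \eta_n\log\bar v$, where $\bar v := \sum_{\ell=1}^{J}\lambda_{\ell,n}v_{\ell,n}$. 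Since $\sum_j\lambda_{j,n}w_{j,n}>0$, it follows that
\begin{align*}
\Phi_n \geq \eta_n\lrb{\sum_{j=1}^{J}\lambda_{j,n}\,w_{j,n}\,\log v_{j,n} - \lr{\sum_{j=1}^{J}\lambda_{j,n}\,w_{j,n}}\log\bar v},
\end{align*}
and it remains to prove that the bracket is non-negative.

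For this last step I would use $w_{j,n} = v_{j,n} - (\alpha-1)\cte_n$ (so that $\sum_j\lambda_{j,n}w_{j,n} = \bar v - (\alpha-1)\cte_n$) to rewrite the bracket as
\begin{align*}
\lrb{\sum_{j=1}^{J}\lambda_{j,n}v_{j,n}\log v_{j,n} - \bar v\log\bar v} \;-\; (\alpha-1)\cte_n\lrb{\sum_{j=1}^{J}\lambda_{j,n}\log v_{j,n} - \log\bar v}.
\end{align*}
The first bracket is $\geq 0$ by Jensen applied to the convex map $t\mapsto t\log t$; the second bracket is $\leq 0$ by Jensen applied to the concave map $t\mapsto\log t$; and $(\alpha-1)\cte_n\geq 0$ by assumption, so the whole expression is a sum of two non-negative terms. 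Hence the bracket is $\geq 0$, and therefore $\Phi_n\geq 0$ (using $\eta_n>0$), which is exactly \eqref{eq:posMixtureW}.

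The reason the statement is not immediate — and where the real work lies — is that the update \eqref{eq:aPMC:updateOne} is \emph{not} the maximiser of the left-hand side of \eqref{eq:posMixtureW}: it coincides with that maximiser only in the special case $\eta_n=1$, $\cte_n=0$, where $\lbd{n+1}$ is the Lagrangian critical point of the concave map $\lbd{}\mapsto\sum_j\lambda_{j,n}w_{j,n}\log(\lambda_j/\lambda_{j,n})$ on $\simplex_J$. For general admissible $(\eta_n,\cte_n)$ one cannot argue by optimality, and the substitute is precisely the three Jensen estimates above: concavity of $t\mapsto t^{\eta_n}$ to control the log-partition term, then convexity of $t\log t$ together with concavity of $\log t$ and the sign constraint on $\cte_n$ to close the bracket. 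Everything else — interchanging the finite sum with the integral, positivity and finiteness of $w_{j,n}$, invariance of $\simplex_J^+$ — is routine under \ref{hyp:positive}.
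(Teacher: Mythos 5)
Your proof is correct, and it takes a genuinely different route from the paper's. The paper splits into the cases $\eta_n=1$ and $\eta_n\in(0,1)$: in the first it reduces, via $(\alpha-1)\cte_n\geq0$ and $\sum_j\lambda_{j,n}\log(\lambda_j/\lambda_{j,n})\leq0$, to a statement proved by recognising $\lbd{n+1}$ as the minimiser of an auxiliary function over the simplex (non-negativity of a Kullback--Leibler-type sum); in the second it invokes the bound $\log u\geq \epsilon^{-1}(1-u^{-\epsilon})$ with $\epsilon=\eta_n^{-1}-1$ and again an argmin characterisation, this time via convexity of $u\mapsto u^{1+\epsilon}$. You instead handle all $\eta_n\in(0,1]$ uniformly by a direct computation: after integrating out $y$ and expanding the update, you control the log-normalisation with Jensen for the concave map $t\mapsto t^{\eta_n}$, and close the remaining bracket by splitting $w_{j,n}=v_{j,n}-(\alpha-1)\cte_n$ and applying Jensen twice more (convexity of $t\log t$, concavity of $\log t$) together with the sign constraint on $\cte_n$. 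All three Jensen steps are applied to finitely many positive finite quantities, and you verify the prerequisites ($0<w_{j,n}<\infty$, $\lbd{n}\in\simplex_J^+$ by induction) that make them legitimate, so the argument is complete. What the paper's longer route buys is the byproduct it exploits immediately afterwards: the identification of $\lbd{n+1}$ as $\argmin_{\lbd{}}h_n(\lbd{})$ for an explicit surrogate $h_n$ upper-bounding the left-hand side of \eqref{eq:posMixtureW}, which underlies the majorisation interpretation discussed after the theorem; your argument does not produce this representation. What your route buys is a shorter, case-free proof and an explicit check of positivity of the integrals (the paper only verifies finiteness), at the cost of losing that variational reformulation.
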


\begin{proof}
We first check that the integrals appearing
    in~(\ref{eq:aPMC:updateOne}) are finite:
    \begin{equation}
      \label{eq:int-finit-update-lambda}
    \int_\Yset \respat[y]\;\nu(\rmd
        y)      < \infty\;,\qquad j=1,\dots,J\,,\;n\geq1\;.
    \end{equation}
Using~(\ref{eq:respat}), that
    $\lambda_{j,n} > 0$, that $\lambda_{j,n} k(\theta_{j,n}, y) \leq
    \mu_n k(y)$ and Jensen's inequality applied to the concave
    function $u \mapsto u^{1-\alpha}$, we have,
    for any $j=1,\dots,J$ and $n\geq1$,
    
    \begin{align*}
    \int_\Yset \respat[y]\;\nu(\rmd
        y) & = \int_\Yset  k(\theta_{j,n}, y) \lr{\frac{\mu_n k(y)}{  p(y)}}^{\alpha-1} \nu(\rmd
        y) \\
        %& = \frac{1}{\lambda_{j,n}} \int_\Yset \lambda_{j,n} k(\theta_{j,n}, y) \lr{\frac{\mu_n k(y)}{  p(y)}}^{\alpha-1} \nu(\rmd y) \\
        & \leq  \frac{1}{\lambda_{j,n}} \int_\Yset \mu_n k(y) \lr{\frac{  p(y)}{\mu_n k(y)}}^{1- \alpha} \nu(\rmd
        y) \\
        & \leq \frac{1}{\lambda_{j,n}} \lr{\int_\Yset p(y) \nu(\rmd
          y)}^{1-\alpha}\;.
    \end{align*}
    The bound~(\ref{eq:int-finit-update-lambda}) follows by~\ref{hyp:positive}.
  Now, to prove \eqref{eq:posMixtureW}, we treat the cases
  $\eta_n = 1$ and $\eta_n \in (0,1)$ separately.
\begin{enumerate}[label=(\alph*),wide=0pt, labelindent=\parindent]
  \item Case $\eta_n = 1$. Since $(\alpha-1) \cte_n \geq 0$ with $\alpha \in (0,1)$, we have that
      $$
      \cte_n \sum_{j=1}^{J} \lambda_{j,n} \log ( \lambda_j/ \lambda_{j,n} )\geq 0
      $$
      where we have used that $\sum_{j=1}^{J} \lambda_{j,n} \log ( \lambda_j/ \lambda_{j,n} )\leq \sum_{j=1}^{J} \lambda_{j,n} (\lambda_j / \lambda_{j,n} - 1) = 0$. In other words, to obtain \eqref{eq:posMixtureW} in the particular case $\eta_n = 1$, it is enough to show
      \begin{align*}
      \int_\Yset \sum_{j=1}^{J}  \lambda_{j,n} \frac{\respat[y]}{\alpha-1} \log \lr{\frac{\lambda_{j,n+1}}{\lambda_{j,n}}} \nu(\rmd y) + \cte_n \sum_{j=1}^{J} \lambda_{j,n} \log \lr{ \frac{\lambda_{j,n+1}}{\lambda_{j,n}} } \leq 0
      \end{align*}
      that is, since~(\ref{eq:int-finit-update-lambda}) holds,
\begin{align}\label{eq:toShowEta1}
      \sum_{j=1}^{J}  \lambda_{j,n} \lrb{ \int_\Yset  \frac{\respat[y]}{\alpha-1}  \nu(\rmd y) + \cte_n} \log \lr{ \frac{\lambda_{j,n+1}}{\lambda_{j,n}} } \leq 0 \eqsp.
      \end{align}
      Notice then that by definition of $(\lambda_{j,n+1})_{1 \leq j \leq J}$ when $\eta_n = 1$, we can write
      \begin{align*}
      \lbd{n+1} = \argmin_{\lbd{} \in \simplex_J^+} \sum_{j=1}^{J}  \lambda_{j,n} \lrb{ \int_\Yset \frac{\respat[y]}{\alpha-1} \nu(\rmd y) + \cte_n} \log \lr{\frac{\lambda_j}{\lambda_{j,n}}} \eqsp.
      \end{align*}
      [Indeed, setting $\beta_j = \lambda_{j,n} \lrb{ \int_\Yset \respat[y] \nu(\rmd y) + (\alpha-1) \cte_n}$ and $\bar{\beta}_j = \beta_j / \sum_{\ell=1}^J \beta_{\ell}$ for all $j = 1 \ldots J$, we have that $\sum_{j=1}^{J} \bar{\beta}_j \log \lr{\bar{\beta}_j/\lambda_j} \geq 0$
      and that this quantity is minimal when $\lambda_j = \bar{\beta}_j$ for $j = 1 \ldots J$.] This implies \eqref{eq:toShowEta1} and settles the case $\eta_n = 1$.

  \item For the particular case $\eta_n \in (0,1)$, we will use that for all $\epsilon > 0$ and all $u> 0$,
  $$
  \log(u) = \frac{1}{\epsilon} \log(u^\epsilon) \geq \frac{1}{\epsilon} \lr{1 - \frac{1}{u^\epsilon}} \eqsp.
  $$
  Indeed, since $\int_\Yset  \frac{\respat[y]}{\alpha-1}  \nu(\rmd y) + \cte_n \leq 0$ for all $j = 1 \ldots J$, we can then write that for all $\epsilon > 0$,
  \begin{multline}\label{eq:interTwo}
    \sum_{j=1}^{J}  \lambda_{j,n} \lrb{ \int_\Yset  \frac{\respat[y]}{\alpha-1}  \nu(\rmd y) + \cte_n} \log \lr{ \frac{\lambda_{j,n+1}}{\lambda_{j,n}} } \\
    \leq \frac{1}{\epsilon} \sum_{j=1}^{J}  \lambda_{j,n} \lrb{ \int_\Yset  \frac{\respat[y]}{\alpha-1}  \nu(\rmd y) + \cte_n} \lrb{1 - \lr{\frac{\lambda_{j,n}}{\lambda_{j,n+1}} }^\epsilon} \eqsp.
  \end{multline}
  Now notice that by definition of $(\lambda_{j,n+1})_{1 \leq j \leq J}$ we can write
  \begin{align*}
  \lbd{n+1} = \argmin_{\lbd{} \in \simplex_J^+} \frac{1}{\epsilon} \sum_{j=1}^{J}  \lambda_{j,n} \lrb{ \int_\Yset  \frac{\respat[y]}{\alpha-1}  \nu(\rmd y) + \cte_n} \lrb{1 - \lr{\frac{\lambda_{j,n}}{\lambda_j}}^\epsilon}
  \end{align*}
  when $\epsilon$ satisfies $\eta_n = \frac{1}{1+\epsilon}$. [Indeed setting $\beta_j = \lambda_{j,n} \lrb{ \int_\Yset \respat[y] \nu(\rmd y) + (\alpha-1) \cte_n}^{\frac{1}{1+\epsilon}}$ and $\bar{\beta}_j = \beta_j / \sum_{\ell=1}^J \beta_{\ell}$ for all $j = 1 \ldots J$, we have by convexity of the function $u \mapsto u^{1+\epsilon}$ that $\sum_{j=1}^{J} \lr{\bar{\beta}_j /\lambda_j}^{1+\epsilon} \lambda_j \geq (\sum_{j=1}^{J} {\bar{\beta}_j})^{1+\epsilon}$ and that this quantity is minimal when $\lambda_j = \bar{\beta}_j$ for $j = 1 \ldots J$.] We then deduce that taking $\epsilon = \eta_n^{-1} - 1$ (it is always possible since $\eta_n \in (0,1)$ by assumption) yields
  $$
  \frac{1}{\epsilon} \sum_{j=1}^{J}  \lambda_{j,n} \lrb{ \int_\Yset  \frac{\respat[y]}{\alpha-1}  \nu(\rmd y) + \cte_n} \lrb{1 - \lr{\frac{\lambda_{j,n}}{\lambda_{j,n+1}} }^\epsilon} \leq 0
  $$
  which in turn yields \eqref{eq:posMixtureW} [since combined with \eqref{eq:interTwo} it implies \eqref{eq:toShowEta1} which itself implies \eqref{eq:posMixtureW} as seen in the case $\eta_n = 1$]. This settles the case $\eta_n \in (0,1)$.
\end{enumerate}
\end{proof}
Notice that as a byproduct of the proof of \Cref{thm:WeightsMixture}, the mixture weights update given by \eqref{eq:aPMC:updateOne} can be rewritten under the form: for all $n \geq 1$,
$$
\lbd{n+1} = \argmin_{\lbd{}\in \simplex_J^+} h_n(\lbd{})
$$
where, setting $\epsilon = \eta_n^{-1} -1$, we have defined for all $\lbd{} \in \simplex_J^+$,
\begin{align*}
h_n(\lbd{}) &= \begin{cases}
                  \sum_{j=1}^{J}  \lambda_{j,n} \lrb{ \int_\Yset \frac{\respat[y]}{\alpha-1} \nu(\rmd y) + \cte_n} \log \lr{\frac{\lambda_j}{\lambda_{j,n}}}, & \mbox{if } \eta_n = 1 \eqsp, \\
                 \frac{1}{\epsilon} \sum_{j=1}^{J}  \lambda_{j,n} \lrb{ \int_\Yset  \frac{\respat[y]}{\alpha-1}  \nu(\rmd y) + \cte_n} \lrb{1 - \lr{\frac{\lambda_{j,n}}{\lambda_j}}^\epsilon}, & \mbox{if } \eta_n \in (0,1) \eqsp.
               \end{cases}
\end{align*}
More specifically, $h_n(\lbd{})$ acts as an upper bound of the left-hand side of \eqref{eq:posMixtureW} and we recover exactly the left-hand side of \kdtxt{\eqref{eq:posMixtureW}} in the particular case $\eta_n = 1$ and $\cte_n = 0$. \newline

Now that we have established the mixture weights updates \eqref{eq:aPMC:updateOne} in \Cref{thm:WeightsMixture}, we are interested in deriving update formulas for the sequence $(\Theta_n)_{n \geq 1}$ satisfying \eqref{eq:posMixtureP}, which we will then pair up with \eqref{eq:aPMC:updateOne}  in order to apply \Cref{thm:EM:MixtureModel}. From this point onwards, all the proofs of the coming results will be deferred to the appendices to ease the reading.

\subsection{Choice of $(\Theta_n)_{n \geq 1}$}
\label{sec:choice-theta_n}

We investigate two different approaches for choosing $(\Theta_n)_{n \geq 1}$.

\subsubsection{A Maximisation Approach}
\label{sec:maxim-appr-finite-mixture}
\kdtxt{As done in \Cref{coro:argmax}, an idea is to consider an update for $(\Theta_n)_{n \geq 1}$ of the form}
\begin{align*}
\Theta_{n+1} = \argmax_{\Theta \in \Tset^J} g_n(\Theta) \eqsp, \quad n \geq 1 \eqsp,
\end{align*}
where the function $g_n$ is constructed as a lower bound on $\Theta \in \Tset^J$ of the function 
$\Theta \mapsto \int_\Yset \sum_{j=1}^{J} \lambda_{j,n} \respat[y]
\log \lr{{k(\theta_{j}, y)}/{k(\theta_{j,n}, y)}} \nu(\rmd y)$ 
that satisfies $g_n(\Theta_n) = 0$. In doing so, the function $\tilde{g}_n : \Theta \mapsto
g_n(\Theta)/(\alpha-1)$ evaluated at $\Theta_{n+1}$ is an upper bound of the left-hand side of
\eqref{eq:posMixtureP} and $\tilde g_n(\Theta_{n+1}) \leq 0$
implies \eqref{eq:posMixtureP}. % , so that the full update $(\lbd{n+1}, \Theta_{n+1})$ can be written as the following optimisation problem
% $$
% (\lbd{n+1}, \Theta_{n+1}) = \argmin_{\lbd{}\in \simplex_J^+, \Theta \in \Tset^J} \lr{h_n(\lbd{}) + \tilde{g}_n(\Theta)} \eqsp.
% $$
This leads us to \Cref{coro:argminMixtureModel} below.
\begin{coro}[Generalised maximisation approach]\label{coro:argminMixtureModel}
Assume \ref{hyp:positive}. Let $\alpha \in [0,1)$, let
  $(\eta_n)_{n \geq 1}$ be valued in $(0,1]$ and let $(\cte_n)_{n \geq
    1}$ be such that $(\alpha -1) \cte_n \geq 0$ at all times $n \geq
  1$. Furthermore, let $(b_{j,n})_{n \geq 1}$ be a non-negative
    sequence for all $j = 1 \ldots J$. Starting from an initial
  parameter set $(\lbd{1},\Theta_1) \in \simplex_J^+ \times \Tset^J$,
  let $(\lbd{n},\Theta_{n})_{n \geq 1}$ be defined iteratively for all
  $n \geq 1$ in such a way that \eqref{eq:aPMC:updateOne} holds and % by \eqref{eq:aPMC:updateOne} and
%\begin{align}
 % \label{eq:updateTheta-apmc}   \int_\Yset \lrb{ \respat[y] + b_{j,n} k(\theta_{j,n}, y)} \log\left(\frac{k(\theta_{j,n+1}, y)}{k(\theta_{j,n}, y)}\right) \nu(\rmd y)\geq0 \eqsp, \quad j = 1 \ldots J \eqsp.
%\end{align}  
%Then \eqref{eq:posMixtureP} holds and we can apply \Cref{thm:WeightsMixture}. In particular, \eqref{eq:updateTheta-apmc} is satisfied when 
\begin{align}
  \theta_{j,n+1} &= \argmax_{\theta \in \Tset} \int_\Yset \kdtxt{\lrb{ \respat[y] + b_{j,n} k(\theta_{j,n}, y)}} \log \lr{\frac{k(\theta, y)}{k(\theta_{j,n}, y)}} \nu(\rmd y) \eqsp, \quad j = 1 \ldots J \eqsp\kdtxt{,} \label{eq:aPMC:updateP}
  \end{align}
where we assume that this argmax is uniquely defined at each
step. Then, we can apply \Cref{thm:EM:MixtureModel}.

%Then \eqref{eq:posMixtureP} holds at all times $n \geq 1$ and\frtxt{, if moreover (\ref{eq:posMixtureW}) holds,} we can apply \Cref{thm:WeightsMixture}.
\end{coro}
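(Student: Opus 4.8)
The plan is to show that the iterates produced by the scheme satisfy the two hypotheses \eqref{eq:posMixtureW} and \eqref{eq:posMixtureP} of \Cref{thm:EM:MixtureModel}, which then gives the conclusion directly. Condition \eqref{eq:posMixtureW} I would obtain for free: the weights are updated via \eqref{eq:aPMC:updateOne} with $(\eta_n)$ valued in $(0,1]$, $(\cte_n)$ such that $(\alpha-1)\cte_n\geq0$, and an arbitrary sequence $(\Theta_n)$ valued in $\Tset^J$, which is exactly the setting of \Cref{thm:WeightsMixture}; hence \eqref{eq:posMixtureW} holds at every step. Along the way one records that \eqref{eq:aPMC:updateOne} keeps $\lbd{n}$ in $\simplex_J^+$ for all $n$ (an easy induction, since the integrals involved are finite — see the proof of \Cref{thm:WeightsMixture} — and the numerators stay strictly positive).

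For \eqref{eq:posMixtureP} I would follow the pattern of the proof of \Cref{coro:argmax}, now run component-wise. Since $\alpha\in[0,1)$, so $\alpha-1<0$, condition \eqref{eq:posMixtureP} is equivalent to
\[
\int_\Yset \sum_{j=1}^{J}\lambda_{j,n}\,\respat[y]\,\log\lr{\frac{k(\theta_{j,n+1},y)}{k(\theta_{j,n},y)}}\,\nu(\rmd y)\;\geq\;0\eqsp,
\]
and, since the $\lambda_{j,n}$ are non-negative and $\respat[y]\geq0$, it is enough to prove the stronger statement that for each $j=1,\dots,J$,
\[
\int_\Yset \respat[y]\,\log\lr{\frac{k(\theta_{j,n+1},y)}{k(\theta_{j,n},y)}}\,\nu(\rmd y)\;\geq\;0\eqsp.
\]

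To get this, I would use the Gibbs-type inequality $\int_\Yset k(\theta_{j,n},y)\log\lr{k(\theta_{j,n+1},y)/k(\theta_{j,n},y)}\nu(\rmd y)=-\diverg[1]\couple[K(\theta_{j,n},\cdot)][K(\theta_{j,n+1},\cdot)]\leq0$, exactly as in \Cref{coro:argmax}. Combined with $b_{j,n}\geq0$ it gives
\[
\int_\Yset \respat[y]\,\log\lr{\frac{k(\theta_{j,n+1},y)}{k(\theta_{j,n},y)}}\,\nu(\rmd y)\;\geq\;\int_\Yset \lrb{\respat[y]+b_{j,n}\,k(\theta_{j,n},y)}\log\lr{\frac{k(\theta_{j,n+1},y)}{k(\theta_{j,n},y)}}\,\nu(\rmd y)\eqsp,
\]
and the right-hand side is non-negative by the very definition of $\theta_{j,n+1}$ as an argmax over $\theta\in\Tset$ in \eqref{eq:aPMC:updateP}: the feasible choice $\theta=\theta_{j,n}$ makes the integrand vanish identically and thus gives value $0$, so the maximum is $\geq0$. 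This establishes \eqref{eq:posMixtureP}; together with \eqref{eq:posMixtureW}, \Cref{thm:EM:MixtureModel} applies and finishes the proof.

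I expect the only real friction to be integrability bookkeeping rather than anything conceptual: one should check that the integrals above are well defined and finite — the bound $\int_\Yset\respat[y]\,\nu(\rmd y)<\infty$ is already available from the proof of \Cref{thm:WeightsMixture} (using $\lambda_{j,n}k(\theta_{j,n},\cdot)\leq\mu_n k$, Jensen applied to $u\mapsto u^{1-\alpha}$, and \ref{hyp:positive}), the finiteness of the argmax objective at $\theta_{j,n+1}$ is implicit in the assumption that the argmax is uniquely defined, and the quantity $-\diverg[1]\couple[K(\theta_{j,n},\cdot)][K(\theta_{j,n+1},\cdot)]\in[-\infty,0]$ enters the inequality chain only with the favourable sign, so no $\infty-\infty$ ambiguity arises. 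The other point to keep straight is that $\lbd{n}$ stays in $\simplex_J^+$, which is what justifies dropping the non-negative weights $\lambda_{j,n}$ in the component-wise reduction.
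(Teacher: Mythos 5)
Your proposal is correct and follows essentially the same route as the paper's own proof: condition \eqref{eq:posMixtureW} is obtained from \Cref{thm:WeightsMixture}, and condition \eqref{eq:posMixtureP} is obtained component-wise by repeating the argument of \Cref{coro:argmax} (sign flip from $\alpha\in[0,1)$, non-negativity of the Kullback--Leibler term, $b_{j,n}\geq0$, and the argmax definition with $\theta=\theta_{j,n}$ as the reference point), after which \Cref{thm:EM:MixtureModel} applies. Your additional remarks on integrability and on $\lbd{n}$ remaining in $\simplex_J^+$ only make explicit what the paper leaves implicit.
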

The proof of this result is deferred to \Cref{subsec:coro:argminMixtureModel:proof}. Under the assumptions of \Cref{coro:argminMixtureModel}, Algorithm \ref{algo:GGEMalpha} leads to a systematic decrease in $\Psif$ at each step. This result effectively generalises the monotonicity property from \Cref{coro:argmax} to the case of mixture models and we can deduce simple iterative schemes satisfying \eqref{eq:aPMC:updateP} for well-chosen kernel densities $k$. To illustrate this, we provide below the extension of \Cref{ex:Gaussian} to Gaussian Mixture Models (and \Cref{lem:MF} can be extended in the same way, see \Cref{subsec:extension}).

{\SetAlgoNoLine
\SetInd{0.8em}{-1.4em}
\begin{algorithm}[t]
\caption{Maximisation approach algorithm for mixture models}
\label{algo:GGEMalpha}
\textbf{At iteration $n$,}

For all $j = 1 \ldots J$, set
\begin{align*}
\lambda_{j,n+1} &=  \frac{\lambda_{j,n} \lrb{\int_\Yset \respat[y] \nu(\rmd y) + (\alpha-1) \cte_n}^{\eta_n}}{\sum_{\ell=1}^J \lambda_{\ell,n} \lrb{\int_\Yset \respat[y][\alpha][\ell] \nu(\rmd y) + (\alpha-1) \cte_n}^{\eta_n}} \\
\theta_{j,n+1} &= \argmax_{\theta \in \Tset} \int_\Yset \lrb{ \respat[y] + b_{j,n} k(\theta_{j,n}, y) }\log \lr{ \frac{k(\theta, y)}{k(\theta_{j,n}, y)}} \nu(\rmd y) \eqsp.
\end{align*} \
\end{algorithm} 
}

\begin{ex}[Maximisation approach for Gaussian Mixture Models] \label{ex:GMMmax}

  We consider the case of $d$-dimensional Gaussian mixture densities, in which case $\Yset = \rset^d$, $\nu$ is the $d$-dimensional Lebesgue measure and $k(\theta_j, y) = \mathcal{N}(y; m_j, \Sigma_j)$, where $\theta_j = (m_j, \Sigma_j) \in \Tset$ denotes the mean and covariance matrix of the $j$-th Gaussian component density. We let the parameter set $\Tset$ include all possible means in $\rset^d$ and all possible positive-definite covariance matrices. Finally, we assume that the non-negative function $p$ defined on $\Yset$ satisfies \eqref{eq:gaussian-p-cond}. 
  
  Then \ref{hyp:positive} holds. Moreover, starting from any $(\lbd{1}, \Theta_1) \in \simplex_J^+ \times \Tset^J$ so that $\Psif(\mu_1 k) < \infty$ and denoting $\theta_{j,n} = (m_{j,n}, \Sigma_{j,n})$ for all $j = 1 \ldots J$ and all $n \geq 1$, it holds that: for all $j = 1 \ldots J$, all $n \geq 1$ and all $\gamma_{j,n} \in (0,1]$, there exists $b_{j,n} \geq 0$ such that the argmax problem \eqref{eq:aPMC:updateP} has a unique solution defined by:
\begin{align*}
\quad m_{j,n+1}&= \gamma_{j,n}  \widehat{m}_{j,n} + (1 -\gamma_{j,n}) m_{j,n} \\
%\Sigma_{j,n+1} &=  \gamma_{j,n} \widehat{\Sigma}_{j,n} + (1 -\gamma_{j,n}) \Sigma_{j,n} + \gamma_{j,n} \widehat{m}_{j,n}  \widehat{m}_{j,n} ^T + (1-\gamma_{j,n}) m_{j,n}  m_{j,n}^T - m_{j,n+1} m_{j,n+1}^T  \eqsp, \\
\Sigma_{j,n+1} &=  \gamma_{j,n} \widehat{\Sigma}_{j,n} + (1 -\gamma_{j,n}) \Sigma_{j,n} + \gamma_{j,n} (1 -\gamma_{j,n})
\lr{\widehat{m}_{j,n}-m_{j,n}} \lr{\widehat{m}_{j,n}-m_{j,n}}^T  \eqsp, 
\end{align*}
where, using the definition of $\normratiot$ in \eqref{eq:respat}, we set
     \begin{align*}
    &\widehat{m}_{j,n} =\int_\Yset y\, \normratiot (y)\;\nu(\rmd y)\;,\\
    &\widehat{\Sigma}_{j,n}= \int_\Yset yy^T\, \normratiot
      (y)\;\nu(\rmd y)
      -\widehat{m}_{j,n}\widehat{m}_{j,n}^T\;.
     \end{align*}
  The detailed derivation of the mean and covariance updates above is deferred to \Cref{sec:argm-solut-param-expo}. The interpretation of these updates is akin to the one already made in \Cref{ex:Gaussian}: $b_{j,n}$ acts as a regularisation parameter which, through $\gamma_{j,n}$, permits a tradeoff between an update close to the current parameter $\theta_{j,n}$ and choosing the Gaussian density with exactly the same mean and covariance matrix as $\normratiot$. As per written in \Cref{coro:argminMixtureModel}, these updates are compatible with the mixture weights updates \eqref{eq:aPMC:updateOne}, resulting in a systematic decrease of $(\Psif(\mu_n k))_{n \geq 1}$.
\end{ex}
We next present another possible update formula for $(\lbd{n}, \Theta_{n})_{n \geq 1}$.

\subsubsection{A Gradient-based Approach}
\label{subsec:GDapproach}

\kdtxt{In the spirit of \Cref{coro:gradientDescent}, we} now resort to \kdtxt{G}radient \kdtxt{D}escent steps to satisfy \eqref{eq:posMixtureW}.

\begin{coro}[\kdtxt{Generalised gradient-based
    approach}] \label{coro:GDMixtureModel} Assume
  \ref{hyp:positive}. \kdtxt{Let $\Tset \subseteq \rset^d$ be a convex
    set, l}et $\alpha \in [0,1)$, let $(\eta_n)_{n \geq 1}$ be valued
  in $(0,1]$ and let $(\cte_n)_{n\geq 1}$ be such that $(\alpha -1)
  \cte_n \geq 0$ at all times $n$. Furthermore, for all $j = 1 \ldots
  J$, let $(\gamma_{j,n})_{n \geq 1}$ be valued in $(0,1]$. Starting
  from an initial parameter set $(\lbd{1},\Theta_1) \in \simplex_J^+
  \times \Tset^J$, let $(\lbd{n},\Theta_{n})_{n \geq 1}$ be defined
  iteratively for all $n \geq 1$ in such a way that \eqref{eq:aPMC:updateOne} holds and
\begin{align}
\theta_{j, n+1} &= \theta_{j,n} - \frac{\gamma_{j,n}}{\beta_{j,n}} \nabla g_{j,n}(\theta_{j,n}) \eqsp, \quad j = 1 \ldots J \eqsp, \label{eq:updateThetaGDMixtureModels}
\end{align}
where for all $j = 1 \ldots J$, $(g_{j,n})_{n \geq 1}$ is defined by: for all $n \geq 1$ and all $\theta \in \Tset$,
\begin{align*} %\label{eq:coroDefGnj}
g_{j,n}(\theta) = \int_\Yset \frac{\respat[y]}{\alpha-1} \log \lr{\frac{k(\theta, y)}{k(\theta_{j,n}, y)}} \nu(\rmd y) \eqsp.
\end{align*}
and $g_{j,n}$ is assumed to be $\beta_{j,n}$-smooth. Then, we can apply \Cref{thm:EM:MixtureModel}.
\end{coro}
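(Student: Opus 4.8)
The plan is to obtain this corollary as a consequence of \Cref{thm:EM:MixtureModel}: it suffices to check that its two hypotheses \eqref{eq:posMixtureW} and \eqref{eq:posMixtureP} hold at every step $n\geq1$. Condition \eqref{eq:posMixtureW} comes for free, since the weights are updated through \eqref{eq:aPMC:updateOne} with $(\alpha-1)\cte_n\geq0$, so \Cref{thm:WeightsMixture} gives \eqref{eq:posMixtureW} directly, whatever the sequence $(\Theta_n)_{n\geq1}$ (in particular the specific choice \eqref{eq:updateThetaGDMixtureModels} plays no role there). Along the way I would record that, under \ref{hyp:positive}, the bracket $\int_\Yset\respat[y]\,\nu(\rmd y)+(\alpha-1)\cte_n$ in \eqref{eq:aPMC:updateOne} is strictly positive (essentially the bound \eqref{eq:int-finit-update-lambda} established in the proof of \Cref{thm:WeightsMixture}), so that \eqref{eq:aPMC:updateOne} maps $\simplex_J^+$ into itself and, by induction from $\lbd{1}\in\simplex_J^+$, $\lambda_{j,n}>0$ for all $j$ and $n$. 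The only genuine task is therefore to show that the gradient step \eqref{eq:updateThetaGDMixtureModels} forces \eqref{eq:posMixtureP}.

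For that I would first rewrite the left-hand side of \eqref{eq:posMixtureP}. Since $g_{j,n}(\theta_{j,n})=0$ by construction, that left-hand side equals $\sum_{j=1}^J\lambda_{j,n}\,g_{j,n}(\theta_{j,n+1})$, and as the weights $\lambda_{j,n}$ are non-negative it is enough to prove $g_{j,n}(\theta_{j,n+1})\leq0$ for each $j=1\ldots J$. This is then a verbatim repetition, applied componentwise, of the argument in the proof of \Cref{coro:gradientDescent}: $\Tset$ is convex, $\gamma_{j,n}\in(0,1]$ and $g_{j,n}$ is $\beta_{j,n}$-smooth, so \Cref{lemma:inegThetaGD} with $g=g_{j,n}$, $\theta=\theta_{j,n}$, $\gamma=\gamma_{j,n}$, $\beta=\beta_{j,n}$ yields $g_{j,n}(\theta_{j,n})-g_{j,n}\bigl(\theta_{j,n}-\tfrac{\gamma_{j,n}}{\beta_{j,n}}\nabla g_{j,n}(\theta_{j,n})\bigr)\geq\tfrac{\gamma_{j,n}}{2\beta_{j,n}}\normev{\nabla g_{j,n}(\theta_{j,n})}^2\geq0$, i.e. $g_{j,n}(\theta_{j,n+1})\leq g_{j,n}(\theta_{j,n})=0$ by the definition of $\theta_{j,n+1}$ in \eqref{eq:updateThetaGDMixtureModels}. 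Summing the inequalities $\lambda_{j,n}\,g_{j,n}(\theta_{j,n+1})\leq0$ over $j=1\ldots J$ gives \eqref{eq:posMixtureP}, and \Cref{thm:EM:MixtureModel} then applies.

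I do not expect a real obstacle here: the substantive step, namely decoupling the weight condition from the component condition so that they can be enforced separately, was already carried out in \Cref{thm:EM:MixtureModel}, and what remains is just the translation of \Cref{coro:gradientDescent} to each mixture component. The only points deserving a bit of care are technical: the finiteness of the integrals defining $g_{j,n}$ (which follows from $\int_\Yset\respat[y]\,\nu(\rmd y)\leq\lambda_{j,n}^{-1}\bigl(\int_\Yset p\,\nu(\rmd y)\bigr)^{1-\alpha}$ shown in the proof of \Cref{thm:WeightsMixture}, together with an integrability assumption on the $\log k$ terms in the spirit of the first comment following \Cref{coro:argmax}), and the requirement that the gradient step remain in $\Tset$, which is exactly the same caveat already present in \Cref{coro:gradientDescent}.
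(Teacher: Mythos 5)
Your proposal is correct and follows essentially the same route as the paper: condition \eqref{eq:posMixtureW} is obtained from \Cref{thm:WeightsMixture} via the update \eqref{eq:aPMC:updateOne}, and condition \eqref{eq:posMixtureP} is obtained by applying \Cref{lemma:inegThetaGD} componentwise to the $\beta_{j,n}$-smooth functions $g_{j,n}$ with $\gamma_{j,n}\in(0,1]$, giving $g_{j,n}(\theta_{j,n+1})\leq g_{j,n}(\theta_{j,n})=0$ and hence the weighted-sum inequality, after which \Cref{thm:EM:MixtureModel} applies. Your additional remarks on the positivity of the weights and the finiteness of the integrals are harmless elaborations of points the paper treats implicitly.
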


{\SetAlgoNoLine
\SetInd{0.8em}{-1.4em}
\begin{algorithm}[t]
\caption{Gradient-based approach algorithm for mixture models}
\label{algo:GGEMalphaGD}
\textbf{At iteration $n$,}

For all $j = 1 \ldots J$, set
\begin{align*}
\lambda_{j,n+1} &=  \frac{\lambda_{j,n} \lrb{\int_\Yset \respat[y] \nu(\rmd y) + (\alpha-1) \cte_n}^{\eta_n}}{\sum_{\ell=1}^J \lambda_{\ell,n} \lrb{\int_\Yset \respat[y][\alpha][\ell] \nu(\rmd y) + (\alpha-1) \cte_n}^{\eta_n}} \\
\theta_{j,n+1} &= \theta_{j,n} - \frac{\gamma_{j,n}}{\beta_{j,n}} \nabla g_{j,n}(\theta_{j,n}) \eqsp.
\end{align*} \
\end{algorithm}
}
The proof of this result is deferred to \Cref{subsec:coro:GDMixtureModel:proof}. Under the assumptions of \Cref{coro:GDMixtureModel}, Algorithm \ref{algo:GGEMalphaGD} ensures a systematic decrease in $\Psif$ at each step and \Cref{coro:GDMixtureModel} thus extends \Cref{coro:gradientDescent} to mixture models. Much like what we did for \Cref{coro:gradientDescent}, we want to identify how our updates relate to Gradient Descent-based techniques for optimising $\Theta$. 
%We now want to identify how our updates relate to Gradient Descent-based techniques for optimising $\Theta$. 
Under common differentiability assumptions, we have: for all $n \geq 1$, all $j = 1 \ldots J$ and all $\theta ' \in \Tset$,
$$
\nabla g_{j,n}(\theta') = \int_\Yset \frac{\respat[y]}{\alpha-1} \left.\frac{\partial\log k(\theta, y)}{\partial\theta}\right|_{(\theta,y) = (\theta', y)} \nu(\rmd y) \eqsp,
$$
so that \eqref{eq:updateThetaGDMixtureModels} becomes
\begin{align}\label{eq:updateThetaGDMixtureModelsExpl}
\theta_{j, n+1} &= \theta_{j,n} - \frac{\gamma_{j,n}}{\beta_{j,n}} \int_\Yset \frac{\respat[y]}{\alpha-1} \left.\frac{\partial\log k(\theta, y)}{\partial\theta}\right|_{(\theta,y) = (\theta_{j,n}, y)} \nu(\rmd y) \eqsp, \quad j = 1 \ldots J \eqsp.
\end{align}
The link with Gradient Descent-based Variational Inference shall become apparent by (i) writing the update formulas that ensue from Gradient Descent iterations for $\alpha$-divergence and R\'{e}nyi's $\alpha$-divergence minimisation and (ii) understanding how \Cref{ex:ATwostatisfied} generalises to Gaussian Mixture Models. Given $(\lbd{n}, \Theta_n) \in \simplex_J^+ \times \Tset^J$, an index $j$ in $1 \ldots J$ and letting $p = p(\cdot, \data)$, performing \textit{one} Gradient Descent iteration w.r.t. $\theta_{j,n}$ for $\alpha$-divergence and R\'{e}nyi's $\alpha$-divergence minimisation indeed respectively amounts to updating $\theta_{j,n+1}$ as follows
\begin{align*}
\theta_{j, n+1} &= \theta_{j,n} - r_{j,n} \lambda_{j,n} \int_\Yset \frac{\respat[y]}{\alpha-1} \left.\frac{\partial\log k(\theta, y)}{\partial\theta}\right|_{(\theta,y) = (\theta_{j,n}, y)} \nu(\rmd y) \eqsp, \\ %label{eq:updateThetaGDMixtThree}
\theta_{j, n+1} &= \theta_{j,n} - r_{j,n} \frac{\lambda_{j,n} }{ \sum_{j=1}^J \lambda_{j,n} \int_\Yset \respat[y] \nu(\rmd y)  } \int_\Yset  \frac{\respat[y]}{\alpha-1} \left.\frac{\partial\log k(\theta, y)}{\partial\theta}\right|_{(\theta,y) = (\theta_{j,n}, y)} \nu(\rmd y) \eqsp %\label{eq:updateThetaGDMixt}
\end{align*}
where $r_{j,n} > 0$ is the learning rate (we refer to \Cref{subsec:GDstepsMM} for details regarding these updates). There is thus a similarity between \eqref{eq:updateThetaGDMixtureModelsExpl} and the Gradient Descent updates above. To fully comprehend the connection between our gradient-based approach and Gradient Descent steps, we present below the generalisation of \Cref{ex:ATwostatisfied} to Gaussian Mixture Models. The smoothness assumption on $g_{j,n}$ will be satisfied in that example and we will recover a Gradient Descent scheme for R\'{e}nyi's $\alpha$-divergence minimisation as a special case.

\begin{ex}[Gradient-based approach for Gaussian Mixture Models] \label{ex:GMM} \hspace{-0.2cm}
We consider the case of $d$-dimensional Gaussian mixture densities with $k(\theta_j, y) = \mathcal{N}(y; m_j, \Sigma_j)$, where this time $\theta_j = m_j$. Further assume that $\Tset \subseteq \rset^d$ is a convex subset and that $\Sigma_j = \sigma_j^2  \boldsymbol{I_d}$, where $\sigma_j^2> 0$ and $\boldsymbol{I_d}$ denotes the $d$-dimensional identity matrix. Finally, we assume that the non-negative function $p$ defined on $\Yset$ satisfies \eqref{eq:cond-gradient-gauss-ex}.

Then, \ref{hyp:positive} holds. Setting $\beta_{j,n} = \sigma_j^{-2} (1 -\alpha)^{-1} \int_\Yset \respat[y] \nu(\rmd y)$ and denoting $\theta_{j,n} = m_{j,n}$, the function $g_{j,n}$ is $\beta_{j,n}$-smooth for all $j = 1 \ldots J$ and all $n \geq 1$ so that the means of the Gaussian densities $(\mathcal{N}(y;m_j, \Sigma_j))_{1\leq j \leq J}$ can be optimised as follows: for all $n \geq 1$,
  \begin{align*}
      m_{j, n+1} %& = m_{j,n} + \gamma_{j,n} \int_\Yset \normratiot(y) (y - m_{j,n}) \nu(\rmd y) \eqsp, \quad j = 1 \ldots J \eqsp,  \\
      &= \gamma_{j,n} \hat{m}_{j,n} + \lr{1 - \gamma_{j,n}} m_{j,n} \eqsp, \quad j = 1 \ldots J \eqsp, %\label{eq:GD:GMM:general}
      \end{align*}
  where $\normratiot$ is defined in \eqref{eq:respat}, $\gamma_{j,n} \in [0,1)$ and $\hat{m}_{j,n} = \int_\Yset y ~ \normratiot(y) ~ \nu(\rmd y)$ (we refer to \Cref{coro:GaussianTempGrad} for details). In particular, letting $(\gamma'_{j,n})_{n\geq 1}$ be valued in $(0,1]$ for all $j = 1 \ldots J$,
  $$
  \gamma_{j,n} = \gamma_{j,n}' \frac{\lambda_{j,n} \int_\Yset \respat[y] \nu(\rmd y)}{\int_\Yset {\mu_n k(y)^\alpha p(y)^{1-\alpha} } \nu(\rmd y)} \in (0,1], \quad j= 1\ldots J, \quad n \geq 1, 
  $$
  (since $\int_\Yset \mu_n k(y)^\alpha p(y)^{1-\alpha}  \nu(\rmd y) = \sum_{j=1}^J \lambda_{j,n}  \int_\Yset \respat[y] \nu(\rmd y)$). 
  %$$
  %\frac{\lambda_{j,n} \int_\Yset \respat[y] \nu(\rmd y)}{\int_\Yset \mu_n k(y)^\alpha p(y)^{1-\alpha}  \nu(\rmd y)} = \frac{\lambda_{j,n} \int_\Yset \respat[y] \nu(\rmd y)}{\sum_{j=1}^J \int_\Yset \lambda_{j,n} \respat[y] \nu(\rmd y)} \leq 1 \eqsp.
  %$$
  The resulting iterative algorithm is given by the following update at time $n \geq 1$:
  \begin{align}
    m_{j, n+1} &= m_{j,n} + \gamma_{j,n}' \frac{\int_\Yset \lambda_{j,n} \respat[y] (y-m_{j,n}) \nu(\rmd y)}{\int_\Yset {\mu_n k(y)^\alpha p(y)^{1-\alpha} } \nu(\rmd y)} \eqsp, \quad j = 1 \ldots J \eqsp. \label{eq:GD:GMM:Renyi}
  \end{align}
Interestingly, \eqref{eq:GD:GMM:Renyi} can be seen as a Gradient Descent step w.r.t. $\theta_{j,n}$ for R\'{e}nyi's $\alpha$-divergence minimisation with a learning rate $r_{j,n} = \sigma_j^2(1-\alpha) \gamma_{j,n}'$. Hence, if we were to solely rely on the Gradient Descent literature, the convergence of the iterative sequence $(\theta_{j,n})_{n \geq 1}$ defined by \eqref{eq:GD:GMM:Renyi} would require the sequence $(\lbd{n})_{n \geq 1}$ to be constant. This is in contrast with \Cref{coro:GDMixtureModel}, which allows for a simultaneous optimisation of $\lbd{}$ and $\Theta$ according to \eqref{eq:aPMC:updateOne} and \eqref{eq:GD:GMM:Renyi}.
\end{ex}
We now add on the comments made in \Cref{sec:Theta} for the gradient-based methodology (which we built in \Cref{coro:gradientDescent} and have since extended to mixture models in \Cref{coro:GDMixtureModel}): 
\begin{itemize}
  \item A core insight from \Cref{coro:GDMixtureModel}, which is exemplified in \Cref{ex:GMM}, is that under a smoothness assumption on $g_{j,n}$ our mixture weights iterative updates are compatible with gradient-based updates, themselves linked to the Gradient Descent literature. In other words, we have embedded Gradient Descent-based iterative updates, which only act on $\Theta$, within a larger framework where simultaneous updates for the mixture components parameters and the mixture weights are well-supported theoretically.

  \item Putting things into perspective with the maximisation approach once again, notice that our previous conclusions from \Cref{sec:Theta} still hold. Namely: (i) the updates on the means from Examples \ref{ex:GMMmax} and \ref{ex:GMM} coincide, meaning that contrary to the gradient-based approach, the maximisation approach enables covariance matrices updates on top of means updates (ii) the maximisation approach permits us to bypass the smoothness assumption made in \Cref{coro:GDMixtureModel}. As a whole, these properties make the maximisation approach a compelling alternative to the gradient-based approach.  
\end{itemize}

%We finally turn to our third and last approach for choosing $(\Theta_n)_{n \geq 1}$, which will relate and improve on the Power Descent algorithm for mixture models established in \cite{daudel2020infinitedimensional}.

We presented two approaches to construct iterative schemes
$(\lbd{n}, \Theta_n)_{n \geq 1}$ that ensure a monotonic decrease in $\Psif$ at each step and lead to simple updates formulas for Gaussian Mixture Models. %We notably provided in \Cref{ex:GMMmax,ex:GMM} simple updates formulas for Gaussian Mixture Models, corresponding respectively to an application of the maximisation approach from \Cref{sec:maxim-appr-finite-mixture} and of the gradient-based approach from \Cref{subsec:GDapproach}. 
We now describe how our framework can be linked to the existing literature.

\section{Related Work}
\label{sec:related}

{
In this section, we detail how our work relates to and improve on previous algorithms proposed for $\alpha$-/R\'{e}nyi's $\alpha$-divergence minimisation.%, starting from the approach of \cite{2016arXiv160202311L}.

\subsection{R\'{e}nyi Divergence Variational Inference \citep{2016arXiv160202311L}} 
\label{relwork:LiTurner}

In \cite{2016arXiv160202311L}, they seek to maximise the Variational R\'{e}nyi (VR) Bound via (Stochastic) Gradient Ascent. This objective function is derived from R\'{e}nyi's $\alpha$-divergence and is given by: for all variational density $q \in \mathcal{Q}$ and all $\alpha \in \rset \setminus \lrcb{1}$, 
\begin{align} \label{eq:VRboundDef}
  \mathcal{L}_\alpha(q, \data) \eqdef \frac{1}{1-\alpha} \log \lr{ \int_\Yset q(y)^{\alpha} p(y, \data)^{1-\alpha} \nu(\rmd y)} \eqsp.
\end{align}
Contrary to us, the work from \cite{2016arXiv160202311L} does not consider the case where $q$ belongs to the mixture models variational family \eqref{eq:def:MM}, that is $q = \mu_{\lbd{}, \Theta} k$. %Indeed, Gradient Ascent steps w.r.t. $\lbd{}$ on \eqref{eq:VRboundDef} are not fit for handling the constrained optimisation on the simplex that is required for mixture weights optimisation. 

Yet, a parallel can be drawn in the GMM case between \cite{2016arXiv160202311L} and our approach by observing that the gradient-based updates on the means in \Cref{ex:GMM} each coincide with a Gradient Ascent step on the objective function \eqref{eq:VRboundDef} for a well-chosen learning rate (since the gradient of $\mathcal{L}_\alpha$ is proportional to the gradient of R\'{e}nyi's $\alpha$-divergence with a factor $-\alpha^{-1}$, this follows from the remarks made in \Cref{subsec:GDapproach} regarding Gradient Descent steps for R\'{e}nyi's $\alpha$-divergence minimisation in the GMM case).

%Yet, we can draw a parallel between \cite{2016arXiv160202311L} and our approach by letting $\alpha \in (0,1)$ and $q = \mu_{\lbd{}, \Theta} k$. 

%Now letting $k$ be a Gaussian kernel density and assuming that the mixture weights are \textit{fixed}, our RGD update on the means \eqref{eq:updateMeanGaussianOne} coincides with Gradient Ascent steps w.r.t. $\Theta$ on the objective function \eqref{eq:VRboundDef} for a well-chosen learning rate policy (this follows from pairing up $\nabla \mathcal{L}_\alpha(\mu_{\lbd{}, \Theta'} k, \data)|_{\Theta' = \Theta} = -\alpha \nabla \divergR \couple[\mu_{\lbd{}, \Theta'} K][\PP]|_{\Theta' = \Theta}$ for all $\lbd{} \in \simplex_J$ with \Cref{subsec:GDapproach}).

%Yet, and this is where our approach and \cite{2016arXiv160202311L} strongly differ, 

Our work hence provides a theoretical framework which enables simultaneous optimisation of the mixture weights $\lbd{}$ and of the mixture components parameters $\Theta$. In addition, beyond the gradient-based updates, we propose the novel maximisation updates and we allow for covariance matrices optimisation. We also emphasise that our maximisation approach will apply to well-chosen kernels $k$ beyond the Gaussian case, as we will detail in \Cref{sec:expo-family}. \newline

%\begin{rem}[The case $\alpha = 0$] One might have noticed that the case $\alpha = 0$ is not recovered by  included in the VR bound. This is yet another difference with \cite{2016arXiv160202311L}.  
%\end{rem}

%. Building on the results we have establised in \Cref{subsec:GDapproach}, we can thus deduce that, in the particular case of Gaussian Mixture Model optimisation where we only optimise the means and keep the mixture weights fixed,.

%If we were to make the analogy with the algorithm we have derived, these methods would correspond to considering fixed mixture weights and stochastic gradient descent on the $\Theta$ via RGD.

%We next move on to the Power Descent algorithm from \cite{daudel2020infinitedimensional}.

\subsection{The Power Descent Algorithm \citep{daudel2020infinitedimensional}}

In order to identify the connection between our work and the Power Descent algorithm introduced in \cite{daudel2020infinitedimensional}, let us briefly present the latter. 

The Power Descent is a \textit{gradient-based} algorithm that operates on probability measures and performs $\alpha$-divergence minimisation for all $\alpha \in \rset \setminus \lrcb{1}$. More precisely, \kdtxt{equipping $\Tset$ with a $\sigma$-field $\Tsigma$ and} denoting by $\meas{1}(\Tset)$ the space of probability measures on \kdtxt{$(\Tset, \Tsigma)$}, the Power Descent optimises $\Psif(\mu k)$ with respect to $\mu \in \meas{1}(\Tset)$, where $\mu k(y) = \int_\Tset \mu(\rmd \theta) k(\theta,y)$ for all $\mu\in \meas{1}(\Tset)$ and all $y \in \Yset$. Given an initial probability measure $\mu_1 \in \meas{1}(\Tset)$, it does so by performing several one-step transitions of the Power Descent algorithm:
\begin{equation}
\label{eq:def:mu}
\mu_{n+1}= \iteration (\mu_n) \eqsp, \quad n \geq 1 \eqsp,
\end{equation}
where, for all $\mu \in \meas{1}(\Tset)$ and all $\theta \in \Tset$,
\begin{align*}
\bmuf(\theta) &= \mathlarger{\int_\Yset} k(\theta,y)  \frac{1}{\alpha-1} \lrb{\left(\dfrac{\mu k(y)}{p(y)}\right)^{\alpha-1}-1} \nu(\rmd y) \\
\iteration (\mu)(\rmd \theta) &= \dfrac{\mu(\rmd \theta) ~ [(\alpha-1)(\bmuf(\theta)+\cte) + 1]^{\frac{\eta}{1-\alpha}}}{\mu([(\alpha-1)(\bmuf+\cte)+1]^{\frac{\eta}{1-\alpha}})} \eqsp. %\\
%& = \dfrac{\mu(\rmd \theta) \cdot \lrb{ \int_\Yset k(\theta, y) \lr{\frac{\mu k(y)}{p(y)}}^{\alpha -1} \nu(\rmd y)  +(\alpha-1)\cte}^{\frac{\eta}{1-\alpha}}}{\mu \lr{\lrb{  \int_\Yset k(\cdot, y) \lr{\frac{\mu k(y)}{p(y)}}^{\alpha -1} \nu(\rmd y) +(\alpha-1)\cte}^{\frac{\eta}{1-\alpha}}}} \eqsp,
\end{align*}
\cite{daudel2020infinitedimensional} motivated the Power Descent algorithm by establishing a monotonicity result for this algorithm obtained as a particular case of \cite[Theorem 1]{daudel2020infinitedimensional}. In their result, the monotonic decrease in $\Psif$ of the scheme \eqref{eq:def:mu} holds for all $\mu \in \meas{1}(\Tset)$, all $\alpha \in \rset \setminus \lrcb{1}$, all $\eta \in (0,1]$ and all $\cte$ such that $(\alpha-1)\cte \geq 0$.} We provide below a more general version of their result, where the monotonic decrease in $\Psif$ holds for well-chosen values of $\eta$ that are larger than $1$ when $\alpha < 0$ (we refer to \Cref{sec:power} for the proof of this result).
\begin{prop}
\label{thm:admiss}
Assume that $p$ and $k$ are as in \ref{hyp:positive}. Let $(\alpha, \eta)$ belong to any of the following
cases.
\begin{enumerate}[label=(\roman*)]
\item\label{item:admiss-alpha-a} $\alpha \leq -1$ and $\eta \in (0,(\alpha-1)/\alpha]$;
\item\label{item:admiss-alpha-b}  $\alpha \in (-1,0)$ and $\eta \in (0,1-\alpha]$;
\item\label{item:admiss-alpha-c} $\alpha \in [0, 1)$ or $\alpha > 1$ and $\eta \in (0,1]$.
\end{enumerate}
Moreover, let $\mu \in \meas{1}(\Tset)$ be such that $\Psif(\mu k)<\infty$ and let $\cte$ be such that $(\alpha-1)\cte \geq 0$. Then, the two following assertions hold.
\begin{enumerate}[label=(\roman*)]
\item \label{item:mono1Prev} We have  $\Psif (\iteration (\mu) k) \leq \Psif(\mu k)$.
\item \label{item:mono2Prev} We have $\Psif (\iteration (\mu) k) =\Psif(\mu k)$ if and only if $\mu=\iteration (\mu)$.
\end{enumerate}
\end{prop}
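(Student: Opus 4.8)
The plan is to prove Proposition~\ref{thm:admiss} by reducing it to the abstract monotonicity theorem of \cite{daudel2020infinitedimensional}, namely \cite[Theorem 1]{daudel2020infinitedimensional}, which gives a monotonic decrease in $\Psif$ for the one-step transition $\mu \mapsto \iteration(\mu)$ built from a function $\GammaAlpha$ as soon as $\GammaAlpha$ satisfies a certain set of structural conditions. Here the relevant transition uses $\GammaAlpha(v) = [(\alpha-1)v + 1]^{\eta/(1-\alpha)}$, so the whole argument boils down to verifying that, for each of the parameter regimes \ref{item:admiss-alpha-a}--\ref{item:admiss-alpha-c}, this particular $\GammaAlpha$ meets the hypotheses of that theorem. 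Concretely, I would first recall (or re-state in \Cref{sec:power}) the exact list of conditions on $\GammaAlpha$ required by \cite[Theorem 1]{daudel2020infinitedimensional}: typically that $\GammaAlpha$ is defined, positive, increasing (or decreasing, depending on the sign conventions) and strictly concave on the relevant domain, together with a normalisation such as $\GammaAlpha(0)=1$ or a derivative condition at a distinguished point, plus the compatibility with the sign of $\alpha-1$ that makes the surrogate bound go the right way.

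The main computation is then elementary calculus on $\GammaAlpha(v) = [(\alpha-1)v+1]^{\eta/(1-\alpha)}$. Writing $c = \alpha - 1$ and $t = \eta/(1-\alpha) = -\eta/c$, we have $\GammaAlpha(v) = (1+cv)^{t}$ on the set where $1+cv>0$, with
\begin{align*}
\GammaAlpha'(v) &= t\,c\,(1+cv)^{t-1}\eqsp, \\
\GammaAlpha''(v) &= t(t-1)c^2 (1+cv)^{t-2}\eqsp.
\end{align*}
Note $tc = -\eta < 0$, so $\GammaAlpha$ is monotone with the sign dictated purely by $\eta>0$ and $c$, uniformly in the cases considered; and $\GammaAlpha(0)=1$, $\GammaAlpha'(0) = -\eta$ hold regardless. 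The concavity requirement $\GammaAlpha''\le 0$ reduces to $t(t-1)\le 0$, i.e. $0\le t\le 1$, i.e. $0 \le -\eta/(\alpha-1) \le 1$. For $\alpha>1$ this forces $\eta\le 0$ unless we instead need $\GammaAlpha$ convex there — so one must be careful that the sign of $\alpha-1$ flips which one-sided inequality the abstract theorem demands; in the regime $\alpha>1$ the theorem's surrogate argument wants the opposite curvature, and $t = \eta/(1-\alpha) < 0$ gives exactly $t(t-1)>0$, i.e. $\GammaAlpha$ convex, which is the correct sign. For $\alpha\in[0,1)$ and $\eta\in(0,1]$ we get $t\in(0,1]$ hence concave, as needed. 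The genuinely new cases are \ref{item:admiss-alpha-a} and \ref{item:admiss-alpha-b}: there $\alpha<0$, so $1-\alpha>1$, and allowing $\eta$ up to $(\alpha-1)/\alpha$ (case \ref{item:admiss-alpha-a}) or up to $1-\alpha$ (case \ref{item:admiss-alpha-b}) still keeps $t=\eta/(1-\alpha)$ inside $(0,1]$: for \ref{item:admiss-alpha-b}, $\eta\le 1-\alpha$ gives $t\le1$; for \ref{item:admiss-alpha-a}, $\eta\le(\alpha-1)/\alpha$ gives $t = \eta/(1-\alpha) \le \frac{(\alpha-1)/\alpha}{1-\alpha} = -1/\alpha \le 1$ since $\alpha\le-1$. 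One must additionally check the domain issue — that $1+(\alpha-1)(\bmuf(\theta)+\cte)$ stays positive $\mu$-a.e., which follows from $\bmuf(\theta)\ge -1/(1-\alpha)$-type bounds already available in \cite{daudel2020infinitedimensional} together with $(\alpha-1)\cte\ge0$ — and the boundary/equality analysis giving assertion~\ref{item:mono2Prev}, where strict concavity (strict, i.e. $t\in(0,1)$, plus a separate treatment of $t=1$) forces $\mu=\iteration(\mu)$ at equality; this is again inherited from the equality case of \cite[Theorem 1]{daudel2020infinitedimensional}.

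Having verified the $\GammaAlpha$-conditions in each case, assertions~\ref{item:mono1Prev} and~\ref{item:mono2Prev} are immediate instances of \cite[Theorem 1]{daudel2020infinitedimensional} applied with that $\GammaAlpha$, the measure $\mu$ with $\Psif(\mu k)<\infty$, and the constant $\cte$ with $(\alpha-1)\cte\ge0$. The hard part will be bookkeeping the sign conventions: the abstract theorem is stated with a fixed orientation, and $\alpha-1$ changes sign across the three regimes, so I expect the real work to be checking that in each regime the pair (curvature of $\GammaAlpha$, sign of $\alpha-1$) lines up with what the theorem needs, and that the admissible-$\eta$ endpoints are exactly the largest values keeping $t=\eta/(1-\alpha)$ in the interval $(0,1]$ (or in $[1,\infty)$ when $\alpha>1$). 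Everything else — positivity, $\GammaAlpha(0)=1$, the derivative value at $0$, and the domain positivity of $1+(\alpha-1)(\bmuf+\cte)$ — is routine and already essentially present in \cite{daudel2020infinitedimensional}, so I would state those checks briefly and refer back rather than redo them.
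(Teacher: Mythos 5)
Your reduction to \cite[Theorem 1]{daudel2020infinitedimensional} is exactly where the gap lies. The specialisation of that theorem to $\Gamma(v)=[(\alpha-1)v+1]^{\eta/(1-\alpha)}$ is the already-known statement recalled just before \Cref{thm:admiss}, and it yields $\eta\in(0,1]$ uniformly in $\alpha$; the whole point of cases \ref{item:admiss-alpha-a} and \ref{item:admiss-alpha-b} is the range $\eta>1$ (both endpoints $(\alpha-1)/\alpha$ and $1-\alpha$ exceed $1$ when $\alpha<0$), which is precisely what cannot be reached by re-checking the hypotheses of that theorem — the paper stresses after \Cref{thm:monotone} that the proof technique, and hence the conditions on $\eta$ obtained, are different from those of \cite[Theorem 1]{daudel2020infinitedimensional}. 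Moreover, you never state the actual hypotheses you claim to verify: your bookkeeping substitutes a guessed condition (concavity of $\Gamma$, i.e.\ $t=\eta/(1-\alpha)\in(0,1]$) for them, and if mere concavity of $\Gamma$ were sufficient the larger ranges of $\eta$ would already have been available in \cite{daudel2020infinitedimensional}. The same objection applies to assertion \ref{item:mono2Prev}, which you simply declare to be ``inherited'' from the equality case of a theorem that does not cover these parameter regimes.

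What the paper actually does: only case \ref{item:admiss-alpha-c} is imported from \cite{daudel2020infinitedimensional}; for $\alpha<0$ the argument is new. One introduces an exponent $\varrho\in\rset\setminus[0,1]$ such that $\frho(u)=\falpha(u^{1/\varrho})$ is non-decreasing and concave (assumption \ref{hyp:f}); writing $\falpha(u)=\frho(u^{\varrho})$ and applying Jensen twice (convexity of $u\mapsto u^{\varrho}$ inside, concavity of $\frho$ outside) gives the key bound of \Cref{prop:fondam}, $|\varrho|^{-1}\lrcb{\mu(|\bmuf+\cte|)-\mu\lr{|\bmuf+\cte|\,g^{\varrho}}}\leq \Psif(\mu k)-\Psif(\zeta k)$ for $\zeta(\rmd\theta)=\mu(\rmd\theta)g(\theta)$, with equality if and only if $\zeta=\mu$ — this equality case is what delivers \ref{item:mono2Prev}. \Cref{thm:monotone} then reduces \ref{item:mono1Prev} to condition \eqref{eq:admiss:varrho} on the Power Descent density $g\propto|\bmuf+\cte|^{\eta/(1-\alpha)}$, which is verified via \Cref{lem:optim:forall:g}, \Cref{lem:optimrhs} and \Cref{lem:unSurRho}: these give the admissible learning rates $\eta=(1-\alpha)/(1-\varrho)$ and, when $|\varrho|\geq1$, $\eta=(\alpha-1)/\varrho$. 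Finally, for $\alpha<0$, \ref{hyp:f} holds for every $\varrho\leq\alpha$ (then $\frho(u)=\frac{1}{\alpha(\alpha-1)}(u^{\alpha/\varrho}-1)$ with $\alpha/\varrho\in(0,1]$), and sweeping over the admissible $\varrho$ produces exactly $\eta$ up to $(\alpha-1)/\alpha$ when $\alpha\leq-1$ and up to $1-\alpha$ when $\alpha\in(-1,0)$, while the finiteness of $\mu(|\bmuf|)$ needed along the way is deduced from $\Psif(\mu k)<\infty$ using $u\falpha'(u)=\alpha\falpha(u)+1/(\alpha-1)$. None of this mechanism — the reparametrisation by $\varrho$, the new surrogate inequality, or the choice of $\eta$ as a function of $\varrho$ — appears in your proposal, so the proof as outlined would not establish the stated ranges.
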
 %which we generalise in \Cref{thm:admiss},
Building on the monotonicity result provided by \cite[Theorem 1]{daudel2020infinitedimensional} for the Power Descent algorithm - that we generalised in \Cref{thm:admiss} - \kdtxt{\cite{daudel2020infinitedimensional} then applied this algorithm to mixture weights optimisation by letting the initial probability measure $\mu_1 \in \meas{1}(\Tset)$ be a weighted sum of Dirac measures of the form $\mu_1 = \sum_{j=1}^J \lambda_{j, n} \delta_{\theta_j}$, with $\Theta \in \Tset$ and $\lbd{1} \in \simplex_J$. For that choice of $\mu_1$, $\mu_n$ in \eqref{eq:def:mu} can be written as $\mu_n = \sum_{j=1}^{J} \lambda_{j,n} \delta_{\theta_{j}}$ at time $n$ and the Power Descent amounts to performing the update
\begin{align}\label{eq:PDmixtureSpecial}
  \lambda_{j, n+1} &= \dfrac{\lambda_{j,n} ~ [(\alpha-1)(\bmuf[\mu_n](\theta_j)+\cte) + 1]^{\frac{\eta}{1-\alpha}}}{\sum_{\ell = 1}^J \lambda_{\ell, n} [(\alpha-1)(\bmuf[\mu_n](\theta_\ell)+\cte)+1]^{\frac{\eta}{1-\alpha}}} \eqsp , \quad j = 1 \ldots J.
  \end{align}
  % akin to \Cref{thm:WeightsMixture} with $(\Theta_n)_{n \geq 1}$ constant % However, the proof technique of this result is entirely different from the one for \Cref{thm:WeightsMixture}, as it is obtained as a particular case of \cite[Theorem 1]{daudel2020infinitedimensional} with $\GammaAlpha = [(\alpha-1)v + 1]^{\eta/(1-\alpha)}$.
  % More importantly, the monotonicity result in \cite{daudel2020infinitedimensional} %   This means that it is possible to allow for a wider range of values of $\alpha$ and of $\eta_n = \eta/(1-\alpha)$ to be used in \kdtxt{Algorithm~\ref{algo:PDGGEMalpha},} while still preserving a monotonic decrease in $\Psif$.
  Interestingly, the update \eqref{eq:PDmixtureSpecial} corresponds to the update on the mixture weights \eqref{eq:aPMC:updateOne} we have identified in \Cref{thm:WeightsMixture} for $\Theta_{n} = \Theta$, $\eta_n = \eta/(1-\alpha)$ and $\cte_n = \cte$. Steaming from this link between our approach and \cite{daudel2020infinitedimensional}, we now make two important comments. 
    %Written under that form, we see that \Cref{thm:WeightsMixture} also permits us to obtain a monotonic decrease in $\Psif$ in Algorithm \ref{algo:PDGGEMalpha} for selected values for $\alpha$ and of $\eta_n = \eta/(1-\alpha)$ (albeit by an entirely different proof technique).
\begin{itemize}
  \item \textit{Benefits of our approach compared to \cite{daudel2020infinitedimensional}.} The monotonicity result from \cite{daudel2020infinitedimensional} generalised in \Cref{thm:admiss} requires the sequence $(\Theta_n)_{n\geq 1}$ to be constant when applied to mixture weights optimisation in \eqref{eq:PDmixtureSpecial}. This restricts the variational family to mixture models with fixed mixture components parameters. %\cite{daudel2020infinitedimensional} was mainly interested in deriving valid mixture weights updates for $(\lbd{n})_{n\geq 1}$ leading to a systematic decrease in $\Psif$. To do so, they kept the sequence $(\Theta_n)_{n \geq 1}$ constant throughout their mixture weights optimisation procedure and focused on establishing theoretical results for the mixture weights updates. 
  To remedy this problem, \cite{daudel2020infinitedimensional} proposed a fully-adaptive algorithm that alternates between an Exploitation step optimising the mixture weights according to \eqref{eq:PDmixtureSpecial} and an Exploration step acting on the mixture components parameters. However, they established no theoretical guarantees for their complete Exploitation-Exploration algorithm, as the choice of the Exploration step remained mostly unexplored.

  %They then proposed a fully-adaptive algorithm that alternates between updating the mixture weights according to Algorithm \ref{algo:PDGGEMalpha} and updating the mixture components parameters, without specifying how the mixture components parameters updates should be constructed.

  A strong improvement of our approach is then that we provide theoretically-sound updates for $(\lbd{n}, \Theta_n)_{n \geq 1}$, with the particularity that our mixture weights updates relate to the framework of \cite{daudel2020infinitedimensional}. In that sense, our approach supplements the work done in \cite{daudel2020infinitedimensional} (albeit by an entirely different proof technique). 
  %we complement the work done in \cite{daudel2020infinitedimensional} by deriving updates on $(\Theta_n)_{n \geq 1}$ that ensure a systematic decrease in $\Psif$ at each step. Even more strikingly, we do not need to alternate between mixture weights and mixture components parameters optimisation, as both can be carried out simultaneously (as done in Algorithm \ref{algo:GGEMalpha} and \ref{algo:GGEMalphaGD}).
  Furthermore, we do not need to alternate between mixture weights and mixture components parameters updates in our algorithms, as both can be carried out simultaneously (as done in Algorithms \ref{algo:GGEMalpha} and \ref{algo:GGEMalphaGD}). In practice, this will permit us to reduce the computational cost (as the samples will be shared throughout the mixture weights and the mixture components parameters approximated updates, see \Cref{sec:practicalities}). %since the samples are shared throughout the mixture weights and the mixture components parameters approximated updates.  %Lastly, while \cite{daudel2020infinitedimensional} restricted themselves to using the IS-n sampler $q = \mu_n k$ at time $n$, we also offer to investigate the IS-unif sampler, in the hope of ensuring a fairer sampling according to all the components. \newline

  \item \textit{A gradient-based mixture weights update.} \cite{daudel2020infinitedimensional} establishes that the Power Descent belongs to a family of gradient-based algorithms which includes the Entropic Mirror Descent algorithm, a typical optimisation algorithm for optimisation under simplex constraints, as a special case. Viewed from this angle, the parameter $\eta$ in \eqref{eq:PDmixtureSpecial} can be understood as a learning rate with $\bmuf[\mu_n]$ playing the role of the gradient of $\Psif$. Connecting the Power Descent to our framework thus sheds light on the gradient-based nature of the mixture weights update \eqref{eq:aPMC:updateOne} from \Cref{thm:WeightsMixture} and provides a better understanding of the role of the parameter $\eta_n$ appearing in this update. This aspect will be helpful to interpret our numerical experiments in \Cref{subsec:Exp}.
  
\end{itemize}
}

\subsection{The M-PMC Algorithm \citep{cappe2008adaptive}} % (fixed $\alpha, \cte, \eta$)

%as a particular case of Algorithm \ref{algo:GGEMalpha}} %\kdtxt{The results we have obtained thus far can be interpreted in the light of the M-PMC algorithm .} To see this, we first recall the basics of the M-PMC algorithm. 

For any measurable positive function $p$ on $(\Yset,\Ysigma)$, the M-PMC algorithm \citep{cappe2008adaptive} aims at solving the optimisation problem
\begin{align*}%\label{eq:PMCoptim}
\sup_{(\lbd{} \in \simplex_J, \Theta \in \Tset^J)} \int_\Yset \log \left( \sum_{j=1}^{J} \lambda_j k(\theta_j, y) \right) p(y) \nu(\rmd y) \eqsp,
\end{align*}
or equivalently, at minimising the inclusive Kullback-Leibler divergence $\diverg[0] \couple[\mu_{\lbd{}, \Theta} K][\PP]$ w.r.t. to $(\lbd{}, \Theta)$, where $\PP(A) = \int_A p~\rmd \nu / \int_\Yset p~\rmd \nu $ for all $A \in \Ysigma$. This is done in \cite[Section 2]{cappe2008adaptive} by introducing the following iterative updates: for all $j = 1 \ldots J$ and all $n \geq 1$,
\begin{align*}
\lambda_{j,n+1} &= \int_\Yset \frac{\lambda_{j,n} k(\theta_{j,n}, y)}{\sum_{\ell = 1}^J \lambda_{\ell,n} k(\theta_{\ell,n}, y)} \frac{p(y)}{\int_\Yset p(y) \nu(\rmd y)}\nu(\rmd y) \\ %\label{eq:EM:update} \\
\theta_{j,n+1}&= \argmax_{\theta_j \in \Tset} \int_\Yset \frac{\lambda_{j,n} k(\theta_{j,n}, y)}{\sum_{\ell = 1}^J \lambda_{\ell,n} k(\theta_{\ell,n}, y)} \log(k(\theta_{j}, y)) p(y) \nu(\rmd y) \eqsp. %\label{eq:EMthetaUpdate}
\end{align*}
\cite{cappe2008adaptive} motivated the two updates above by noticing that they can be seen as integrated versions under the target distribution of the update formulas for the Expectation-Maximisation (EM) algorithm applied to the mixture density parameter estimation problem
\begin{align*}
\sup_{(\lbd{} \in \simplex_J, \Theta \in \Tset^J)} \sum_{m=1}^M \log \left( \sum_{j=1}^{J} \lambda_j k(\theta_j, Y_m) \right) \eqsp,
\end{align*}
meaning that these updates ensure a systematic increase in the integrated likelihood at each step. As these updates also correspond to the case $\alpha = 0$, $\eta_n = 1$, $\cte_n =0$, $a_{j,n} = 1$ and $b_{j,n} = 0$ in Algorithm \ref{algo:GGEMalpha}, the M-PMC algorithm is in fact included in our framework and we can interpret our theoretical results in light of this algorithm. 

More precisely, we have generalised an integrated EM algorithm by preserving its monotonicity property for a wide range of hyperparameters. A particularly striking fact is that the monotonicity property holds for $\alpha \in [0,1)$, hence updates akin to an EM procedure can be derived past the traditional case of likelihood optimisation. As we shall see, the additional layers of flexibility obtained in Algorithm \ref{algo:GGEMalpha} will also have important practical consequences due to the underlying gradient-based structure behind the mixture weights (and the mixture components parameters updates in the GMMs case) we have uncovered.  \newline %(and in which $(\eta_n)_{n \geq 1}$ and $(\gamma_n)_{n \geq 1}$ play the role of learning rates). \newline %This has important practical consequences which we will aim at illustrating in our upcoming numerical experiments. %In practice, the framework of \cite{cappe2008adaptive} is notably applied to Gaussian Mixture Models optimisation by resorting to the IS-n sampler $q_n = \mu_n k$ in order to approximate the updates defined above. This means that the resulting algorithm is embedded in the M approach of Algorithm \ref{algo:GMMsMAIS} and that 

\Cref{table:recap} summarises the main improvements of our framework compared to the existing literature and in the coming section, we revisit the maximisation and the gradient-based approaches when the variational family is based on the exponential family.

%This will in particular enable us to show how the update formulas in \Cref{ex:GMMmax,ex:GMM} are derived and to compare maximisation and gradient-based approaches in a well-understood specific case. We will also build 

\begin{table}
  \centering
\begin{tabular}{ll}
      \toprule
    & Improvements of our framework \\
       \midrule
    R\'{e}nyi Gradient Descent & Simultaneous optimisation w.r.t. $(\lbd{}, \Theta)$ \\
    \cite{2016arXiv160202311L} & (prev. mixture weights $\lbd{}$ optimisation not considered) \\
     & For GMMs : maximisation approach encompasses R\'{e}nyi \\
     & Gradient Descent and provides covariance matrices updates \\
    %& ~-~ $\alpha \in [0,1)$ (prev. $\alpha \in \rset$). \\
       \midrule
  Power Descent & Simultaneous optimisation w.r.t. $(\lbd{},\Theta)$ \\
  \cite{daudel2020infinitedimensional}
  & (prev. $(\Theta_n)_{n \geq 1}$ constant)\\
  %& ~-~ $\alpha \in [0,1)$ (prev. $\alpha \in \rset$). \\
      \midrule
  M-PMC algorithm & Extension of an Integrated EM algorithm to: \\
  \cite{cappe2008adaptive} & $\alpha \in [0,1)$, $\eta_n \in (0,1]$, $(\alpha -1) \cte_n \geq 0$ and $b_{j,n} \geq 0$ \\
  & (prev. $\alpha = 0$, $\eta_n = 1$, $\cte_n = 0$ and $b_{j,n} = 0$) \\
  \bottomrule
\end{tabular}
\caption{Key improvements of our framework for mixture models optimisation compared to related works (prev. stands for previously in the literature).}   \label{table:recap}
\end{table}

\section{Exponential Family Distributions: a Closer Look}
  \label{sec:expo-family}
  \kdtxt{
  In this section, we state generic results for the maximisation and the gradient-based approaches in the important case where the variational family is based on the exponential family. %Those results will in particular be used to derive special cases such as \Cref{ex:Gaussian} and its generalisation to mixture models in \Cref{ex:GMMmax}. 
  Those results will in particular enable us to show how the update formulas in Examples \ref{ex:GMMmax} and \ref{ex:GMM} are derived, before allowing us to incorporate novel variational families in our framework such as multivariate Student's $t$ densities. To do so, let us first review the exponential family and some of its well-known properties.}

\subsection{Notation and Useful Definitions}
  \label{sec:expo-family-notation}
  We start with the definition of an exponential family distribution.
  
  \begin{defi}[Exponential family]\label{def:expo-family}
    Let $\nu$ be a $\sigma$-finite measure on $(\Yset,\Ysigma)$, $E$
    be a Euclidean space
    endowed with an inner product $\pscal[E]{\cdot}{\cdot}$ and its
    corresponding norm $\normev[E]{\cdot}$,  $h$ be a non-negative function
    defined on $\Yset$ and  $S:\Yset\to E$ be a measurable
    function. 
    In its canonical form, \kdtxt{a member of the exponential family}
    is defined \kdtxt{by the
      parametric probability density}
      $$
      k^{(o)}(\zeta,y) = h(y)\;\exp\left(\pscal[E]{\zeta}{S(y)}-A(\zeta)\right)\;,\quad \;y\in\Yset\;, \eqsp \zeta\in E_0
      $$
      where
      $$
      E_0:=\set{ \zeta\in E}{\int_\Yset 
        h(y) \;\rme^{\pscal[E]{\zeta}{S(y)}}\; \nu (\rmd y)<\infty }
      $$
      and where the values taken by $A$ on the subset $E_0$ are
      obtained from the normalising
      constraint $\int k^{(o)}(\zeta,\cdot)\;\rmd\nu=1$. In this
      setting, $\nu$ is called the dominating measure and $S$ the
      natural statistic. In what follows,  $\intE_0$ denotes the interior of
      $E_0$. 
    \end{defi}  
    Recall now that $S$ defines an exhaustive statistic for this model,
    that $E_0$ is a convex subset of $E$ and that $A$ is infinitely
    differentiable in $\intE_0$. Moreover, for all $\zeta\in\intE_0$,
    the expectation and covariance operator of $S$ under the density
    $k^{(o)}(\zeta,\cdot)$ are the gradient $\nabla A$ and the Hessian
    $\nabla\nabla^TA$ of $A$ taken at $\zeta$, respectively. For later reference, we also recall
      \begin{align}\label{eq:partialA:expo-family}
        \nabla A(\zeta)= \int_\Yset S(y)\,k^{(o)}(\zeta,y)\;\nu(\rmd y)\;
      \end{align}
      and we introduce the following assumption.
      \begin{hyp}{B}
        \item \label{hyp:expo} The kernel density $k^{(o)}$ defined on
          $E_0 \times \Yset$ is a member of the exponential family as in
          \Cref{def:expo-family}. Moreover, we assume that:
        \begin{enumerate}[label=(\alph*)]
          \item \label{item:expo2} The function $h$ on $\Yset$ is positive.
          \item \label{item:expo1} There is no  affine hyperplane of
            $E$ to which $S(y)$ belongs for $\nu$-almost all $y\in \Yset$.
        \end{enumerate}
      \end{hyp}
      Notice that the assumption \ref{hyp:expo}\ref{item:expo2} can be
      circumvented by changing the dominating measure $\nu$ and
      therefore the assumptions \ref{hyp:expo}\ref{item:expo2} and
      \ref{hyp:expo}\ref{item:expo1} should be considered
      together. These assumptions notably imply that the covariance
      operator of $S$ under the density $k^{(o)}(\zeta,\cdot)$ is
      positive definite and thus so is $\nabla\nabla^TA(\zeta)$ for
      all $\zeta\in\intE_0$.  In addition, one often relies on a
      non-canonical parameterisation (see for example
      \Cref{ex:Gaussian}), leading to the kernel density
      \begin{align}\label{eq:partialA:expo-family-non-canonical}
      k(\theta,y)=k^{(o)}(\changevarcomp(\theta),y) = h(y)\;\exp\left(\pscal[E]{\changevarcomp(\theta)}{S(y)}-A\circ\changevarcomp(\theta)\right)\;,\quad \theta\in\Tset\;,\;y\in\Yset\;
      \end{align}
      where $\changevarcomp$ maps the parameter space $\Tset$ to a
      subset of $\intE_0$. In that case, the assumption \ref{hyp:expo}\ref{item:expo2} will ensure that the assumption made on $k$ in \ref{hyp:positive} holds.

      \subsection{The Maximisation Approach for the Exponential Family}

      %\subsection{The argmax problem~(\ref{eq:updateTheta}) for a parametric model}
      \label{sec:argm-solut-param-expo}

      Recall that the maximisation approach proposed in \Cref{coro:argmax} aims at solving the optimisation problem \eqref{eq:updateTheta} given by
      $$\theta_{n+1} = \argmax_{\theta \in \Tset} \int_\Yset \lrb{ \ratio(y) + b_n k(\theta_n , y)} \log \lr{\frac{k(\theta, y)}{k(\theta_{n}, y)}} \nu(\rmd y) \eqsp, \quad n \geq 1\;.
      $$
      Furthermore, in the more general case of mixture models described in \Cref{coro:argminMixtureModel}, the maximisation approach leads to the argmax problem \eqref{eq:aPMC:updateP}. This second argmax problem is similar to \eqref{eq:updateTheta} in the sense that it updates $\Theta_n=(\theta_{j,n})_{1\leq j \leq J}$ by solving $J$ component-wise argmax problems of the same form as \eqref{eq:updateTheta} (with
      $\theta_{n+1}$, $\ratio$, $b_n$ and $\theta_n$ being replaced by
      $\theta_{j,n+1}$, $\ratiogen$, $b_{j,n}$ and $\theta_{j,n}$,
      for $j=1 \dots J$):
      %\newline
      % : 
       $$
       \theta_{j,n+1} = \argmax_{\theta \in \Tset} \int_\Yset \kdtxt{\lrb{ \respat[y] + b_{j,n} k(\theta_{j,n}, y)}} \log \lr{\frac{k(\theta, y)}{k(\theta_{j,n}, y)}} \nu(\rmd y) \eqsp, \quad j = 1 \ldots J \eqsp , \eqsp n \geq 1 \eqsp.
       $$ 

Our goal is now to solve the argmax problems above for a member of the exponential family. To do so, we first state a theorem in which the kernel $k$ is in its canonical form, that is $v$ is the identity mapping  in \eqref{eq:partialA:expo-family-non-canonical} and $k(\theta, \cdot) = k^{(o)}(\zeta, \cdot)$. Other parameterisations will then be built using this theorem and as a result we will deduce corollaries that include non-canonical parameterisations and that are applicable to the general case of mixture models. All the proofs of the results stated in \Cref{sec:argm-solut-param-expo} are deferred to \Cref{sec:argm-solut-param-expo:proofs}.
            
  \begin{thm}\label{thm:generic-update-canonic-expo-family}
    Let $k^{(o)}$ satisfy \ref{hyp:expo} and $\mathcal{O}$ be an open
    subset of $E_0$. Let $\check{\varphi}$ be a probability density function with
    respect to the measure $\nu$ such
    that
  \begin{equation}
    \label{generic-update-canonic-expo-family-hyp}
 \int \normev[E]{S}\,\check{\varphi} \;\rmd\nu<\infty\;.
\end{equation}
Let $\zeta_0\in \mathcal{O}$ and $b\geq0$.
Then the argmax problem
   \begin{equation}
    \label{generic-update-canonic-expo-family-argmax}
  \argmax_{\zeta\in\mathcal{O}}     \int_\Yset \lrb{\check{\varphi}(y) + b \, k^{(o)}(\zeta_0 , y)}
 \log\left(\frac{k^{(o)}(\zeta, y)}{k^{(o)}(\zeta_0, y)}\right) \nu(\rmd y)
\end{equation}
admits at least a solution if and only if
  \begin{equation}
    \label{generic-update-canonic-expo-family-cond}
  \mathbf{s}^*:=\frac1{1+b}  \int S \, \check{\varphi}\;\rmd \nu + \frac
  b{1+b} \nabla
  A(\zeta_0)    
\end{equation}
belongs to the image set $\nabla A(\mathcal{O})$, in which
case~(\ref{generic-update-canonic-expo-family-argmax}) has a unique
solution $\zeta^*$ defined by
  \begin{equation}
    \label{generic-update-canonic-expo-family-solution}
  \nabla A(\zeta^*) = \mathbf{s}^*\;.
\end{equation}
\end{thm}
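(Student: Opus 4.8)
The plan is to reduce the argmax problem \eqref{generic-update-canonic-expo-family-argmax} to the classical question of inverting the gradient $\nabla A$ of a strictly convex log-partition function. First I would use the canonical form of $k^{(o)}$ together with \ref{hyp:expo}\ref{item:expo2} (which guarantees $h>0$, so the factor $h(y)$ cancels and the log-ratio is everywhere finite) to write, for all $\zeta\in\mathcal{O}$ and all $y\in\Yset$,
\[
\log\left(\frac{k^{(o)}(\zeta,y)}{k^{(o)}(\zeta_0,y)}\right) = \pscal[E]{\zeta-\zeta_0}{S(y)} - \bigl(A(\zeta)-A(\zeta_0)\bigr)\;.
\]
Integrating this against the finite measure $[\check{\varphi}(y)+b\,k^{(o)}(\zeta_0,y)]\,\nu(\rmd y)$, whose total mass is $1+b$ because $\check{\varphi}$ and $k^{(o)}(\zeta_0,\cdot)$ are both probability densities with respect to $\nu$, and using \eqref{eq:partialA:expo-family} to identify $\int S\,k^{(o)}(\zeta_0,\cdot)\,\rmd\nu=\nabla A(\zeta_0)$, the objective in \eqref{generic-update-canonic-expo-family-argmax} rewrites as
\[
(1+b)\bigl[\pscal[E]{\zeta}{\mathbf{s}^*}-A(\zeta)\bigr]+\text{const}\;,
\]
where the constant is independent of $\zeta$ and $\mathbf{s}^*$ is precisely \eqref{generic-update-canonic-expo-family-cond}. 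At this stage I would record that the cross term is absolutely integrable, using hypothesis \eqref{generic-update-canonic-expo-family-hyp} for the $\check{\varphi}$ part and the finiteness of all moments of $S$ under $k^{(o)}(\zeta_0,\cdot)$ (valid since $\zeta_0\in\mathcal{O}\subseteq\intE_0$) for the other part, and that $A$ is finite on $\intE_0$.

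Since $1+b>0$, maximising the objective over $\mathcal{O}$ is equivalent to maximising $G(\zeta):=\pscal[E]{\zeta}{\mathbf{s}^*}-A(\zeta)$ over $\mathcal{O}$. I would then invoke the standard properties of the exponential family recalled just before the theorem: $A$ is infinitely differentiable on $\intE_0$ and, under \ref{hyp:expo}, its Hessian $\nabla\nabla^T A$ is positive definite there, so $A$ is strictly convex on the convex set $\intE_0$. Consequently $G$ is strictly concave on $\intE_0\supseteq\mathcal{O}$, with $\nabla G(\zeta)=\mathbf{s}^*-\nabla A(\zeta)$, and $\nabla A$ is injective on $\intE_0$. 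By strict concavity, any $\zeta^\star\in\intE_0$ with $\nabla G(\zeta^\star)=0$ is the unique global maximiser of $G$ over $\intE_0$, hence a fortiori over $\mathcal{O}$.

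It then remains to assemble the equivalence. If $\mathbf{s}^*\in\nabla A(\mathcal{O})$, pick $\zeta^*\in\mathcal{O}$ with $\nabla A(\zeta^*)=\mathbf{s}^*$; then $\nabla G(\zeta^*)=0$, so $\zeta^*$ is the unique maximiser over $\mathcal{O}$ and satisfies \eqref{generic-update-canonic-expo-family-solution}. Conversely, if \eqref{generic-update-canonic-expo-family-argmax} admits a solution $\zeta^*\in\mathcal{O}$, then since $\mathcal{O}$ is open, $\zeta^*$ is a local maximiser, hence $\nabla G(\zeta^*)=0$, i.e.\ $\nabla A(\zeta^*)=\mathbf{s}^*\in\nabla A(\mathcal{O})$; uniqueness and \eqref{generic-update-canonic-expo-family-solution} follow as before from the injectivity of $\nabla A$ and the strict concavity of $G$. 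I do not expect a genuine obstacle; the only point needing care is that $\mathcal{O}$ is merely an open subset of $E_0$ and need not be convex, so the concavity argument must be carried out on the convex set $\intE_0$ rather than on $\mathcal{O}$ itself, while the openness of $\mathcal{O}$ (and not only of $\intE_0$) is what promotes "being a solution of the argmax" to "being a critical point of $G$".
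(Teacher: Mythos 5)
Your proof is correct. It differs from the paper's in presentation rather than in substance: you expand the objective explicitly, using $\log\bigl(k^{(o)}(\zeta,y)/k^{(o)}(\zeta_0,y)\bigr)=\pscal[E]{\zeta-\zeta_0}{S(y)}-A(\zeta)+A(\zeta_0)$ and the total mass $1+b$, to reduce the problem to maximising the strictly concave map $\zeta\mapsto\pscal[E]{\zeta}{\mathbf{s}^*}-A(\zeta)$ on $\intE_0$, after which both directions of the equivalence follow from first-order optimality at an interior point and injectivity of $\nabla A$. The paper never writes the objective in this closed form: it instead compares $\mathcal{C}_0(\zeta')-\mathcal{C}_0(\zeta)$ pairwise via the pointwise strict-convexity inequality \eqref{eq:basin-expo-family-convex-ineq} of \Cref{lem:useful-1}, which yields $\mathcal{C}_0(\zeta')-\mathcal{C}_0(\zeta)>(1+b)\pscal[E]{\mathbf{s}^*-\nabla A(\zeta')}{\zeta'-\zeta}$ for $\zeta'\neq\zeta$; the forward direction then reads off the unique maximiser, and the converse is obtained by perturbing $\zeta'=\zeta+\epsilon\lr{\mathbf{s}^*-\nabla A(\zeta)}$ and letting $\epsilon\downarrow0$ with continuity of $\nabla A$, exactly where you instead invoke the vanishing gradient at an interior maximiser. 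Both arguments rest on the same ingredients (strict convexity of $A$ on the convex set $\intE_0$ under \ref{hyp:expo}, openness of $\mathcal{O}$, and the moment identity \eqref{eq:partialA:expo-family}); your route is the more standard concave-programming presentation and gives a slightly cleaner converse, while the paper's comparison-based argument avoids isolating the $\zeta$-independent constant and is the template reused later for \Cref{thm:gradiant-g-canonic-expo-family} and the ELM result \Cref{thm:argrmax-objective-equivalence}, where an explicit closed-form decomposition of the objective is less convenient. Your integrability remarks (the hypothesis \eqref{generic-update-canonic-expo-family-hyp} for the $\check{\varphi}$ part, finiteness of the first moment of $S$ under $k^{(o)}(\zeta_0,\cdot)$ since $\zeta_0\in\intE_0$, and finiteness of $A$ on $E_0$) are exactly what is needed to justify the rearrangement, so there is no gap.
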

\Cref{thm:generic-update-canonic-expo-family} shows the equivalence between the argmax problem \eqref{generic-update-canonic-expo-family-argmax} and the equation~(\ref{generic-update-canonic-expo-family-solution}) under the canonical parameterisation. From there, using the expression of $\nabla A$ given in \eqref{eq:partialA:expo-family} and considering the non-canonical exponential family probability density \eqref{eq:partialA:expo-family-non-canonical}, we can interpret the equation~(\ref{generic-update-canonic-expo-family-solution}) as an
equality between two means of the statistic $S$ computed under two
different distributions. Setting
$\zeta^*=\changevarcomp(\theta^*)$ and
$\zeta_0=\changevarcomp(\theta_0)$, we indeed obtain
\begin{equation}    \label{eq:generic-update-canonic-expo-family-solution-theta}
  \int_\Yset S(y) k(\theta^*, y)\nu(\rmd y) = \int_\Yset S(y) \psi(y)\nu(\rmd y)
\end{equation}
where $\psi$ is
the mixture
\begin{equation*}
%  \label{eq:mixture-armax-solution-expo-family}
  \psi(y)= \frac1{1+b}  \check{\varphi}(y) + \frac
  b{1+b} k(\theta_0,y) \eqsp, \quad y \in \Yset \;.
\end{equation*}
As exemplified in the following corollary, an adequate parameterisation $\zeta=\changevarcomp(\theta)$ may then lead to a simple solution of \eqref{eq:generic-update-canonic-expo-family-solution-theta}, which in turn provides a solution to the argmax problem
\begin{equation}
  \label{eq:gaussian-mapping-to-maximize-one}
\argmax_{\theta \in \Tset} \int_\Yset \lrb{\check{\varphi}(y) + b k(\theta_0, y)} \log \lr{\frac{k(\theta, y)}{k(\theta_0, y)}} \nu(\rmd y) \eqsp.
\end{equation}

\begin{coro}[Gaussian density] \label{coro:GaussianTemp} Let
  $d\geq1$ and $\Yset=\rset^d$.  Let $\mathcal{M}_{>0}(d)$ denote the set
  of symmetric positive definite $d\times d$ matrices.  We consider
  the case of a $d$-dimensional Gaussian density with
  $k(\theta, y) = \mathcal{N}(y; m, \Sigma)$, where
  $\theta = (m, \Sigma) \in \Tset=\rset^d\times\mathcal{M}_{>0}(d)$.
  Let $\check{\varphi}$ be a  probability density function
  on $\Yset$ such that \eqref{generic-update-canonic-expo-family-hyp} holds, that is in the Gaussian case 
  \begin{equation*}
\int_\Yset \|y\|^2\,\check{\varphi}(y) \;\rmd y<\infty\;. %    \label{generic-update-canonic-gaussian-hyp} 
  \end{equation*}
Let
  $\theta_0= (m_0, \Sigma_0)\in\Tset$ and $b\geq0$. If $b=0$, assume
  moreover that $\check{\varphi}$ is non-degenerate in the sense that
  its covariance matrix is positive definite. Then \eqref{eq:gaussian-mapping-to-maximize-one}
%    \begin{equation}
%    \label{eq:gaussian-mapping-to-maximize-one}
%\argmax_{\theta\in\Tset} \int_\Yset \lrb{ \check{\varphi}(y) + b
%      k(\theta_0 , y)} \log \lr{\frac{k(\theta,
%        y)}{k(\theta_{0}, y)}} \rmd y \eqsp
%  \end{equation}
  has a unique solution $\theta^* = (m^*, \Sigma^*) \in \Tset$ defined by
  $m^* = \PE[Y]$ and $\Sigma^* = \Cov(Y)$, where %$\PE_{\psi}[Y]$ and $\Cov_{\psi}(Y)$ respectively denote the mean and covariance matrices of the random variable 
    $Y$ is a random variable valued in $\rset^d$ with density
    $$
    \psi(y) = \frac{1}{1+b} \check{\varphi}(y) + \frac{b}{1+b} \mathcal{N}(y; m_0, \Sigma_0) \eqsp, \quad y \in \Yset \eqsp.
    $$
    %here, $\tilde Y$ is mixture between $\check\varphi$ with weight $1/(1-b)$ and $\mathcal{N}(m_0, \Sigma_0)$ with weight $b/(1-b)$.
  \end{coro}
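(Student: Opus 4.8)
The plan is to recognise the Gaussian family as an exponential family in the sense of \Cref{def:expo-family}, to verify that \ref{hyp:expo} holds, and then to invoke \Cref{thm:generic-update-canonic-expo-family} through the non-canonical reparameterisation $\zeta=\changevarcomp(\theta)$ already used around \eqref{eq:generic-update-canonic-expo-family-solution-theta}. Concretely, I would take $E=\rset^d\times\mathcal{M}(d)$, where $\mathcal{M}(d)$ denotes the space of symmetric $d\times d$ matrices, equipped with $\pscal[E]{(u,U)}{(v,V)}=u^Tv+\mathrm{tr}(U^TV)$; expanding the quadratic form in $\mathcal{N}(y;m,\Sigma)$ then gives $S(y)=(y,yy^T)$, $h(y)\equiv(2\pi)^{-d/2}$, and $\mathcal{N}(\cdot;m,\Sigma)=k^{(o)}(\changevarcomp(m,\Sigma),\cdot)$ with $\changevarcomp(m,\Sigma)=(\Sigma^{-1}m,-\tfrac12\Sigma^{-1})$. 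A short computation shows that $E_0=\intE_0$ consists of the $(\zeta_1,\zeta_2)\in E$ with $\zeta_2$ negative definite and that $\changevarcomp$ is a bijection from $\Tset=\rset^d\times\mathcal{M}_{>0}(d)$ onto $\intE_0$; since $h>0$ and the functions $y\mapsto 1,y_i,y_iy_j$ are linearly independent (a polynomial vanishing $\nu$-a.e.\ vanishes identically), no affine hyperplane of $E$ contains $S(y)$ for $\nu$-a.e.\ $y$, so \ref{hyp:expo} holds. Finally, from $\normev{y}^2\leq\normev[E]{S(y)}\leq\normev{y}+\normev{y}^2$ and the fact that $\check\varphi$ is a probability density, the condition \eqref{generic-update-canonic-expo-family-hyp} is equivalent to $\int_\Yset\normev{y}^2\check\varphi(y)\,\rmd y<\infty$, which is the hypothesis assumed in the statement.

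Next I would compute the quantities of \Cref{thm:generic-update-canonic-expo-family} with $\mathcal{O}=\intE_0$ and $\zeta_0=\changevarcomp(\theta_0)$. By \eqref{eq:partialA:expo-family}, for $\zeta=\changevarcomp(m,\Sigma)$ one has $\nabla A(\zeta)=\int_\Yset S(y)\,k^{(o)}(\zeta,y)\,\nu(\rmd y)=(m,\Sigma+mm^T)$, hence $\nabla A(\intE_0)$ is exactly the set of pairs $(\mu,M)\in E$ with $M-\mu\mu^T$ positive definite. On the other hand, $\mathbf{s}^*=\tfrac1{1+b}\int S\,\check\varphi\,\rmd\nu+\tfrac b{1+b}\nabla A(\zeta_0)$ is precisely the pair $(\PE[Y],\PE[YY^T])$ of first and second moments of a random variable $Y$ with density $\psi(y)=\tfrac1{1+b}\check\varphi(y)+\tfrac b{1+b}\mathcal{N}(y;m_0,\Sigma_0)$ (both mixture components have finite second moments: $\check\varphi$ by the assumption above, the Gaussian trivially). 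Therefore $\mathbf{s}^*\in\nabla A(\intE_0)$ if and only if $\Cov(Y)=\PE[YY^T]-\PE[Y]\PE[Y]^T$ is positive definite.

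It remains to verify this positive-definiteness. When $b>0$, the law of total covariance applied to the two-component mixture $\psi$ yields $\Cov(Y)\succeq\tfrac b{1+b}\Sigma_0$, which is positive definite; when $b=0$, $\psi=\check\varphi$ and $\Cov(Y)$ is positive definite precisely by the non-degeneracy assumption made in that case. Hence in all cases \Cref{thm:generic-update-canonic-expo-family} applies and \eqref{generic-update-canonic-expo-family-argmax} has a unique solution $\zeta^*$ characterised by $\nabla A(\zeta^*)=\mathbf{s}^*$. Since $\changevarcomp$ is a bijection $\Tset\to\intE_0$, the argmax problem \eqref{eq:gaussian-mapping-to-maximize-one} — which is \eqref{generic-update-canonic-expo-family-argmax} after this reparameterisation — has the unique solution $\theta^*=(m^*,\Sigma^*)=\changevarcomp^{-1}(\zeta^*)$, and from $\nabla A(\changevarcomp(m^*,\Sigma^*))=(m^*,\Sigma^*+m^*(m^*)^T)=(\PE[Y],\PE[YY^T])$ one reads off $m^*=\PE[Y]$ and $\Sigma^*=\Cov(Y)$; equivalently, this is just the Gaussian instance of \eqref{eq:generic-update-canonic-expo-family-solution-theta}. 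The main obstacle I anticipate is the bookkeeping between the canonical and non-canonical parameterisations together with the verification of \ref{hyp:expo}, and above all pinning down the image $\nabla A(\intE_0)$ so that the abstract membership condition $\mathbf{s}^*\in\nabla A(\mathcal{O})$ becomes the concrete, easily checked statement that $\Cov(Y)$ is positive definite.
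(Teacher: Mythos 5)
Your proof is correct and follows essentially the same route as the paper: both apply \Cref{thm:generic-update-canonic-expo-family} through the Gaussian exponential-family parameterisation, read the characterisation $\nabla A(\zeta^*)=\mathbf{s}^*$ as matching the mean and covariance of the mixture $\psi$, and conclude with the same case distinction on $b$ to check positive definiteness of $\Cov(Y)$ (your law-of-total-covariance bound $\Cov(Y)\succeq\tfrac b{1+b}\Sigma_0$ is just a more explicit version of the paper's argument for $b>0$). The only differences are cosmetic — your convention $S(y)=(y,yy^T)$, $\changevarcomp(m,\Sigma)=(\Sigma^{-1}m,-\tfrac12\Sigma^{-1})$ versus the paper's $S(y)=(y,-yy^T/2)$, $\changevarcomp(m,\Sigma)=(\Sigma^{-1}m,\Sigma^{-1})$ — plus some extra bookkeeping (verifying \ref{hyp:expo} and describing $\nabla A(\intE_0)$ explicitly) that the paper leaves implicit.
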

 It now remains to choose $\check{\varphi}$ adequately to relate \eqref{eq:gaussian-mapping-to-maximize-one} to \eqref{eq:updateTheta}, or rather to its generalisation \eqref{eq:aPMC:updateP}. That way, we will in particular get that the update formulas given in Examples \ref{ex:Gaussian} and \ref{ex:GMMmax} are straightforward consequences of \Cref{coro:GaussianTemp}. This is the purpose of \Cref{cor:iterative-updates-expo-family} below, in which we also rewrite the assumption \eqref{generic-update-canonic-expo-family-hyp} made on $\check{\varphi}$ under a more convenient form.
  
  %However this is not specific to the Gaussian case. 
  
%  In the following result, we show how to solve the argmax problem \eqref{eq:updateTheta} when the kernel $k$ is of the form~(\ref{eq:partialA:expo-family-non-canonical}).

\begin{coro}\label{cor:iterative-updates-expo-family}
  Let $k^{(o)}$ satisfy \ref{hyp:expo}. Further assume that the kernel $k$ is of the
  form \eqref{eq:partialA:expo-family-non-canonical} with a one-to-one 
  mapping $\changevarcomp$ defined on $\Tset$
  such that its image $\changevarcomp(\Tset)$ is an open subset of
  $E_0$. Let $\alpha \in [0,1)$ and let
  $p:\Yset\to\rset_+$ be such that
  \begin{equation}
    \label{generic-update-canonic-expo-family-cond-on-p}
  0<  \int_\Yset \left(1+ \norm[E]{S(y)}^{1/(1-\alpha)}\right)\; p(y) \;\nu(\rmd y) < \infty \;
\end{equation}
holds. Let $J \in \nset$, $n \geq 1$ and
$(\lbd{n},\Theta_n) \in \simplex_J^+ \times \Tset^J$. % and define $\ratiogen$ as in~(\ref{eq:respat}). 
Then, for all $j = 1 \ldots J$, \eqref{generic-update-canonic-expo-family-hyp} holds with $\check{\varphi} = \normratiot$ as defined in \eqref{eq:respat} and setting $\gamma_{j,n} = \int \ratiogen \rmd \nu / (\int \ratiogen \rmd \nu +b_{j,n})$, we have that
\begin{align}\label{eq:maxApproachGen}
\argmax_{\theta \in \Tset} \int_\Yset \kdtxt{\lrb{ \respat[y] + b_{j,n} k(\theta_{j,n}, y)}} \log \lr{\frac{k(\theta, y)}{k(\theta_{j,n}, y)}} \nu(\rmd y)
\end{align}
admits at least a solution if and only if
\begin{align}\label{eq:maxApproachGen-sjdef}
\mathbf{s}_j^* \eqdef \gamma_{j,n} \int S ~ \normratiot ~\rmd \nu + (1 - \gamma_{j,n}) \nabla A\circ\changevarcomp(\theta_{j,n}) 
\end{align}
belongs to the image set $\nabla A\circ\changevarcomp(\Tset)$, in which case \eqref{eq:maxApproachGen} has a unique solution $\theta_j^*$ defined by
\begin{align*} %\label{maxApproachGenToIntepret}
\nabla  A\circ\changevarcomp(\theta_j^*) = \mathbf{s}_j^*
\end{align*}
and we can set $\theta_{j,n+1} = \theta^*_j$ for all $j = 1 \ldots J$
in~(\ref{eq:aPMC:updateP}) of \Cref{coro:argminMixtureModel}.
\end{coro}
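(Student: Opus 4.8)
The plan is to obtain \Cref{cor:iterative-updates-expo-family} by reducing it to \Cref{thm:generic-update-canonic-expo-family} applied after passing to the canonical parameterisation $\zeta=\changevarcomp(\theta)$. Concretely, for a fixed $j\in\{1,\dots,J\}$ I would invoke \Cref{thm:generic-update-canonic-expo-family} with $\mathcal{O}=\changevarcomp(\Tset)$ (an open subset of $E_0$ by hypothesis, hence $\changevarcomp(\theta_{j,n})\in\mathcal{O}\subseteq\intE_0$), with the probability density $\check\varphi=\normratiot$, with $\zeta_0=\changevarcomp(\theta_{j,n})$, and with regularisation constant $b=b_{j,n}\big/\!\int_\Yset\ratiogen\,\rmd\nu$. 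Two preliminary facts must be checked before this is legitimate: that $\normratiot$ is a genuine probability density satisfying the integrability hypothesis \eqref{generic-update-canonic-expo-family-hyp}, and that the vector $\mathbf{s}^*$ produced in \eqref{generic-update-canonic-expo-family-cond} coincides with the $\mathbf{s}_j^*$ of \eqref{eq:maxApproachGen-sjdef} for the stated value of $\gamma_{j,n}$.

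For the first fact, I would start from the pointwise majorant deduced from \eqref{eq:respat} using $\lambda_{j,n}k(\theta_{j,n},y)\le\mu_n k(y)$ together with the monotonicity of $t\mapsto t^{\alpha-1}$ (recall $\alpha<1$), namely $\respat[y]\le\lambda_{j,n}^{\alpha-1}\,k(\theta_{j,n},y)^{\alpha}\,p(y)^{1-\alpha}$. If $\alpha=0$ this reads $\respat[y]\le\lambda_{j,n}^{-1}p(y)$, and $\int_\Yset(1+\norm[E]{S})\,\respat\;\rmd\nu<\infty$ follows at once from \eqref{generic-update-canonic-expo-family-cond-on-p} since then $1/(1-\alpha)=1$. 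If $\alpha\in(0,1)$, splitting $\norm[E]{S}=\norm[E]{S}^{\alpha}\norm[E]{S}^{1-\alpha}$ and applying Hölder's inequality with exponents $1/\alpha$ and $1/(1-\alpha)$ gives
\begin{align*}
\int_\Yset \norm[E]{S(y)}\,\respat[y]\;\nu(\rmd y)\le \lambda_{j,n}^{\alpha-1}\Big(\int_\Yset \norm[E]{S}\,k(\theta_{j,n},\cdot)\,\rmd\nu\Big)^{\alpha}\Big(\int_\Yset \norm[E]{S}\,p\,\rmd\nu\Big)^{1-\alpha}\;,
\end{align*}
where the second factor is finite because $\norm[E]{S}\le 1+\norm[E]{S}^{1/(1-\alpha)}$ and \eqref{generic-update-canonic-expo-family-cond-on-p} holds, and the first factor is finite because $\changevarcomp(\theta_{j,n})\in\intE_0$, so $\int_\Yset\norm[E]{S}\,k(\theta_{j,n},\cdot)\,\rmd\nu<\infty$ (this integrability is exactly what underlies \eqref{eq:partialA:expo-family}). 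Running the same two cases with $\norm[E]{S}$ dropped — and using the Hölder bound $\int_\Yset k(\theta_{j,n},\cdot)^{\alpha}p^{1-\alpha}\,\rmd\nu\le(\int_\Yset p\,\rmd\nu)^{1-\alpha}$ — shows $0<\int_\Yset\ratiogen\,\rmd\nu<\infty$, positivity following from $\ratiogen>0$ on $\{p>0\}$, a set of positive $\nu$-measure by \ref{hyp:positive}. Hence $\normratiot$ is a probability density and \eqref{generic-update-canonic-expo-family-hyp} holds with $\check\varphi=\normratiot$.

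For the second fact and the conclusion, note that $\respat=\big(\int_\Yset\ratiogen\,\rmd\nu\big)\normratiot$, so the objective in \eqref{eq:maxApproachGen} is a positive multiple of the objective in \eqref{generic-update-canonic-expo-family-argmax}; since $\changevarcomp$ is a bijection from $\Tset$ onto $\mathcal{O}$ and $k(\theta,\cdot)=k^{(o)}(\changevarcomp(\theta),\cdot)$, the two argmax problems have, under $\zeta=\changevarcomp(\theta)$, the same maximisers and are simultaneously solvable. A short computation gives $1/(1+b)=\gamma_{j,n}$ and $b/(1+b)=1-\gamma_{j,n}$ for the chosen $b$, so $\mathbf{s}^*=\gamma_{j,n}\int_\Yset S\,\normratiot\,\rmd\nu+(1-\gamma_{j,n})\nabla A(\changevarcomp(\theta_{j,n}))=\mathbf{s}_j^*$, while $\nabla A(\mathcal{O})=\nabla A\circ\changevarcomp(\Tset)$. \Cref{thm:generic-update-canonic-expo-family} then delivers precisely the claim: \eqref{eq:maxApproachGen} admits a solution iff $\mathbf{s}_j^*\in\nabla A\circ\changevarcomp(\Tset)$, in which case it is unique and characterised by $\nabla A\circ\changevarcomp(\theta_j^*)=\mathbf{s}_j^*$; and since \eqref{eq:maxApproachGen} is exactly the argmax defining $\theta_{j,n+1}$ in \eqref{eq:aPMC:updateP} of \Cref{coro:argminMixtureModel}, we may set $\theta_{j,n+1}=\theta_j^*$. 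The main obstacle is the integrability step: one must pin down the correct pointwise bound on $\respat$ and then, via Hölder, reduce matters to the moment condition \eqref{generic-update-canonic-expo-family-cond-on-p} on $p$ and to the finiteness of the absolute moment of $k(\theta_{j,n},\cdot)$ at the interior parameter $\changevarcomp(\theta_{j,n})$; the rest is a transcription of \Cref{thm:generic-update-canonic-expo-family} through the bijection $\changevarcomp$ and the scaling between $\respat$ and $\normratiot$.
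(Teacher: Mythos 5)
Your reduction is correct and is the same as the paper's: the paper's proof also applies \Cref{thm:generic-update-canonic-expo-family} with $\mathcal{O}=\changevarcomp(\Tset)$, $\zeta_0=\changevarcomp(\theta_{j,n})$, $\check{\varphi}=\normratiot$ and $b=b_{j,n}/\int\ratiogen\,\rmd\nu$, and the identification $1/(1+b)=\gamma_{j,n}$, $\mathbf{s}^*=\mathbf{s}_j^*$ is exactly the intended translation. The only place where you genuinely deviate is the verification of \eqref{generic-update-canonic-expo-family-hyp}: the paper delegates it to \Cref{lem:cond-p-varphi}, whose proof applies Jensen's inequality with respect to the probability density $k(\theta_{j,n},\cdot)$ together with the bound $k(\theta_{j,n},y)/\mu_n k(y)\leq\lambda_{j,n}^{-1}$, so that the entire moment requirement is transferred to the condition \eqref{generic-update-canonic-expo-family-cond-on-p} on $p$ alone. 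You instead use the pointwise bound $\respat[y]\leq\lambda_{j,n}^{\alpha-1}k(\theta_{j,n},y)^{\alpha}p(y)^{1-\alpha}$ followed by H\"older, which is also valid but needs the extra ingredient $\int_\Yset\normev[E]{S}\,k(\theta_{j,n},\cdot)\,\rmd\nu<\infty$; this is legitimate here because $\changevarcomp(\theta_{j,n})\in\intE_0$ and, under \ref{hyp:expo}, all moments of $S$ exist at interior canonical parameters (this is what underlies \eqref{eq:partialA:expo-family}), but it is an assumption on the kernel that the paper's Jensen-based route avoids using. Both arguments yield the same conclusion; the paper's is marginally more economical, yours makes the mechanism (normalisation constant, scaling of the objective, bijectivity of $\changevarcomp$) more explicit.
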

Since the argmax problem \eqref{eq:maxApproachGen} is immune to change of variables thanks to its argmax form, the maximisation approach can be solved using the canonical parameter $\zeta$ in \Cref{thm:generic-update-canonic-expo-family} and transported to the parameter $\theta$ via the one-to-one mapping $v$ as written in \Cref{cor:iterative-updates-expo-family}.

The updates from Examples \ref{ex:Gaussian} and \ref{ex:GMMmax} then follow from \Cref{cor:iterative-updates-expo-family}, paired up with \Cref{coro:GaussianTemp} and with the fact that \eqref{generic-update-canonic-expo-family-cond-on-p} simplifies to \eqref{eq:gaussian-p-cond} (since in the Gaussian case we have set $S(y) = (y, - y y^T/2)$ for all $y \in \Yset$). More generally, we can always solve the argmax problem \eqref{eq:maxApproachGen}, that is the argmax problem of
  \Cref{coro:argminMixtureModel} with $k$ belonging to the exponential family, for $b_{j,n} > 0$ large enough.\footnote[1]{Under the assumptions of \Cref{cor:iterative-updates-expo-family},
  $\changevarcomp(\Tset)$ is an open subset of $E_0$ and $\nabla A$ is a $\mathcal{C}^\infty$-diffeomorphism on $\intE_0$. Hence, $\nabla A\circ\changevarcomp(\Tset)$ is an open subset of $E$. Since $\mathbf{s}_j^*$ in~(\ref{eq:maxApproachGen-sjdef}) tends to
  $\nabla A\circ\changevarcomp(\theta_{j,n})$ as $b_{j,n}\to \infty$, $\mathbf{s}_j^*$ ends up belonging to
  $\nabla A\circ\changevarcomp(\Tset)$ for $b_{j,n}$ large enough. The desired result follows by using that the regularisation parameter $b_{j,n}>0$ in \Cref{coro:argminMixtureModel} can be freely chosen at each step $n \geq 1$.% of the iterative algorithm. %Hence, under the assumptions of \Cref{cor:iterative-updates-expo-family}, we can always solve the argmax problem of \Cref{coro:argminMixtureModel} for $b_{j,n}>0$ large enough.}.
  }

  We now move on to the study of the gradient-based approach for the exponential family. %We will in particular see that, in contrast with the maximisation approach, the mapping $v$ will have a strong impact on the gradient-based approach. 

  \subsection{The Gradient-based Approach for an Exponential Family Distribution}

\label{sec:gradient-smoothness-expo-family} 

Remember that to construct the sequence $(\theta_n)_{n\geq 1}$ according to the gradient-based approach from \Cref{coro:gradientDescent} we need to compute the gradient of the function $g_n$ defined in \eqref{eq:coroDefGn} by
\begin{align*}
  g_n(\theta) = \int_\Yset \frac{\kdtxt{\ratio(y)} }{\alpha-1} \log \lr{\frac{k(\theta, y)}{k(\theta_{n}, y)}} \nu(\rmd y) \eqsp, \quad \theta \in \Tset \eqsp,
  \end{align*}
and verify that $g_n$ satisfies a $\beta_n$-smoothness condition, which leads to updates of the form
$$
\theta_{n+1} = \theta_n - \frac{\gamma_n}{\beta_n} \nabla g_n(\theta_n) \eqsp, \quad n \geq 1
$$
with $\gamma_n \in (0,1]$. Furthermore, the extension to mixture models in \Cref{coro:GDMixtureModel} involves the gradient of $J$ functions $g_{j,n}$ that resemble $g_n$ ($g_n$, $\ratio$ and $\theta_n$ are replaced by $g_{j,n}$, $\ratiogen$ and $\theta_{j,n}$ in the definition of $g_n$ above) and where each function $g_{j,n}$ is assumed to be $\beta_{j,n}$-smooth. %(it constructs the sequence $(\Theta_n)_{n \geq 1}$ by involving the gradient of $J$ functions $g_{j,n}$, where each function $g_{j,n}$ is assumed to be $\beta_{j,n}$-smooth and where it is defined by replacing $g_n$, $\ratio$ and $\theta_n$ by $g_{j,n}$, $\ratiogen$ and $\theta_{j,n}$ in the definition of $g_n$ above).

Let us now solve the gradient-based approach for a member of the exponential family. Like in the previous section, we start with a result handling the case where the kernel $k$ is in its canonical form (that is, $v$ is the identity mapping in \eqref{eq:partialA:expo-family-non-canonical} and $k(\theta, \cdot) = k^{(o)}(\zeta, \cdot)$). All the proofs of the results stated in \Cref{sec:gradient-smoothness-expo-family} are deferred to \Cref{sec:gradient-smoothness-expo-family:proofs}.

%We are now interested in deriving the gradient descent step~(\ref{eq:updateThetaGD}) for a kernel $k$ given by~(\ref{eq:partialA:expo-family-non-canonical}), which amounts to   compute the gradient of the function $g\circ\changevarcomp$ with $g$ of the form
%  $$
%  g:\zeta\mapsto-\int\varphi(y)\left(\frac{k^{(o)}(\zeta, y)}{k^{(o)}(\zeta_0, y)}\right) \nu(\rmd y)
%  $$
% and checking the $\beta$-smoothness condition for $g\circ\changevarcomp$. The iteration defined though~(\ref{eq:updateTheta}) investigated in the previous section,   thanks to its argmax form, is immune to change of variables and can   thus be solved using the canonical parameter (see   \Cref{thm:generic-update-canonic-expo-family}) and then be   transported in the $\theta$ parameter (see for instance   \Cref{coro:GaussianTemp}).  In contrast, the parameterisation   $\changevarcomp$ in~(\ref{eq:partialA:expo-family-non-canonical})   has a strong impact on the gradient descent   step~(\ref{eq:updateThetaGD}). The following theorem is written for   the canonical parameterisation but requires taking into account the   change of variable mapping $\changevarcomp$ for other   parameters. See the discussion following \Cref{coro:GaussianTempGrad} on   the choice of parameterisation in a specific case.
  \begin{prop}\label{thm:gradiant-g-canonic-expo-family}
    Let $k^{(o)}$ satisfy \ref{hyp:expo}. Let $\check{\varphi}$ be a
    probability density function with respect to $\nu$
    satisfying~(\ref{generic-update-canonic-expo-family-hyp}). %, that is
   % \begin{equation*}
  % \int \normev[E]{S}\,\check{\varphi} \;\rmd\nu<\infty\;.
  %\end{equation*}
  Let
    $\zeta_0\in \intE_0$ and set
  \begin{equation}
    \label{generic-update-canonic-expo-family-g-def}
  g^{(o)}(\zeta) = -\int_\Yset \check{\varphi}(y)\;\log\left(\frac{k^{(o)}(\zeta, y)}{k^{(o)}(\zeta_0, y)}\right) \nu(\rmd y) \eqsp, \quad  \zeta \in \intE_0 \;.
\end{equation}
Then, for any convex subset $C_0\subseteq\intE_0$, the function $g^{(o)}$
is $\beta_0$-smooth over $C_0$ with
\begin{align}     \label{generic-update-canonic-expo-family-g-smooth}
\nabla g^{(o)}(\zeta)=\nabla A(\zeta)-\int S\,\check{\varphi}\;\rmd\nu
\quad\text{and}\quad\beta_0 \geq \sup_{\zeta\in C_0}\normop{\nabla\nabla^T A(\zeta)}\;,
\end{align}
where $\normop{\cdot}$ denotes the operator norm.
\end{prop}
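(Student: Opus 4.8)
The plan is to exploit the log-linear structure of the exponential family so that $g^{(o)}$ collapses to an explicit function of $\zeta$, and then to read off its gradient and Hessian from the standard exponential-family identities recalled just before \ref{hyp:expo} (namely that $A$ is $\mathcal{C}^\infty$ on $\intE_0$, that $\nabla A(\zeta) = \int_\Yset S(y) k^{(o)}(\zeta,y)\,\rmd\nu(y)$, and that $\nabla\nabla^T A(\zeta)$ is the covariance operator of $S$ under $k^{(o)}(\zeta,\cdot)$).

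First I would expand the log-ratio. By \Cref{def:expo-family}, and using that $h>0$ by \ref{hyp:expo}\ref{item:expo2} so that the ratio is well defined $\nu$-a.e., one has for every $\zeta\in\intE_0$
\[
\log\left(\frac{k^{(o)}(\zeta, y)}{k^{(o)}(\zeta_0, y)}\right) = \pscal[E]{\zeta - \zeta_0}{S(y)} - A(\zeta) + A(\zeta_0)\;,\qquad \nu\text{-a.e. } y\;,
\]
so in particular the $\log h(y)$ contributions cancel. Substituting this into \eqref{generic-update-canonic-expo-family-g-def}, using that $\check{\varphi}$ is a probability density (hence $\int\check{\varphi}\,\rmd\nu = 1$) and that the integrability hypothesis \eqref{generic-update-canonic-expo-family-hyp} makes $\bar S := \int S\,\check{\varphi}\,\rmd\nu$ a well-defined element of $E$ (and the integrand $\nu$-integrable), I obtain the closed form
\[
g^{(o)}(\zeta) = A(\zeta) - A(\zeta_0) - \pscal[E]{\zeta - \zeta_0}{\bar S}\;,\qquad \zeta\in\intE_0\;.
\]

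From here the gradient and Hessian are immediate since $\bar S$ is a constant vector: $\nabla g^{(o)}(\zeta) = \nabla A(\zeta) - \bar S$ and $\nabla\nabla^T g^{(o)}(\zeta) = \nabla\nabla^T A(\zeta)$, which already gives the gradient formula in \eqref{generic-update-canonic-expo-family-g-smooth}. To conclude, I would fix a convex $C_0\subseteq\intE_0$ and any $\beta_0\geq\sup_{\zeta\in C_0}\normop{\nabla\nabla^T A(\zeta)}$ (there is nothing to prove if this supremum is infinite); then for any $\zeta_1,\zeta_2\in C_0$, convexity of $C_0$ ensures the segment $[\zeta_1,\zeta_2]$ stays in $C_0\subseteq\intE_0$, and applying the mean value inequality to $\nabla g^{(o)}$ along that segment yields
\[
\normev[E]{\nabla g^{(o)}(\zeta_1) - \nabla g^{(o)}(\zeta_2)} \leq \Big(\sup_{\zeta\in[\zeta_1,\zeta_2]}\normop{\nabla\nabla^T A(\zeta)}\Big)\normev[E]{\zeta_1-\zeta_2} \leq \beta_0\,\normev[E]{\zeta_1-\zeta_2}\;,
\]
i.e.\ $g^{(o)}$ is $\beta_0$-smooth over $C_0$ in the sense of \Cref{appendix:smooth}.

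I do not anticipate a genuine obstacle here: once the exponential-family identities for $\nabla A$ and $\nabla\nabla^T A$ are in hand, the argument is essentially algebraic. The only points requiring a little care are (i) the cancellation of the $\log h(y)$ and $A(\zeta_0)$ terms, which relies on $h>0$ (from \ref{hyp:expo}\ref{item:expo2}) and $\int\check{\varphi}\,\rmd\nu = 1$; and (ii) the finiteness of $\bar S = \int S\,\check{\varphi}\,\rmd\nu$, which is precisely what hypothesis \eqref{generic-update-canonic-expo-family-hyp} provides. Everything else then follows from the $\mathcal{C}^\infty$ regularity of the log-partition function $A$ on $\intE_0$ together with the convexity of $C_0$.
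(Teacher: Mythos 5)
Your proof is correct, and it takes a more direct route than the paper. You expand the log-ratio using the canonical form of $k^{(o)}$, so that $g^{(o)}(\zeta)=A(\zeta)-A(\zeta_0)-\pscal[E]{\zeta-\zeta_0}{\int S\,\check{\varphi}\,\rmd\nu}$ in closed form, and then read off the gradient and Hessian from the smoothness of $A$ on $\intE_0$; the integrability points you flag (cancellation of $\log h$ and $A(\zeta_0)$, finiteness of $\int S\,\check{\varphi}\,\rmd\nu$ from~(\ref{generic-update-canonic-expo-family-hyp})) are exactly the ones that matter. The paper instead reuses the machinery of \Cref{thm:generic-update-canonic-expo-family}: taking $b=0$ in~(\ref{eq:generic-expo-canon-inter}), which rests on the convexity inequality~(\ref{eq:basin-expo-family-convex-ineq}) of \Cref{lem:useful-1}, it sandwiches $g^{(o)}(\zeta')-g^{(o)}(\zeta)$ between $\pscal[E]{\nabla A(\zeta)-\int S\,\check{\varphi}\,\rmd\nu}{\zeta'-\zeta}$ and $\pscal[E]{\nabla A(\zeta')-\int S\,\check{\varphi}\,\rmd\nu}{\zeta'-\zeta}$, and deduces differentiability and the gradient formula from the continuity of $\nabla A$. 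Both arguments finish identically, with the mean value inequality applied to $\nabla g^{(o)}$, whose Jacobian is $\nabla\nabla^T A$, on the segment contained in the convex set $C_0$. Your version is the more elementary and self-contained one, since the log-linear structure makes the objective explicitly "$A$ plus a linear term"; the paper's version buys uniformity with the preceding argmax analysis (the same inequality drives both \Cref{thm:generic-update-canonic-expo-family} and this proposition) at the cost of an indirect squeeze argument.
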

Letting $C_0$ be the line segment $\{\zeta_0-\frac{t}{\beta_0}\left(\nabla A(\zeta_0)-\int S\,\check{\varphi}\;\rmd\nu\right) \eqsp : \eqsp {t\in[0,1]} \}$ in \Cref{thm:gradiant-g-canonic-expo-family}, we have that $C_0$ stays in $\intE_0$ for $\beta_0$ large enough. Since the Hessian of $A$ is continuous on $\intE_0$, the inequality in \eqref{generic-update-canonic-expo-family-g-smooth} can also be satisfied. Hence, the following gradient step
\begin{align} \label{generic-update-canonic-expo-family-g-gradientstep}
\zeta = \zeta_0 - \frac{\gamma}{\beta_0} \lr{\nabla A(\zeta_0)-\int_\Yset S\,\check{\varphi}\;\rmd\nu} \eqsp
\end{align}
can always be performed for all $\gamma \in (0,1]$ and we can always apply the gradient-based approach when the kernel $k$ is in its canonical form. When considering non-canonical exponential family probability
densities of the form \eqref{eq:partialA:expo-family-non-canonical} and
under some reasonable assumptions on the mapping $\changevarcomp$ (for
instance, if $\changevarcomp$ is a $\mathcal C^2$ diffeomorphism),
one is left with choosing between two alternative ways of exploiting
\Cref{thm:gradiant-g-canonic-expo-family}:
\begin{enumerateList}
\item\label{item:generic-update-expo-family-gradient-update-canonical} Apply a gradient step on the canonical parameter $\zeta$ using
  the objective $g^{(o)}$ in~(\ref{generic-update-canonic-expo-family-g-def}) and translate it
  into an update of the parameter $\theta$ via a change of
  variable. In this case, we set $\zeta_0=\changevarcomp(\theta_0)$, we apply
  the step~(\ref{generic-update-canonic-expo-family-g-gradientstep})
  and we then perform the change of variable $\theta=\changevarcomp^{-1}(\zeta)$ to get back to a parameter in
  $\Tset$. Overall, this leads to an update of the form
\begin{align*}% \label{generic-update-canonic-expo-family-g-gradientstep-canonical}
\theta = \changevarcomp^{-1}\lr{\changevarcomp(\theta_0) -\frac{\gamma}{\beta_0} \lr{\nabla A\circ\changevarcomp(\theta_0)-\int_\Yset S\,\check{\varphi}\;\rmd\nu}} \eqsp.
\end{align*}
\item\label{item:generic-update-expo-family-gradient-update-noncanonical} Apply a gradient step directly on the parameter $\theta$ using
  the objective 
  \begin{equation*}
   %\label{eq:gaussian-mapping-to-maximize}
   g(\theta) =
   -\int\check{\varphi}(y)\;\log\left(\frac{k(\theta, y)}{k(\theta_0, y)}\right) \rmd y \eqsp, \quad \theta \in \Tset\;.
 \end{equation*}
 In that case, since $g=g^{(o)}\circ\changevarcomp$ where
 $g^{(o)}$ is again as in~(\ref{generic-update-canonic-expo-family-g-def}) with
 $\zeta_0=\changevarcomp(\theta_0)$, this leads to an update of the form
\begin{align*}% \label{generic-update-canonic-expo-family-g-gradientstep-noncanonical}
\theta = \theta_0 - \frac{\gamma}{\beta} \nabla\changevarcomp(\theta_0)\cdot \lr{\nabla A\circ\changevarcomp(\theta_0)-\int_\Yset S\,\check{\varphi}\;\rmd\nu} \eqsp.
\end{align*}
\end{enumerateList}
In both approaches, $\gamma\in(0,1]$ and the smoothness indices
$\beta_0$ and $\beta$ have to be taken large enough in order to
guarantee a decrease of the objective function. In practice, it is not
clear which parameterisation leads to the simpler or more efficient
algorithm. We investigate a particular setting in the following
result.
\begin{coro}[Gaussian density with known covariance matrix] \label{coro:GaussianTempGrad} Let $d\geq1$ and
  $\Yset=\rset^d$.  We consider the exponential family of a
  $d$-dimensional Gaussian density with
  $k(\theta, y) = \mathcal{N}(y; \theta, \Sigma)$, where
  $\theta\in \Tset:=\rset^d$ and $\Sigma$ is a (known) covariance matrix
  in $\mathcal{M}_{>0}(d)$.
  Let $\check{\varphi}$ be a probability density function w.r.t. $\nu$ such that \eqref{generic-update-canonic-expo-family-hyp} holds, that is in the Gaussian case
  \begin{equation*}
  \int \|y\|\,\check{\varphi}(y) \;\rmd y<\infty\;.
\end{equation*}
Set $\changevarcomp(\theta)=\Sigma^{-1}\theta$ and define the canonical kernel
$k^{(o)}$ with $h(y)=(2\pi)^{-d/2}\lrav{\Sigma}^{-1/2}\,\rme^{-y^T\Sigma^{-1}y/2}$,
 $S(y)=y$
and $A(\zeta)=\zeta^T\Sigma\zeta/2$.
Then~(\ref{eq:partialA:expo-family-non-canonical}) holds and the %$g$ defined by~(\ref{eq:gaussian-mapping-to-maximize}) satisfies $g(\theta)=g^{(o)}\circ\changevarcomp(\theta)$ where $g^{(o)}$ is as in~(\ref{generic-update-canonic-expo-family-g-def}) with $\zeta_0=\changevarcomp(\theta_0)$, and
methods~\ref{item:generic-update-expo-family-gradient-update-canonical}
and~\ref{item:generic-update-expo-family-gradient-update-noncanonical}
lead to the the following gradient-based updates, respectively,
\begin{align}\label{eq:gradient-step-canonical-gaussian}
  &\theta=\theta_0-\frac{\gamma}{\beta_0}\Sigma\lr{\theta_0-\int y ~ \check{\varphi}(y)
    \;\rmd y}\\
  \label{eq:gradient-step-noncanonical-gaussian}
  &  \theta=\theta_0-\frac{\gamma}{\beta}\Sigma^{-1}\,\lr{\theta_0-\int y  \check{\varphi}(y)
    \;\rmd y}
\end{align}
where $\beta_0$ is the largest eigenvalue of $\Sigma$ and $\beta$ is
the largest eigenvalue of $\Sigma^{-1}$. They correspond to the
smoothness indices of $g^{(o)}$ and $g$ over $\rset^d$, respectively. 
\end{coro}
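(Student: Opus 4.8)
The plan is to do the exponential-family bookkeeping and then push the two gradient steps of \Cref{thm:gradiant-g-canonic-expo-family} through the linear change of variable $\changevarcomp$. First I would check that the stated data define an exponential family satisfying \ref{hyp:expo}. Completing the square in $\mathcal{N}(y;\theta,\Sigma)=(2\pi)^{-d/2}\lrav{\Sigma}^{-1/2}\rme^{-\frac12(y-\theta)^T\Sigma^{-1}(y-\theta)}$ gives $k(\theta,y)=h(y)\,\rme^{\theta^T\Sigma^{-1}y-\frac12\theta^T\Sigma^{-1}\theta}$ with $h$ as in the statement; taking $S(y)=y$, $\zeta=\changevarcomp(\theta)=\Sigma^{-1}\theta$ and $A(\zeta)=\frac12\zeta^T\Sigma\zeta$ one reads off $\theta^T\Sigma^{-1}y=\pscal[E]{\zeta}{S(y)}$ and $\frac12\theta^T\Sigma^{-1}\theta=A\circ\changevarcomp(\theta)$, which is exactly \eqref{eq:partialA:expo-family-non-canonical}. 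Since $\int h(y)\,\rme^{\pscal[E]{\zeta}{y}}\,\rmd y$ is a finite Gaussian integral for every $\zeta\in\rset^d$, we get $E_0=\intE_0=\rset^d$, so $\changevarcomp$ is one-to-one with open image $\changevarcomp(\Tset)=\rset^d$; moreover $h>0$ and $S(y)=y$ lies in no affine hyperplane for Lebesgue-a.e. $y$, so \ref{hyp:expo} holds. From $A(\zeta)=\frac12\zeta^T\Sigma\zeta$ one gets $\nabla A(\zeta)=\Sigma\zeta$ (consistent with \eqref{eq:partialA:expo-family}, since $k^{(o)}(\zeta,\cdot)=\mathcal{N}(\cdot;\Sigma\zeta,\Sigma)$) and, crucially, the constant Hessian $\nabla\nabla^TA(\zeta)\equiv\Sigma$.

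For Method~\ref{item:generic-update-expo-family-gradient-update-canonical}, I would apply \Cref{thm:gradiant-g-canonic-expo-family} with this $\check{\varphi}$ — for which \eqref{generic-update-canonic-expo-family-hyp} reads $\int\|y\|\,\check{\varphi}(y)\,\rmd y<\infty$ — and $C_0=\intE_0=\rset^d$. Since $\nabla\nabla^TA\equiv\Sigma$ is symmetric positive definite, $\normop{\nabla\nabla^TA(\zeta)}$ equals the largest eigenvalue $\beta_0$ of $\Sigma$ at every $\zeta$, so $g^{(o)}$ is $\beta_0$-smooth over all of $\rset^d$ with $\nabla g^{(o)}(\zeta)=\Sigma\zeta-\int y\,\check{\varphi}(y)\,\rmd y$ by \eqref{generic-update-canonic-expo-family-g-smooth}. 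Plugging $\zeta_0=\changevarcomp(\theta_0)=\Sigma^{-1}\theta_0$ into the gradient step \eqref{generic-update-canonic-expo-family-g-gradientstep} gives $\zeta=\Sigma^{-1}\theta_0-\frac{\gamma}{\beta_0}\lr{\theta_0-\int y\,\check{\varphi}(y)\,\rmd y}$, and applying $\changevarcomp^{-1}(\zeta)=\Sigma\zeta$ produces exactly \eqref{eq:gradient-step-canonical-gaussian}.

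For Method~\ref{item:generic-update-expo-family-gradient-update-noncanonical}, I would use $g=g^{(o)}\circ\changevarcomp$ with $\zeta_0=\changevarcomp(\theta_0)$; since $\changevarcomp$ is linear with Jacobian $\nabla\changevarcomp\equiv\Sigma^{-1}$, the chain rule gives $\nabla g(\theta)=\Sigma^{-1}\nabla g^{(o)}(\Sigma^{-1}\theta)=\Sigma^{-1}\lr{\theta-\int y\,\check{\varphi}(y)\,\rmd y}$ and the constant Hessian $\nabla\nabla^Tg\equiv\Sigma^{-1}\Sigma\Sigma^{-1}=\Sigma^{-1}$, whose operator norm is the largest eigenvalue $\beta$ of $\Sigma^{-1}$. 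Hence $g$ is $\beta$-smooth over $\rset^d$, and the Method-\ref{item:generic-update-expo-family-gradient-update-noncanonical} step $\theta=\theta_0-\frac{\gamma}{\beta}\nabla\changevarcomp(\theta_0)\lr{\nabla A\circ\changevarcomp(\theta_0)-\int S\,\check{\varphi}\,\rmd\nu}$ reduces to \eqref{eq:gradient-step-noncanonical-gaussian}.

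The computation is essentially bookkeeping; the only point worth flagging is that, because $\intE_0=\rset^d$ and $A$ has a constant Hessian in this example, the caveat noted just after \Cref{thm:gradiant-g-canonic-expo-family} (that the line segment must remain in $\intE_0$ and $\beta_0$ be taken large enough) disappears: the smoothness constants are global and equal exactly $\lambda_{\max}(\Sigma)$ and $\lambda_{\max}(\Sigma^{-1})$, so both updates are well defined for every $\gamma\in(0,1]$.
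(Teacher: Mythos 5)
Your proposal is correct and follows essentially the same route as the paper's own proof: apply \Cref{thm:gradiant-g-canonic-expo-family} with $\nabla A(\zeta)=\Sigma\zeta$ and constant Hessian $\Sigma$ to get $\nabla g^{(o)}$ and the global smoothness index $\lambda_{\max}(\Sigma)$, then map the canonical step back through $\changevarcomp^{-1}(\zeta)=\Sigma\zeta$, and use the chain rule $\nabla g(\theta)=\Sigma^{-1}\nabla g^{(o)}(\Sigma^{-1}\theta)$ with smoothness index $\lambda_{\max}(\Sigma^{-1})$ for the non-canonical update. The additional verifications you include (completing the square, checking \ref{hyp:expo} and $E_0=\rset^d$) are sound and simply make explicit what the paper delegates to the context of \Cref{coro:GaussianTemp}.
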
 
\Cref{coro:GaussianTempGrad} illustrates how, even in a
simple framework, the gradient-based updates strongly depend
on the parameterisation chosen. They in fact coincide only if
$\Sigma$ is scalar in the above corollary, in which case
$\Sigma/\beta_0$ and
$\Sigma^{-1}/\beta$ above are both equal to the identity
matrix. 

Following the reasoning of \Cref{sec:argm-solut-param-expo},
we can then apply \Cref{coro:GaussianTempGrad} with $\check{\varphi} = \normratio$ (and more generally with $\check{\varphi}= \normratiot$). As \eqref{generic-update-canonic-expo-family-hyp} is implied by
\eqref{eq:cond-gradient-gauss-ex} (using \Cref{lem:cond-p-varphi} in
\Cref{sec:argm-solut-param-expo:proofs} and that 
here $S(y) = y$), this enables us to deduce the updates in
Examples \ref{ex:ATwostatisfied} and \ref{ex:GMM}. Note that we did not
introduce a convex subset
$C_0$ in \Cref{coro:GaussianTempGrad} (this is due to the fact that
the smoothness index is constant over the whole parameter space for
both gradient-based updates in that case), which is why
$C_0$ does not appear in Examples \ref{ex:ATwostatisfied} and
\ref{ex:GMM}. 

Let us next put into perspective \Cref{coro:GaussianTemp}, in which we considered a Gaussian density with varying mean vector and covariance matrix, and \Cref{coro:GaussianTempGrad}, where the covariance matrix is assumed to be known. While \Cref{coro:GaussianTempGrad} kept the computations straightforward and provided gradient-based updates that do not require to introduce a convex subset $C_0$, this is no longer the case if we take the same exact setting as in \Cref{coro:GaussianTemp}. Indeed, when $\theta=(m,\Sigma)\in\Tset=\rset^d\times\mathcal{M}_{>0}(d)$ and under the condition \eqref{generic-update-canonic-expo-family-hyp}, the gradient of $g$ is given by 
\begin{align*}
\nabla g(\theta)&=\lr{\Sigma^{-1}\lr{m-\int y
                      \check{\varphi}(y)\;\rmd y},
                      \frac12\Sigma^{-1}\lr{\int yy^T  \check{\varphi}(y)\;\rmd y-\Sigma+mm^T}\Sigma^{-1}}\;.
\end{align*}
The smoothness index $\beta$ is now no longer constant here, which makes it much more involved to choose a convenient convex subset $C_0$ and to bound the smoothness index over it. \newline

At this stage, we can solve the maximisation and gradient-based approaches when the kernel $k$ belongs to the exponential family for variational families as large as (finite) mixture models. The maximisation approach appears to be preferable to the gradient-based one as it does not require a smoothness assumption and does not depend on the parameterisation. 

In the next section, we investigate how our maximisation approach may further generalise to include extensions of the exponential family distribution such as Student's $t$ distributions, hence further motivating the maximisation approach over the gradient-based approach.

\subsection{Extension to Linear Mixture (LM) Models}
  \label{sec:partial-mixture-expo-family} 

  The goal of this section is to examine what becomes of the
  maximisation approach for a specific extension of the exponential
  family distribution, which we will refer to as \emph{exponential linear mixture} (ELM) family. The ELM family will in
  particular encompass the Student's $t$ distribution with mean $m$,
  covariance matrix $\Sigma$ and $\nut$ degrees of freedom which can be
  defined as the continuous mixture
  $$
  \mathrm{t}({y}; \mt,\Sigma,\nut) = \int_0^\infty
  \mathcal{N}(y;\mt,z^{-1}\Sigma)\;\check{\tau}_{\nut}(\rmd z)\;,
  $$
where $\check{\tau}_{\nut}$ denotes the $\chi^2$ distribution with
$\nut$ degrees of freedom. As a result, building a
  maximisation approach within the ELM family will lead to new
  iterative schemes for variational families that go beyond the cases
  considered thus far. To this end, let us first provide a precise definition of the ELM family we want to study and introduce the main assumptions of this section. All the proofs from \Cref{sec:partial-mixture-expo-family} are deferred to \Cref{sec:proof:mix:exp}.

  \subsubsection{The (E)LM Family: Definition and Notation}

  Let $F$ be an Euclidean space endowed with the inner product $\pscal[F]{\cdot}{\cdot}$ and norm $\normev[F]{\cdot}$. We denote by $\mathcal{L}(F,E)$ and $\mathcal{L}(F)$ the (finite-dimensional) linear spaces of linear
  operators from $F$ to $E$ and from $F$ to itself, respectively. We now provide the definition of the (E)LM family.

  \begin{defi}[(E)LM family] \label{defi:LMEFfam} Let
    $k^{(o)}:E_0\times\Yset\to\rset_+$ be a kernel density w.r.t. the
    dominating measure $\nu$, with $E_0\subseteq E$ and let $\changevar:\tilde{\Tset}\to F$. Set
    $\Zset=\mathcal{L}(F,E)$ and denote by $\Zsigma$ its Borel
    $\sigma$-field. Let $\parammixproba$ be a subset of probability measures on $(\Zset,\Zsigma)$ satisfying 
    \begin{enumerate}[label=(\roman*)]
      \item\label{asump:equiv-calT} for all $\tau,\tau'$ in $\parammixproba$,
          the distributions $\tau$ and $\tau'$ are equivalent;
        \item\label{asump:gradA-well-def}  for all $\tau\in\parammixproba$
          and  $\tau $-almost all $\ell \in \Zset$,
          the image set $\ell\circ\changevar( \tilde{\Tset})$ is included in  $E_0$.
        \end{enumerate}
      A member of the  \emph{linear mixture} (LM) family with canonical kernel $k^{(o)}$, mixing
      class $\parammixproba$ and parameter mapping $\changevar$, is
      defined for a parameter $\theta=(\vartheta,\tau)\in\Tset=\tilde{\Tset}\times\parammixproba$ by the density w.r.t. $\nu$
    \begin{align}\label{eq:DefiKernelLinMixt}
      k\frtxt{^{(1)}}(\theta,y)= \int_\Zset k^{(o)}(\ell\circ
        \changevar(\vartheta),y)\; \tau (\rmd \ell)\;,\qquad y\in\Yset\;.
    \end{align}
    If moreover $k^{(o)}$  satisfies~\ref{hyp:expo}, we say that
    $k^{(1)}(\theta,\cdot)$ is a member of the \emph{exponential linear
      mixture} (ELM) family. 
  \end{defi}
Here, Assumption~\ref{asump:equiv-calT}
implies that we only need to check
Assumption~\ref{asump:gradA-well-def} for one $\tau$ in $\parammixproba$.
As for Assumption~\ref{asump:gradA-well-def}, it ensures that for all
$\vartheta\in\tilde{\Tset}$ and all $\tau\in\parammixproba$, the function
$(y,\ell)\mapsto k^{(o)}(\ell\circ \changevar(\vartheta),y)$ appearing in
\eqref{eq:DefiKernelLinMixt} is a probability density function w.r.t. $\nu\otimes\tau$, so that for all $\theta \in \Tset$, the
function $y \mapsto k^{(1)}(\theta,y)$ is a probability density
function w.r.t. $\nu$. 

Furthermore, the mapping $\changevar$ in \Cref{defi:LMEFfam} allows us to introduce various parameterisations. It is also important to note that for $\theta=(\vartheta,\tau)\in\Tset$, the probability density function
$k^{(1)}(\theta,\cdot)$  only depends on $\xi=\changevar(\vartheta)\in\changevar(\tilde{\Tset})$ and
$\tau\in\parammixproba$. As we shall see later, Student's $t$ distributions in particular will
fit this general setting. Before that, let us state conditions leading
to a systematic decrease in $\Psif$ when the kernel $k^{(1)}$ is as in
\Cref{defi:LMEFfam} and from there, let us see how a maximisation
approach for the ELM family can be derived.
% ensuring that Condition~(\ref{eq:ineqThetaMainTwo}) holds for a kernel of the LM family, so that \Cref{thm:EMtheta} applies for $k(\theta_n,y)$ as in~(\ref{eq:DefiKernelLinMixt}), on a sequence of iteratively updated $(\theta_n)_{n\geq1}$ with $\theta_n=(\vartheta_n,\tau_n)$, see \Cref{prop:argmax-linear-mixture-exponential}.

\subsubsection{Monotonic Decrease Conditions for the (E)LM Family}
\label{sec:monot-decr-cond-LM}

In the case of an ELM family, we are not able to directly solve the
argmax problem~(\ref{eq:aPMC:updateP}) as we did for the exponential
family in \Cref{sec:argm-solut-param-expo}. Instead, given an (E)LM family $k^{(1)}$ as in \Cref{defi:LMEFfam}, we come back to
a monotonic decrease condition of the form
\begin{equation}
  \label{eq:generic-decrease-theta-cond}
 \int_\Yset \check{\varphi}(y) \log\lr{\frac{k^{(1)}(\theta,y)}{k^{(1)}(\theta_0,y)}} \nu(\rmd y)
  \geq 0 \;.
\end{equation}
As detailed in the remark below, for selected choices of $\check{\varphi}$ and of $(\theta,\theta_0)$, \eqref{eq:generic-decrease-theta-cond} can indeed be linked to the conditions~(\ref{eq:ineqThetaMainTwo}) and~(\ref{eq:posMixtureP}) appearing in \Cref{thm:EMtheta} and \Cref{thm:EM:MixtureModel} respectively.

\begin{rem}\label{rem:generic-cond-elm}
Let $\alpha\in[0,1)$. Taking $\check{\varphi}=\normratio$ defined by~(\ref{eq:ratio}) and $(\theta,\theta_0) = (\theta_{n+1},\theta_n)$,
(\ref{eq:generic-decrease-theta-cond}) becomes the 
condition~(\ref{eq:ineqThetaMainTwo}) of \Cref{thm:EMtheta}. Applying
(\ref{eq:generic-decrease-theta-cond}) for $j=1,\dots,J$
with $\check{\varphi}=\normratiot$ defined by~(\ref{eq:respat}) and
$(\theta,\theta_0) = (\theta_{j,n+1},\theta_{j,n})$ yields the
condition~(\ref{eq:posMixtureP}) of \Cref{thm:EM:MixtureModel}.  
\end{rem}
Let $\mathrm{D}_{\tau',\tau}=\frac{\rmd\tau'}{\rmd\tau}$ denote the
Radon-Nikodym derivative of $\tau'$ w.r.t. $\tau$. We then have the following proposition (we defer the proofs of the results from \Cref{sec:monot-decr-cond-LM} to \Cref{sec:proof:mix:exp:prelim:res}).

\begin{prop}[Conditions for a monotonic decrease within the (E)LM family]\label{prop:argmax-linear-mixture-exponential}
  Let $k^{(1)}$ be an LM family as in \Cref{defi:LMEFfam}. Let
  $\check{\varphi}:\Yset\to\rset_+$ be a probability density function
  with respect to $\nu$. Let
  $\vartheta_0,\vartheta \in \tilde{\Tset}$ and
  $\tau_0,\tau\in\parammixproba$ such that
  \begin{align}
    \label{eq:update-cond-tau-general_mixture}
&   \int_{\Yset} \lr{\int_{\Zset}
               \varphi(\ell, y)\;
                \log
               \lr{\mathrm{D}_{\tau,\tau_0}(\ell)}\tau_0 (\rmd \ell)} \nu(\rmd y) \geq0\;,\\
    \label{eq:update-cond-theta-general_mixture}
&   \int_{\Yset} \lr{\int_{\Zset} 
               \varphi(\ell, y) %+ b_n(\ell)\,k^{(o)}(\ell \circ \changevar(\vartheta_n),y)}
                \log
               \lr{\frac{k^{(o)}(\ell \circ \changevar(\vartheta),y)}{k^{(o)}(\ell \circ \changevar(\vartheta_0),y)}}
               \tau_0 (\rmd \ell)}\nu(\rmd y) \geq0
  \end{align}
where we set
\begin{align}
  \label{eq:check-varphi-varphiL-carphiQ}
  &\varphi(\ell,y)=\check{\varphi}(y)\,\frac{k^{(o)}(\ell\circ\changevar(\vartheta_0),y)}{k^{(1)}(\theta_0,y)} \;.
\end{align}
Then (\ref{eq:generic-decrease-theta-cond}) holds with
$\theta=(\vartheta,\tau)$ and $\theta_0=(\vartheta_0,\tau_0)$ .
\end{prop}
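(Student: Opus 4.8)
The plan is to view the log-ratio $\log\lr{k^{(1)}(\theta,y)/k^{(1)}(\theta_0,y)}$ as, up to a Jensen lower bound, an expectation over the latent mixing operator $\ell\in\Zset$ under a well-chosen probability measure $\mathrm{P}_y$, then to split the resulting logarithm of a product and integrate against $\check{\varphi}\,\rmd\nu$, so that the two terms obtained are exactly the left-hand sides of Conditions~\eqref{eq:update-cond-theta-general_mixture} and \eqref{eq:update-cond-tau-general_mixture}.

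Concretely, I would first fix $y\in\Yset$ with $k^{(1)}(\theta_0,y)>0$ (the complement is $\nu$-negligible on the support of $k^{(1)}(\theta_0,\cdot)$ and carries no mass of $\varphi(\cdot,y)$). Since $\theta_0=(\vartheta_0,\tau_0)$, Definition~\eqref{eq:DefiKernelLinMixt} gives $\int_\Zset k^{(o)}(\ell\circ\changevar(\vartheta_0),y)\,\tau_0(\rmd\ell)=k^{(1)}(\theta_0,y)$, so
\[
\mathrm{P}_y(\rmd\ell)=\frac{k^{(o)}(\ell\circ\changevar(\vartheta_0),y)}{k^{(1)}(\theta_0,y)}\,\tau_0(\rmd\ell)
\]
is a probability measure on $(\Zset,\Zsigma)$. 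Using Assumption~\ref{asump:equiv-calT} so that $\tau\ll\tau_0$ and $\mathrm{D}_{\tau,\tau_0}=\rmd\tau/\rmd\tau_0$ is well defined, a change of measure yields $k^{(1)}(\theta,y)/k^{(1)}(\theta_0,y)=\int_\Zset \lr{k^{(o)}(\ell\circ\changevar(\vartheta),y)/k^{(o)}(\ell\circ\changevar(\vartheta_0),y)}\,\mathrm{D}_{\tau,\tau_0}(\ell)\,\mathrm{P}_y(\rmd\ell)$. Applying Jensen's inequality for the concave function $\log$ under $\mathrm{P}_y$ and then $\log(ab)=\log a+\log b$ gives, for $\nu$-a.e.\ $y$,
\[
\log\lr{\frac{k^{(1)}(\theta,y)}{k^{(1)}(\theta_0,y)}}\geq \int_\Zset \log\lr{\frac{k^{(o)}(\ell\circ\changevar(\vartheta),y)}{k^{(o)}(\ell\circ\changevar(\vartheta_0),y)}}\mathrm{P}_y(\rmd\ell)+\int_\Zset \log\lr{\mathrm{D}_{\tau,\tau_0}(\ell)}\mathrm{P}_y(\rmd\ell)\;.
\]
Multiplying by $\check{\varphi}(y)\geq0$, integrating in $\nu(\rmd y)$, and using that $\check{\varphi}(y)\,\mathrm{P}_y(\rmd\ell)=\varphi(\ell,y)\,\tau_0(\rmd\ell)$ by the definition~\eqref{eq:check-varphi-varphiL-carphiQ} of $\varphi$, the right-hand side becomes precisely the sum of the two double integrals appearing in \eqref{eq:update-cond-theta-general_mixture} and \eqref{eq:update-cond-tau-general_mixture}. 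Both are nonnegative by hypothesis, hence $\int_\Yset\check{\varphi}(y)\log\lr{k^{(1)}(\theta,y)/k^{(1)}(\theta_0,y)}\nu(\rmd y)\geq0$, which is exactly \eqref{eq:generic-decrease-theta-cond} with $\theta=(\vartheta,\tau)$ and $\theta_0=(\vartheta_0,\tau_0)$.

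The substantive step is this single application of Jensen's inequality after the change of measure to $\mathrm{P}_y$; the rest is routine, and the main obstacle is the measure-theoretic bookkeeping needed to make the chain of (in)equalities rigorous. I would check that $\mathrm{P}_y$ is a genuine probability measure for $\nu$-a.e.\ $y$ (using that $k^{(1)}(\theta_0,\cdot)$ is a density w.r.t.\ $\nu$ and Assumption~\ref{asump:gradA-well-def}, which guarantees $y\mapsto k^{(o)}(\ell\circ\changevar(\cdot),y)$ is a bona fide density), and that the integrals survive the possible value $-\infty$: splitting each integrand into positive and negative parts, the positive parts are $\mathrm{P}_y$-integrable because $\ell\mapsto k^{(o)}(\ell\circ\changevar(\vartheta),y)/k^{(o)}(\ell\circ\changevar(\vartheta_0),y)$ and $\mathrm{D}_{\tau,\tau_0}$ have finite $\mathrm{P}_y$-mean, while Conditions \eqref{eq:update-cond-tau-general_mixture}--\eqref{eq:update-cond-theta-general_mixture} being $\geq0$ force the corresponding negative parts to be integrable, so that linearity of the integral and monotonicity under the pointwise Jensen bound both apply.
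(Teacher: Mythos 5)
Your proof is correct and follows essentially the same route as the paper's: the measure $\mathrm{P}_y$ you introduce is exactly the conditional density $\check{k}_{\xi_0,\tau_0}(\cdot,y)$ w.r.t.\ $\tau_0$ used there, and the single Jensen step on the log of the change-of-measure representation of $k^{(1)}(\theta,y)/k^{(1)}(\theta_0,y)$, followed by integration against $\check{\varphi}\,\rmd\nu$ and the identity $\check{\varphi}(y)\,\mathrm{P}_y(\rmd\ell)=\varphi(\ell,y)\,\tau_0(\rmd\ell)$, is precisely the paper's argument. Your additional remarks on the integrability bookkeeping are sound but do not change the approach.
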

One possible way to obtain~(\ref{eq:update-cond-tau-general_mixture})
is to observe that since the left-hand side of~(\ref{eq:update-cond-tau-general_mixture}) is zero for $\tau=\tau_0$, (\ref{eq:update-cond-tau-general_mixture}) is fulfilled by setting
\begin{align}
  \label{eq:update-cond-tau-general_mixture_argmax2}
  \tau=\argmax_{\tau'\in\parammixproba}
   \int_{\Yset} \lr{\int_{\Zset}
               \varphi(\ell, y)\;
                \log
               \lr{\mathrm{D}_{\tau',\tau_0}(\ell)}\tau_0 (\rmd \ell)} \nu(\rmd y)\;,
\end{align}
assuming that this argmax is well-defined. This will notably be done later in order to get the update formula~(\ref{eq:update-student-nut-param}) of \Cref{ex:student-mixture} (see the proof of \Cref{ex-thm:student-mixture} in \Cref{sec:proof:mix:exp}).

As for the updating of $\vartheta$, we can again adopt a maximisation
approach to derive a convenient $\vartheta$. This is the purpose of the following result, where we directly update $\xi =\changevar(\vartheta)$ since $\changevar$ is a known mapping in \Cref{defi:LMEFfam}.
\begin{coro}\label{cor:monot-decr-general-LM}
  Consider an LM family as in \Cref{defi:LMEFfam} and $\theta_0=(\vartheta_0,\tau_0)\in\tilde{\Tset}\times\parammixproba$. Set $\xi_0=\changevar(\vartheta_0)$, let
  $\check{\varphi}:\Yset\to\rset_+$ be a probability density function
  with respect to $\nu$ and define
  $\varphi:\Zset\times\Yset\to\rset_+$ by~(\ref{eq:check-varphi-varphiL-carphiQ}). Let
  $\vartheta \in \tilde{\Tset}$ be such that
  $\changevar(\vartheta)$ is a solution of the argmax
  problem
\begin{align}
\label{eq:update-cond-theta-general_mixture_argmax}
\argmax_{\xi\in\changevar(\tilde{\Tset})}    \int_{\Yset}\lr{\int_{\Zset}
               \lrb{\varphi(\ell,y)+ b_0(\ell)\,k^{(o)}(\ell(\xi_0),y)}
             \log
             \lr{\frac{k^{(o)}(\ell(\xi),y)}{k^{(o)}(\ell(\xi_0),y)}}
             \;\tau_0 (\rmd \ell)}\nu(\rmd y) \;,
\end{align}
where $b_0$ is any non-negative
function defined on $\Zset$ such that $\int b_0\;\rmd\tau_0<\infty$. 
Then (\ref{eq:update-cond-theta-general_mixture}) holds.  
\end{coro}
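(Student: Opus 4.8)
The plan is to reduce the claim to two elementary observations: that $\xi_0=\changevar(\vartheta_0)$ is an admissible point of the argmax problem~\eqref{eq:update-cond-theta-general_mixture_argmax} at which the maximised objective equals $0$, and that the ``regularising'' term carried by $b_0$ inside that objective is nonpositive; together these force the remaining part of the objective --- which is precisely the left-hand side of~\eqref{eq:update-cond-theta-general_mixture} --- to be nonnegative.

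First I would note that $\xi_0=\changevar(\vartheta_0)\in\changevar(\tilde{\Tset})$ since $\vartheta_0\in\tilde{\Tset}$, so $\xi_0$ is feasible in~\eqref{eq:update-cond-theta-general_mixture_argmax}, and that substituting $\xi=\xi_0$ makes the integrand there vanish identically, as $\log\bigl(k^{(o)}(\ell(\xi_0),y)/k^{(o)}(\ell(\xi_0),y)\bigr)=0$; hence the objective of~\eqref{eq:update-cond-theta-general_mixture_argmax} is $0$ at $\xi_0$. Since $\changevar(\vartheta)$ is assumed to attain the supremum, the objective at $\changevar(\vartheta)$ is $\ge 0$, and by linearity it may be written $A+B$ where $A$ is the left-hand side of~\eqref{eq:update-cond-theta-general_mixture} and
\[
B=\int_{\Yset}\lr{\int_{\Zset} b_0(\ell)\,k^{(o)}(\ell(\xi_0),y)\,\log\lr{\frac{k^{(o)}(\ell(\changevar(\vartheta)),y)}{k^{(o)}(\ell(\xi_0),y)}}\tau_0(\rmd\ell)}\nu(\rmd y)\,.
\]
It then remains to prove $B\le 0$, from which $A\ge -B\ge 0$ follows.

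To bound $B$, I would fix $\tau_0$-almost every $\ell\in\Zset$. By Assumption~\ref{asump:gradA-well-def} of \Cref{defi:LMEFfam} --- which, thanks to Assumption~\ref{asump:equiv-calT}, need only be checked for the single measure $\tau_0$ --- both $\ell(\xi_0)=\ell\circ\changevar(\vartheta_0)$ and $\ell(\changevar(\vartheta))$ lie in $E_0$, so $k^{(o)}(\ell(\xi_0),\cdot)$ and $k^{(o)}(\ell(\changevar(\vartheta)),\cdot)$ are probability densities with respect to $\nu$, and Jensen's inequality applied to the concave function $\log$ gives $\int_{\Yset} k^{(o)}(\ell(\xi_0),y)\log\bigl(k^{(o)}(\ell(\changevar(\vartheta)),y)/k^{(o)}(\ell(\xi_0),y)\bigr)\nu(\rmd y)\le \log\int_{\Yset} k^{(o)}(\ell(\changevar(\vartheta)),y)\nu(\rmd y)=0$: this inner $y$-integral, being up to sign a Kullback--Leibler divergence, is $\le 0$. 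Multiplying by $b_0(\ell)\ge 0$, integrating in $\ell$ against $\tau_0$ (with $\int b_0\,\rmd\tau_0<\infty$), and exchanging the two integrals --- legitimate because $\log u\le u-1$ bounds the positive part of the double integrand by $(\ell,y)\mapsto b_0(\ell)\,k^{(o)}(\ell(\changevar(\vartheta)),y)$, which is $\tau_0\otimes\nu$-integrable with integral $\int b_0\,\rmd\tau_0$ --- identifies the result with $B$ and yields $B\le 0$.

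I expect the genuine obstacle to be not this main line of argument but the measure-theoretic bookkeeping around it: one must verify that the objective of~\eqref{eq:update-cond-theta-general_mixture_argmax} and the two pieces $A$ and $B$ are well defined and that the splitting $A+B$ involves no $\infty-\infty$ indeterminacy. Here I would use the identity $\int_{\Zset}\varphi(\ell,y)\,\tau_0(\rmd\ell)=\check{\varphi}(y)$ --- immediate from the definition~\eqref{eq:check-varphi-varphiL-carphiQ} of $\varphi$ and the mixture formula~\eqref{eq:DefiKernelLinMixt} for $k^{(1)}(\theta_0,\cdot)$ --- and again $\log u\le u-1$, to control the positive parts of the inner $\ell$-integrals; combined with the fact that it suffices for~\eqref{eq:update-cond-theta-general_mixture} to hold in $[0,+\infty]$ (the objective at the maximiser and $-B$ then both lying in $[0,+\infty]$, so that $A=(A+B)+(-B)\ge 0$), this establishes~\eqref{eq:update-cond-theta-general_mixture} and concludes the proof.
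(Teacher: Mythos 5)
Your proposal is correct and follows essentially the same route as the paper's proof: it uses the same decomposition (objective at the maximiser, which is nonnegative since the objective vanishes at $\xi_0$, plus the $b_0$-term), the same observation that for $\tau_0$-a.e.\ $\ell$ both $k^{(o)}(\ell(\xi_0),\cdot)$ and $k^{(o)}(\ell(\changevar(\vartheta)),\cdot)$ are probability densities so the inner $y$-integral is a negated Kullback--Leibler divergence, and the same $\log_+(x)\leq x$ bound together with $\int b_0\,\rmd\tau_0<\infty$ to justify exchanging the order of integration. The extra remarks on well-definedness of the split are a harmless refinement, not a different argument.
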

As we will see, (\ref{eq:update-cond-tau-general_mixture_argmax2}) and~(\ref{eq:update-cond-theta-general_mixture_argmax}) can be achieved for continuous mixtures such as the one used to define Student's $t$ distributions but they require a thorough analysis. To this end, we next formulate and solve a maximisation approach for a kernel $k^{(1)}$ belonging to the ELM family.

\subsubsection{The Maximisation Approach for the ELM Family}

Let us show that we can solve an argmax of the form~(\ref{eq:update-cond-theta-general_mixture_argmax}) for
an adequate choice of $b_0$ by fully exploiting the fact that
$k^{(o)}(\zeta,y)$ is a canonical kernel of the exponential family
distribution which satisfies~\ref{hyp:expo} and by relying on
additional assumptions to be introduced alongside some helpful
notation. % (hence it satisfies~\ref{hyp:expo})
%\Cref{thm:argrmax-objective-equivalence}, which is the counterpart of \Cref{thm:generic-update-canonic-expo-family} in the sense that, as \Cref{thm:generic-update-canonic-expo-family} solves the argmax problem arising from \Cref{coro:argmax} for exponential family distributions, \Cref{thm:argrmax-objective-equivalence} solves the argmax problem arising from \Cref{prop:argmax-linear-mixture-exponential} for linear mixtures of them.
Our first assumption is the following.
\begin{hyp}{C}
\item\label{asump:Upsilon-image} The image set $\changevar(\tilde{\Tset})$ is an open subset
  of $F$ and for all $\tau\in\parammixproba$,  $\tau$-a.e. $\ell$ is full rank.
\end{hyp}
\ref{asump:Upsilon-image} guarantees that, for all
$\vartheta\in\tilde{\Tset}$, $\tau\in\parammixproba$ and $\tau$-almost
all $\ell$ in $\Zset$, $\ell\circ\changevar(\vartheta)$ belongs to
$\intE_0$, the interior set of $E_0$ (see \Cref{lem:C1cond} in
\Cref{sec:proof:mix:exp} for details). This will come in handy in the
upcoming derivations and we now introduce some helpful notation,
before presenting our next two assumptions.  We let $\normop{\cdot}$
denote operator norms, for instance, for any $\ell\in \Zset$,
$$
\normop{\ell}=\sup\set{\normev[F]{\ell(x)}}{x\in E\,,\,\normev[E]{x}\leq1}\;.
$$
We further denote the adjoint of the linear operator $\ell$ by $\ell^T$, for instance, in the case $\ell\in \Zset$, $\ell^T\in\mathcal{L}(E,F)$ is %for $\ell\in\mathcal{L}(F,E)$,
  defined by
  \begin{align}\label{eq:defajoint}
  \pscal[E]{\ell(x)}{y}=  \pscal[F]{x}{\ell^T(y)}\qquad\text{for all $(x,y)\in
    F\times E$}
  \end{align}
and, for convenience, we also denote
\begin{align}
\label{eq:Q:gradA-def}%{eq:Q:cond_ell}
  \gradA{\ell}:=\ell^T\circ\nabla A\circ \ell\;.
\end{align}
Here, $A$ is the function appearing in the definition of
an exponential family (\Cref{def:expo-family}). Recall that $A$ is infinitely differentiable on $\intE_0$ so that by~\ref{asump:Upsilon-image},
$\gradA{\ell}$ is well-defined on $\changevar(\tilde{\Tset})$ for
$\tau$-a.e. $\ell$ with $\tau\in\parammixproba$.  Furthermore, we use
$\PP_{\vartheta,\tau}$ and $\PE_{\vartheta,\tau}$ to denote the
probability and its corresponding expectation under which the pair
$(Y,L)$ has density
$(y,\ell)\mapsto k^{(o)}(\ell\circ \changevar(\vartheta),y)$ with
respect to $\nu\otimes\tau$. Namely, for any non-negative measurable
$g$ on $\Yset\times\Zset$,
\begin{align}\label{eq:DefiKernelLinMixt-PE}
  \PE_{\vartheta,\tau}\lrb{g(Y,L)} = \int g(y,\ell)\, k^{(o)}(\ell\circ
  \changevar(\vartheta),y)\; \nu(\rmd y)\tau (\rmd \ell)\;.
\end{align}
We now introduce, for any $(\vartheta_0,\tau_0)\in\tilde{\Tset}\times\parammixproba$, two auxiliary functions 
$\tilde{\mathrm{m}}_{\vartheta_0,\tau_0}$ and
$\tilde{\mathrm{s}}_{\vartheta_0\tau_0}$, both defined from $\Yset$ to
$\rset_+$, such that the two following conditions hold.
  \begin{hyp}{C} 
\item\label{item:condB-sup-exp-moment} For all
  $\vartheta_0\in\tilde{\Tset}$ and  $\tau_0\in\parammixproba$, we have
  \begin{equation*}
    %\label{eq:condB-sup-exp-moment}
\PE_{\vartheta_0,\tau_0}\argcond{\normop{L}}{Y} \leq  \tilde{\mathrm{m}}_{\vartheta_0,\tau_0}(Y)\quad\PP_{\vartheta_0,\tau_0}-\as
  \end{equation*}
  \item\label{item:condB-stilde-ratio} For all
  $\vartheta_0\in\tilde{\Tset}$ and  $\tau_0\in\parammixproba$, we have
  \begin{equation*}
    %\label{eq::condB-stilde-ratio}
    \forall\xi\in\changevar(\tilde{\Tset})\,,\,\exists\epsilon,C>0\,,\,
    \PE_{\vartheta_0,\tau_0}\argcond{\rme^{\epsilon\normev[F]{\gradA{L}(\xi)}}}{Y}\leq C\,\rme^{\tilde{\mathrm{s}}_{\vartheta_0,\tau_0}(Y)}
\quad\PP_{\vartheta_0,\tau_0}-\as
\end{equation*}
\end{hyp}
\ref{item:condB-sup-exp-moment}
and~\ref{item:condB-stilde-ratio} are made to ensure that the
integrals used to solve the argmax problem from
\Cref{cor:monot-decr-general-LM} will be well-defined. We finally present our last assumption, which corresponds to some sort
of identifiability assumption (see \Cref{lem:ident-elm} in
\Cref{sec:proof:mix:exp} for details). It is used to obtain the
uniqueness for the argmax problem we will solve. %in \Cref{thm:argrmax-objective-equivalence}. 
\begin{hyp}{C}
  \item \label{item:condB-ident} For all
    $\xi\neq\xi'\in\changevar(\tilde{\Tset})$ and  $\tau_0 \in\parammixproba$, we have
    $\tau_0 \lr{\set{\ell\in\Zset}{\ell(\xi)\neq \ell(\xi')}}>0$. 
\end{hyp}  
We can now state the following result, whose proof can be found in \Cref{sec:proof-crefthm:-objective-equiv}.
\begin{thm}\label{thm:argrmax-objective-equivalence}
  Consider an ELM family as in \Cref{defi:LMEFfam} and let $\theta_0 = (\vartheta_0, \tau_0) \in \tilde{\Tset} \times \parammixproba$. Assume
  \ref{asump:Upsilon-image}--\ref{item:condB-ident} hold.  Let $\check{\varphi}$ be a probability density function w.r.t. $\nu$ such that
  \begin{equation}
    \label{eq:cond-checkVarphi-linear-mixture}
    \int_\Yset \check{\varphi}(y)\,\lr{\normev[E]{S(y)}\,\tilde{\mathrm{m}}_{\vartheta_0,\tau_0}(y)
      +\rme^{\tilde{\mathrm{s}}_{\vartheta_0,\tau_0}(y)}}\;\nu(\rmd y) <\infty
    \;, 
  \end{equation}
  define $\varphi:\Zset\times\Yset\to\rset_+$
  by~(\ref{eq:check-varphi-varphiL-carphiQ}) and set for all $(\ell,y,\xi) \in\Zset \times \Yset \times \changevar(\tilde{\Tset})$
  such that $\ell\circ\changevar(\vartheta_0)\in\intE_0$
  \begin{align}
    \label{eq:check-WW-well-defined-carphiQ}
    &\tilde{\varphi}(\ell) =\int_{\Yset}\varphi(\ell,y)\;\nu(\rmd y)\;,\\
    \label{eq:gradA-WW-well-defined-carphiQ}
    & \gradA{\mathrm{w}}_{\tau_0}(\xi)=\int_{\Zset} \tilde{\varphi}(\ell)\,\gradA{\ell}(\xi)\;\tau_0 (\rmd \ell)\;.
  \end{align}
  Then $\tilde{\varphi}<\infty$ $\tau_0$-a.s., and
  $\gradA{\mathrm{w}}_{\tau_0}$ is well-defined from
  $\changevar(\tilde{\Tset})$ to $F$ and
  one-to-one on $\changevar(\tilde{\Tset})$, hence bijective from
  $\changevar(\tilde{\Tset})$ to its image
  $\gradA{\mathrm{w}}_{\tau_0}\circ\changevar(\tilde{\Tset})$, and
  satisfies, for all $\vartheta \in \tilde{\Tset}$,
    \begin{align}\label{eq:cond-gradA-w-cont-y-wise-expr-}
      \gradA{\mathrm{w}}_{\tau_0}\circ\changevar(\vartheta)
      = \PE_{\vartheta,\tau_0}\lrb{L^T\circ S(Y)\,\tilde{\varphi}(L)} \eqsp.
    \end{align}
  Moreover,
  setting $\xi_0=\changevar(\vartheta_0)$, for
  any $b\in\rset_+$, the argmax problem
\begin{equation}
  \label{eq:argrmax-objective:armax-prob-check-varphi-Q }
\argmax_{\xi\in\changevar(\tilde{\Tset})}\int_{\Yset} \lr{\int_{\Zset}
  \lrb{\varphi(\ell,y) + b\,\tilde{\varphi}(\ell)\,k^{(o)}(\ell(\xi_0),y)}\,
  \log
  \lr{\frac{k^{(o)}(\ell(\xi),y)}{k^{(o)}(\ell(\xi_0),y)
    }}\;\tau_0 (\rmd \ell)} \nu(\rmd y) \eqsp, 
\end{equation}
has at least a solution if and only if 
\begin{equation}
  \label{eq:wstar-defined-check-varphiQ}
\mathbf{w}^*=\frac1{1+b}\,\int \varphi(\ell,y)
\,\ell^T\circ S(y) \;\nu(\rmd 
y)\tau_0 (\rmd \ell)+\frac b{1+b}\,\gradA{\mathrm{w}}_{\tau_0}(\xi_0)
\end{equation}
belongs to $\gradA{\mathrm{w}}_{\tau_0}\circ\changevar(\tilde{\Tset})$, in
which case~(\ref{eq:argrmax-objective:armax-prob-check-varphi-Q }) has a
unique solution $\xi^*$ defined by
\begin{equation}
  \label{eq:wstar-defined-check-varphiQ-sol}
\gradA{\mathrm{w}}_{\tau_0}(\xi^*)=\mathbf{w}^*\;.
\end{equation}
\end{thm}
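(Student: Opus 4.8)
The plan is to mirror the proof of \Cref{thm:generic-update-canonic-expo-family}, except that the ``canonical parameter'' now lives in the linear fibre $\ell\circ\changevar(\tilde{\Tset})$ and must be averaged over $\ell\sim\tau_0$. So I would first check that every object in the statement is well defined and finite, then exploit the exponential form of $k^{(o)}$ to put the objective of \eqref{eq:argrmax-objective:armax-prob-check-varphi-Q } in closed form, and finally read off existence and uniqueness from a first-order condition for a strictly concave program. \textbf{Step 1 (finiteness).} Since $\int_\Zset k^{(o)}(\ell\circ\changevar(\vartheta_0),y)\,\tau_0(\rmd\ell)=k^{(1)}(\theta_0,y)$ by \eqref{eq:DefiKernelLinMixt}, the map $\varphi$ of \eqref{eq:check-varphi-varphiL-carphiQ} satisfies $\int_\Zset\varphi(\ell,y)\,\tau_0(\rmd\ell)=\check{\varphi}(y)$, hence by Tonelli $\int_\Zset\tilde{\varphi}\,\rmd\tau_0=\int_\Yset\check{\varphi}\,\rmd\nu=1$, so $\tilde{\varphi}<\infty$ $\tau_0$-a.s. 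By \ref{asump:Upsilon-image} (through \Cref{lem:C1cond}) one has $\ell\circ\changevar(\vartheta)\in\intE_0$ for $\tau_0$-a.e.\ $\ell$, so $\nabla A(\ell\circ\changevar(\vartheta))$ and $\gradA{\ell}(\changevar(\vartheta))=\ell^T\circ\nabla A(\ell\circ\changevar(\vartheta))$ are defined. To control the integrals I would bound the linear part of the log-ratio through $\normev[F]{\ell^T\circ S(y)}\leq\normop{\ell}\,\normev[E]{S(y)}$ together with \ref{item:condB-sup-exp-moment}, and the ``$A$'' part through $A(\ell(\xi))-A(\ell(\xi_0))=\int_0^1\pscal[F]{\xi-\xi_0}{\gradA{\ell}((1-t)\xi_0+t\xi)}\,\rmd t$: the integrand is monotone in $t$ because $A\circ\ell$ is convex and $[\ell(\xi_0),\ell(\xi)]\subseteq\intE_0$ (a convex set), so $|A(\ell(\xi))-A(\ell(\xi_0))|\leq\normev[F]{\xi-\xi_0}\bigl(\normev[F]{\gradA{\ell}(\xi_0)}+\normev[F]{\gradA{\ell}(\xi)}\bigr)$, which is integrable against the relevant measure by \ref{item:condB-stilde-ratio} and the $\rme^{\tilde{\mathrm{s}}_{\vartheta_0,\tau_0}}$ term of \eqref{eq:cond-checkVarphi-linear-mixture}. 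These bounds show at once that $\gradA{\mathrm{w}}_{\tau_0}$ is well defined on $\changevar(\tilde{\Tset})$, that $\tilde{A}(\xi):=\int_\Zset\tilde{\varphi}(\ell)A(\ell(\xi))\,\tau_0(\rmd\ell)$ is finite there, that $\mathbf{w}^*$ of \eqref{eq:wstar-defined-check-varphiQ} is a finite element of $F$, and that the objective of \eqref{eq:argrmax-objective:armax-prob-check-varphi-Q } is absolutely convergent.

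\textbf{Step 2 (closed form).} Using $k^{(o)}(\zeta,y)=h(y)\,\rme^{\pscal[E]{\zeta}{S(y)}-A(\zeta)}$ and the adjoint identity \eqref{eq:defajoint}, $\log\bigl(k^{(o)}(\ell(\xi),y)/k^{(o)}(\ell(\xi_0),y)\bigr)=\pscal[F]{\xi-\xi_0}{\ell^T\circ S(y)}-A(\ell(\xi))+A(\ell(\xi_0))$. Integrating against $[\varphi(\ell,y)+b\,\tilde{\varphi}(\ell)\,k^{(o)}(\ell(\xi_0),y)]\,\tau_0(\rmd\ell)\nu(\rmd y)$, and using $\int_\Yset\varphi(\ell,\cdot)\,\rmd\nu=\tilde{\varphi}(\ell)$, $\int_\Yset k^{(o)}(\ell(\xi_0),\cdot)\,\rmd\nu=1$, $\int_\Yset S(y)k^{(o)}(\ell(\xi_0),y)\,\nu(\rmd y)=\nabla A(\ell(\xi_0))$ so that $\int_\Yset k^{(o)}(\ell(\xi_0),y)\,\ell^T\circ S(y)\,\nu(\rmd y)=\gradA{\ell}(\xi_0)$, together with Fubini (licit by Step 1), the objective becomes $(1+b)\bigl[\pscal[F]{\xi-\xi_0}{\mathbf{w}^*}-\tilde{A}(\xi)+\tilde{A}(\xi_0)\bigr]$ with $\mathbf{w}^*$ exactly as in \eqref{eq:wstar-defined-check-varphiQ}. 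Since $1+b>0$, the argmax over $\changevar(\tilde{\Tset})$ coincides with that of $\Phi(\xi):=\pscal[F]{\xi}{\mathbf{w}^*}-\tilde{A}(\xi)$.

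\textbf{Step 3 (the mean map and strict concavity).} Differentiating under the integral sign gives $\nabla\tilde{A}(\xi)=\int_\Zset\tilde{\varphi}(\ell)\,\ell^T\circ\nabla A(\ell(\xi))\,\tau_0(\rmd\ell)=\gradA{\mathrm{w}}_{\tau_0}(\xi)$, and substituting $\nabla A(\ell\circ\changevar(\vartheta))=\int_\Yset S(y)k^{(o)}(\ell\circ\changevar(\vartheta),y)\,\nu(\rmd y)$ and \eqref{eq:DefiKernelLinMixt-PE} yields \eqref{eq:cond-gradA-w-cont-y-wise-expr-}. Each $A\circ\ell$ is convex, so $\tilde{A}$ is convex, and along any segment $[\xi,\xi']$ of $\changevar(\tilde{\Tset})$ its second derivative equals $\int_\Zset\tilde{\varphi}(\ell)\,\pscal[E]{\nabla\nabla^TA(\ell((1-t)\xi+t\xi'))\,\ell(\xi'-\xi)}{\ell(\xi'-\xi)}\,\tau_0(\rmd\ell)$, which is strictly positive: $\nabla\nabla^TA$ is positive definite on $\intE_0$ by \ref{hyp:expo}, $\ell(\xi'-\xi)\neq0$ on $\{\ell:\ell(\xi)\neq\ell(\xi')\}$ which has positive $\tau_0$-measure by \ref{item:condB-ident}, and $\tilde{\varphi}>0$ on a $\tau_0$-positive set since $\int\tilde{\varphi}\,\rmd\tau_0=1$. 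Hence $\tilde{A}$ is strictly convex along segments of $\changevar(\tilde{\Tset})$, and therefore $\gradA{\mathrm{w}}_{\tau_0}=\nabla\tilde{A}$ is one-to-one on $\changevar(\tilde{\Tset})$, i.e.\ bijective onto its image.

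\textbf{Step 4 (existence, uniqueness, and the obstacle).} The set $\changevar(\tilde{\Tset})$ being open and $\Phi$ differentiable with $\nabla\Phi=\mathbf{w}^*-\gradA{\mathrm{w}}_{\tau_0}$, any maximiser $\xi^*$ satisfies $\gradA{\mathrm{w}}_{\tau_0}(\xi^*)=\mathbf{w}^*$, forcing $\mathbf{w}^*\in\gradA{\mathrm{w}}_{\tau_0}\circ\changevar(\tilde{\Tset})$. Conversely, if $\gradA{\mathrm{w}}_{\tau_0}(\xi^*)=\mathbf{w}^*$ for some $\xi^*\in\changevar(\tilde{\Tset})$, then for any $\xi\in\changevar(\tilde{\Tset})$ the one-dimensional map $t\mapsto\Phi((1-t)\xi^*+t\xi)$ is finite on $[0,1]$ (the $\ell$-images of the segment stay in the convex set $\intE_0$, so finiteness follows from the convexity bound and the bounds of Step 1) and strictly concave with vanishing derivative at $t=0$, hence strictly maximal at $t=0$; thus $\Phi(\xi)<\Phi(\xi^*)$ unless $\xi=\xi^*$, which shows $\xi^*$ is the unique solution of \eqref{eq:argrmax-objective:armax-prob-check-varphi-Q } and proves \eqref{eq:wstar-defined-check-varphiQ-sol}. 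I expect the real work to be in Step 1 --- the absolute-integrability bookkeeping for the $\log$-term and for $\mathbf{w}^*$, which is exactly what \ref{item:condB-sup-exp-moment}, \ref{item:condB-stilde-ratio} and \eqref{eq:cond-checkVarphi-linear-mixture} are tailored to handle; once that is in place, Steps 2--4 are the same convex-analysis argument as in \Cref{thm:generic-update-canonic-expo-family}.
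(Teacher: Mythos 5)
Your route is genuinely different from the paper's, and in outline it works. You exploit the fact that, after averaging over $\ell\sim\tau_0$, the objective collapses to the closed form $(1+b)\bigl[\pscal[F]{\xi-\xi_0}{\mathbf{w}^*}-\bigl(\tilde A(\xi)-\tilde A(\xi_0)\bigr)\bigr]$ with a convex potential whose gradient is $\gradA{\mathrm{w}}_{\tau_0}$, and then run a strictly-concave-program argument (stationarity at an interior point $\Leftrightarrow$ maximiser; strict concavity $\Rightarrow$ uniqueness and injectivity of the gradient map). The paper never forms this potential: it proves the general \Cref{prop:argrmax-objective-equivalence} by bounding differences of the objective with the pointwise strict-convexity inequality of \Cref{lem:useful-1}, obtains the ``maximiser $\Rightarrow$ stationary'' direction from directional continuity of $\gradA{\mathrm{w}}_{\tau_0}$ (\Cref{lem:gradAw-continuity-fubini}, which rests on \Cref{lem:technical-exp-moment-sup-bound}), and treats injectivity separately via \eqref{eq:cond-strict-ineq-varphi-L}. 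Your potential-based converse is arguably cheaper (a one-sided directional derivative at an interior maximiser must vanish, and that derivative can be computed by monotone difference quotients of the convex maps $A\circ\ell$ plus dominated convergence using only integrability of $\tilde\varphi(\ell)\,\normev[F]{\gradA{\ell}(\xi)}$ at the segment endpoints), while the paper's proposition is stated under weaker, purely first-moment hypotheses \eqref{eq:varphiQ:condS}--\eqref{eq:varphiQ:cond_ell}. Your Step 1 is essentially the same verification the paper performs: read $\varphi(\ell,y)/\check\varphi(y)$ as the conditional density of $L$ given $Y=y$ and invoke \ref{item:condB-sup-exp-moment}, \ref{item:condB-stilde-ratio} and \eqref{eq:cond-checkVarphi-linear-mixture}. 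Two small repairs: $\tilde A(\xi)$ itself need not be finite, so work throughout with the difference $\int\tilde\varphi(\ell)\lrb{A(\ell(\xi))-A(\ell(\xi_0))}\tau_0(\rmd\ell)$ (your own bound controls exactly this); and $\changevar(\tilde{\Tset})$ is only open, not convex, so all segment arguments must be run on the $\ell$-images, which do stay in the convex set $\intE_0$, as you note.

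Two places need genuine strengthening. First, your justification of strict concavity, hence of injectivity of $\gradA{\mathrm{w}}_{\tau_0}$, is insufficient as written: knowing that $\{\tilde\varphi>0\}$ and $\set{\ell\in\Zset}{\ell(\xi)\neq\ell(\xi')}$ each have positive $\tau_0$-measure does not make their intersection have positive measure. What saves the argument is that in this setting $\tilde\varphi(\ell)>0$ for every ($\tau_0$-a.e.) $\ell$, because $k^{(o)}>0$ under \ref{hyp:expo}\ref{item:expo2} and $\check\varphi$ is a probability density; this is exactly how the paper deduces \eqref{eq:cond-strict-ineq-varphi-L} from \ref{item:condB-ident}. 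Second, the identity \eqref{eq:cond-gradA-w-cont-y-wise-expr-} and the Hessian formula in your Step 3 are asserted by ``differentiating under the integral sign'' without the required domination. The directional derivative of the potential can be had cheaply, as noted above, but \eqref{eq:cond-gradA-w-cont-y-wise-expr-} is a Fubini interchange that needs $\int\tilde\varphi(\ell)\,\normev[F]{\ell^T\circ S(y)}\,k^{(o)}(\ell(\xi),y)\;\nu(\rmd y)\,\tau_0(\rmd \ell)<\infty$, which does not follow from first moments alone; it is precisely where the exponential form of \ref{item:condB-stilde-ratio} enters, through sup/exponential bounds of the kind established in \Cref{lem:technical-exp-moment-sup-bound} and \Cref{lem:gradAw-continuity-fubini}. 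Once you import (or reprove) that domination and fix the positivity argument for $\tilde\varphi$, your proof closes.
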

By setting $\xi^*=\changevar(\vartheta^*)$ and
  $\xi_0=\changevar(\vartheta_0)$ as well as using \eqref{eq:cond-gradA-w-cont-y-wise-expr-}, (\ref{eq:wstar-defined-check-varphiQ-sol}) can be
  interpreted as
  \begin{equation}
  \label{eq:wstar-defined-check-varphiQ-new-sol}
  \PE_{\vartheta^*,\tau_0}\lrb{L^T\circ S(Y)\,\tilde{\varphi}(L)} =   \PE_{\vartheta_0,\tilde{\tau}_0}\lrb{L^T\circ S(\tilde{Y})} \;,
\end{equation}
where $\tilde{\tau}_0$ denotes the probability having density
$\tilde{\varphi}$ with respect to $\tau_0$ and  $\tilde{Y}$ denotes a random
variable valued in $\Yset$ such that,  under $\PP_{\vartheta_0,\tilde{\tau}_0}$, the pair $(\tilde{Y},L)$ is distributed according to the
mixture density
$$
(y,\ell) \mapsto \psi(y,\ell)=\frac1{1+b}\, \varphi(\ell,y)+\frac b{1+b}\,k^{(o)}(\ell\circ
\changevar(\vartheta_0),y)\,\tilde{\varphi}(\ell)
$$
with respect to $\nu \otimes \tau_0$. Interestingly, the second marginals
of these two mixands are $\tilde{\tau}_0$. The distribution of
$(\tilde{Y},L)$ under $\PP_{\vartheta_0,\tilde{\tau}_0}$ can thus equivalently be defined by saying
that $L\sim\tilde{\tau}_0$
and the conditional distribution of $\tilde{Y}$ given $L=\ell$ is
the mixture with density
$$
y\mapsto \psi\argcond{y}{\ell}=\frac1{1+b}\, \frac{\varphi(\ell,y)}{\tilde{\varphi}(\ell)}+\frac b{1+b}\,k^{(o)}(\ell\circ
\changevar(\vartheta_0),y)\;.
$$
Using the definition of $\tilde{\tau}_0$ on the left-hand side
of~(\ref{eq:wstar-defined-check-varphiQ-new-sol}) as well, the latter reads
  \begin{equation}
  \label{eq:wstar-defined-check-varphiQ-new-sol-bis}
  \PE_{\vartheta^*,\tilde{\tau}_0}\lrb{L^T\circ S(Y)} =   \PE_{\vartheta_0,\tilde{\tau}_0}\lrb{L^T\circ S(\tilde{Y})} \;.
\end{equation}
%\end{rem}
Hence, \Cref{thm:argrmax-objective-equivalence} along with
\eqref{eq:wstar-defined-check-varphiQ-new-sol-bis} can be used to
solve the argmax
problem~(\ref{eq:update-cond-theta-general_mixture_argmax}) in the
case where  \ref{hyp:expo} and
\ref{asump:Upsilon-image}--\ref{item:condB-ident} hold and for $b_0(\ell)=b\,\tilde{\varphi}(\ell)$ with $b$ being a non-negative
constant. From there, it only remains to find ways to check that the condition (\ref{eq:cond-checkVarphi-linear-mixture}) on $\check{\varphi}$ holds. 
This can notably be done when $\check{\varphi}$ takes one of the two
forms listed in \Cref{rem:generic-cond-elm} (see
\Cref{lem:cond-p-varphi} in \Cref{sec:argm-solut-param-expo:proofs}
for details) and we now provide an example of particular interest.

\begin{ex}[Finite mixture of Student's $t$ distributions]
\label{ex:student-mixture}  Let $\nu$ be the Lebesgue measure on $\Yset=\rset^d$, with $d\geq1$.
  Let $\alpha\in[0,1)$ and $p:\rset^d\to\rset_+$ such that
  \begin{equation}
    \label{eq:cond-student-on-p}
0<    \int_{\rset^d} (1+\|y\|^2)^{1/(1-\alpha)}\,p(y)\;\rmd y < \infty \;.
  \end{equation}
  Let $J\geq1$ and consider a  $J$-mixture family $\mathcal{Q}$ as
  in~(\ref{eq:def:MM}), with $k(\theta,\cdot)$ being the density
  of the Student's $t$ distribution with parameter
  $\theta=(\mt,\Sigma,\nut)\in\Tset=\rset^d\times\mathcal{M}_{>0}(d)\times(0,\infty)$,
  given by
  $$
  k(\theta,y)=\frac{\Gamma\lr{(\nut+d)/2}}{\Gamma\lr{\nut/2}(\nut\pi)^{d/2}\lrav{\Sigma}^{1/2}}\,
  \lr{1+\frac1{\nut}(y-\mt)^T\Sigma^{-1}(y-\mt)}^{-(\nut+d)/2}\;.
  $$
  Define a sequence $(\mu_nk)_{n\geq1}$ valued in $\mathcal{Q}$
  where $\mu_n=\mu_{\lbd{n},\Theta_n}$ with $\lbd{n}\in\simplex_J^+$ and
  $\Theta_n=(\mt_{j,n},\Sigma_{j,n},a_{j,n})_{j=1 \dots J}\in\Tset^J$
  such that, for all $n\geq1$,
  \begin{enumerateList}
  \item\label{item:update_weights_student} given $(\eta_n)_{n \geq 1}$ valued in $(0,1]$,
    $(\cte_n)_{n \geq 1}$ valued in $(-\infty,0]$ and
    $\lbd{1}\in\simplex_J^+$, $(\lbd{n})_{n\geq2}$ is defined by  the iterative
    formula~(\ref{eq:aPMC:updateOne});
  \item\label{item:update_param_student} given $(\gamma_{j,n})_{j=1,\dots,J,n\geq1}$ valued in $(0,1]$ and
    $\Theta_{1}\in\Tset^J$, $(\Theta_n)_{n\geq2}$ is defined by % the iterative formula:
    \begin{align}
        \label{eq:update-student-m-param}
     & \mt_{j,n+1}= \lr{\int_{\rset^d\times\rset_+}z\,\psi_{j,n}(z,y)\;\rmd
       y\,\check{\tau}_{\nut_{j,n}}(\rmd z)}^{-1}\,\int_{\rset^d\times\rset_+}y\,z\,\psi_{j,n}(z,y)\;\rmd
       y\,\check{\tau}_{\nut_{j,n}}(\rmd z)\;,\\
              \label{eq:update-student-Sigma-param}
      &\Sigma_{j,n+1}=\int_{\rset^d\times\rset_+}(y-\mt_{j,n+1})(y-\mt_{j,n+1})^T\,z\,\psi_{j,n}(z,y)\;\rmd 
        y\,\check{\tau}_{\nut_{j,n}}(\rmd z)\;,\\
        \label{eq:update-student-nut-param}
     & \nut_{j,n+1} = 2\, \kappa^{-1}\lr{\int_{\rset^d\times\rset_+} (z-\ln(z))\,\varphi_{j,n}(z,y)\;\rmd 
       y\,\check{\tau}_{\nut_{j,n}}(\rmd z)}\;,
    \end{align}
where, defining $\normratiot$ as in~(\ref{eq:respat}) and denoting $\kappa^{-1}:\rset\to(0,\infty)$ the inverse
mapping of  $\kappa(x)=\log(x)+\Gamma'(x)/\Gamma(x)$, we set
\begin{align}
  \label{eq:update-student-varphi-def}
     & \varphi_{j,n}(z,y)=\normratiot(y)\frac{\rme^{\frac{-z}{2}(y-\mt_{j,n})^T\Sigma_{j,n}^{-1}(y-\mt_{j,n})}}{(2\pi/z)^{d/2}\lrav{\Sigma_{j,n}}^{1/2}k(\theta_{j,n},y)}\;,\\
  \label{eq:update-student-psi-def}
     & \psi_{j,n}(z,y)=
       \gamma_{j,n}
       \,\varphi_{j,n}(z,y)
       +(1-\gamma_{j,n})\,\,\frac{\rme^{\frac{-z}{2}(y-\mt_{j,n})^T\Sigma_{j,n}^{-1}(y-\mt_{j,n})}}{(2\pi/z)^{d/2}\lrav{\Sigma_{j,n}}^{1/2}}\,\lr{\int\varphi_{j,n}(z,y)\;\rmd y}
       \;,\\
      \label{eq:update-student-tau-def}
      &\check{\tau}_{\nut}(\rmd z)=\mathbbm{1}_{\rset_+}(z)\,
        \frac{(\nut/2)^{\nut/2}}{\Gamma(\nut/2)}z^{\nut/2-1}\rme^{-z\nut}\;\rmd
        z\quad\text{for any $\nut>0$}\;,
\end{align}
  \end{enumerateList}
 Here, $\check{\tau}_{\nut}$
    in~(\ref{eq:update-student-tau-def}) is the $\chi^2$ distribution with
$\nut$ degrees of freedom and $\kappa^{-1}$
in~(\ref{eq:update-student-nut-param}) is well-defined (this follows from \Cref{lem:kappa-def-invertible} in \Cref{sec:elliptical-student-mixture-proof}).
\end{ex}
We then have the following result, whose proof is postponed to \Cref{sec:elliptical-student-mixture-proof}. 
\begin{coro}\label{ex-thm:student-mixture}
  In the setting of \Cref{ex:student-mixture}, if $\Psif(\mu_1 k) <
  \infty$, then, at any
  time $n \geq 1$, we have $\Psif(\mu_{n+1} k) \leq \Psif(\mu_n
  k)$.   
\end{coro}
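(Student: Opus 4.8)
The plan is to verify the two conditions \eqref{eq:posMixtureW} and \eqref{eq:posMixtureP} of \Cref{thm:EM:MixtureModel}; since $\Psif(\mu_1 k)<\infty$ is assumed and \ref{hyp:positive} holds (the Student's $t$ density is everywhere positive, $p\geq0$, and $0<\int_\Yset p\,\rmd\nu<\infty$ follows from \eqref{eq:cond-student-on-p}), this yields $\Psif(\mu_{n+1}k)\leq\Psif(\mu_n k)$ for all $n$. Condition \eqref{eq:posMixtureW} is immediate: by item~\ref{item:update_weights_student} the weights are updated through \eqref{eq:aPMC:updateOne} with $(\eta_n)$ valued in $(0,1]$ and $(\cte_n)$ valued in $(-\infty,0]$, so $(\alpha-1)\cte_n\geq0$ since $\alpha\in[0,1)$, and \Cref{thm:WeightsMixture} applies.

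For \eqref{eq:posMixtureP} I would first exhibit the Student's $t$ density as a member of the ELM family of \Cref{defi:LMEFfam}. Using $\mathrm{t}(y;\mt,\Sigma,\nut)=\int_0^\infty\mathcal{N}(y;\mt,z^{-1}\Sigma)\,\check{\tau}_{\nut}(\rmd z)$, take $k^{(o)}$ to be the canonical Gaussian kernel with natural statistic $S(y)=(y,-yy^T/2)$ (which satisfies \ref{hyp:expo}), let $\changevar$ send $\vartheta=(\mt,\Sigma)$ to $\xi=(\Sigma^{-1}\mt,\Sigma^{-1})$, and let $\parammixproba$ be the set of pushforwards of the $\chi^2$ laws $\check{\tau}_{\nut}$, $\nut>0$, under $z\mapsto z\,\mathrm{Id}$; then $k^{(1)}((\vartheta,\tau_{\nut}),\cdot)=\mathrm{t}(\cdot;\mt,\Sigma,\nut)$. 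By \Cref{rem:generic-cond-elm}, \eqref{eq:posMixtureP} follows once, for each $j=1,\dots,J$, condition \eqref{eq:generic-decrease-theta-cond} is established with $\check{\varphi}=\normratiot$ and $(\theta,\theta_0)=(\theta_{j,n+1},\theta_{j,n})$; with this $\check{\varphi}$ the abstract weight $\varphi(\ell,y)$ of \eqref{eq:check-varphi-varphiL-carphiQ} coincides with $\varphi_{j,n}(z,y)$ of \eqref{eq:update-student-varphi-def}, and the mixture $\psi$ appearing in \Cref{thm:argrmax-objective-equivalence} coincides with $\psi_{j,n}(z,y)$ of \eqref{eq:update-student-psi-def} once one sets $b=\gamma_{j,n}^{-1}-1$ (which is $\geq0$ since $\gamma_{j,n}\in(0,1]$).

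To get \eqref{eq:generic-decrease-theta-cond} I would apply \Cref{prop:argmax-linear-mixture-exponential}, which splits it into \eqref{eq:update-cond-tau-general_mixture} (on the mixing law, i.e. the degrees of freedom $\nut$) and \eqref{eq:update-cond-theta-general_mixture} (on $\xi$, i.e. $(\mt,\Sigma)$). For the former, \Cref{rem:argmax-tau} shows it suffices to take $\nut_{j,n+1}$ as an argmax of the functional in \eqref{eq:update-cond-tau-general_mixture_argmax2}; writing the Radon--Nikodym derivative of $\check{\tau}_{\nut'}$ w.r.t. $\check{\tau}_{\nut_{j,n}}$ and setting to zero the derivative of the objective in $\nut'$ turns the first-order condition into $\kappa(\nut'/2)=\int(z-\ln z)\,\varphi_{j,n}(z,y)\,\rmd y\,\check{\tau}_{\nut_{j,n}}(\rmd z)$, whose unique solution is \eqref{eq:update-student-nut-param}, $\kappa^{-1}$ being well-posed by \Cref{lem:kappa-def-invertible}. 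For the latter, I would use \Cref{cor:monot-decr-general-LM} with $b_0(\ell)=b\,\tilde{\varphi}(\ell)$ and then \Cref{thm:argrmax-objective-equivalence}; its solution equation \eqref{eq:wstar-defined-check-varphiQ-sol} rewrites, via \eqref{eq:wstar-defined-check-varphiQ-new-sol-bis}, as $\PE_{\vartheta^*,\tilde{\tau}_0}[L^T\circ S(Y)]=\PE_{\vartheta_0,\tilde{\tau}_0}[L^T\circ S(\tilde{Y})]$, and since here $L^T\circ S(Y)=z\,(Y,-YY^T/2)$, unpacking the two components (using $\int\psi_{j,n}(z,y)\,\rmd y=\tilde{\varphi}(z\,\mathrm{Id})$) gives exactly \eqref{eq:update-student-m-param} and \eqref{eq:update-student-Sigma-param}; taking $b$ large enough (equivalently $\gamma_{j,n}$ small enough), or using non-degeneracy of $\check{\varphi}$ when $\gamma_{j,n}=1$, ensures $\mathbf{w}^*$ of \eqref{eq:wstar-defined-check-varphiQ} lies in the relevant image set so that the argmax is attained.

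The hard part is discharging the hypotheses of \Cref{thm:argrmax-objective-equivalence} in this concrete Gaussian-mixing-$\chi^2$ setting: \ref{asump:Upsilon-image} and \ref{item:condB-ident} are routine from the parameterisation $\xi=(\Sigma^{-1}\mt,\Sigma^{-1})$ and the fact that $z\,\mathrm{Id}$ is full rank for $z>0$, but \ref{item:condB-sup-exp-moment}--\ref{item:condB-stilde-ratio} demand explicit auxiliary functions $\tilde{\mathrm{m}}_{\vartheta_0,\tau_0}$ and $\tilde{\mathrm{s}}_{\vartheta_0,\tau_0}$ bounding $\PE_{\vartheta_0,\tau_0}\argcond{\normop{L}}{Y}$ and the conditional exponential moments of $\normev[F]{\gradA{L}(\xi)}$ by explicit (essentially polynomial-in-$\|y\|$) functions; with those at hand, the integrability requirement \eqref{eq:cond-checkVarphi-linear-mixture} on $\check{\varphi}=\normratiot$ reduces to a moment bound on $\normratiot$ that follows from \Cref{lem:cond-p-varphi} together with \eqref{eq:cond-student-on-p}. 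A final bookkeeping point is that $\Psif(\mu_n k)<\infty$ must be propagated along the iteration, which is precisely what \Cref{coro:MixtureModel} (invoked inside \Cref{thm:EM:MixtureModel}) provides.
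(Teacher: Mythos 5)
Your outline follows the same route as the paper: \Cref{thm:EM:MixtureModel} with \Cref{thm:WeightsMixture} for the weights, the representation of the Student's $t$ kernel as an ELM family, \Cref{prop:argmax-linear-mixture-exponential} to split \eqref{eq:posMixtureP} into the mixing-law condition \eqref{eq:update-cond-tau-general_mixture} and the $\xi$-condition \eqref{eq:update-cond-theta-general_mixture}, \Cref{rem:argmax-tau} for the former, and \Cref{cor:monot-decr-general-LM} together with \Cref{thm:argrmax-objective-equivalence} (read through \eqref{eq:wstar-defined-check-varphiQ-new-sol-bis}) for the latter, with integrability of $\normratiot$ obtained from \Cref{lem:cond-p-varphi} and \eqref{eq:cond-student-on-p}. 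Your identifications $b=\gamma_{j,n}^{-1}-1$, $\varphi(\ell,y)\leftrightarrow\varphi_{j,n}(z,y)$ and $\psi\leftrightarrow\psi_{j,n}$ are exactly the ones used in the paper.

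However, the two places you defer are precisely where the paper's proof does its work, so as it stands the argument has genuine gaps. First, the verification of \ref{item:condB-sup-exp-moment}--\ref{item:condB-stilde-ratio} is not routine and is carried out in \Cref{prop:elleptical}: one must exhibit $\tilde{\mathrm{m}}_{\vartheta_0,\tau_0}(y)=g_{\check{\tau}_0}(d/2+1,q_{\vartheta_0}(y))/g_{\check{\tau}_0}(d/2,q_{\vartheta_0}(y))$ and $\tilde{\mathrm{s}}_{\vartheta_0,\tau_0}(y)=\log\bigl(g_{\check{\tau}_0}(d/2,q_{\vartheta_0}(y)-\epsilon_0)/g_{\check{\tau}_0}(d/2,q_{\vartheta_0}(y))\bigr)$, which requires the exponential-moment property \eqref{eq:gaussian-expo-setting-gq-cond} of the mixing law (available for the $\chi^2$ case), and then observe (\Cref{rem:elliptical-student}) that in the Student case both functions are $O(1)$ — not merely polynomial — which is what collapses \eqref{eq:cond-checkVarphi-linear-mixture} to the second-moment condition $\int\normratiot(y)\,\|y\|^2\,\rmd y<\infty$ that you invoke; without these explicit bounds, \Cref{thm:argrmax-objective-equivalence} cannot be applied. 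Second, your treatment of the degrees-of-freedom update stops at a formal first-order condition: to conclude that \eqref{eq:update-student-nut-param} realises the argmax in \eqref{eq:update-cond-tau-general_mixture_argmax2} you still need to argue that the stationary point is the maximiser (concavity of the objective in $\nut$), and, more importantly, that $\int\varphi_{j,n}(z,y)\,\lr{z+\lrav{\log z}}\;\rmd y\,\check{\tau}_{\nut_{j,n}}(\rmd z)<\infty$, without which the quantity fed to $\kappa^{-1}$ need not be well-defined; the paper devotes a dedicated computation to this, culminating in \eqref{eq:last-moment-student-easy}, using the explicit Gamma-type integrals $g_{\check{\tau}_0}(u,v)$ and the already-established bound $\int\normratiot(y)\,(1+\|y\|^2)\,\rmd y<\infty$. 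Completing these two verifications turns your plan into the paper's proof.
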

%\begin{kdmodif}
%\underline{Updates obtained in old Student}
%\begin{align*}
%\forall n \geq 1 \eqsp, \quad m_{n+1} & = (1-\gamma_{j,n}) m_n + \gamma_n \frac{\int_\Yset \ratio(y) h_n(y) y ~ \nu(\rmd y)}{\int_\Yset \ratio(y) h_n(y) \nu(\rmd y)} \\
%\Sigma_{n+1} & = (1 -\gamma_n) \tilde \Sigma_{n} + \gamma_n \int_\Yset \frac{ \ratio(y) h_{n}(y) (y-m_{n+1})(y-m_{n+1})^T  \nu(\rmd y)}{ \int_\Yset  \ratio(y) h_{n}(y) \nu(\rmd y)} \eqsp. 
%\end{align*}
%where we define $\tilde \Sigma_n = \frac{\nu}{\nu - 2} \Sigma_n + (m_{n+1} - m_n)(m_{n+1} - m_n)^T$
%\end{kdmodif}
\kdtxt{
We have stated results that solve our maximisation and gradient-based approaches when the variational family is based on the exponential family. However, all the update formulas we have obtained so far involve intractable integrals. In the next section, we focus on the case of Gaussian Mixture Models optimisation seen in Examples \ref{ex:GMMmax} and \ref{ex:GMM} and we investigate how our framework can be used to derive tractable algorithms in that case.

\section{Gaussian Mixture Models (GMMs) Optimisation}
\label{sec:practicalities}

We consider $d$-dimensional Gaussian mixture densities with $k(\theta_j, y) = \mathcal{N}(y; m_j, \Sigma_j)$, where $\theta_j = (m_j, \Sigma_j) \in \Tset$ denotes the mean and covariance matrix of the $j$-th Gaussian component density. By \Cref{thm:EM:MixtureModel}, a valid update formula at time $n \geq 1$ for $(\lbd{n})_{n \geq 1}$ is
\begin{align}
\lambda_{j,n+1} &=  \frac{\lambda_{j,n} \lrb{\int_\Yset \respat[y] \nu(\rmd y) + (\alpha-1) \cte_n}^{\eta_n}}{\sum_{\ell=1}^J \lambda_{\ell,n} \lrb{\int_\Yset \respat[y][\alpha][\ell] \nu(\rmd y) + (\alpha-1) \cte_n}^{\eta_n}} \eqsp, \quad j = 1 \ldots J \label{eq:aPMC:updateInSec} \eqsp.
\end{align}
Furthermore, 
%As for the mixture components parameters update, we have identified several possible updates which can be formulated as follows.
%\paragraph{Maximisation (M) approach.} 
as underlined in \Cref{ex:GMMmax}, a valid update at time $n \geq 1$ for the means $(m_{j,n+1})_{1 \leq j \leq J}$ and the covariances matrices $(\Sigma_{j,n+1})_{1 \leq j \leq J}$ is given by
\begin{align}
  \forall j = 1 \ldots J, ~ m_{j,n+1}&=  (1 -\gamma_{j,n}) m_{j,n} + \gamma_{j,n}  \widehat{m}_{j,n} \label{eq:updateMeanGaussianTwo}\\
  \Sigma_{j,n+1} &=  (1 -\gamma_{j,n}) \Sigma_{j,n} + \gamma_{j,n} \widehat{\Sigma}_{j,n} + \gamma_{j,n} (1 -\gamma_{j,n})
  \lr{\widehat{m}_{j,n}-m_{j,n}} \lr{\widehat{m}_{j,n}-m_{j,n}}^T \nonumber %(1-\gamma_{j,n}) \tilde \Sigma_{j,n} + \gamma_{j,n} \frac{\int_\Yset \respat[y] (y-m_{j,n\kdtxt{+1}})(y-m_{j,n\kdtxt{+1}})^T  \nu(\rmd y)}{ \int_\Yset \respat[y] \nu(\rmd y)} \nonumber %\label{eq:aPMCsigmaUpdate} \eqsp,
  \end{align}
  where $\widehat{m}_{j,n} =\int_\Yset y\, \normratiot (y)\;\nu(\rmd y)$, $\widehat{\Sigma}_{j,n}= \int_\Yset yy^T\, \normratiot
          (y)\;\nu(\rmd y)
          -\widehat{m}_{j,n}\widehat{m}_{j,n}^T$ with $\normratiot = \ratiogen / \int \ratiogen \nu(\rmd y)$ and $\gamma_{j,n} \in (0,1]$. In addition, another possible update for the means is
\begin{align}
  m_{j, n+1} &= m_{j,n} + \gamma_{j,n} \frac{\int_\Yset \lambda_{j,n} \respat[y] (y-m_{j,n}) \nu(\rmd y)}{\int_\Yset {\mu_n k(y)^\alpha p(y)^{1-\alpha} } \nu(\rmd y)} \eqsp, \quad j = 1 \ldots J \eqsp. \label{eq:updateMeanGaussianOne}
\end{align}
By virtue of \Cref{ex:GMM}, this update can be used in lieu of the means update written in \eqref{eq:updateMeanGaussianTwo} with $\gamma_{j,n}$ still in $(0,1]$. Since the update \eqref{eq:updateMeanGaussianOne} is linked to Gradient Descent steps for R\'{e}nyi's $\alpha$-divergence minimisation (see \Cref{ex:GMM}), we will refer to this update as the \textit{R\'{e}nyi Gradient Descent} (RGD) update in the following. One the other hand, the update on the means written in \eqref{eq:updateMeanGaussianTwo} will be referred to as the \textit{Maximisation Gradient} (MG) update.

Notice that the updates \eqref{eq:aPMC:updateInSec}, \eqref{eq:updateMeanGaussianTwo} and \eqref{eq:updateMeanGaussianOne} all involve intractable integrals. However, we can resort to approximate update rules and take advantage of the fact that the integrals appearing in these update formulas are of the form 
\begin{align}\label{eq:updaterule:to:approx}
\int_\Yset \respat[y] g(y) \nu(\rmd y) \eqsp,
\end{align}
where $g$ is a well-chosen function defined on $\Yset$ and where $\varphi_{j, n}^{(\alpha)}$ is defined in \eqref{eq:respat} by
$%\label{eq:respat}
  \respat[y] = k(\theta_{j,n}, y) \lr{{\mu_n k(y)}/{  p(y)}}^{\alpha-1}$ for all $y \in \Yset$. Many choices are indeed possible to approximate \eqref{eq:updaterule:to:approx} and for simplicity we restrict ourselves to typical Importance Sampling estimation. Looking at the definition of $\varphi_{j, n}^{(\alpha)}$, a first idea would be to sample $M$ i.i.d. random variables $(Y_{m, n}^{(j)})_{1 \leq m \leq M}$ according to $k(\theta_{j,n}, \cdot)$ for $j = 1 \ldots J$ and to use the unbiased estimator of \eqref{eq:updaterule:to:approx}
\begin{align*}
  \frac{1}{M} \sum_{m=1}^M \left(\frac{\mu_n k(Y_{m,n}^{(j)})}{p(Y_{m,n}^{(j)})} \right)^{\alpha-1} g(Y_{m,n}^{(j)}) \eqsp.
\end{align*}
As this sampler depends on $j$, this option becomes very computationally heavy as $J$ and $M$ increase due to the \textit{sampling cost} ($J \times M$) and to the \textit{evaluation cost} ($\mu_n k$ and $p$ need to be evaluated $M\times J$ times each, where $\mu_n k$ is a sum over $J$). To reduce this computational bottleneck, it is then preferable to consider sequences of samplers $(q_n)_{n \geq n}$ that do not depend on $j$ at time $n$. More specifically, we discuss two possibilities.

\begin{enumerateList}
  \item \label{sampler:n} \textbf{Best sampler at time $n$ (IS-n).} Approximated update rules can be obtained by sampling $M$ i.i.d. random variables $(Y_{m,n})_{1\leq m\leq M}$ according to the best approximation of $p$ we have a time $n$, that is $q_n = \mu_n k$. Then, the samples are shared throughout the $J$ mixture weights and mixture components parameters updates. This eases the computational burden, with both a sampling and an evaluation costs of $M$, as an unbiased estimator of \eqref{eq:updaterule:to:approx} is
  \begin{align} \label{eq:unbiasedIS}
    \frac{1}{M} \sum_{m=1}^M \respa[Y_{m,n}] g(Y_{m,n}) ~~ \mbox{where} ~~ \respa[y] = \frac{k(\theta_{j,n}, y)}{q_n(y)} \lr{\frac{\mu_n k(y)}{p(y)}}^{\alpha-1}, ~ y \in \Yset \eqsp.
  \end{align}
 % where $\respa[y] = \frac{k(\theta_{j,n}, y)}{q_n(y)} \lr{\frac{\mu_n k(y)}{p(y)}}^{\alpha-1}$ for all $y \in \Yset$ and all $j = 1 \ldots J$.

  \item \label{sampler:unif} \textbf{Uniform sampler (IS-unif).} Approximated update rules can be obtained by sampling according to the uniform sampler $q_n = J^{-1} \sum_{j = 1}^J k(\theta_{j,n}, \cdot)$ and using the unbiased estimator \eqref{eq:unbiasedIS}. That way, the sampling and evaluation costs are also $M$ each. Contrary to IS-n, this sampler ensures a fair sampling among all components, as it entails sampling an index vector $[i_1, i_2, \dots, i_M]$ uniformly from $\{1, \dots, J\}$ whereas IS-n samples an index vector $[i_1, i_2, \dots, i_M]$ according to the mixture weights $\lbd{n}$.
\end{enumerateList}

\begin{rem} Approximated update rules for GMMs can be obtained by sampling according to $q_n = \mathcal{N}(\cdot; \boldsymbol{0}_d, \boldsymbol{I}_d)$, where $\boldsymbol{0}_d$ is the null vector of dimension $d$. These can be deduced by applying the reparameterisation trick: for all $m = 1 \ldots M$, $Y_{m,n}^{(j)} \sim k(\theta_{j,n}, \cdot)$ if $Y_{m,n}^{(j)} = m_{j,n} + \Sigma_{j,n}^{-1/2} \varepsilon_{m,n}$ with $\varepsilon_{m,n} \sim q_n$ and by observing that an unbiased estimator of \eqref{eq:updaterule:to:approx} is
  $%\label{eq:reparamIS}
    M^{-1} \sum_{m=1}^M \lr{{\mu_n k(Y_{m,n}^{(j)})} / {p(Y_{m,n}^{(j)})}}^{\alpha-1} g(Y_{m,n}^{(j)})$.
This choice of $q_n$ relies on the existence of a reparameterisation and while it incurs a gain in terms of sampling cost as we only need $M$ samples, the evaluation cost remains $M \times J$.
\end{rem}
We thus have access to tractable algorithms that iteratively update both the weights and components parameters of a GMM by optimising the $\alpha$-divergence between the mixture distribution and the targeted distribution. Our framework also allows the use of learning rates $\gamma_{j,n}$ that are dependent on $j$. This means that the learning rates can differ for each component parameters update and it paves the way for adaptive learning rates per components. This aspect is left for future work as we will focus on the case $\gamma_{j,n} \eqdef \gamma_n$ according to Algorithm \ref{algo:GMMsMAIS} and move on to presenting our numerical experiments.

{\SetAlgoNoLine
\SetInd{0.8em}{-1.4em}
\begin{algorithm}[t]
\caption{GMMs optimisation with the IS-n/IS-unif sampler}
\label{algo:GMMsMAIS}
\textbf{At iteration $n$,}
\begin{enumerate}[wide=0pt, labelindent=\parindent]
\item Draw independently $M$ samples $(Y_{m, n})_{1 \leq m \leq M}$ from the proposal
 $q_n$. %Define for all $j = 1 \ldots J$, all $y \in \Yset$ and all $n \geq 1$, $\respa = {k(\theta_{j,n}, y)}/{q_n(y)} \lr{{\mu_n k(y)}/{p(y)}}^{\alpha-1}$.

\item For all $j = 1 \ldots J$, 

\begin{enumerate}
  \item Set $\widehat{m}_{j,n} = \frac{\sum_{m=1}^M \respa[Y_{m,n}] \cdot Y_{m,n} }{ \sum_{m=1}^M  \respa[Y_{m,n}]}$ and $\widehat{\Sigma}_{j,n}=  \frac{\sum_{m=1}^M \respa[Y_{m,n}] \cdot  Y_{m,n}Y_{m,n}^T }{ \sum_{m=1}^M \respa[Y_{m,n}] } -\widehat{m}_{j,n}\widehat{m}_{j,n}^T$.
  \item Choose one between the (MG) and (RGD) approaches
  \begin{align*}
    (MG) \quad m_{j, n+1} &= \lr{1 - \gamma_{n}} m_{j,n} + \gamma_{n} \widehat{m}_{j,n} \\
     (RGD) \quad m_{j, n+1} &= m_{j,n} + \gamma_{n}  \frac{\lambda_{j,n} \sum_{m=1}^M \respa[Y_{m,n}] \cdot (Y_{m,n}- m_{j,n}) }{ \sum_{j = 1}^J \sum_{m=1}^M \lambda_{j,n} \respa[Y_{m,n}]}
  \end{align*}
  and set
  \begin{align*}
    \lambda_{j,n+1} &= \frac{\lambda_{j,n} \lrb{\sum_{m=1}^M \respa[Y_{m, n}]  + (\alpha-1)\cte_n}^{\eta_n}}{\sum_{\ell = 1}^J \lambda_{\ell,n} \lrb{\sum_{m=1}^M \respa[Y_{m, n}][\ell]  + (\alpha-1) \cte_n}^{\eta_n}} \\
    \Sigma_{j,n+1} &=  (1 -\gamma_{n}) \Sigma_{j,n} + \gamma_{n} \widehat{\Sigma}_{j,n} + \gamma_{n} (1 -\gamma_{n})
    \lr{\widehat{m}_{j,n}-m_{j,n}} \lr{\widehat{m}_{j,n}-m_{j,n}}^T. %\\
    %\Sigma_{j,n+1} & = (1 -\gamma_{n}) \tilde \Sigma_{j,n} + \gamma_{n} \frac{\sum_{m=1}^M \respa[Y_{m,n}] \cdot  (Y_{m,n}-m_{j,n\kdtxt{+1}})(Y_{m,n}-m_{j,n\kdtxt{+1}})^T }{ \sum_{m=1}^M \respa[Y_{m,n}] } 
  \end{align*}
\end{enumerate}
\end{enumerate} \

\end{algorithm}}

%We now make two remarks.

\begin{rem} \label{rem:real:data} Computing the estimator \eqref{eq:unbiasedIS} can be expensive for two reasons: (i) $\mu_n k$ involves a sum over $J$ so that evaluating this function requires heavier computations as the number of components $J$ grows and (ii) bayesian applications with $p = p(\cdot, \data)$ often consider large-scale data sets $\data$. To alleviate this computational burden, one can (i) approximate the summation appearing in $\mu_n k$ using subsampling and (ii) use a mini-batch approach to approximate $p = p(\cdot, \data)$, as detailed in \cite[Section 4.3]{2016arXiv160202311L}.
\end{rem}

%\begin{rem}\label{remStudAlgo} Practical versions of the maximisation approach described in Algorithm \ref{algo:GGEMalpha} can be derived in the particular case of Student's t distributions, which could be useful for robustification purposes (see \Cref{sec:Student}).
%\end{rem}

%Now that we have presented an example of practical implementations of our algorithms, we move on to presenting our numerical experiments.
}

\section{Numerical Experiments}
\label{subsec:Exp}

In our numerical experiments, we are interested in understanding the behaviour of Algorithm~\ref{algo:GMMsMAIS} in practice. We first investigate the challenging case of multimodal targets, as our framework is designed to handle multimodality thanks to the hyperparameter $\alpha \in [0,1)$ and to the use of mixture models as a variational family.

\subsection{Toy multimodal targets}

Let $\boldsymbol{u_d}$ be the $d$-dimensional vector whose coordinates are all equal to $1$, $\boldsymbol{I_d}$ be the identity matrix and $c$ be a positive constant (we set $c = 2$). We consider three multimodal examples:
\begin{enumerateList}
  \item \label{itemExEWGMM} \textit{Equally-weighted Gaussian Mixture Model.} The target $p$ is a mixture density of two equally-weighted $d$-dimensional Gaussian distributions multiplied by $c$ such that
  $$
  p(y) = c \times \left[ 0.5 \mathcal{N}({y}; -2 \boldsymbol{u_d}, \boldsymbol{I_d}) + 0.5 \mathcal{N}({y}; 2 \boldsymbol{u_d}, \boldsymbol{I_d}) \right] \eqsp.
  $$
% For all $A \in \Ysigma$, we also denote $\PP(A) = c^{-1} \int_A p(y) \nu(\rmd y)$.
  
  \item \label{itemExUWGMM} \textit{Imbalanced Gaussian Mixture Model.} The target $p$ is a mixture density of three $d$-dimensional Gaussian distributions with unequal weights and multiplied by $c$ such that
  $$
  p(y) = c \times \lrb{ 0.35 \mathcal{N}({y}; -2 \boldsymbol{u_d}, \boldsymbol{I_d}) + 0.25 \mathcal{N}({y}; 2 \boldsymbol{u_d}, \boldsymbol{I_d})  + 0.4 \mathcal{N}({y}; \boldsymbol{u_d}, \boldsymbol{I_d}) }
  $$
  
  \item \label{itemExEWSMM} \textit{Equally-weighted Student's $t$ Mixture Model.}  The target $p$ is a mixture density of two equally-weighted $d$-dimensional Student's $t$ distributions with two degree of freedom (i.e. $\nut = 2$) multiplied by $c$ such that
  $$
  p(y) = c \times \lrb{ 0.5 \; \mathrm{t}({y}; -2 \boldsymbol{u_d},
    \boldsymbol{I_d}, \nut) + 0.5 \; \mathrm{t}({y}; 2 \boldsymbol{u_d}, \boldsymbol{I_d}, \nut) } \eqsp.
  $$
\end{enumerateList}
%Note that the first two cases, the family $\mathcal{Q}$ contains the targeted density, with the particularity that the second case is more prone to algorithms missing a mode. The last example is meant to provide an example where the targeted density does not belong to $\mathcal{Q}$.

 %We use a similar setting to the one considered in \cite{cappe2008adaptive}. 

\subsubsection{Comparing the RGD and the MG Approaches with $\eta_n = 0$} 

\label{subsecEtaZero}

Setting $\eta_n =0$ at all times $n \geq 1$ keeps the mixture weights fixed in Algorithm~\ref{algo:GMMsMAIS}. In that case, our RGD approach relates to Gradient Ascent steps on the VR Bound \citep{2016arXiv160202311L} w.r.t. the means of our Gaussian components (see  \Cref{relwork:LiTurner}). We then want to compare the performances of the RGD approach \citep[that can be derived from][]{2016arXiv160202311L} to the novel MG approach we have introduced in our work. %\newline %In doing so, the family $\mathcal{Q}$ contains the targeted density for the two examples \ref{itemExEWGMM} and \ref{itemExUWGMM} when $J \geq 2$ and $J \geq 3$ respectively.

\textit{Implementation details.} The covariance matrices of the mixture components are fixed and equal to $\sigma^2 \boldsymbol{I_d}$ with $\sigma^2 = 1$, $\alpha = 0.2$, $J \in \lrcb{10, 50}$, $d = 16$, $M = 200$, the total number of iterations $N$ is equal to $100$, $\Theta_1$ is generated by sampling from a centered normal distribution with covariance matrix $10 \boldsymbol{I_d}$, $\lbd{1} = [1/J,  \ldots , 1/J]$ and for all time $n = 1 \ldots N$, $\cte_n = 0$, $\eta_n = 0.$ and $\gamma_n = \gamma$ with $\gamma \in \lrcb{0.1, 0.5, 1.}$. % We generate  and we set their associated initial weights to $[1/J,  \ldots , 1/J]$ (i.e ).
% and  at time $n = 1 \ldots N$, .
The experiments are replicated independently 30 times and the convergence of the RGD and of the MG approaches is monitored for the three multimodal examples \ref{itemExEWGMM}, \ref{itemExUWGMM} and \ref{itemExEWSMM} by computing a Monte Carlo estimator of the VR Bound (since we have sampled $M$ samples from $q_n$ at time $n$, these samples can readily be used to obtain an estimate the VR Bound with no additional computational cost).

Our results are plotted on \Cref{fig:eta0}, where RGD-IS-n($\gamma$) and MG-IS-n($\gamma$) denote the algorithms originating from the RGD and MG approaches (the IS-n and IS-unif samplers are equivalent here), and error bounds plots can be found in Figures \ref{fig:eta0appEBMGISn} and \ref{fig:eta0appEBRGDISn} of \Cref{sec:addNumExp}. As minimising $\Psif$ is equivalent to maximising the VR Bound when $\alpha \in (0,1)$, our theoretical results predict a systematic increase in the VR Bound for the algorithms considered. This is what we observe in \Cref{fig:eta0} and we also see that the choice of $\gamma$ plays the role of a learning rate our algorithms: while increasing $\gamma$ mostly improves the speed of convergence, it may deteriorate the performances if chosen too large, such as when $J = 50$ with the target \ref{itemExEWSMM}.

\begin{figure}[t]
  \centering
  \begin{tabular}{ccc}
    & $J = 10$ & $J = 50$ \\
    %\toprule
    \ref{itemExEWGMM} \vspace{-0.5cm} & \\
    & \includegraphics[width=7cm]{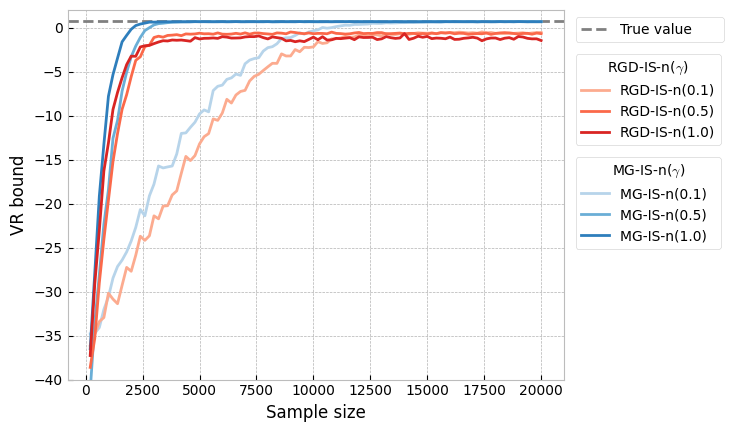} & \includegraphics[width=7cm]{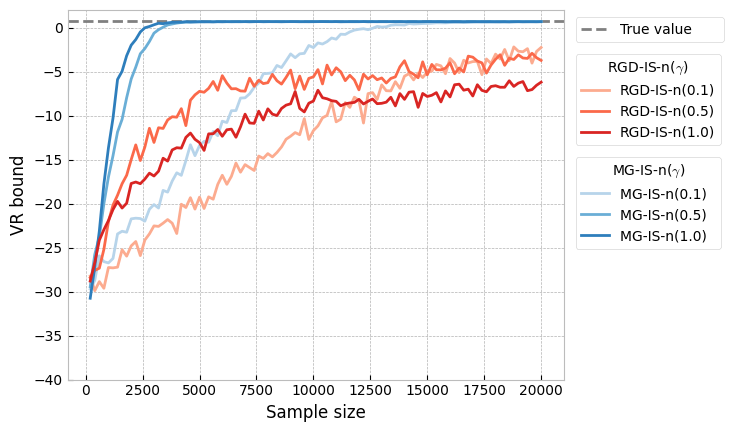} \\
    %\midrule
    \ref{itemExUWGMM} \vspace{-0.5cm} & \\ 
    &\includegraphics[width=7cm]{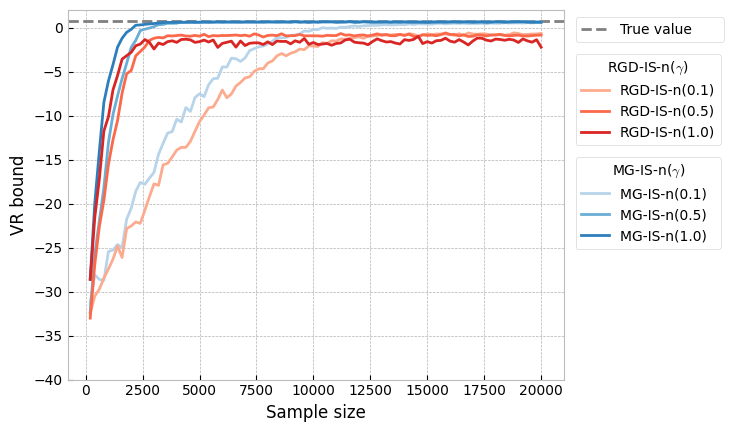} & \includegraphics[width=7cm]{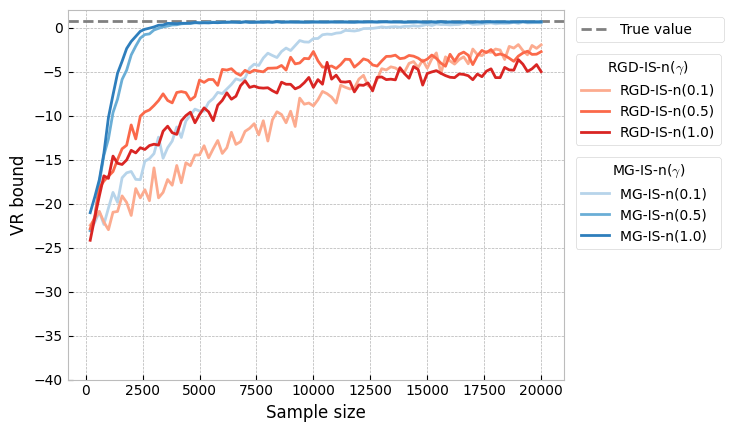} \\
    %\midrule
    \ref{itemExEWSMM} \vspace{-0.5cm} & \\
    &\includegraphics[width=7cm]{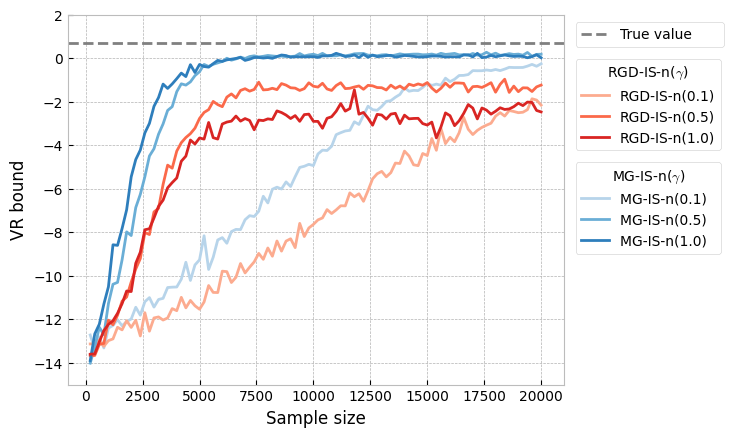} & \includegraphics[width=7cm]{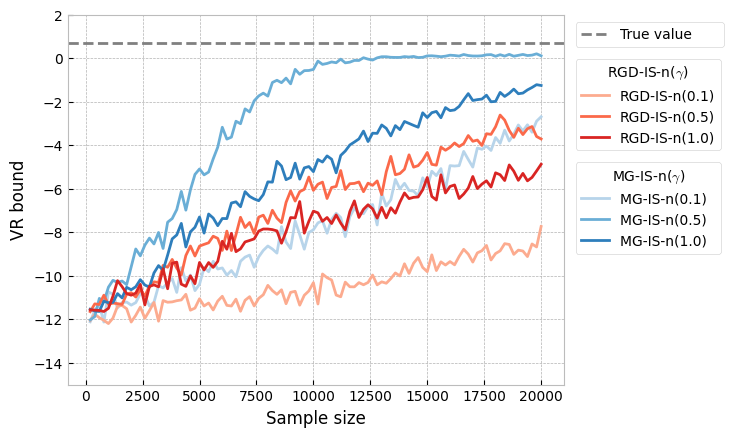} \\
    %\bottomrule
    \end{tabular}

  \caption{Monte Carlo estimate of the VR Bound for the RGD and the MG approaches (fixed mixture weights) when considering each of the target distributions \ref{itemExEWGMM}, \ref{itemExUWGMM} and \ref{itemExEWSMM}.} %Here, $\alpha = 0.2$, $d = 16$, $M = 200$, $\cte_n = 0$, $\eta_n = 0.$ and $q_n = \mu_n k$ at all times $n$ and we investigate the cases $\gamma \in \lrcb{0.1, 0.5, 1.}$ and $J \in \lrcb{10, 50}$.}
  \label{fig:eta0}
\end{figure}

%for the MG-IS-n($\gamma$) algorithm, this is not necessarily the case for the RGD-IS-n($\gamma$) algorithm, since choosing $\gamma = 1.0$ deteriorates the performances and can be attributed to having selected a learning rate that is too large.

\begin{table}
  \centering
\begin{tabular}{clccccccc}
      \toprule
    & & & $J = 10$ & & & $J = 50$ & \\  
    & & $\gamma = 0.1$ &  $\gamma = 0.5$ &  $\gamma = 1.0$ & $\gamma = 0.1$ & $\gamma = 0.5$ & $\gamma = 1.0$  \\
       \midrule
    \ref{itemExEWGMM} & RGD-IS-n($\gamma$) & -0.081 & -0.076 & -0.218 & -1.640 & -1.673 & -1.560   \\
    & MG-IS-n($\gamma$) & \textbf{-3.702} & \textbf{-1.875} & \textbf{-2.711} & \textbf{-2.760} & \textbf{-2.771} & \textbf{-2.788}  \\
    \midrule
    \ref{itemExUWGMM} & RGD-IS-n($\gamma$) & -0.211 & -0.072 & -0.015 & -1.401 & -1.437 & -1.515  \\
    & MG-IS-n($\gamma$) & \textbf{-2.581} & \textbf{-2.101} & \textbf{-1.742} & \textbf{-2.611} & \textbf{-2.328} & \textbf{-1.933} \\
    \midrule
    \ref{itemExEWSMM} & RGD-IS-n($\gamma$) & -0.108 & -0.008 & -0.111 & -1.652 & -1.654 & \textbf{-1.634} \\
    & MG-IS-n($\gamma$) &  \textbf{-0.913} & \textbf{-1.489} & \textbf{-1.846} & \textbf{-2.036} & \textbf{-2.530} & {-0.717} \\
       \bottomrule
\end{tabular}
\caption{LogMSE averaged for the RGD and the MG approaches (fixed mixture weights) when considering each of the target distributions \ref{itemExEWGMM}, \ref{itemExUWGMM} and \ref{itemExEWSMM}.}   \label{table:logMSE} %. Here, $\alpha = 0.2$, $d = 16$, $M = 200$, $\cte_n = 0$, $\eta_n = 0.$ and $q_n = \mu_n k$ at all times $n$ and we investigate the cases $\gamma \in \lrcb{0.1, 0.5, 1.}$ and $J \in \lrcb{10, 50}$
\end{table}

Furthermore, MG-IS-n($\gamma$) leads to better performances in all but one case compared to RDG-IS-n($\gamma$). This renders the MG-IS-n($\gamma$) algorithm more suitable overall for capturing the multimodality of the various multimodal targets $p$ we considered, a fact that is further supported in \Cref{table:logMSE}, in which we compare the log of the Mean-Squared Error (logMSE) returned by each algorithm. (The MSE is computed as the average of $\|m_{\mathrm{approx}} - m_{\mathrm{true}}\|^2$ over 30 independent runs of each algorithm, where $\| .\|$ stands for the Euclidean norm, $m_{\mathrm{true}} = \PE_\PP[Y]$ with $\PP(A) = \int_A p(y) \nu(\rmd y)/\int_\Yset p(y) \nu(\rmd y)$ for all $A \in \Ysigma$ and where $m_{\mathrm{approx}}$ is the mean of the optimised variational distribution, that is in our setting $m_{\mathrm{approx}} = \sum_{j=1}^{J} \lambda_{j,N} m_{j,N}$.)

Let us next pair up the means optimisation with mixture weights optimisation and investigate the impact of the sequence $(\eta_n)_{n \geq 1}$ on our algorithms.

\subsubsection{Comparing the RGD and the MG Approaches with $\eta_n > 0.$}
\label{subsec:expetachange}

So far, we have demonstrated the viability of our novel MG approach compared to the more usual R\'{e}nyi's $\alpha$-divergence-based RGD approach. Yet, we have not taken advantage of the fact that we can perform mixture weights optimisation on top of means optimisation, which is another novelty of our framework compared to traditional Variational Inference methods. 

\textit{Implementation details.} We work with the same implementation details as those described in \Cref{subsecEtaZero}, except for the fact that we will now let $\eta_n = \eta$ at time $n$ and we will vary the value of $\eta$. In addition to the RGD-IS-n$(\gamma)$ and the MG-IS-n$(\gamma)$ algorithms, we also include the RGD-IS-unif$(\gamma)$ and the MG-IS-unif$(\gamma)$ algorithms in our results (that is, the algorithms resulting from pairing up the RGD and MG approaches with the IS-unif sampler).

The plots for the averaged Monte Carlo estimate of the VR Bound with $\eta = 0.1$ and $\gamma = 0.5$ are then available in \Cref{fig:eta0dot1} (and similar plots for $\eta \in \lrcb{0.05, 0.5}$ are provided in Figures \ref{fig:eta0dot05} and \ref{fig:eta0dot5} of \Cref{sec:addNumExp} respectively). According to \Cref{fig:eta0dot1}, MG-IS-unif$(0.5)$ outperforms most of the time the other methods (this trend for MG-IS-unif$(\gamma)$ is further confirmed by looking at the logMSE in \Cref{table:logMSEEtaO1} of \Cref{sec:addNumExp}, which includes the cases $\gamma \in \lrcb{0.1, 1.}$ for completeness). 

We also plot in \Cref{fig:weights} the final mixture weights $\lbd{N}$ obtained after one typical run of MG-IS-unif($0.5$) for $\eta \in \lrcb{0.05, 0.1, 0.5}$ and $J \in \lrcb{10, 50}$ (and additional LogMSE results can be found  in \Cref{table:logMSEVaryingEta} of \Cref{sec:addNumExp}). This is done in order to delve into how $\eta$ impacts the variational density returned at the end of Algorithm \ref{algo:GMMsMAIS} (MG-IS-unif$(0.5)$ was chosen among all four options since it enjoys good empirical performances according to \Cref{fig:eta0dot1}). \newline

\begin{figure}[t]
  \centering
  \begin{tabular}{ccc}
    & $J = 10$ & $J = 50$ \\
    %\toprule
    \ref{itemExEWGMM} \vspace{-0.5cm} & \\
    & \includegraphics[width=7cm]{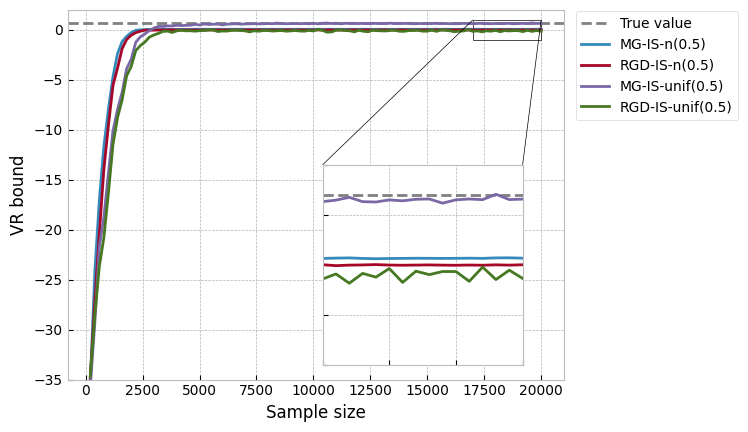} & \includegraphics[width=7cm]{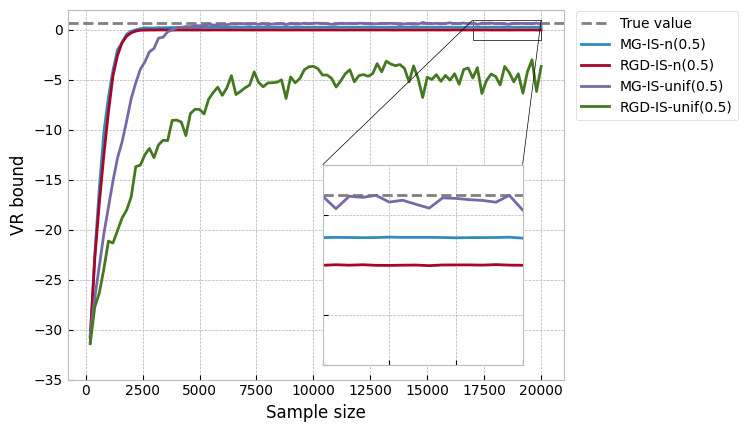} \\
    %\midrule
    \ref{itemExUWGMM} \vspace{-0.5cm} & \\
    &\includegraphics[width=7cm]{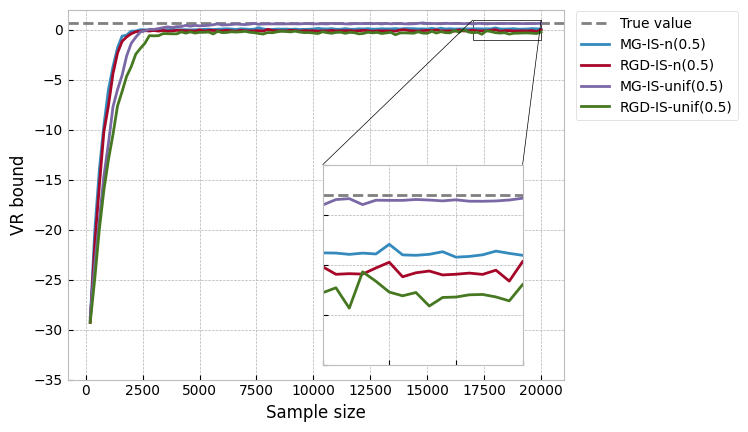} & \includegraphics[width=7cm]{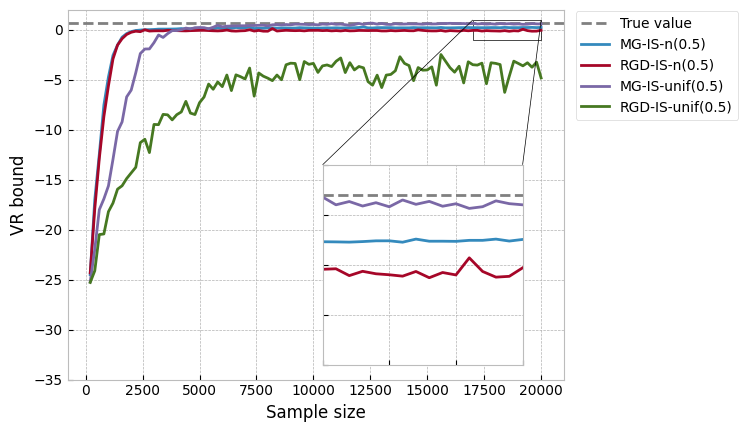} \\
    %\midrule
    \ref{itemExEWSMM} \vspace{-0.5cm} & \\
    &\includegraphics[width=7cm]{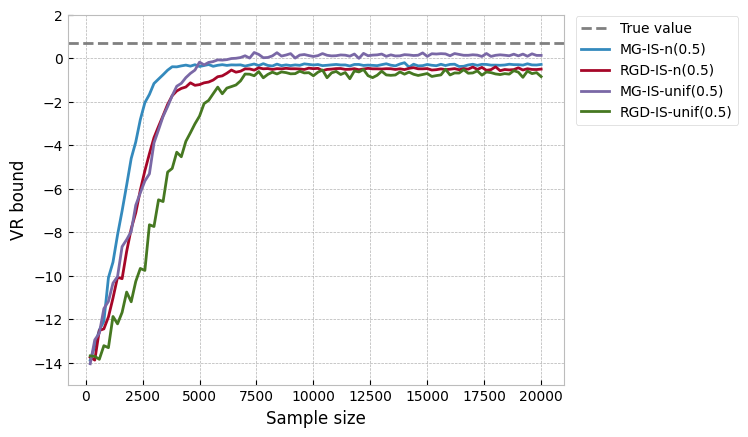} & \includegraphics[width=7cm]{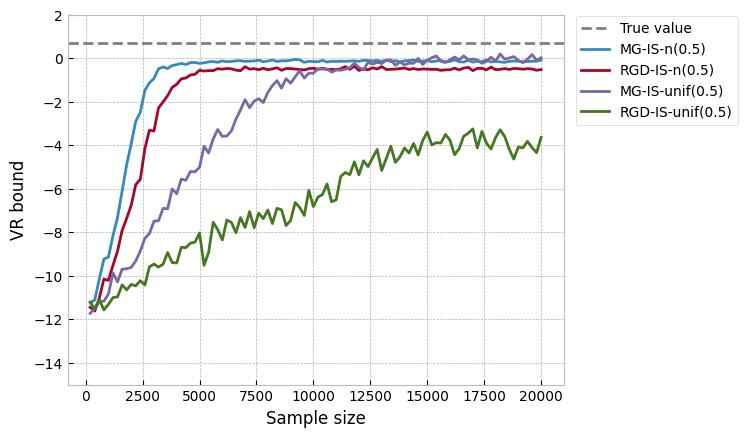} \\    %\bottomrule
    \end{tabular}

  \caption{Monte Carlo estimate of the VR Bound for the RGD and the MG approaches ($\eta = 0.1$) when considering each of the target distributions \ref{itemExEWGMM}, \ref{itemExUWGMM} and \ref{itemExEWSMM}.} %Here, $\alpha = 0.2$, $d = 16$, $M = 200$, $J \in \lrcb{10, 50}$, $\eta = 0.1$, $\gamma = 0.5$ and $\cte_n = 0$ at all times $n$.
  \label{fig:eta0dot1}
\end{figure}

A key insight from \Cref{fig:weights} is that optimising the mixture weights permits us to select the components appearing in our mixture model according to their overall contribution in constructing a good approximation of the targeted density. 

On the one hand, we enable more flexibility in our variational approximation as we optimise over a set of component parameters $\Theta$ with $J > 1$ instead of just one component parameter $\theta$ ($J = 1$). On the other hand, we avoid complexifying unecessarily the variational distribution returned at the end of the optimisation procedure since we perform mixture weights optimisation alongside components parameters optimisation. That way, we can bypass the limitation of the case $\eta = 0$, which may use more components than needed when approximating the targeted density (this is the case here since the targeted densities considered have three modes at best while $J = \lrcb{10, 50}$). 

However, there is a tradeoff to find between the simplicity of the variational density that is returned and its accuracy at describing the targeted density, which is expressed via the choice of the learning rate $\eta$. Indeed, choosing $\eta$ too large might lead to missing some of the modes while having $\eta$ too small may not discriminate quickly enough between the components. Observe in particular that the number of components $J$ too plays a role in how fast the components selection process occurs, as for the same value of $\eta$, this selection process is slower as $J$ increases. %\newline

\begin{figure}[t]
  \centering
  \begin{tabular}{ccc}
    & $J = 10$ & $J = 50$ \\
    %\toprule
    \ref{itemExEWGMM} \vspace{-0.5cm} & \\
    & \includegraphics[width=7cm]{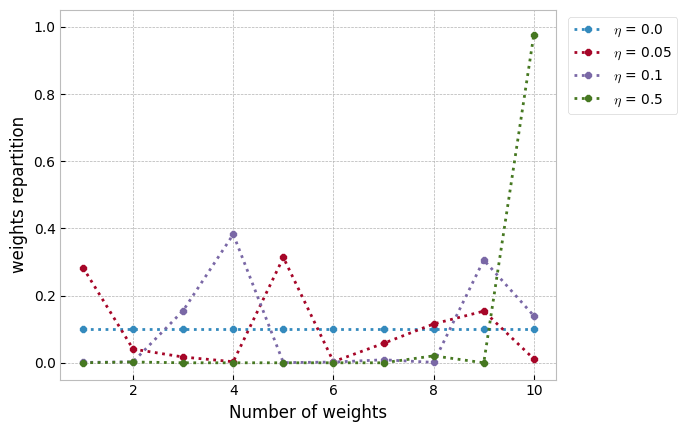} & \includegraphics[width=7cm]{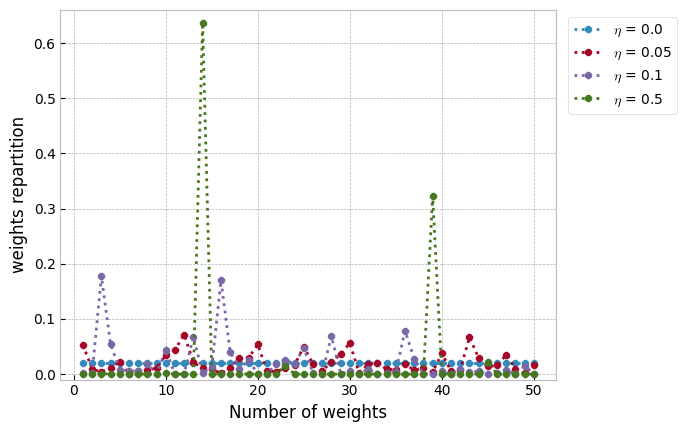} \\
    %\midrule
    \ref{itemExUWGMM} \vspace{-0.5cm} & \\
    &\includegraphics[width=7cm]{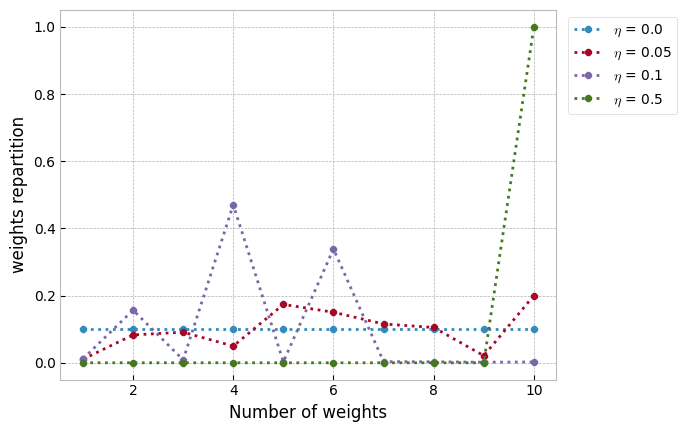} & \includegraphics[width=7cm]{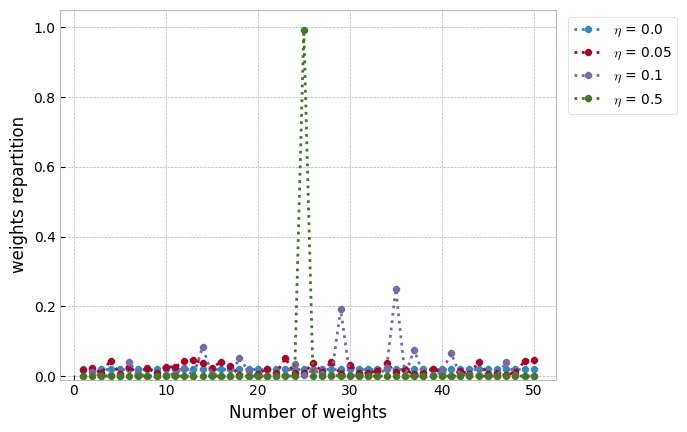} \\
    %\midrule
    \ref{itemExEWSMM} \vspace{-0.5cm} & \\
    &\includegraphics[width=7cm]{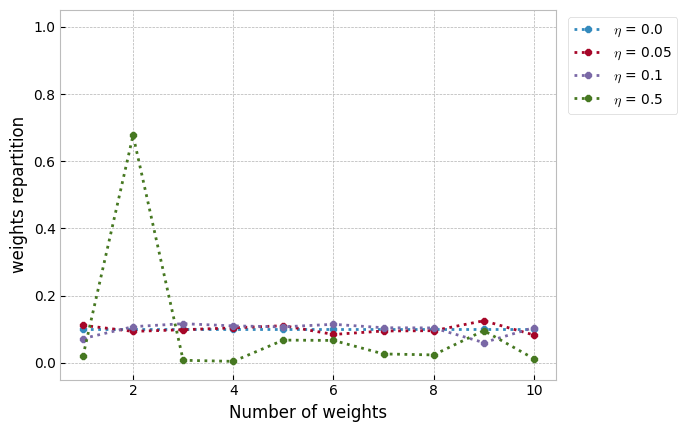} & \includegraphics[width=7cm]{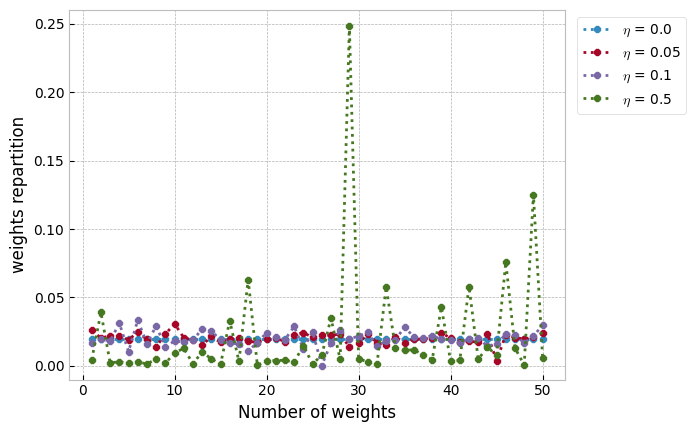} \\    %\bottomrule
    \end{tabular}

  \caption{Final mixture weights $\lbd{N}$ for one run of the MG-IS-unif($0.5$) algorithm (varying mixture weights) when considering each of the target distributions \ref{itemExEWGMM}, \ref{itemExUWGMM} and \ref{itemExEWSMM}.} %Here, $\alpha = 0.2$, $d = 16$, $M = 200$, $J \in \lrcb{10, 50}$, $\eta \in \lrcb{0., 0.05, 0.1, 0.5}$, $\gamma = 0.5$ and $\cte_n = 0$ at all times $n$.
  \label{fig:weights}
\end{figure}

\subsection{Bayesian Logistic Regression}

We consider the Bayesian Logistic Regression setting from \cite{gershman2012nonparametric}. Namely, the data $\data = \lrcb{\boldsymbol{c}, \boldsymbol{x}}$ is made of $I$ binary class labels, $c_i \in \lrcb{-1, 1}$, and of $L$ covariates for each datapoint, $\boldsymbol{x}_i \in \Rset^L$. The latent variable $y = \lrcb{\boldsymbol{w}, \beta}$ consists of $L$ regression coefficients $w_l \in \Rset$, and a precision parameter $\beta \in \Rset^+$.  Furthermore, the following model is assumed
\begin{align*}
  &p_0(\beta) = \mathrm{Gamma}(\beta; a, b), \quad p_0(\boldsymbol{w}| \beta) = \mathcal{N}(\boldsymbol{w}; 0, \beta^{-1} \boldsymbol{I}_L), \quad p(c_i = 1 | \boldsymbol{x}_i, \boldsymbol{w}) = \frac{1}{1 + e^{- \boldsymbol{w}^T \boldsymbol{x}_i}} \eqsp,
\end{align*}
where $a = 1$ and $b = 0.01$, so that $p(y, \data) = p_0(\beta) p_0(\boldsymbol{w}|\beta) \prod_{i=1}^I p(c_i | \boldsymbol{x}_i , y)$. We now want to demonstrate the practicability of our framework in a real data scenario. To this end, we select the \emph{Covertype} data set ($581,012$ data points and $54$ features, available at \href{https://www.csie.ntu.edu.tw/~cjlin/libsvmtools/datasets/binary.html}{\nolinkurl{https://www.csie.ntu.edu.tw/
~cjlin/libsvmtools/datasets/binary.html}}) and we compare the RGD and the MG approaches for this choice of target density.

\textit{Implementation details.} The covariance matrices of the mixture components are fixed and equal to $\sigma^2 \boldsymbol{I}_d$ with $\sigma^2 = 1$, $\alpha = 0.2$, $J \in \lrcb{50, 100}$, $d = 56$, $M = 200$, the total number of iterations $N$ is equal to $200$, $\Theta_1$ is generated by sampling from a centered normal distribution with covariance matrix $5\boldsymbol{I}_d$, $\lbd{1} = [1/J, \ldots, 1/J]$ and for all time $n = 1 \ldots N$, $\cte_n = 0$, $\eta_n = \eta$, $\gamma_n = \gamma$ where $\eta \in \lrcb{0.1, 0.5}$ and $\gamma \in \lrcb{0.05, 0.1, 0.2, 0.3}$. Computing $p(y, \data)$ constitutes the major computation bottleneck here since $I = 581,012$. To address this problem, $p(y, \data)$ is approximated according to \Cref{rem:real:data} with subsampled mini-batches of size $100$. The experiments are replicated independently $30$ times and the convergence of the RGD and of the MG approaches is monitored by computing a Monte Carlo estimator of the VR Bound. 
%$\eta \in \lrcb{0., 0.1, 0.5}$ and $\gamma \in \lrcb{0.01, 0.1, 0.2, 0.3, 0.4, 0.5}$ 

\begin{figure}[t]
  \centering
  \begin{tabular}{ccc}
    $\eta$ \vspace{1.5cm} & $J = 50$ & $J = 100$ \\
    %\toprule
    $0.1$ \vspace{-1.7cm} & & \\
    & \includegraphics[width=7cm]{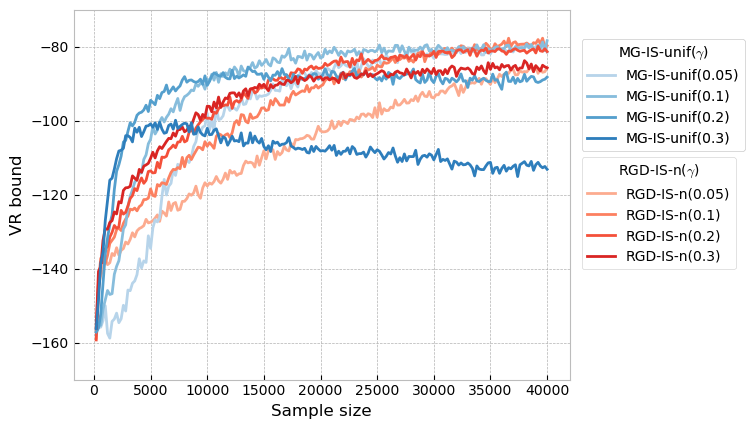} & \includegraphics[width=7cm]{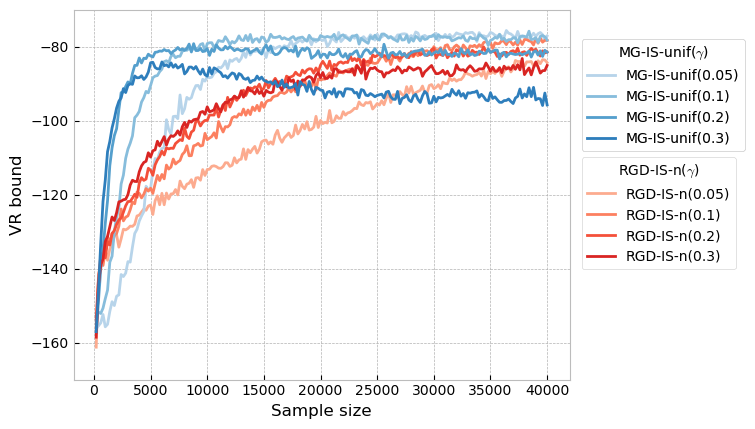} \vspace{1.1cm} \\
    %\midrule
    $0.5$ \vspace{-1.7cm} & & \\
    & \includegraphics[width=7cm]{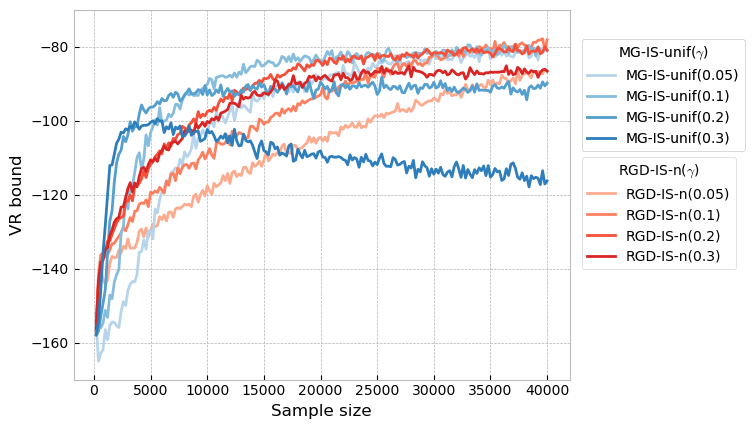} & \includegraphics[width=7cm]{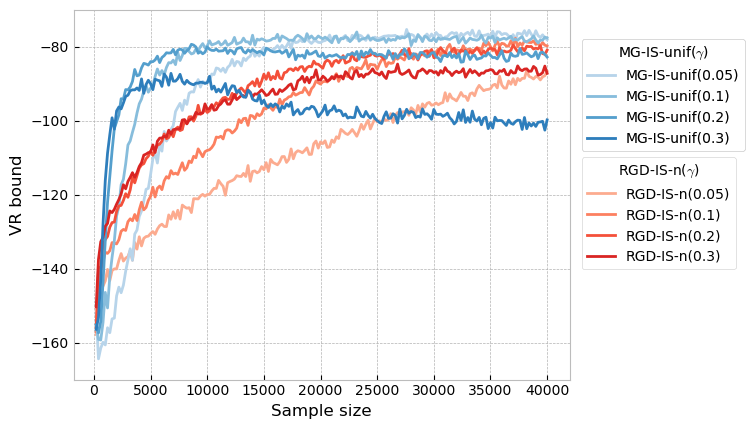} \\
    %\midrule
    %\ref{itemExEWSMM} \vspace{-0.5cm} & \\ &\includegraphics[width=7cm]{figs/dim16eta0.1main/MG_stud_mixture/alpha0.2/alpha0.2sigma1.0Dim16renyiM200kappa0.0J10eta0.1gamma0.5.png} & \includegraphics[width=7cm]{figs/dim16eta0.1main/MG_stud_mixture/alpha0.2/alpha0.2sigma1.0Dim16renyiM200kappa0.0J50eta0.1gamma0.5.png} \\    %\bottomrule
    \end{tabular}

  \caption{Monte Carlo estimate of the VR Bound for the RGD and the MG approaches when considering the Bayesian Logistic Regression on the Covertype data set.} %Here, $\alpha = 0.2$, $d = 16$, $M = 200$, $J \in \lrcb{10, 50}$, $\eta = 0.1$, $\gamma = 0.5$ and $\cte_n = 0$ at all times $n$.
  \label{fig:blr}
\end{figure}

Our results are plotted in \Cref{fig:blr}, in which we focus on RGD-IS-n($\gamma$) and MG-IS-unif($\gamma$) (those two versions of Algorithm \ref{algo:GMMsMAIS} are the most interesting to compare in this particular setting since MG-IS-n($\gamma$) enjoys similar performances to RGD-IS-n($\gamma$) and RGD-IS-unif($\gamma$) underperforms, see \Cref{fig:blr:app} of \Cref{sec:addNumExp} for details). We observe that both algorithms are able to learn in a real data scenario for a proper tuning of $(\eta, \gamma)$ and that selecting either $\eta$ or $\gamma$ too small/large can deteriorate the performance (as per the learning rate behaviour associated to those parameters). Furthermore, MG-IS-unif($\gamma$) can be tuned to outperform RGD-IS-n($\gamma$), which confirms our previous empirical findings regarding the relevance of the novel MG approach compared to the more traditional RGD one.

%which showcases the benefit of going beyond a gradient-based approach like we did.

\section{Conclusion}

We introduced a novel methodology to build algorithms ensuring a monotonic decrease in the $\alpha$-divergence at each step. Our methodology enabled simultaneous updates for both the weights and components parameters of a given mixture model, making it suitable for capturing complex multimodal target densities. Our work also connected and improved on different approaches: Gradient Descent for $\alpha$-divergence minimisation, Power Descent and an Integrated EM algorithm. By investigating variational families based on the exponential family, we applied our framework to important classes of models such as Gaussian Mixture Models and Student's $t$ Mixture Models. Finally, we provided empirical evidence that our methodology can be used to enhance the aformentioned existing algorithms.

To conclude, we state several directions to extend our work. Now that we have established a systematic decrease for our iterative schemes, the next step could be to derive convergence results and to compare them with those obtained using typical Gradient Descent schemes. Based on \Cref{thm:admiss}, another direction is to generalise the monotonicity property from \Cref{thm:WeightsMixture} beyond the case $\alpha \in [0,1)$. Lastly, much like it is already the case in traditional Black-Box Variational Inference, we expect that fine-tuning our hyperparameters and resorting to more advanced Monte Carlo methods in the estimation of the intractable integrals appearing in our ideal algorithms will lead to further improved numerical results.

%%% Local Variables:
%%% mode: latex
%%% ispell-local-dictionary: "american"
%%% TeX-master: "ddr_jmlr_revision.tex"
%%% End:

\section*{Acknowledgments and Disclosure of Funding}
We would like to thank the action editor and the reviewers for helpful comments and suggestions on the paper. Kam\'{e}lia Daudel acknowledges support of the UK Defence Science and Technology Laboratory (Dstl) and and Engineering and Physical Research Council (EPSRC) under grant EP/R013616/1. This is part of the collaboration between US DOD, UK MOD and UK EPSRC under the Multidisciplinary University Research Initiative.

\appendix
\section{Deferred Proofs and Results for \Cref{sec:Theta}}

\subsection{Quantifying the Improvement in one Step of Gradient Descent}
\label{appendix:smooth}

\kdtxt{Let $\Tset \subseteq \rset^d$ be a convex set.} Here, $\langle \cdot ,\cdot \rangle $ is the standard inner product on $\rset^d$ and $\| . \|$ is the Euclidean norm.

%Here, $\langle \cdot ,\cdot \rangle $ denotes an inner product defined on $\Tset \times \Tset$ with corresponding norm $\| . \|$. Typically, we will consider $\Tset = \Rset^d$ with $d \geq 1$, 

\begin{defi} A continuously differentiable function $g$ defined on $\Tset$ is said to be $\beta$-smooth if for all $\theta, \theta' \in \Tset$,
$$
\|\nabla g(\theta) - \nabla g(\theta') \| \leq \beta \| \theta - \theta' \| \eqsp.
$$
\end{defi}

\begin{lem} \label{lemma:inegThetaGD} Let $\gamma \in (0,1]$, let $g$
  be a % convex  \careful{[self] fr: I don't think we need convexity here (also when this lemma is used)} 
  $\beta$-smooth function defined on $\Tset$. Then, for all $\kdtxt{\theta} \in \Tset$ it holds that
\begin{align*}
g(\kdtxt{\theta}) - g \lr{\kdtxt{\theta}- \frac{\gamma}{\beta} \nabla g(\theta)} \geq \frac{\gamma}{2\beta} \| \nabla g(\theta) \|^2 \eqsp.
\end{align*}
\end{lem}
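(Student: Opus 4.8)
The statement is the standard ``descent lemma'' for $\beta$-smooth functions, specialised to a step of size $\gamma/\beta$ with $\gamma\in(0,1]$. The plan is to first establish the quadratic upper bound that follows from $\beta$-smoothness, and then evaluate it at the specific point produced by the gradient step.

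First I would recall (or prove in one line) the consequence of $\beta$-smoothness that for all $\theta,\theta'\in\Tset$,
\begin{align*}
g(\theta') \leq g(\theta) + \langle \nabla g(\theta), \theta'-\theta\rangle + \frac{\beta}{2}\|\theta'-\theta\|^2 \eqsp.
\end{align*}
The justification is the integral identity $g(\theta')-g(\theta) = \int_0^1 \langle \nabla g(\theta + t(\theta'-\theta)), \theta'-\theta\rangle\,\rmd t$; subtracting $\langle \nabla g(\theta),\theta'-\theta\rangle$ from both sides, bounding the integrand by Cauchy--Schwarz and the $\beta$-smoothness inequality $\|\nabla g(\theta+t(\theta'-\theta)) - \nabla g(\theta)\|\leq \beta t\|\theta'-\theta\|$, and integrating $\int_0^1 \beta t\,\rmd t = \beta/2$ gives the claim. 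One should note here that $\Tset$ is convex, so the segment $\theta + t(\theta'-\theta)$ stays in $\Tset$ for $t\in[0,1]$ and the expressions are well-defined.

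Next I would apply this with $\theta' = \theta - \frac{\gamma}{\beta}\nabla g(\theta)$ (which lies in $\Tset$ since it is on the segment between $\theta$ and $\theta - \frac1\beta\nabla g(\theta)$... actually more carefully, one only needs that $\theta'\in\Tset$, which is exactly what makes the gradient step well-defined in \Cref{coro:gradientDescent}; here we simply take it as given). Then $\theta'-\theta = -\frac{\gamma}{\beta}\nabla g(\theta)$, so
\begin{align*}
g\lr{\theta - \tfrac{\gamma}{\beta}\nabla g(\theta)}
&\leq g(\theta) - \frac{\gamma}{\beta}\|\nabla g(\theta)\|^2 + \frac{\beta}{2}\cdot\frac{\gamma^2}{\beta^2}\|\nabla g(\theta)\|^2 \\
&= g(\theta) - \frac{\gamma}{\beta}\lr{1 - \frac{\gamma}{2}}\|\nabla g(\theta)\|^2 \eqsp.
\end{align*}
Rearranging gives $g(\theta) - g\lr{\theta - \frac{\gamma}{\beta}\nabla g(\theta)} \geq \frac{\gamma}{\beta}(1-\gamma/2)\|\nabla g(\theta)\|^2$. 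Finally, since $\gamma\in(0,1]$ we have $1-\gamma/2 \geq 1/2$, hence $\frac{\gamma}{\beta}(1-\gamma/2) \geq \frac{\gamma}{2\beta}$, which yields the desired inequality.

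There is essentially no obstacle here: the only mild subtlety is making sure the segment-convexity of $\Tset$ is invoked so that the fundamental theorem of calculus along the segment is legitimate, and noting that the point $\theta-\frac{\gamma}{\beta}\nabla g(\theta)$ at which we evaluate is the one the lemma is about (so its membership in $\Tset$ is part of the hypothesis implicitly — the statement writes $g$ evaluated there). The step $1-\gamma/2\geq 1/2$ using $\gamma\leq 1$ is the only place the range of $\gamma$ is used, and it is exactly why the constant $\frac{1}{2\beta}$ (rather than the sharper $\frac1\beta(1-\gamma/2)$) appears.
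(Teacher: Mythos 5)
Your proof is correct and follows essentially the same route as the paper's: apply the quadratic upper bound implied by $\beta$-smoothness at $\theta' = \theta - \frac{\gamma}{\beta}\nabla g(\theta)$, obtain the factor $\frac{\gamma}{\beta}\lr{1-\frac{\gamma}{2}}$, and use $\gamma\in(0,1]$ to conclude. The only difference is that you also derive the quadratic bound from the Lipschitz-gradient definition (using convexity of $\Tset$), whereas the paper simply takes that bound as the assumed consequence of $\beta$-smoothness; this is a harmless, slightly more self-contained variant of the same argument.
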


\begin{proof} By assumption on $g$, we have that for all $\theta, \theta' \in \Tset$
\kdtxt{$$
g(\theta ') - g(\theta ) - \langle \nabla g(\theta), \theta' - \theta  \rangle \leq \frac{\beta}{2} \| \theta - \theta ' \|^2 \eqsp.
$$}
In particular, setting \kdtxt{$\theta' = \theta - \frac{\gamma}{\beta} \nabla g(\theta)$} yields
\kdtxt{
\begin{align*}
g(\theta) - g \lr{\theta- \frac{\gamma}{\beta} \nabla g(\theta)} &\geq \frac{\gamma}{\beta} \| \nabla g(\theta) \|^2 - \frac{\gamma^2}{2 \beta} \| \nabla g (\theta) \|^2 \\
&\geq \frac{\gamma}{\beta} \lr{1 -\frac{\gamma}{2}} \| \nabla g(\theta) \|^2 \eqsp.
\end{align*}}
Since $\gamma \in (0,1]$, we can deduce the desired result.%, that is
%\kdtxt{\begin{align*}
%g(\theta ) - g \lr{\theta- \frac{\gamma}{\beta} \nabla g(\theta)} \geq \frac{\gamma}{2\beta} \| \nabla g(\theta) \|^2 \eqsp.
%\end{align*}}
\end{proof}

\kdtxt{
\subsection{Gradient Descent for $\alpha$- / R\'{e}nyi's $\alpha$-divergence Minimisation in \Cref{sec:Theta}}
\label{subsec:GDsteps}  

Let us first write the definition of the $\alpha$-divergence (resp. of R\'{e}nyi's $\alpha$-divergence) between the two absolutely continuous probability measures $K(\theta, \cdot)$ and $\PP$
\begin{align*}
&\diverg \couple[K(\theta, \cdot)] = \int_\Yset \frac{1}{\alpha(\alpha-1)} \lrb{\lr{ \frac{k(\theta, y)}{\posterior}}^{\alpha} -1} \posterior \nu(\rmd y) \\
&\divergR \couple[K(\theta, \cdot)] = \frac{1}{\alpha(\alpha-1)} \log \lr{\int_\Yset {k(\theta, y)}^\alpha {\posterior}^{1-\alpha} \nu(\rmd y)} \eqsp.
\end{align*}
Here, the R\'{e}nyi divergence is defined following the convention from \cite{alpha-beta-gamma}, alternative definitions may use a different scaling factor.

\subsubsection{Gradient Descent for $\alpha$-divergence Minimisation}

As underlined in the introduction, minimising the $\alpha$-divergence $\diverg \couple[\PQ]$ w.r.t $q$ is equivalent to minimising $\Psif(q; p)$ with $p = p(\cdot, \data)$ w.r.t $q$, where we have gotten rid of $p(\data)$ in the optimisation problem as written in \eqref{eq:objective}. Letting $\mathcal{Q}$ be of the form 
$\mathcal{Q} = \set{q: y \mapsto k(\theta,y)}{\theta \in \Tset}$, 
the traditional Variational Inference way to optimise $\Psif(k(\theta, \cdot); p)$ with $p = p(\cdot, \data)$ w.r.t $\theta$ corresponds to performing Gradient Descent steps on $\theta$ to construct a sequence $(\theta_n)_{n \geq 1}$ that converges towards a local optimum of the function $\theta \mapsto \Psif(k(\theta, \cdot); p)$. This procedure involves a well-chosen learning rate policy $(r_n)_{n \geq 1}$ and sets:
\begin{align}\label{eq:GDalphaOneapp}
  \theta_{n+1} = \theta_n - r_n \nabla \Psif(k(\theta_n, \cdot); p) \eqsp, \quad n \geq 1 \eqsp.
\end{align}
Under common differentiability assumptions,
\begin{align*}
\nabla \Psif(k(\theta_n, \cdot); p) & = \nabla \lr{ \int_\Yset \falpha \lr{ \frac{k(\theta_n, y)}{p(y)}} p(y) \nu(\rmd y) } \\
& = \int_\Yset \frac{\partial}{\partial \theta} \lr{  \falpha \lr{ \frac{k(\theta, y)}{p(y)}} } \bigg|_{(\theta,y) = (\theta_n,y)} p(y) \nu(\rmd y) \\
& = \int_\Yset  \falpha' \lr{ \frac{k(\theta_n, y)}{p(y)}} \left.\frac{\partial k(\theta, y)}{\partial \theta} \right|_{(\theta,y) = (\theta_n,y)} \nu(\rmd y)  \\
& = \int_\Yset \frac{k(\theta_n, y)^\alpha p(y)^{1-\alpha} }{\alpha-1} \left.\frac{\partial \log k(\theta, y)}{\partial \theta} \right|_{(\theta,y) = (\theta_n,y)} \nu(\rmd y) 
\end{align*} 
so that \eqref{eq:GDalphaOneapp} becomes
$$
\theta_{n+1} = \theta_n - r_n \int_\Yset \frac{\ratio(y)}{\alpha-1} \left.\frac{\partial \log k(\theta, y)}{\partial \theta} \right|_{(\theta,y) = (\theta_n,y)} \nu(\rmd y) \eqsp, \quad n \geq 1 \eqsp.
$$

\subsubsection{Gradient Descent for R\'{e}nyi's $\alpha$-divergence Minimisation}

Considering yet again $\mathcal{Q} = \set{q: y \mapsto k(\theta,y)}{\theta \in \Tset}$,
minimising R\kdtxt{\'{e}}nyi's $\alpha$-divergence
$$
\divergR \couple[K(\theta, \cdot)] = \frac{1}{\alpha(\alpha-1)} \log \lr{\int_\Yset {k(\theta, y)}^\alpha {\posterior}^{1-\alpha} \nu(\rmd y)} \eqsp
$$
w.r.t $\theta$ can be done by performing Gradient Descent steps on $\theta$ to construct a sequence $(\theta_n)_{n \geq 1}$ that converges towards a local optimum of the function $\theta \mapsto \divergR \couple[K(\theta, \cdot)]$. This procedure involves a well-chosen learning rate policy $(r_n)_{n \geq 1}$ and sets:
\begin{align}\label{eq:GDalphaOneappTwo}
  \theta_{n+1} = \theta_n - r_n \nabla \divergR \couple[K(\theta_n, \cdot)] \eqsp, \quad n \geq 1 \eqsp.
\end{align}
Under common differentiability assumptions and setting $p = p(\cdot, \data)$,
\begin{align*}
\nabla \divergR \couple[K(\theta_n, \cdot)] & = \nabla \lr{ \frac{1}{\alpha(\alpha-1)} \log \lr{\int_\Yset {k(\theta_n, y)}^\alpha {\posterior}^{1-\alpha} \nu(\rmd y)} } \\
%& = \nabla \lr{ \frac{1}{\alpha(\alpha-1)} \log \lr{\int_\Yset {k(\theta_n, y)}^\alpha {p(y,\data)}^{1-\alpha} \nu(\rmd y)} }  \\
& = \frac{1}{\alpha(\alpha -1)} \frac{\nabla  \lr{\int_\Yset {k(\theta_n, y)}^\alpha {p(y,\data)}^{1-\alpha} \nu(\rmd y)}  }{\int_\Yset {k(\theta_n, y)}^\alpha {p(y,\data)}^{1-\alpha} \nu(\rmd y)} \\
& = \int_\Yset \frac{\normratio(y)}{\alpha -1} \left.\frac{\partial \log k(\theta, y)}{\partial \theta} \right|_{(\theta,y) = (\theta_n,y)} \nu(\rmd y) 
\end{align*} 
so that \eqref{eq:GDalphaOneappTwo} becomes
$$
\theta_{n+1} = \theta_n - r_n \int_\Yset \frac{\normratio(y)}{\alpha -1} \left.\frac{\partial \log k(\theta, y)}{\partial \theta} \right|_{(\theta,y) = (\theta_n,y)} \nu(\rmd y)  \eqsp, \quad n \geq 1 \eqsp.
$$

\section{Deferred Proofs and Results for \Cref{sec:MM}}

\subsection{Proof of \Cref{coro:argminMixtureModel}}
\label{subsec:coro:argminMixtureModel:proof}

\begin{proof}[Proof of \Cref{coro:argminMixtureModel}]
Following the proof of \Cref{coro:argmax}, we obtain that \eqref{eq:posMixtureP} holds by using the definition of $\Theta_{n+1}$ combined with the fact that $\alpha \in [0,1)$, \kdtxt{$\diverg[1]\couple[K(\theta_{j,n}, \cdot)][K(\theta_{j,n+1}, \cdot)] \geq 0$} and $\lambda_{j,n} > 0$ for all $j = 1 \ldots J$. \eqref{eq:posMixtureW} holds by \Cref{thm:WeightsMixture} and we can thus apply \Cref{thm:EM:MixtureModel}.
  \end{proof}

\subsection{Extension of \Cref{lem:MF} to Mixture Models}
\label{subsec:extension}

\begin{lem}[Generalised maximisation approach for the mean-field family]\label{lem:MFMM} Let \\ 
  each component of the mixture be a member of the same mean-field variational family so that $k(\theta_j,y) = \prod_{\ell =1}^{L} k^{(\ell)}(\theta_j^{(\ell)}, y^{(\ell)})$ with $\theta_j = (\theta_j^{(1)}, \ldots, \theta_j^{(L)}) \in \Tset$. Then, starting from $\Theta_1 \in \Tset$ \kdtxt{and denoting $\Theta_n =(\theta_{1, n}, \ldots \theta_{j,n})$ with $\theta_{j,n} = (\theta_{j,n}^{(1)}, \ldots, \theta_{j,n}^{(L)})$ for all $n \geq 1$ and all $j = 1 \ldots J$}, solving \eqref{eq:aPMC:updateP} yields the following update formulas: for all $n \geq 1$ and all $j = 1 \ldots J$,
\begin{align*}
\theta^{(\ell)}_{j ,n+1}&= \argmax_{\theta^{(\ell)}} \int_\Yset \lrb{\respat[y] + b_{j,n} k(\theta_n, y)} \log \lr{\frac{k^{(\ell)}(\theta^{(\ell)}, y^{(\ell)})}{k^{(\ell)}(\theta_{j,n}^{(\ell)}, y^{(\ell)})}} \nu(\rmd y), \quad \ell = 1 \ldots L \eqsp.
\end{align*}
\end{lem}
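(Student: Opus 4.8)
The proof mirrors that of \Cref{lem:MF}, now carried out component-wise for each mixture component. The plan is as follows. Fix $n \geq 1$ and $j \in \lrcb{1, \ldots, J}$. Since each mixture component is a member of the same mean-field family, $k(\theta_j, y) = \prod_{\ell =1}^{L} k^{(\ell)}(\theta_j^{(\ell)}, y^{(\ell)})$, so that for every $\theta = (\theta^{(1)}, \ldots, \theta^{(L)}) \in \Tset$ and every $y = (y^{(1)}, \ldots, y^{(L)}) \in \Yset$,
$$
\log \lr{\frac{k(\theta, y)}{k(\theta_{j,n}, y)}} = \sum_{\ell=1}^{L} \log \lr{\frac{k^{(\ell)}(\theta^{(\ell)}, y^{(\ell)})}{k^{(\ell)}(\theta_{j,n}^{(\ell)}, y^{(\ell)})}} \eqsp.
$$

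First I would substitute this identity into the objective of the argmax problem \eqref{eq:aPMC:updateP}. Since the weight $y \mapsto \respat[y] + b_{j,n} k(\theta_{j,n}, y)$ is non-negative under \ref{hyp:positive}, the integral of this finite sum splits into a sum of integrals (invoking, where needed, the same integrability as for the alternative form of \eqref{eq:aPMC:updateP}), which yields
$$
\int_\Yset \lrb{\respat[y] + b_{j,n} k(\theta_{j,n}, y)} \log \lr{\frac{k(\theta, y)}{k(\theta_{j,n}, y)}} \nu(\rmd y) = \sum_{\ell=1}^{L} G^{(\ell)}_{j,n}(\theta^{(\ell)}) \eqsp,
$$
where $G^{(\ell)}_{j,n}(\theta^{(\ell)}) := \int_\Yset \lrb{\respat[y] + b_{j,n} k(\theta_{j,n}, y)} \log \lr{ k^{(\ell)}(\theta^{(\ell)}, y^{(\ell)}) / k^{(\ell)}(\theta_{j,n}^{(\ell)}, y^{(\ell)}) } \nu(\rmd y)$ depends on $\theta$ only through the block $\theta^{(\ell)}$.

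The remaining step is separable optimisation. Writing $\Tset = \Tset^{(1)} \times \cdots \times \Tset^{(L)}$ for the product structure induced by the mean-field parameterisation, maximising $\theta \mapsto \sum_{\ell=1}^{L} G^{(\ell)}_{j,n}(\theta^{(\ell)})$ over $\Tset$ is equivalent to maximising each summand $G^{(\ell)}_{j,n}$ over $\Tset^{(\ell)}$ independently, as the blocks $\theta^{(1)}, \ldots, \theta^{(L)}$ vary freely of one another. Hence the $\ell$-th block of the solution $\theta_{j,n+1}$ of \eqref{eq:aPMC:updateP} is $\theta^{(\ell)}_{j,n+1} = \argmax_{\theta^{(\ell)}} G^{(\ell)}_{j,n}(\theta^{(\ell)})$, which is exactly the claimed update formula (with $k(\theta_{j,n}, y)$ in place of the $k(\theta_n, y)$ of the statement). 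Running this argument for all $j = 1 \ldots J$ and all $n \geq 1$ concludes the proof.

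There is no genuine obstacle here: the only points calling for a word of justification are the interchange of the finite sum with the integral — immediate for non-negative integrands, and otherwise covered by the integrability already assumed for \eqref{eq:aPMC:updateP} — and the implicit assumption that the mean-field parameter set factorises as a product over the $L$ coordinates, which is precisely what legitimises the block-wise decoupling.
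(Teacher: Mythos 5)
Your proof is correct and is exactly the argument the paper intends: \Cref{lem:MFMM} is stated without an explicit proof as an immediate consequence of the product form of $k$, and your three steps --- writing the log of the product as a sum of logs, splitting the integral (with the integrability caveat you flag), and decoupling the maximisation blockwise over the product parameter set $\Tset = \Tset^{(1)}\times\cdots\times\Tset^{(L)}$ --- are precisely the intended route, mirroring \Cref{lem:MF}. Your observation that $k(\theta_n,y)$ in the displayed update should read $k(\theta_{j,n},y)$ is also right; this is a notational slip in the statement.
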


\subsection{Proof of \Cref{coro:GDMixtureModel}}
\label{subsec:coro:GDMixtureModel:proof}

\begin{proof}[Proof of \Cref{coro:GDMixtureModel}] \kdtxt{Following the proof of \Cref{coro:gradientDescent}, we will use that $\gamma_{j,n} \in (0,1]$ and that $g_{j,n}$ is a $\beta_{j,n}$-smooth function. Indeed, we can thus apply \Cref{lemma:inegThetaGD} and we obtain by definition of $\theta_{j, n+1}$ in \eqref{eq:updateThetaGDMixtureModels} that $0 = g_{j,n}(\theta_{j,n}) \geq g_{j,n}(\theta_{j,n+1})$ for all $n \geq 1$ and all $j = 1 \ldots J$,}
  %\begin{align*}
  %g_{j,n}(\theta_{j,n}) - g_{j,n} \lr{\theta_{j,n}- \frac{\gamma_{j,n}}{\beta_{j,n}} \nabla g_{j,n}(\theta)\kdtxt{|_{\theta = \theta_{j,n}}}} \geq \frac{\gamma_{j,n}}{2\beta_{j,n}} \| \nabla g_{j,n}(\theta)\kdtxt{|_{\theta = \theta_{j,n}}} \|^2 \eqsp.
  %\end{align*}
  %Thus, 
  which in turn implies \eqref{eq:posMixtureP}.  In addition, \eqref{eq:posMixtureW} holds by \Cref{thm:WeightsMixture}, hence we can apply \Cref{thm:EM:MixtureModel}.
  \end{proof}

\subsection{Gradient Descent for $\alpha$- / R\'{e}nyi's $\alpha$-divergence Minimisation in \Cref{sec:MM}}
\label{subsec:GDstepsMM}

We obtain the desired results by adapting the reasoning from \Cref{subsec:GDsteps} to the case 
$$
\mathcal{Q} = \set{q: y \mapsto \mu_{\lbd{}, \Theta}k(y) = \sum_{j=1}^{J} \lambda_j k(\theta_j,y)}{\Theta \in \Tset^J} \eqsp,
$$
with $\lbd{} \in \simplex_J$, meaning that we consider the $\alpha$-divergence (resp. R\'{e}nyi's $\alpha$-divergence) between the two absolutely continuous probability measures $\mu_{\lbd{}, \Theta} K$ and $\PP$:
\begin{align*}
& \diverg \couple[\mu_{\lbd{}, \Theta} K] = \int_\Yset \frac{1}{\alpha(\alpha-1)} \lrb{\lr{ \frac{\mu_{\lbd{}, \Theta} k(y)}{\posterior}}^{\alpha} -1} \posterior \nu(\rmd y) \\
& \divergR \couple[\mu_{\lbd{}, \Theta} K] = \frac{1}{\alpha(\alpha-1)} \log \lr{\int_\Yset (\mu_{\lbd{}, \Theta} k(y))^\alpha {\posterior}^{1-\alpha} \nu(\rmd y)},
\end{align*}
(we use the convention from \cite{alpha-beta-gamma} for R\'{e}nyi's $\alpha$-divergence).
}

\subsection{Monotonicity Property for the Power Descent}
\label{sec:power}

\subsubsection{Preliminary Remarks} 

For convenience, we redefine in this section the function $\bmuf$ for all $\mu \in \meas{1}(\Tset)$ by
\begin{align*}
\bmuf(\theta) &= \mathlarger{\int_\Yset} k(\theta,y)  \frac{1}{\alpha-1} \left(\dfrac{\mu k(y)}{p(y)}\right)^{\alpha-1} \nu(\rmd y), \quad \theta \in \Tset \eqsp.
\end{align*}
Then, for all $\eta > 0$, the iteration $\mu \mapsto \iteration (\mu)$ is well-defined if we have
\begin{align}\label{eq:admiss}
0 <\mu(|\bmuf + \cte|^{\frac{\eta}{1-\alpha}})<\infty \eqsp.
\end{align}
Furthermore, \cite{daudel2020infinitedimensional} already established that one transition of the Power Descent algorithm ensures a monotonic decrease in the $\alpha$-divergence at each step for all $\eta \in (0,1]$ and all $\cte$ such that $(\alpha-1)\cte \geq 0$ under the assumption of \Cref{thm:admiss}, which settles the case \ref{item:admiss-alpha-c}. Finally, while we establish our results for \ref{item:admiss-alpha-a} and \ref{item:admiss-alpha-b} in the general case where $\mu \in \meas{1}(\Tset)$, the particular case of mixture models follows immediately by choosing $\mu$ as a weighted sum of dirac measures.

\subsubsection{Extending the Monotonicity}

Let $(\zeta,\mu)$ be a couple of probability measures where $\zeta$ is dominated by $\mu$, which we denote by $\zeta \preceq \mu$. A first lower-bound for the difference $\Psif(\mu k) - \Psif(\zeta k)$ was derived in \cite{daudel2020infinitedimensional} and was used to establish that the Power Descent algorithm diminishes $\Psif$ for all $\eta \in (0,1]$. 
%Let $(\zeta,\mu)$ be a couple of probability measures where $\zeta$ is dominated by $\mu$, which we denote by $\zeta \preceq \mu$. A first lower-bound for the difference $\Psif(\mu k) - \Psif(\zeta k)$ was derived in \cite{daudel2020infinitedimensional} and was used to establish that the Power Descent algorithm diminishes $\Psif(\mu_n k)$ at each iteration for all $\eta \in (0,1]$.
We now prove a novel lower-bound for the difference $\Psif(\mu k) - \Psif(\zeta k)$ which will allow us to extend the monotonicity results from \cite{daudel2020infinitedimensional} beyond the case $\eta \in (0,1]$ when $\alpha < 0$. This result relies on the existence of an exponent $\varrho$ satisfying condition \ref{hyp:f} below, which will later on be used to specify a range of values for $\eta$ ensuring that $\Psif$ is decreasing after having applied one transition $\mu \mapsto \iteration (\mu)$ %(\mu_n k)$ is decreasing at each step. %satisfying \eqref{eq:admiss} for any $\mu \in \meas{1}(\Tset)$ and

\begin{hyp}{A}
  \item \label{hyp:f} We have $\varrho \in \rset\setminus[0,1]$ and the function
  ${\frho}: u\mapsto \falpha(u^{1/\varrho})$ is non-decreasing and concave on
  $\rset_{>0}$.
\end{hyp}

\begin{prop}\label{prop:fondam}
Assume \ref{hyp:positive}. Let $\alpha \in \rset \setminus \lrcb{1}$, assume that $\varrho$ satisfies \ref{hyp:f} and let $\cte$ be such that $(\alpha-1)\cte \geq 0$. Then, for all $\mu,\zeta\in\meas{1}(\Tset)$ such that $\mu(|\bmuf|)<\infty$  and $\zeta \preceq \mu$,
\begin{equation} \label{eq:bound:fondam}
|\varrho|^{-1}\lrcb{\mu(|\bmuf+ \cte|) - \mu\lr{|\bmuf + \cte| g^\varrho}} \leq \Psif(\mu k) - \Psif(\zeta k) \eqsp,
\end{equation}
where $g$ is the density of $\zeta$ wrt $\mu$, i.e. $\zeta(\rmd
\theta)=\mu(\rmd \theta) g(\theta)$. Moreover, equality holds in \eqref{eq:bound:fondam} if and only if $\zeta=\mu$.
\end{prop}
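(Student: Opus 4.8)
The plan is to establish, for $\nu$-a.e.\ $y$ with $p(y)>0$, the pointwise lower bound
\begin{equation}\label{eq:plan-pointwise}
\falpha\!\lr{\frac{\mu k(y)}{p(y)}} - \falpha\!\lr{\frac{\zeta k(y)}{p(y)}} \geq \frac{1}{\varrho(\alpha-1)}\lr{\frac{\mu k(y)}{p(y)}}^{\!\alpha}\lrb{1 - \int_\Tset g(\theta)^\varrho\,\frac{k(\theta,y)}{\mu k(y)}\,\mu(\rmd\theta)}\eqsp,
\end{equation}
and then to integrate it against $p(y)\,\nu(\rmd y)$. Since $\falpha(v)=\frho(v^\varrho)$ for all $v>0$, and $\frho$ is smooth on $\rset_{>0}$ and, by \ref{hyp:f}, non-decreasing and concave, the tangent-line inequality for the concave function $\frho$ at $A(y):=(\mu k(y)/p(y))^\varrho$ evaluated at $B(y):=(\zeta k(y)/p(y))^\varrho$ gives
\begin{equation}\label{eq:plan-tangent}
\falpha\!\lr{\frac{\mu k(y)}{p(y)}} - \falpha\!\lr{\frac{\zeta k(y)}{p(y)}} = \frho(A(y))-\frho(B(y)) \geq \frho'(A(y))\,\lr{A(y)-B(y)}\eqsp.
\end{equation}
The key step is to write, with $\pi_y(\rmd\theta):=k(\theta,y)\,\mu(\rmd\theta)/\mu k(y)$ a probability measure on $\Tset$ equivalent to $\mu$ (as $k>0$), the identity $\zeta k(y)/p(y)=(\mu k(y)/p(y))\,\pi_y(g)$, so that $B(y)=A(y)\,\pi_y(g)^\varrho$. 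As $\varrho\notin[0,1]$ the map $t\mapsto t^\varrho$ is convex, so Jensen's inequality yields $\pi_y(g)^\varrho\leq\pi_y(g^\varrho)$ and hence $A(y)-B(y)\geq A(y)\lr{1-\pi_y(g^\varrho)}$. Since \ref{hyp:f} forces $\frho'\geq0$, plugging this into \eqref{eq:plan-tangent} and using the explicit identity $\frho'(u)\,u=u^{\alpha/\varrho}/(\varrho(\alpha-1))$ — whence $\frho'(A(y))\,A(y)=(\mu k(y)/p(y))^\alpha/(\varrho(\alpha-1))$ — produces \eqref{eq:plan-pointwise}.

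Next I would integrate \eqref{eq:plan-pointwise} against $p\,\rmd\nu$, swapping the $\theta$- and $y$-integrations by Tonelli (the relevant integrands being non-negative) and using $\int_\Yset k(\theta,y)(\mu k(y)/p(y))^{\alpha-1}\nu(\rmd y)=(\alpha-1)\bmuf(\theta)$. This gives $\int_\Yset(\mu k/p)^\alpha\,p\,\rmd\nu=(\alpha-1)\mu(\bmuf)$ and, with the extra factor $g(\theta)^\varrho$ inserted, $\int_\Yset(\mu k/p)^\alpha\,\pi_y(g^\varrho)\,p\,\rmd\nu=(\alpha-1)\mu(g^\varrho\bmuf)$, so that the integrated form of \eqref{eq:plan-pointwise} reads $\Psif(\mu k)-\Psif(\zeta k)\geq\varrho^{-1}\lr{\mu(\bmuf)-\mu(g^\varrho\bmuf)}$. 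It remains to bring in $\cte$, where the only subtle point is the sign bookkeeping: the non-decreasingness of $\frho$ forces $\varrho$ to have the same sign $s$ as $\alpha-1$, while $(\alpha-1)\bmuf(\theta)=\int_\Yset k(\theta,y)(\mu k(y)/p(y))^{\alpha-1}\nu(\rmd y)\geq0$ and $(\alpha-1)\cte\geq0$, so $|\bmuf+\cte|=s\,(\bmuf+\cte)$ and $|\varrho|^{-1}=s\,\varrho^{-1}$, giving
\begin{equation}\label{eq:plan-kappa}
|\varrho|^{-1}\lrcb{\mu(|\bmuf+\cte|)-\mu(|\bmuf+\cte|\,g^\varrho)}=\varrho^{-1}\lr{\mu(\bmuf)-\mu(g^\varrho\bmuf)}+\varrho^{-1}\cte\lr{1-\mu(g^\varrho)}\eqsp.
\end{equation}
Finally $\varrho^{-1}\cte\geq0$ ($\varrho^{-1}$ and $\cte$ both having sign $s$) and $\mu(g^\varrho)\geq\mu(g)^\varrho=1$ by Jensen (convexity of $t\mapsto t^\varrho$ and $\mu(g)=1$), so the last term in \eqref{eq:plan-kappa} is $\leq0$; combined with the integrated bound this proves \eqref{eq:bound:fondam}.

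For the equality case: if $\zeta=\mu$ then $g\equiv1$ and both sides of \eqref{eq:bound:fondam} vanish. Conversely, assume equality holds. Then both the bound $\Psif(\mu k)-\Psif(\zeta k)\geq\varrho^{-1}\lr{\mu(\bmuf)-\mu(g^\varrho\bmuf)}$ and the inequality obtained from \eqref{eq:plan-kappa} by dropping its non-positive last term must be equalities. If $\cte\neq0$, the latter forces $\mu(g^\varrho)=1$, and strict convexity of $t\mapsto t^\varrho$ on $\rset_{>0}$ then forces $g$ to be $\mu$-a.s.\ constant, i.e.\ $g\equiv1$. If $\cte=0$, equality in the former forces \eqref{eq:plan-pointwise} to be an equality for $p\nu$-a.e.\ $y$; since $\{p>0\}$ has positive $\nu$-measure there is such a $y$ with $p(y)>0$, and for it (using $\frho'(A(y))>0$) equality propagates back through \eqref{eq:plan-tangent} and the Jensen step to give $\pi_y(g)^\varrho=\pi_y(g^\varrho)$, which, by strict convexity and $\pi_y\sim\mu$, again yields $g\equiv1$ $\mu$-a.s. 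In all cases $\zeta=\mu$.

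The analytic content above is elementary; I expect the main obstacle to be the care needed with possibly infinite values of $\Psif(\zeta k)$ (equivalently of $\mu(g^\varrho\bmuf)$ or $\mu(g^\varrho)$): one must check that the Tonelli swaps are valid, that integrating \eqref{eq:plan-pointwise} term by term is legitimate — exploiting the sign information, in particular $\varrho(\alpha-1)>0$, together with the hypotheses $\mu(|\bmuf|)<\infty$ and $0<\int_\Yset p\,\rmd\nu<\infty$ — and, in the equality case, that equality of the integrals of a pointwise-dominated pair does force $p\nu$-a.e.\ equality here. The computations are written for $\alpha\neq0$; the case $\alpha=0$, with $\falpha[0]=-\log$ and $\frho(u)=-\varrho^{-1}\log u$ (hence $\varrho<0$), is covered by the same formulas under the usual conventions, the identity $\frho'(u)\,u=u^{\alpha/\varrho}/(\varrho(\alpha-1))$ then reading $\frho'(u)\,u=-\varrho^{-1}$.
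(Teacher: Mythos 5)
Your proposal is correct and follows essentially the same route as the paper's proof: the same two convexity ingredients (Jensen's inequality for $u\mapsto u^{\varrho}$ under the kernel-weighted probability $k(\theta,y)\mu(\rmd\theta)/\mu k(y)$, then the tangent-line bound for the concave, non-decreasing $\frho$), the same Fubini step identifying $\mu(\bmuf)$ and $\mu(g^{\varrho}\bmuf)$, the same sign bookkeeping for $\cte$ together with the second Jensen bound $\mu(g^{\varrho})\geq 1$, and the same strict-convexity argument for the equality case. The differences — arguing pointwise before integrating, applying concavity before Jensen rather than after, and splitting the equality case according to $\cte\neq 0$ or $\cte=0$ (slightly more explicit than the paper) — are organizational rather than substantive.
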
 %\kd{motiver plus ce r\'{e}sultat avec une remarque?}
\begin{proof} First note that for all $\alpha \in \rset \setminus \lrcb{1}$, we have by \ref{hyp:f} that $f_{\alpha,\varrho}'(u) \geq 0$ for all $u>0$, and thus that $sg(\varrho)=sg(\alpha-1)$ where $sg(v)=1$ if $v\geq 0$ and $-1$ otherwise. Since $sg(f'_\alpha(u))=sg(\alpha-1)=sg(\kappa)$ for all $u>0$, this implies that $\varrho^{-1}\falpha'(u)=|\varrho|^{-1}|\falpha'(u)|$, $\varrho^{-1} \cte = |\varrho^{-1} \cte|$ and finally that $\varrho^{-1} (\bmuf(\theta) + \cte) = |\varrho^{-1}| |\bmuf(\theta) + \cte|$ for all $\theta \in \Tset$, which will be used later in the proof.

Write by definition of ${\frho}$ in \ref{hyp:f} and $\zeta$,
\begin{align}
\label{eq:first}
\Psif(\zeta k) &= \int_\Yset \falpha\lr{\frac{\zeta k(y)}{p(y)}}p(y)\nu(\rmd y) \nonumber \\ &=\int_\Yset {\frho}\lr{\lrb{\frac{\zeta k(y)}{p(y)}}^{\varrho}} p(y)\nu(\rmd y) \nonumber \\
&=\int_\Yset {\frho}\lr{\lrb{\int_\Tset \mu(\rmd \theta) \frac{k(\theta,y)}{\mu k(y)} \lr{\frac{g(\theta)\mu k(y)}{ p(y)}}}^{\varrho}} p(y)\nu(\rmd y)\nonumber\\
& \leq \int_\Yset {\frho} \lr{\int_\Tset \mu(\rmd \theta) \frac{ k(\theta,y)}{\mu k(y)} \lr{\frac{g(\theta)\mu k(y)}{ p(y)}}^{\varrho}} p(y)\nu(\rmd y)
\end{align}
where the last inequality follows from Jensen's inequality applied to the convex function $u\mapsto u^{\varrho}$ (since $\varrho\in \rset\setminus[0,1]$) and the fact that ${\frho}$ is non-decreasing. Now set
\begin{align*}
& u_y=\int_\Tset \mu(\rmd \theta) \frac{ k(\theta,y)}{\mu k(y)} \lr{\frac{g(\theta)\mu k(y)}{ p(y)}}^{\varrho}\\
& v_y=\lr{\frac{\mu k(y)}{p(y)}}^{\varrho}
\end{align*}
and note that
\begin{align}\label{eq:diff}
u_y-v_y=\lr{\frac{\mu k(y)}{p(y)}}^{\varrho} \lr{\int_\Tset \mu(\rmd \theta) \frac{k(\theta,y)}{\mu k(y)}g^\varrho(\theta)-1} \eqsp.
\end{align}
Since ${\frho}$ is concave, ${\frho}(u_y) \leq {\frho}(v_y) +f_{\alpha,\varrho}'(v_y)(u_y-v_y)$. Combining with \eqref{eq:first}, we get
\begin{align}\label{eq:split}
\Psif(\zeta k) &\leq \int_\Yset {\frho}(u_y)p(y)\nu(\rmd y)\\
& \leq \int_\Yset {\frho}(v_y)p(y)\nu(\rmd y)+ \int_\Yset f_{\alpha,\varrho}'(v_y)(u_y-v_y)p(y) \nu(\rmd y)  \nonumber
\end{align}
Note that the first term of the rhs can be written as
\begin{align}\label{eq:rhs1}
\int_\Yset {\frho}(v_y)p(y)\nu(\rmd y)= \int_\Yset \falpha\lr{\frac{\mu k(y)}{p(y)}} p(y) \nu(\rmd y) = \Psif(\mu k) \eqsp.
\end{align}
Using now $f_{\alpha,\varrho}'(v_y)=\varrho^{-1}v_y^{1/\varrho-1}\falpha'(v_y^{1/\varrho})$ and \eqref{eq:diff}, the second term of the rhs of \eqref{eq:split} may be expressed as
\begin{align*}
\int_\Yset & f_{\alpha,\varrho}'(v_y)(u_y-v_y)p(y) \nu(\rmd y) \\
& = \varrho^{-1}\int_\Yset \lr{\frac{\mu k(y)}{p(y)}}^{1-\varrho} \falpha'\lr{\frac{\mu k(y)}{p(y)}} \\
&\hspace{2cm} \lr{\frac{\mu k(y)}{p(y)}}^{\varrho} \lr{\int_\Tset \mu(\rmd \theta) \frac{k(\theta,y)}{\mu k(y)}g^\varrho(\theta)-1} p(y) \nu(\rmd y)\\
& =\varrho^{-1}\int_\Tset \mu(\rmd \theta) \lr{\int_\Yset k(\theta,y) \falpha'\lr{\frac{\mu k(y)}{p(y)}}\nu(\rmd y)}g^\varrho(\theta) \\
&\hspace{2cm} -\varrho^{-1}\int_\Yset \mu k(y) \falpha'\lr{\frac{\mu k(y)}{p(y)}} \nu(\rmd y)\\
&  = \varrho^{-1}\lrcb{\mu\lr{\bmuf \cdot g^\varrho}-\mu(\bmuf)}  \\
& =|\varrho|^{-1}\lrcb{\mu\lr{|\bmuf + \cte| g^\varrho}-\mu(|\bmuf + \cte|)} + |\varrho^{-1}\cte|(1-\mu(g^\varrho)) \eqsp,
\end{align*}
where we have used that $\varrho^{-1} (\bmuf(\theta) + \cte) = |\varrho^{-1}| |\bmuf(\theta) + \cte|$ for all $\theta \in \Tset$ and that $\varrho^{-1} \cte = |\varrho^{-1} \cte|$.
In addition, Jensen's inequality applied to the convex function $u \mapsto u^\varrho$ implies that $\mu(g^\varrho) \geq 1$ and thus
\begin{align}\label{eq:jensen2}
\int_\Yset f_{\alpha,\varrho}'(v_y)(u_y-v_y)p(y) \nu(\rmd y) \leq |\varrho|^{-1}\lrcb{\mu\lr{|\bmuf + \cte| g^\varrho}-\mu(|\bmuf + \cte|)} \eqsp.
\end{align}
Combining this inequality with \eqref{eq:split} and \eqref{eq:rhs1} finishes the
proof of the inequality.
Furthermore, if the equality holds in \eqref{eq:bound:fondam}, then
the equality in Jensen's inequality \eqref{eq:jensen2} shows that $g$ is constant
$\mu$-a.e. so that $\zeta=\mu$, and the proof is completed.
\end{proof}

\begin{rem}The proof of \Cref{prop:fondam} relies on $\falpha'$ being of constant sign. Notice however that the definition of the $\alpha$-divergence in \eqref{eq:gen:divQ} is invariant with respect to the transformation $\tilde \falpha (u) = \falpha(u) + \cte(u-1)$ for any arbitrary constant $\cte$, that is $\falpha$ can be equivalently replaced by $\tilde \falpha$ in \eqref{eq:gen:divQ}. This aspect is in fact expressed through the constant $\cte$ appearing in the update formula, that we however need to assume to satisfy $(\alpha-1)\cte \geq 0$ in our proofs.
\end{rem}

We now plan on setting $\zeta= \iteration(\mu)$ in \Cref{prop:fondam} and obtain that one iteration of the Power Descent yields $\Psif(\iteration(\mu) k) \leq \Psif(\mu k)$. For this purpose and based on the upper bound obtained in \Cref{prop:fondam}, we strengthen the condition \eqref{eq:admiss} as follows to take into account the
exponent $\varrho$
\begin{multline}\label{eq:admiss:varrho}
0<\mu(|\bmuf + \cte|^{\frac{\eta}{1-\alpha}})<\infty \mbox{   and   } \mu(|\bmuf + \cte|g^{\varrho}) \leq \mu(|\bmuf + \cte|) \\ \mbox{ with $g=\frac{|\bmuf + \cte|^{\frac{\eta}{1-\alpha}}}{\mu(|\bmuf+ \cte|^{\frac{\eta}{1-\alpha}})}$} \eqsp.
\end{multline}
This leads to the following result.
\begin{prop}\label{thm:monotone}
  Assume \ref{hyp:positive}. Let $\alpha \in \rset \setminus \lrcb{1}$, assume that $\varrho$ satisfies \ref{hyp:f} and let $\cte$ be such that $(\alpha-1)\cte \geq 0$. Let $\mu\in\meas{1}(\Tset)$ be such that $\mu(|\bmuf|)<\infty$ and assume that $\eta$ satisfies \eqref{eq:admiss:varrho}. Then, the two following assertions hold.
\begin{enumerate}[label=(\roman*)]
\item \label{item:mono1} We have  $\Psif(\iteration(\mu) k) \leq \Psif(\mu k)$.
\item \label{item:mono2} We have $\Psif(\iteration(\mu) k) =\Psif(\mu k)$ if and only if $\mu=\iteration (\mu)$.
\end{enumerate}
\end{prop}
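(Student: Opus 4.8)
The plan is to deduce \Cref{thm:monotone} from \Cref{prop:fondam} by choosing $\zeta=\iteration(\mu)$, and then to obtain \Cref{thm:admiss} by exhibiting, for each of the cases \ref{item:admiss-alpha-a}--\ref{item:admiss-alpha-c}, an exponent $\varrho$ satisfying \ref{hyp:f} for which the condition \eqref{eq:admiss:varrho} is automatically met for the stated range of $\eta$.

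First I would prove \Cref{thm:monotone}. Given $\mu$ with $\mu(|\bmuf|)<\infty$ and $\eta$ satisfying \eqref{eq:admiss:varrho}, the normalisation $\mu(|\bmuf+\cte|^{\eta/(1-\alpha)})$ is finite and positive, so $g=|\bmuf+\cte|^{\eta/(1-\alpha)}/\mu(|\bmuf+\cte|^{\eta/(1-\alpha)})$ is a well-defined probability density w.r.t. $\mu$ and $\zeta:=\iteration(\mu)=\mu\cdot g$ is exactly the Power Descent update (using the definition of $\iteration$ and the preliminary rewriting of $\bmuf$). Since $\zeta\preceq\mu$, \Cref{prop:fondam} applies and gives
\begin{align*}
|\varrho|^{-1}\lrcb{\mu(|\bmuf+\cte|) - \mu\lr{|\bmuf+\cte| g^\varrho}} \leq \Psif(\mu k) - \Psif(\iteration(\mu) k)\;.
\end{align*}
By the second half of \eqref{eq:admiss:varrho}, the left-hand side is non-negative, which proves \ref{item:mono1}. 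For \ref{item:mono2}, equality $\Psif(\iteration(\mu)k)=\Psif(\mu k)$ forces equality in \eqref{eq:bound:fondam}, and the equality case of \Cref{prop:fondam} gives $\iteration(\mu)=\mu$; the converse is trivial. This part is essentially bookkeeping once \Cref{prop:fondam} is in hand.

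The substantive step is \Cref{thm:admiss}, i.e. checking \eqref{eq:admiss:varrho} case by case. Case \ref{item:admiss-alpha-c} ($\alpha\in[0,1)$ or $\alpha>1$, $\eta\in(0,1]$) is already covered by \cite[Theorem 1]{daudel2020infinitedimensional}, as noted in the preliminary remarks, so nothing new is needed there. For \ref{item:admiss-alpha-a} ($\alpha\leq-1$, $\eta\in(0,(\alpha-1)/\alpha]$) and \ref{item:admiss-alpha-b} ($\alpha\in(-1,0)$, $\eta\in(0,1-\alpha]$) I would proceed as follows. First establish finiteness/positivity: $\mu(|\bmuf+\cte|^{\eta/(1-\alpha)})$ is positive because $|\bmuf+\cte|$ cannot be $\mu$-a.e. zero (otherwise $\Psif(\mu k)$ would be degenerate, contradicting $\Psif(\mu k)<\infty$ together with \ref{hyp:positive}), and it is finite because $\mu(|\bmuf|)<\infty$ together with $\eta/(1-\alpha)\leq 1$ in both cases (using $|\bmuf+\cte|^{\eta/(1-\alpha)}\leq 1+|\bmuf+\cte|\leq 1+|\bmuf|+|\cte|$ by concavity of $t\mapsto t^{\eta/(1-\alpha)}$). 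Then choose $\varrho=1-\alpha$: one checks that $\frho:u\mapsto\falpha(u^{1/(1-\alpha)})$ is non-decreasing and concave on $\rset_{>0}$ in exactly the regime $\alpha<0$ — indeed $\falpha(t)=\frac{t^\alpha-1}{\alpha(\alpha-1)}$ gives $\frho(u)=\frac{u^{\alpha/(1-\alpha)}-1}{\alpha(\alpha-1)}$, whose derivative and second derivative have the right signs precisely when $\alpha<0$ (the exponent $\alpha/(1-\alpha)\in(-\infty,0)$ and the sign of $\alpha(\alpha-1)$ combine favourably) — so \ref{hyp:f} holds. With this $\varrho$, the remaining inequality $\mu(|\bmuf+\cte|g^\varrho)\leq\mu(|\bmuf+\cte|)$ with $g=|\bmuf+\cte|^{\eta/(1-\alpha)}/Z$ (where $Z=\mu(|\bmuf+\cte|^{\eta/(1-\alpha)})$) becomes, after substituting, $Z^{-\varrho}\,\mu(|\bmuf+\cte|^{1+\eta\varrho/(1-\alpha)})\leq\mu(|\bmuf+\cte|)$, i.e. $\mu(|\bmuf+\cte|^{1+\eta})\leq Z^{1-\alpha}\,\mu(|\bmuf+\cte|)$ since $\varrho=1-\alpha$. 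This is a power-mean / Jensen inequality on the probability measure with density $|\bmuf+\cte|/\mu(|\bmuf+\cte|)$: writing $h=|\bmuf+\cte|$ and $\tilde\mu(\rmd\theta)=h(\theta)\mu(\rmd\theta)/\mu(h)$, it reads $\tilde\mu(h^{\eta})\leq\left(\mu(h^{\eta/(1-\alpha)})\right)^{1-\alpha}/\mu(h)^{-\,?}$ — the exponents must be tracked carefully, but the mechanism is that $\eta/(1-\alpha)\leq 1$ (case \ref{item:admiss-alpha-b}) or $\eta\leq(\alpha-1)/\alpha$ equivalently $\eta\alpha\geq\alpha-1$ equivalently something equivalent on the exponent pair (case \ref{item:admiss-alpha-a}) makes the relevant $L^p$-norm monotonicity go through.

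The main obstacle I anticipate is precisely the last bookkeeping: verifying that the exponent constraints "$\eta\in(0,(\alpha-1)/\alpha]$ when $\alpha\leq-1$" and "$\eta\in(0,1-\alpha]$ when $\alpha\in(-1,0)$" are exactly what is needed for the moment inequality $\mu(|\bmuf+\cte|g^{\varrho})\leq\mu(|\bmuf+\cte|)$ with $\varrho=1-\alpha$ — i.e. reducing this to a Hölder/Jensen statement of the form $\|h\|_{L^r(\tilde\mu)}\leq\|h\|_{L^s(\tilde\mu)}$ for $r\leq s$ on a probability space, and reading off the threshold for $\eta$. One must also double-check \ref{hyp:f} holds on the full range (including $\alpha=-1$, which sits at the boundary between the two cases and needs $(\alpha-1)/\alpha=2=1-\alpha$ to be consistent, which it is). Once the correct $\varrho$ is pinned down and \ref{hyp:f} verified, the rest is a routine application of \Cref{prop:fondam} and \Cref{thm:monotone}, exactly paralleling how \cite[Theorem 1]{daudel2020infinitedimensional} handled case \ref{item:admiss-alpha-c}.
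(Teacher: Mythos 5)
Your proof of \Cref{thm:monotone} is correct and follows essentially the same route as the paper: apply \Cref{prop:fondam} with $\zeta=\iteration(\mu)$, use the second half of \eqref{eq:admiss:varrho} to get the decrease, and invoke the equality case of \Cref{prop:fondam} for assertion \ref{item:mono2}. The additional material on \Cref{thm:admiss} (choice of $\varrho$ and case-by-case verification of \eqref{eq:admiss:varrho}) goes beyond the statement under review and is not needed here.
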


\begin{proof} We treat the case $\cte = 0$ in the proof below (the case $\cte \neq 0$ unfolds similarly). We apply \Cref{prop:fondam} with $\zeta=\iteration (\mu)$ so that $\zeta(\rmd \theta)=\mu(\rmd \theta)g(\theta)$ with $g=|\bmuf|^{\eta /(1-\alpha)}/\mu(|\bmuf|^{\eta /(1-\alpha)})$. Then,
\begin{equation}
\label{eq:monotone:one}
\Psif(\iteration(\mu) k) \leq \Psif(\mu k)+ |\varrho|^{-1}\lrcb{\mu\lr{|\bmuf| g^\varrho}-\mu(|\bmuf|)} \leq \Psif(\mu k) \eqsp,
\end{equation}
where the last inequality follows from condition \eqref{eq:admiss:varrho}. Let us now show \ref{item:mono2}. The {\em if} part is obvious. As for the {\em only if} part,  $\Psif(\iteration(\mu) k) =\Psif(\mu k)$  combined with \eqref{eq:monotone:one} yields
$$
\Psif(\iteration(\mu) k) = \Psif(\mu k)+ |\varrho|^{-1}\lrcb{\mu\lr{|\bmuf| g^\varrho}-\mu(|\bmuf|)} \eqsp,
$$
which is the case of equality in \Cref{prop:fondam}. Therefore, $\iteration (\mu) = \mu$.
\end{proof}

While \Cref{thm:monotone} resembles \cite[Theorem 1]{daudel2020infinitedimensional} in its formulation and in the properties on the iteration $\mu \mapsto \iteration(\mu)$ it establishes, it is important to note that the proof techniques used, and thus the conditions on $\eta$ obtained, are different. This brings us to the proof of \Cref{thm:admiss}. The proof of this theorem requires intermediate results, which are derived in \Cref{sec:proof-mon} alongside the proof of \Cref{thm:admiss}.

\subsubsection{Proof of \Cref{thm:admiss}}
\label{sec:proof-mon}

For the sake of readability, we only treat the case $\cte = 0$ in the proofs below (the case $\cte \neq 0$ unfolds similarly). In \Cref{prop:fondam}, the difference $\Psif(\zeta k)-\Psif(\mu k)$ is split into two terms
$$
\Psif(\zeta k)-\Psif(\mu k)=A(\mu,\zeta)+|\varrho|^{-1}\lrcb{\mu\lr{|\bmuf| g^\varrho}-\mu(|\bmuf|)} \eqsp,
$$
where $g=\rmd \zeta/\rmd \mu$. Moreover, \Cref{prop:fondam} states that $A(\mu,\zeta)$ is
always non-positive. It turns out that the second term is minimal over all positive probability densities $g$ when it is proportional to $|\bmuf|^{1/(1-\varrho)}$, as we show in \Cref{lem:optim:forall:g} below.

\begin{lem}\label{lem:optim:forall:g}
Let $\varrho \in \Rset \setminus [0,1]$. Then, for any positive probability density $g$ w.r.t $\mu$, we have
$$
\mu\lr{|\bmuf| g^\varrho} \geq \lrb{\mu\lr{|\bmuf|^{1/(1-\varrho)}}}^{1-\varrho} \eqsp,
$$
with equality if and only if $g \propto  |\bmuf|^{1/(1-\varrho)}$.
\end{lem}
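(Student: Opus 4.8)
The statement to prove is Lemma~\ref{lem:optim:forall:g}: for $\varrho\in\rset\setminus[0,1]$ and any positive probability density $g$ with respect to $\mu$,
$$
\mu\lr{|\bmuf| g^\varrho} \geq \lrb{\mu\lr{|\bmuf|^{1/(1-\varrho)}}}^{1-\varrho}\;,
$$
with equality iff $g\propto |\bmuf|^{1/(1-\varrho)}$.

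\textbf{Approach.} The plan is to apply a reverse Hölder / Jensen-type inequality. Since $\varrho\in\rset\setminus[0,1]$, the exponent $1-\varrho\in\rset\setminus[0,1]$ as well, so one of $\varrho$, $1-\varrho$ is negative and the other is $>1$; in both regimes the function $u\mapsto u^{\varrho}$ on $\rset_{>0}$ is convex. The cleanest route I would take is Hölder's inequality in its reversed form. Write $|\bmuf| = (|\bmuf| g^{\varrho})^{s}\cdot(g^{-\varrho})^{s}\cdot 1$ for a suitable split, or more directly: set $h = |\bmuf|^{1/(1-\varrho)}$, so that $|\bmuf| = h^{1-\varrho}$. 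Then I want to compare $\mu(h^{1-\varrho} g^{\varrho})$ with $\mu(h)^{1-\varrho} = \mu(h)^{1-\varrho}\,\mu(g)^{\varrho}$ (using $\mu(g)=1$). This is exactly the shape of the reverse Hölder inequality with conjugate exponents: writing $p = 1/(1-\varrho)$ and $q = 1/\varrho$ one has $1/p + 1/q = 1$, but since $\varrho\notin[0,1]$ exactly one of $p,q$ is negative, which is precisely the situation where Hölder reverses. Concretely, $\mu(h^{1-\varrho}g^{\varrho}) = \mu\big((h)^{1/p}(g)^{1/q}\big) \geq \mu(h)^{1/p}\mu(g)^{1/q} = \mu(h)^{1-\varrho}$, with the inequality direction flipped relative to ordinary Hölder because of the negative exponent.

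\textbf{Key steps, in order.} First, I would state the algebraic identities: $h:=|\bmuf|^{1/(1-\varrho)}$ is well-defined and positive $\mu$-a.e. (using $|\bmuf|$ finite $\mu$-a.e., which holds since $\mu(|\bmuf|)<\infty$ is assumed wherever this lemma is invoked), and $|\bmuf| g^{\varrho} = h^{1-\varrho}g^{\varrho}$. Second, I would recall the reverse Hölder inequality precisely: if $0<r<1$ and $r'=r/(r-1)<0$ is its conjugate, then for non-negative measurable $\phi,\psi$, $\int \phi\psi\,\rmd\mu \geq \big(\int\phi^{r}\,\rmd\mu\big)^{1/r}\big(\int\psi^{r'}\,\rmd\mu\big)^{1/r'}$ provided the right-hand side makes sense (i.e. $\int\psi^{r'}\,\rmd\mu<\infty$); this is a standard consequence of ordinary Hölder applied after a change of variables, or one can derive it directly from convexity of $u\mapsto u^{\varrho}$ and Jensen. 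Third, I would pick $r = 1-\varrho$ if $1-\varrho\in(0,1)$, equivalently $\varrho>1$; in the complementary case $\varrho<0$ one has $1-\varrho>1$ and instead takes $r=\varrho\in\rset\setminus[0,1]$... so cleaner is to just handle the two sign cases of $\varrho$ symmetrically, or — better — present it via Jensen directly to avoid case analysis, as I describe next. Fourth, I would extract the equality condition from the equality condition in Hölder/Jensen.

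\textbf{A cleaner Jensen-based variant.} To sidestep the reverse-Hölder bookkeeping I would actually argue as follows. Let $\nu_h$ be the probability measure $\nu_h(\rmd\theta) = h(\theta)\,\mu(\rmd\theta)/\mu(h)$, where $h=|\bmuf|^{1/(1-\varrho)}$ and $\mu(h)=\mu(|\bmuf|^{1/(1-\varrho)})\in(0,\infty)$ (positivity is clear; finiteness is part of the hypothesis in the applications, and if $\mu(h)=\infty$ the claimed bound is trivially $+\infty\le$ something only if that something is $+\infty$, so one notes the inequality is vacuous or one simply assumes finiteness as in \Cref{prop:fondam}'s usage). Then
$$
\mu(|\bmuf| g^{\varrho}) = \mu(h\, (h^{-\varrho}g^{\varrho})) = \mu(h)\,\nu_h\!\lr{\lr{g/h}^{\varrho}} \geq \mu(h)\,\lr{\nu_h(g/h)}^{\varrho},
$$
by Jensen's inequality applied to the convex function $u\mapsto u^{\varrho}$ on $\rset_{>0}$ (convex since $\varrho\notin[0,1]$). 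Now $\nu_h(g/h) = \mu(g)/\mu(h) = 1/\mu(h)$, so the right-hand side equals $\mu(h)\cdot\mu(h)^{-\varrho} = \mu(h)^{1-\varrho}$, which is exactly the claim. Equality in Jensen for the strictly convex $u\mapsto u^{\varrho}$ forces $g/h$ to be $\nu_h$-a.s. constant, hence $\mu$-a.s. constant (as $\nu_h\sim\mu$ because $h>0$ $\mu$-a.e.), i.e. $g\propto h = |\bmuf|^{1/(1-\varrho)}$, and conversely this proportionality clearly yields equality.

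\textbf{Main obstacle.} The only delicate point is integrability/finiteness: one must ensure $\mu(h) = \mu(|\bmuf|^{1/(1-\varrho)})$ is finite (so $\nu_h$ is a genuine probability measure and the exponent manipulations are legitimate) and that $u\mapsto u^{\varrho}$ being applied under $\nu_h$ to $g/h$ does not produce an ill-defined expression when $\varrho<0$ and $g/h$ can be large — but $u^{\varrho}\ge 0$ always, so Jensen applies with values in $[0,+\infty]$ and the inequality is valid even if some integrals are $+\infty$, which is harmless since then the bound is only made weaker. I would therefore note explicitly that the lemma is applied in \Cref{prop:fondam} precisely in the regime where these quantities are finite, and present the proof under the implicit convention that both sides may be $+\infty$. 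Everything else is routine.
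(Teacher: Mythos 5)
Your Jensen-based argument is correct and is essentially the paper's proof with the roles of the two densities swapped: the paper applies Jensen under the probability measure $g\,\rmd\mu$ to the strictly convex map $x\mapsto x^{1-\varrho}$ evaluated at $|\bmuf|^{1/(1-\varrho)}/g$, which gives the bound in one line and, since $g\,\rmd\mu$ is already a probability measure, avoids normalising by $\mu\lr{|\bmuf|^{1/(1-\varrho)}}$ and hence the finiteness caveat you need for your measure $\nu_h$. The equality case is read off from strict convexity exactly as you do, so the two proofs coincide in substance.
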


\begin{proof}
The function $x\mapsto x^{1-\varrho}$ is  strictly convex for
$\varrho \in \rset\setminus[0,1]$. Thus Jensen's inequality yields, for any
positive probability  density $g$ w.r.t. $\mu$,
\begin{align}\label{eq:optimirhs:bound}
\mu\lr{|\bmuf| g^\varrho} &= \int_\Tset \mu(\rmd \theta) \left(\frac{|\bmuf(\theta)|^{1/(1-\varrho)}}{g(\theta)}\right)^{1-\varrho} g(\theta)  \geq \lrb{\mu\lr{|\bmuf|^{1/(1-\varrho)}}}^{1-\varrho}
\end{align}
which finishes the proof of the inequality. The next statement follows from the
case of equality in Jensen's inequality: $g$ must be proportional to
$|\bmuf|^{1/(1-\varrho)}$.
\end{proof}

The next lemma shows that this
choice leads to a non-positive second term, thus implying that
$\Psif(\zeta k)\leq\Psif(\mu k)$.

\begin{lem} \label{lem:optimrhs}
Assume \ref{hyp:positive}. Let $\alpha \in \Rset \setminus \lrcb{1}$ and assume that $\varrho$ satisfies \ref{hyp:f}. Then $\eta= (1-\alpha) / (1-\varrho)$ satisfies \eqref{eq:admiss:varrho} for any $\mu \in \meas{1}(\Tset)$ such
that $\mu(|\bmuf|)<\infty$.
\end{lem}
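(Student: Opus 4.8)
\textbf{Plan for the proof of Lemma~\ref{lem:optimrhs}.}
The plan is to show that the specific choice $\eta=(1-\alpha)/(1-\varrho)$ makes the density $g$ appearing in condition~\eqref{eq:admiss:varrho} proportional to $|\bmuf|^{1/(1-\varrho)}$, so that the equality case of \Cref{lem:optim:forall:g} applies and the remaining inequality in \eqref{eq:admiss:varrho} collapses to an identity (here $\cte=0$, as stated). First I would recall that under \ref{hyp:f}, as noted in the proof of \Cref{prop:fondam}, $\varrho\in\rset\setminus[0,1]$ and $\mathrm{sg}(\varrho)=\mathrm{sg}(\alpha-1)$, so $(1-\alpha)/(1-\varrho)>0$ and the exponent $\eta$ is a well-defined positive number; in particular $\eta/(1-\alpha)=1/(1-\varrho)$.

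The first substantive step is to check the left-hand inequalities $0<\mu(|\bmuf+\cte|^{\eta/(1-\alpha)})<\infty$, i.e. $0<\mu(|\bmuf|^{1/(1-\varrho)})<\infty$ in the $\cte=0$ case. Finiteness: since $\mu(|\bmuf|)<\infty$ by hypothesis, I would distinguish the two sign regimes of $\varrho$. If $\varrho>1$ then $1/(1-\varrho)\in(-1,0)$, so $x\mapsto x^{1/(1-\varrho)}$ is bounded above on the set $\{|\bmuf|\geq1\}$ and $\leq |\bmuf|^{-1}\cdot|\bmuf|$ where needed — more simply, $1/(1-\varrho)<0$ gives $|\bmuf|^{1/(1-\varrho)}\leq 1+|\bmuf|$ is false in general near $0$, so I would instead argue via $x^{1/(1-\varrho)}\leq 1$ on $\{x\geq 1\}$ and handle $\{x<1\}$; one cleaner route is to note that for $\varrho>1$ we have $1/(1-\varrho)\in(-1,0)$ and use the elementary bound $x^t\leq 1+x$ valid for $t\in[0,1]$ applied after writing $x^{1/(1-\varrho)}=(x^{-1})^{-1/(1-\varrho)}$ with $-1/(1-\varrho)\in(0,1)$, giving $x^{1/(1-\varrho)}\leq 1+x^{-1}$ — but $x^{-1}$ need not be $\mu$-integrable, so the truly robust argument is simply: for $\varrho>1$, $|\bmuf|^{1/(1-\varrho)}\leq 1$ on $\{|\bmuf|\geq 1\}$ and on $\{|\bmuf|<1\}$ it is $<$ some finite constant only if we have a lower bound on $|\bmuf|$. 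Since that is not guaranteed, I expect the cleanest statement is that we simply require the integrability to hold and that \emph{this is exactly} condition~\eqref{eq:admiss:varrho} being satisfied in the sense of the hypotheses of \Cref{thm:monotone}; concretely, for $\varrho<0$ we get $1/(1-\varrho)\in(0,1)$ and then $|\bmuf|^{1/(1-\varrho)}\leq 1+|\bmuf|$ is immediate, giving finiteness from $\mu(|\bmuf|)<\infty$, while for $\varrho>1$ one has $\alpha>1$ and $\int\ratio\,\rmd\nu<\infty$ arguments à la Jensen bound $|\bmuf|$ and its powers. Positivity of $\mu(|\bmuf|^{1/(1-\varrho)})$ follows since $|\bmuf|>0$ $\mu$-a.e. under \ref{hyp:positive} (the integrand of $\bmuf(\theta)$ has a sign and $k(\theta,\cdot)>0$), hence $|\bmuf|^{1/(1-\varrho)}>0$ $\mu$-a.e. and its integral is $>0$.

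The second and main step is the inequality $\mu(|\bmuf|g^\varrho)\leq\mu(|\bmuf|)$ with $g=|\bmuf|^{\eta/(1-\alpha)}/\mu(|\bmuf|^{\eta/(1-\alpha)})=|\bmuf|^{1/(1-\varrho)}/\mu(|\bmuf|^{1/(1-\varrho)})$. I would compute $g^\varrho=|\bmuf|^{\varrho/(1-\varrho)}/\mu(|\bmuf|^{1/(1-\varrho)})^\varrho$, so that $|\bmuf|g^\varrho=|\bmuf|^{1+\varrho/(1-\varrho)}/\mu(\cdots)^\varrho=|\bmuf|^{1/(1-\varrho)}/\mu(|\bmuf|^{1/(1-\varrho)})^\varrho$, whence
\[
\mu(|\bmuf|g^\varrho)=\frac{\mu(|\bmuf|^{1/(1-\varrho)})}{\mu(|\bmuf|^{1/(1-\varrho)})^\varrho}=\mu(|\bmuf|^{1/(1-\varrho)})^{1-\varrho}.
\]
By \Cref{lem:optim:forall:g} applied with this very $g$ — or rather by the inequality there read in the reverse direction, since that lemma says $\mu(|\bmuf|g^\varrho)\geq\mu(|\bmuf|^{1/(1-\varrho)})^{1-\varrho}$ with equality precisely when $g\propto|\bmuf|^{1/(1-\varrho)}$ — this lower bound is attained, and separately one sees $\mu(|\bmuf|^{1/(1-\varrho)})^{1-\varrho}\leq\mu(|\bmuf|)$ by Jensen's inequality applied to the convex function $x\mapsto x^{1/(1-\varrho)}$ when $\varrho<0$ (so $1/(1-\varrho)\in(0,1)$, concave — so it goes the other way!) — here I must be careful about which direction Jensen gives. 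In fact the correct final comparison is: $\mu(|\bmuf|g^\varrho)=\mu(|\bmuf|^{1/(1-\varrho)})^{1-\varrho}$ and we want this $\leq\mu(|\bmuf|)$; writing $t=1/(1-\varrho)$ and $\nu$ the probability $\mu$, this is $(\mu|\bmuf|^{t})^{1/t}\leq\mu|\bmuf|$ when $1-\varrho=1/t>0$, i.e. it is the statement $\|\bmuf\|_{L^t(\mu)}\leq\|\bmuf\|_{L^1(\mu)}$, which holds iff $t\leq 1$. Now $t=1/(1-\varrho)\leq 1\iff 1-\varrho\geq 1\iff\varrho\leq 0$, and when $\varrho>1$ one has $t<0$ and $1-\varrho<0$ so $(\mu|\bmuf|^t)^{1/t}$ is the $L^t$ ``norm'' with negative exponent which is $\leq\|\bmuf\|_{L^1}$ by the reverse-Hölder/power-mean inequality. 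So in both regimes of \ref{hyp:f} the bound $\mu(|\bmuf|g^\varrho)\leq\mu(|\bmuf|)$ holds, which is the second half of~\eqref{eq:admiss:varrho}.

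\textbf{Main obstacle.} The delicate point is keeping the directions of all the Jensen/power-mean inequalities straight across the two sign regimes $\varrho<0$ and $\varrho>1$ (equivalently $\alpha>1$ versus $\alpha<1$ within \ref{hyp:f}), together with the finiteness bookkeeping for $\mu(|\bmuf|^{1/(1-\varrho)})$ when the exponent is negative. I expect the finiteness for $\varrho>1$ to require invoking the same Jensen-type bound on $|\bmuf|$ already used implicitly in the paper (via $\int\ratio\,\rmd\nu<\infty$ under \ref{hyp:positive}); everything else is a direct substitution exploiting that $\eta=(1-\alpha)/(1-\varrho)$ is exactly the exponent making $g$ proportional to $|\bmuf|^{1/(1-\varrho)}$, i.e. the equality case of \Cref{lem:optim:forall:g}.
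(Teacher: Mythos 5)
Your proposal is correct and follows essentially the paper's own route: with $\eta=(1-\alpha)/(1-\varrho)$ the density $g$ in \eqref{eq:admiss:varrho} is exactly the equality case of \Cref{lem:optim:forall:g}, so $\mu(|\bmuf| g^\varrho)=\lrb{\mu\lr{|\bmuf|^{1/(1-\varrho)}}}^{1-\varrho}$, and it remains to compare this with $\mu(|\bmuf|)$. The only real difference is in how that comparison is obtained: you re-derive $\lrb{\mu\lr{|\bmuf|^{1/(1-\varrho)}}}^{1-\varrho}\leq\mu(|\bmuf|)$ by a case-by-case Jensen/power-mean argument in the two regimes $\varrho<0$ and $\varrho>1$, whereas the paper simply applies the already-established bound \eqref{eq:optimirhs:bound} with $g=1$, which handles both regimes in one stroke (convexity of $u\mapsto u^{1-\varrho}$ for $\varrho\in\rset\setminus[0,1]$) and, when $1-\varrho>0$ (i.e. $\varrho<0$, the only regime actually invoked later in the proof of \Cref{thm:admiss}), simultaneously delivers the finiteness $\mu(|\bmuf|^{1/(1-\varrho)})<\infty$ that you obtained via $|\bmuf|^{1/(1-\varrho)}\leq 1+|\bmuf|$. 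The finiteness issue you flag for $\varrho>1$ (negative exponent) is not elaborated in the paper's proof either, whose last sentence only settles positivity of $\mu(|\bmuf|^{\eta/(1-\alpha)})$, using \ref{hyp:positive} when the exponent is positive and $\mu(|\bmuf|)<\infty$ when it is negative; so on that point your attempt is no less complete than the original.
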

\begin{proof}
We apply \eqref{eq:optimirhs:bound} with $g=1$ and get that
\begin{align}\label{eq:optimirhs:bound:two}
\lrb{\mu\lr{|\bmuf|^{1/(1-\varrho)}}}^{1-\varrho}\leq \mu(|\bmuf|)<\infty\eqsp.
\end{align}
Then \eqref{eq:admiss:varrho} can be readily checked with  $\eta = (1-\alpha)/(1-\varrho)$. Set $\phi = \eta /(1-\alpha)$. Using that $\mu(|\bmuf|)<\infty$ when $\phi < 0$ and \ref{hyp:positive} for $\phi > 0$, we obtain $\mu(|\bmuf|^\phi) > 0$, which concludes the proof.
\end{proof}

While \Cref{lem:optimrhs} seems to advocate for $g=\rmd \zeta/\rmd \mu$ to be proportional to $|\bmuf|^{1/(1-\varrho)}$, notice that this choice of $g$ might not be optimal to minimize $\Psif(\zeta k)-\Psif(\mu k)$, as $A(\mu,\zeta)$ also depends on $g$ through $\zeta$. In the next lemma, we thus propose another choice of the tuning parameter $\eta$, which also satisfies \eqref{eq:admiss:varrho} for any $\mu \in \meas{1}(\Tset)$ such
that $\mu(|\bmuf|)<\infty$.

\begin{lem} \label{lem:unSurRho}
Assume \ref{hyp:positive}. Let $\alpha \in \Rset \setminus \lrcb{1}$ and assume that $\varrho$ satisfies \ref{hyp:f}. Let $\mu \in \meas{1}(\Tset)$ be such that $\mu(|\bmuf|)<\infty$. Further assume that $|\varrho| \geq 1$, then $\eta=(\alpha-1)/\varrho$ satisfies \eqref{eq:admiss:varrho}.
\end{lem}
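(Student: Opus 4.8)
The plan is to substitute $\eta=(\alpha-1)/\varrho$ directly into the density $g$ prescribed in~\eqref{eq:admiss:varrho} and then reduce both requirements of~\eqref{eq:admiss:varrho} to a single application of Jensen's inequality. As in the preceding proofs of this subsection, I would only treat the case $\cte=0$. The starting observation is that $\eta=(\alpha-1)/\varrho$ gives $\eta/(1-\alpha)=-1/\varrho$, so the prescribed density is $g=|\bmuf|^{-1/\varrho}/Z$ with $Z:=\mu(|\bmuf|^{-1/\varrho})$, and hence $g^{\varrho}=|\bmuf|^{-1}/Z^{\varrho}$.

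The heart of the argument is then the elementary identity
\begin{equation*}
\mu\big(|\bmuf|\,g^{\varrho}\big)=\frac{1}{Z^{\varrho}}\,\mu\big(|\bmuf|\cdot|\bmuf|^{-1}\big)=\frac{1}{Z^{\varrho}}\;,
\end{equation*}
which makes the second condition in~\eqref{eq:admiss:varrho}, namely $\mu(|\bmuf|\,g^{\varrho})\leq\mu(|\bmuf|)$, equivalent to $\mu(|\bmuf|)\,Z^{\varrho}\geq1$. To prove this I would set $t=|\bmuf|$, a positive random variable under the probability measure $\mu$, and $r=-1/\varrho$; the hypothesis $|\varrho|\geq1$ ensures $r\in[-1,1]\setminus\{0\}$. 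If $\varrho>1$ then $r\in(-1,0)$, the map $x\mapsto x^{r}$ is convex on $(0,\infty)$, so Jensen gives $Z=\mu(t^{r})\geq\mu(t)^{r}=\mu(t)^{-1/\varrho}$, and raising to the positive power $\varrho$ yields $Z^{\varrho}\geq\mu(t)^{-1}$. If $\varrho\leq-1$ then $r\in(0,1]$, the map $x\mapsto x^{r}$ is concave, Jensen gives $Z=\mu(t^{r})\leq\mu(t)^{-1/\varrho}$, and raising to the negative power $\varrho$ reverses this into $Z^{\varrho}\geq\mu(t)^{-1}$. In both cases $\mu(|\bmuf|)\,Z^{\varrho}\geq1$, the desired inequality.

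It remains to check the first condition of~\eqref{eq:admiss:varrho}, $0<Z<\infty$. Positivity follows from~\ref{hyp:positive}: since $k>0$, $\mu k>0$ and $\nu(\{p>0\})>0$, the function $\bmuf$ has constant sign with $|\bmuf(\theta)|>0$ for every $\theta$, while $\mu(|\bmuf|)<\infty$ forces $|\bmuf|<\infty$ $\mu$-a.e.; hence $|\bmuf|^{-1/\varrho}>0$ $\mu$-a.e.\ and $Z>0$. For finiteness, when $\varrho\leq-1$ the concavity of $x\mapsto x^{-1/\varrho}$ used above already gives $Z\leq\mu(|\bmuf|)^{-1/\varrho}<\infty$, in analogy with the finiteness step of \Cref{lem:optimrhs}.

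The only place where care is genuinely needed is twofold, and both aspects are controlled by $|\varrho|\geq1$ rather than by~\ref{hyp:f} alone. First, $|\varrho|\geq1$ is what keeps $r=-1/\varrho$ in $[-1,1]$: it is precisely the pairing of the convexity (for $\varrho>1$) or concavity (for $\varrho\leq-1$) of $x\mapsto x^{r}$ with the sign of $\varrho$ that makes Jensen point the right way, whereas for $-1<\varrho<0$ — which~\ref{hyp:f} alone would permit — the inequality $\mu(|\bmuf|)\,Z^{\varrho}\geq1$ reverses. Second, the finiteness of $Z$ is immediate only when $\varrho\leq-1$; for $\varrho>1$ (equivalently $\alpha>1$) the exponent $-1/\varrho$ is negative and $\mu(|\bmuf|)<\infty$ does not obviously bound $\mu(|\bmuf|^{-1/\varrho})$, but this branch is not required since this lemma is invoked in the proof of \Cref{thm:admiss} only for $\alpha<0$, where~\ref{hyp:f} together with $|\varrho|\geq1$ forces $\varrho\leq-1$ (cf.\ cases~\ref{item:admiss-alpha-a} and~\ref{item:admiss-alpha-b}).
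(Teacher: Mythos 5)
Your proposal is correct and takes essentially the same route as the paper's proof: the same density $g\propto|\bmuf|^{-1/\varrho}$, the same identity $\mu(|\bmuf|\,g^{\varrho})=[\mu(|\bmuf|^{-1/\varrho})]^{-\varrho}$, and the same power-mean inequality --- the paper simply applies Jensen once to the convex map $u\mapsto u^{-\varrho}$ (convex precisely because $|\varrho|\geq1$), which is your two sign cases combined into one step. Your additional care about the requirement $0<\mu(|\bmuf|^{-1/\varrho})<\infty$, including the observation that finiteness in the branch $\varrho>1$ (i.e.\ $\alpha>1$) does not follow from $\mu(|\bmuf|)<\infty$ but is never needed since \Cref{thm:admiss} only invokes the lemma with $\varrho\leq-1$, is sound and in fact more explicit than the paper's own one-line treatment of this point.
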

\begin{proof}
Setting $g\propto |\bmuf|^{-1/\varrho}$, we get
$$
\mu(|\bmuf|g^\varrho)=\mu(|\bmuf|^{1-\varrho/\varrho}) [\mu(|\bmuf|^{-1/\varrho})]^{-\varrho}=[\mu(|\bmuf|^{-1/\varrho})]^{-\varrho}\leq \mu(|\bmuf|) 
$$
where the last inequality follows from Jensen's inequality applied to the convex function $u \mapsto u^{-\varrho}$ (since $|\varrho| \geq 1$). Since $\mu(|\bmuf|)<\infty$, the parameter $\eta=(\alpha-1)/\varrho$ satisfies \eqref{eq:admiss:varrho}. Set $\phi = \eta/(1-\alpha)$. Using that $\mu(|\bmuf|)<\infty$ when $\phi < 0$ and \ref{hyp:positive} for $\phi > 0$, we obtain $\mu(|\bmuf|^\phi) > 0$, which concludes the proof.
\end{proof}

\Cref{lem:optimrhs} and \Cref{lem:unSurRho} allow us to define a range of values for $\eta$ that decreases $\Psif$ after one transition of the Power Descent, under the assumption that $\varrho$ satisfies \ref{hyp:f}. Now, in
order to prove \Cref{thm:admiss} and given $\alpha \in \rset \setminus \lrcb{1}$, we need to check which values of $\varrho$ satisfy the conditions expressed in \ref{hyp:f}. \newline

%\Cref{lem:optimrhs} and \Cref{lem:unSurRho} allow us to define a range of values for $\eta$ that decreases $\Psif(\mu_n k)$ at each iteration step, under the assumption that $\varrho$ satisfies \ref{hyp:f}. Now, in order to prove \Cref{thm:admiss} and given $\alpha \in \rset \setminus \lrcb{1}$, we need to check which values of $\varrho$ satisfy the conditions expressed in \ref{hyp:f}.

\begin{proof}[Proof of \Cref{thm:admiss}] The proof consists in verifying that we can apply \Cref{thm:monotone}, that
  is, given $\alpha \in \rset \setminus \lrcb{1}$, we must find a range of constants $\varrho$ which satisfy \ref{hyp:f}. We then use \Cref{lem:optimrhs} or \Cref{lem:unSurRho} to deduce that, for the provided constants $\eta$, \eqref{eq:admiss:varrho} holds. % for all $\mu_n$ with $n\geq 1$.

  \begin{enumerateList}
\item Assumption \ref{hyp:f} holds for
  all $\varrho<0$, with $\frho(u)= -\log(u)/\varrho$. Moreover, by definition of
  $\bmuf[\mu]$, we get for all $n \geq 1$,
$$
\mu(|\bmuf[\mu]|)=\int_\Yset  \mu k(y) \frac{p(y)}{\mu k(y)} \nu(\rmd y)=\int_\Yset p(y) \nu(\rmd y)<\infty\eqsp.
$$
Combining with \Cref{lem:optimrhs} and \Cref{lem:unSurRho}, \eqref{eq:admiss:varrho} holds for all $\mu \in \meas{1}(\Tset)$ and for any $\eta \in (0,1]$.
\item  Observing that for $\alpha \notin \{0,1\}$,
$$
    \frho(u)=\frac{1}{\alpha(\alpha-1)} \lr{u^{\alpha/\varrho}-1}\;,
    $$
    we get that  \ref{hyp:f} holds for $\varrho\leq\alpha$ if $\alpha<0$
    Lemmas \ref{lem:optimrhs} and~\ref{lem:unSurRho} provide the corresponding
    ranges for $\eta$ in
    Cases~\ref{item:admiss-alpha-a} and~\ref{item:admiss-alpha-b}. To finish the proof, we now show that for all $\mu \in \meas{1}(\Tset)$, $\mu(|\bmuf[\mu]|)$ is finite, so that Lemmas \ref{lem:optimrhs}
    and~\ref{lem:unSurRho} can indeed be applied.

    Since $u\falpha'(u)=\alpha \falpha(u)+1/(\alpha-1)$, we have, for all $n \geq 1$,
\begin{align} \label{eq:majoMuN}
\mu(|\bmuf[\mu]|)&=\int_\Yset \left| \lr{\frac{\mu k(y)}{p(y)}} \falpha'\lr{\frac{\mu k(y)}{p(y)}}\right| p(y)\nu(\rmd y)\\
&\leq |\alpha| \int_\Yset \left|\falpha\lr{\frac{\mu k(y)}{p(y)}}\right| p(y)\nu(\rmd y) + \frac{1}{|\alpha-1|} \nonumber
\end{align}
Using that $\Psif(\mu k) > -\infty$ (which is a consequence of \ref{hyp:positive} and of Jensen's inequality applied to the convex function $u \mapsto u \falpha(1/u)$), the r.h.s is finite if and only if $\Psif(\mu k)$ is
finite, which is what we have assumed and thus the proof is finished.
\end{enumerateList}

\end{proof}

\section{Deferred Proofs and Results of \Cref{sec:expo-family}}
\label{sec:proofs:section5}

We start with a useful lemma.

\subsection{A Useful Lemma}

The following lemma will be used for the proofs of Theorems
\ref{thm:generic-update-canonic-expo-family} and \ref{thm:argrmax-objective-equivalence} and \Cref{thm:gradiant-g-canonic-expo-family}.

\begin{lem}\label{lem:useful-1}
  Let  $k^{(o)}$ satisfy \ref{hyp:expo}. We have, for all
  $\zeta,\zeta'\in\intE_0$,
  \begin{align}
    \label{eq:basin-expo-family-convex-ineq}
    &        \log \lr{ \frac{k^{(o)}(\zeta',y)}{k^{(o)}(\zeta,y)}}
      \geq \pscal[E]{S(y)-\nabla
      A(\zeta')}{\zeta'-\zeta}\;,
      \quad\text{with equality if and only if $\zeta'=\zeta$,} \\
    \label{eq:kullback-exponential-canonical}
    &   0\geq \int_\Yset
      \log \lr{\frac{k^{(o)}(\zeta',y)}{k^{(o)}(\zeta,y)}}\,k^{(o)}(\zeta,y)\;\nu(\rmd
      y)\geq  \pscal[E]{\nabla A(\zeta)-\nabla A(\zeta')}{\zeta'-\zeta}\;.
  \end{align}
\end{lem}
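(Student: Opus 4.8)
The statement is \Cref{lem:useful-1}, which asserts two inequalities for members of a canonical exponential family. The plan is to derive both from the elementary strict convexity of the log-partition function $A$ on $\intE_0$, together with the identity $\nabla A(\zeta) = \int_\Yset S(y)\,k^{(o)}(\zeta,y)\,\nu(\rmd y)$ recalled in~\eqref{eq:partialA:expo-family}.

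First I would prove~\eqref{eq:basin-expo-family-convex-ineq}. Writing out the log-ratio directly from the canonical form,
\[
\log \lr{ \frac{k^{(o)}(\zeta',y)}{k^{(o)}(\zeta,y)}}
= \pscal[E]{\zeta'-\zeta}{S(y)} - \lr{A(\zeta')-A(\zeta)}\;,
\]
so the claimed inequality is equivalent to
\[
A(\zeta')-A(\zeta) \leq \pscal[E]{\nabla A(\zeta')}{\zeta'-\zeta}\;,
\]
which is precisely the gradient inequality for the convex function $A$ at the point $\zeta'$ (i.e. $A(\zeta)\geq A(\zeta')+\pscal[E]{\nabla A(\zeta')}{\zeta-\zeta'}$). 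Since \ref{hyp:expo}\ref{item:expo1} ensures that the covariance operator $\nabla\nabla^T A(\zeta')$ is positive definite, $A$ is \emph{strictly} convex on the convex set $\intE_0$, and hence equality holds if and only if $\zeta'=\zeta$. This settles the first display.

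Next I would prove~\eqref{eq:kullback-exponential-canonical}. The integral in the middle is (minus) a Kullback--Leibler divergence between the densities $k^{(o)}(\zeta,\cdot)$ and $k^{(o)}(\zeta',\cdot)$, so the left inequality $\int_\Yset \log(k^{(o)}(\zeta',y)/k^{(o)}(\zeta,y))\,k^{(o)}(\zeta,y)\,\nu(\rmd y)\leq 0$ is just nonnegativity of the KL divergence, which itself follows by integrating the pointwise bound $\log u \leq u-1$ (equivalently, Jensen applied to $-\log$). For the right inequality, integrate~\eqref{eq:basin-expo-family-convex-ineq} against $k^{(o)}(\zeta,y)\,\nu(\rmd y)$: the left-hand side becomes the middle quantity of~\eqref{eq:kullback-exponential-canonical}, while on the right-hand side, using~\eqref{eq:partialA:expo-family},
\[
\int_\Yset \pscal[E]{S(y)-\nabla A(\zeta')}{\zeta'-\zeta}\,k^{(o)}(\zeta,y)\,\nu(\rmd y)
= \pscal[E]{\nabla A(\zeta)-\nabla A(\zeta')}{\zeta'-\zeta}\;,
\]
which is exactly the stated lower bound. (One should note that integrability of $\pscal[E]{S(y)}{\zeta'-\zeta}$ against $k^{(o)}(\zeta,\cdot)$ holds because $\nabla A(\zeta)$ is finite on $\intE_0$, so the interchange is justified.)

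I do not expect a serious obstacle here: the whole lemma is a packaging of convexity of $A$ and nonnegativity of KL. The only mild care needed is (i) invoking \ref{hyp:expo} to upgrade convexity to strict convexity for the equality case in~\eqref{eq:basin-expo-family-convex-ineq}, and (ii) checking the integrability needed to take expectations of the affine-in-$S(y)$ expressions under $k^{(o)}(\zeta,\cdot)$, which is immediate from $\zeta\in\intE_0$ and finiteness of $\nabla A(\zeta)$.
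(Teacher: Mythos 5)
Your proof is correct and follows essentially the same route as the paper's: rewrite the log-ratio via the canonical form, use convexity of $A$ (the paper makes this explicit with a second-order Taylor expansion with integral remainder and the positive definite Hessian, which is just the explicit form of your strict-convexity gradient inequality), then integrate \eqref{eq:basin-expo-family-convex-ineq} against $k^{(o)}(\zeta,\cdot)$ and use \eqref{eq:partialA:expo-family} together with nonnegativity of the Kullback--Leibler divergence for \eqref{eq:kullback-exponential-canonical}. No gaps.
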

\begin{proof}
  Let $\zeta,\zeta'\in\intE_0$. We have, since $\intE_0$ is convx, 
  \begin{align*} 
    \log \lr{ \frac{k^{(o)}(\zeta',y)}{k^{(o)}(\zeta,y)}}
    &=\pscal[E]{S(y)}{\zeta'-\zeta}
      +A(\zeta)-
      A(\zeta')\\
        &=\pscal[E]{S(y)}{\zeta'-\zeta}+ \pscal[E]{\nabla
          A(\zeta')}{\zeta-\zeta'}\\
        % &\phantom{=}+\int_0^1(1-t)\,
        %   \pscal[E]{\lrb{\nabla\nabla^TA(t \,\zeta+(1-t)\,\zeta')}(\zeta-\zeta')}{\zeta-\zeta'}\;
        %   \rmd t \;.\\
          &\phantom{=}+(\zeta-\zeta')^T \lr{\int_0^1(1-t)\,
          \lrb{\nabla\nabla^TA(t \,\zeta+(1-t)\,\zeta')}\rmd t}(\zeta-\zeta')
           \;.
  \end{align*}
  The inequality and the equality case
  of~(\ref{eq:basin-expo-family-convex-ineq}) then follow from the
  fact that the hessian $\nabla\nabla^TA$ is positive definite in
  $\intE_0$.  Using~(\ref{eq:basin-expo-family-convex-ineq})
  and~(\ref{eq:partialA:expo-family}), we further obtain the lower
  bound displayed in~(\ref{eq:kullback-exponential-canonical}) on the negated
  exclusive Kullback-Leibler divergence between the two (absolutely
  continuous w.r.t. $\nu$) probability distributions with probability
  density functions $k^{(o)}(\zeta,\cdot)$ and $k^{(o)}(\zeta',\cdot)$
  respectively.
\end{proof}
%The proofs of \Cref{thm:generic-update-canonic-expo-family,thm:gradiant-g-canonic-expo-family} now easily follow. 

\subsection{Proofs of \Cref{sec:argm-solut-param-expo}}
\label{sec:argm-solut-param-expo:proofs}
The following lemma will be useful.
\begin{lem}\label{lem:cond-p-varphi}
  Let let $p$ and $g$ be functions measurable from $(\Yset,\Ysigma)$ to
  the Borel sets of $\rset_+$ such that
  \begin{equation}
    \label{eq:cond-p-varphi-hyp}
  0<\int_{\Yset}p(y)\,(1+g(y))^{1/(1-\alpha)}\;\nu(\rmd y) <\infty\;.    
  \end{equation}
  Let $\alpha\in[0,1)$ and $k:\Tset\times\Yset\to\rset_+$ be a positive
  kernel density with respect to $\nu$. Let $\theta_0\in\Tset$ and $\mu$ be
  a probability on $\Tset$ such that
  \begin{equation}
    \label{eq:cond-p-varphi-hyp2}
    \inf_{y\in\Yset}\frac{k(\theta_0,y)}{\mu k(y)} >0\;. 
  \end{equation}
  Then, setting
  $$
  \varphi(y)=k(\theta_0,y)^\alpha\,\lr{\frac{p(y)}{\mu k(y)}}^{1-\alpha}\;,
  $$
  we have $\int \varphi\;\rmd\nu>0$ and the probability density
  function $\check{\varphi}=\varphi/\lr{\int \varphi\;\rmd\nu}$
  satisfies
  \begin{equation}
    \label{eq:cond-p-varphi-conc}
  \int \check{\varphi}(y)\,g(y)\,\nu(\rmd y) <\infty \;.
\end{equation}
\end{lem}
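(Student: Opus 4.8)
The plan is to rewrite the conclusion~\eqref{eq:cond-p-varphi-conc} directly in terms of the hypothesis~\eqref{eq:cond-p-varphi-hyp} by unfolding the definitions of $\varphi$ and $\check\varphi$, and then to control the two factors $k(\theta_0,y)^\alpha$ and $(\mu k(y))^{-(1-\alpha)}$ appearing in $\varphi$ using the bound~\eqref{eq:cond-p-varphi-hyp2}. First I would observe that, since $\alpha\in[0,1)$, the quantity
$$
\varphi(y) = k(\theta_0,y)^\alpha\,\lr{\frac{p(y)}{\mu k(y)}}^{1-\alpha}
  = \lr{\frac{k(\theta_0,y)}{\mu k(y)}}^{\alpha}\,\frac{p(y)^{1-\alpha}}{(\mu k(y))^{1-2\alpha}}
$$
is not obviously the convenient form; instead I would keep it as $\varphi(y)=\lr{k(\theta_0,y)/\mu k(y)}^{\alpha}\,\lr{p(y)/\mu k(y)}^{1-\alpha}\,(\mu k(y))^{0}$ — more cleanly, write $\varphi(y) = \lr{\tfrac{k(\theta_0,y)}{\mu k(y)}}^\alpha\, p(y)^{1-\alpha}\,(\mu k(y))^{\alpha-1+\alpha}$, which is getting messy. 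The clean route is: set $c_0:=\inf_{y}k(\theta_0,y)/\mu k(y)$, which is $>0$ by~\eqref{eq:cond-p-varphi-hyp2}, and also note $k(\theta_0,y)/\mu k(y)\le 1$ always (since $\mu k(y)=\int k(\theta,y)\mu(\rmd\theta)\ge$ nothing in general — actually this upper bound need not hold, so I will only use the lower bound $c_0$). Then
$$
\varphi(y) = \lr{\frac{k(\theta_0,y)}{\mu k(y)}}^{\alpha}\,\lr{\frac{p(y)}{\mu k(y)}}^{1-\alpha}\,(\mu k(y))^{0}
$$
is still not right dimensionally; the correct algebra is $\varphi(y)=\lr{k(\theta_0,y)}^\alpha (\mu k(y))^{\alpha-1} p(y)^{1-\alpha}$, so $\varphi(y) = \lr{\tfrac{k(\theta_0,y)}{\mu k(y)}}^\alpha\,(\mu k(y))^{-(1-\alpha)}\cdot(\mu k(y))^{\alpha}\cdot(\mu k(y))^{-\alpha}\cdot p(y)^{1-\alpha}$… I will simply carry $\varphi(y)=k(\theta_0,y)^\alpha(\mu k(y))^{\alpha-1}p(y)^{1-\alpha}$ and bound $k(\theta_0,y)^\alpha = (k(\theta_0,y)/\mu k(y))^\alpha (\mu k(y))^\alpha \ge c_0^\alpha (\mu k(y))^\alpha$, giving $\varphi(y)\ge c_0^\alpha (\mu k(y))^{2\alpha-1}p(y)^{1-\alpha}$ — which points the wrong way when $2\alpha-1<0$.

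Given the sign difficulties above, the genuinely robust step is to use~\eqref{eq:cond-p-varphi-hyp2} in the form $\mu k(y)\le c_0^{-1}k(\theta_0,y)$, hence $\varphi(y)=k(\theta_0,y)^\alpha(\mu k(y))^{\alpha-1}p(y)^{1-\alpha}\ge k(\theta_0,y)^\alpha\lr{c_0^{-1}k(\theta_0,y)}^{\alpha-1}p(y)^{1-\alpha}=c_0^{1-\alpha}p(y)^{1-\alpha}$, and conversely $\varphi(y)\le k(\theta_0,y)^\alpha\lr{c_0^{-1}k(\theta_0,y)}^{\alpha-1}p(y)^{1-\alpha}$ fails the other direction; so for the \emph{lower} bound on $\int\varphi\,\rmd\nu$ I use $\varphi(y)\ge c_0^{1-\alpha}p(y)^{1-\alpha}$ together with $\int p^{1-\alpha}\,\rmd\nu\ge$ (finite positive by Jensen from~\eqref{eq:cond-p-varphi-hyp}, as the paper already noted in the display after~\eqref{eq:ratio}), which shows $\int\varphi\,\rmd\nu>0$. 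For the conclusion~\eqref{eq:cond-p-varphi-conc}, I bound the numerator: $\int\varphi(y)g(y)\,\nu(\rmd y)=\int k(\theta_0,y)^\alpha(\mu k(y))^{\alpha-1}p(y)^{1-\alpha}g(y)\,\nu(\rmd y)$; writing $k(\theta_0,y)^\alpha(\mu k(y))^{-\alpha}\le c_0^{-\alpha}\cdot(\text{something}\ge 1?)$ — here I instead use $k(\theta_0,y)\le$ no bound, so I rewrite $k(\theta_0,y)^\alpha(\mu k(y))^{\alpha-1}=\lr{k(\theta_0,y)/\mu k(y)}^\alpha(\mu k(y))^{-(1-\alpha)}$ and then apply Hölder's inequality with exponents $1/\alpha$ and $1/(1-\alpha)$ to the pair $\lr{k(\theta_0,y)/\mu k(y)}^\alpha$ and $(\mu k(y))^{-(1-\alpha)}p(y)^{1-\alpha}g(y)$: the first factor raised to $1/\alpha$ integrates against what measure? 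This is where the argument needs the right reference measure.

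The cleanest correct plan, then, is as follows. Apply Hölder with exponents $1/(1-\alpha)$ and $1/\alpha$:
$$
\int\varphi(y)g(y)\,\nu(\rmd y)
 = \int \lr{\frac{k(\theta_0,y)}{\mu k(y)}}^{\alpha}\cdot\lr{p(y)^{1-\alpha}\,(\mu k(y))^{-(1-\alpha)}\,g(y)}\;\nu(\rmd y),
$$
and bound $\lr{k(\theta_0,y)/\mu k(y)}^\alpha\le 1$ \emph{only if that ratio is $\le1$}; since it need not be, I instead use~\eqref{eq:cond-p-varphi-hyp2} to write $(\mu k(y))^{-(1-\alpha)}\le c_0^{-(1-\alpha)}k(\theta_0,y)^{-(1-\alpha)}$, so that
$$
\int\varphi(y)g(y)\,\nu(\rmd y)\le c_0^{-(1-\alpha)}\int k(\theta_0,y)^{\alpha}\,k(\theta_0,y)^{-(1-\alpha)}\,p(y)^{1-\alpha}\,g(y)\;\nu(\rmd y)
 = c_0^{-(1-\alpha)}\int k(\theta_0,y)^{2\alpha-1}p(y)^{1-\alpha}g(y)\,\nu(\rmd y),
$$
which again has the wrong sign on the exponent of $k(\theta_0,y)$ when $\alpha<1/2$. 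The resolution — and the step I expect to be the main obstacle — is that~\eqref{eq:cond-p-varphi-hyp2} must be applied the other way: $\mu k(y)\ge$ a \emph{lower} multiple of $k(\theta_0,y)$ is false, but $k(\theta_0,y)/\mu k(y)\ge c_0$ does give an \emph{upper} bound $\mu k(y)\le c_0^{-1}k(\theta_0,y)$ only. So the exponent $\alpha-1<0$ on $\mu k(y)$ in $\varphi$ means $(\mu k(y))^{\alpha-1}\ge (c_0^{-1}k(\theta_0,y))^{\alpha-1}=c_0^{1-\alpha}k(\theta_0,y)^{\alpha-1}$, i.e. we get a \emph{lower} bound on $\varphi$, consistent with the $\int\varphi\,\rmd\nu>0$ claim, but for the finiteness~\eqref{eq:cond-p-varphi-conc} we need an \emph{upper} bound on $\varphi$, which~\eqref{eq:cond-p-varphi-hyp2} does not directly supply. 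I therefore expect the actual proof to bound $\check\varphi(y)=\varphi(y)/\int\varphi\,\rmd\nu$ by first using Jensen's inequality (concavity of $u\mapsto u^{1-\alpha}$) to get $\mu k(y)^{1-\alpha}\le$ (no) — rather, to write $\varphi(y)=\lr{\tfrac{k(\theta_0,y)}{\mu k(y)}}\cdot k(\theta_0,y)^{\alpha-1}p(y)^{1-\alpha}\cdot(\mu k(y))^{?}$ and exploit that $k(\theta_0,y)/\mu k(y)\le$ [upper bound from~\eqref{eq:cond-p-varphi-hyp2}: namely $1/c_0$ is a lower bound on $\mu k/k$, giving $k/\mu k\le$ nothing]. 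In short, the key technical move is to combine the two-sided control $c_0\le k(\theta_0,y)/\mu k(y)$ with the integrability~\eqref{eq:cond-p-varphi-hyp} of $p(y)(1+g(y))^{1/(1-\alpha)}$ via Hölder's inequality with exponent $1/(1-\alpha)$ applied to the product $p(y)^{1-\alpha}(1+g(y))$, after reducing $\varphi(y)\le c_0^{\alpha-1}\,k(\theta_0,y)^{\alpha}k(\theta_0,y)^{\alpha-1}p(y)^{1-\alpha}$ to a bound where $k(\theta_0,y)$ drops out because the net exponent can be arranged to be $0$ by instead writing $\varphi(y)=(k(\theta_0,y)/\mu k(y))^{\alpha}\cdot(p(y)/\mu k(y))^{1-\alpha}$ and using $(\mu k(y))^{-1}\le c_0^{-1}k(\theta_0,y)^{-1}$ on the \emph{first} ratio's $\mu k$ and leaving the second — concretely $\varphi(y)\le c_0^{-\alpha}\,k(\theta_0,y)^{\alpha-?}$ — I will defer the precise bookkeeping to the writeup, but the structure is: (i) show $\int\varphi\,\rmd\nu\in(0,\infty)$ using~\eqref{eq:cond-p-varphi-hyp} with $g\equiv 0$ and the two-sided bound from~\eqref{eq:cond-p-varphi-hyp2}; (ii) show $\int\varphi\,g\,\rmd\nu<\infty$ by the same bound together with $\int p\,g^{1/(1-\alpha)}\,\rmd\nu<\infty$ and Hölder/Jensen; (iii) divide to get~\eqref{eq:cond-p-varphi-conc}. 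The main obstacle is correctly matching the signs of the exponents $\alpha$ and $\alpha-1$ against which direction~\eqref{eq:cond-p-varphi-hyp2} can be used, and organizing the Hölder step so that the factor $(1+g)^{1/(1-\alpha)}$ in the hypothesis lines up with $g$ in the conclusion.
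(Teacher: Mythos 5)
There is a genuine gap: your plan never actually produces the one estimate the lemma needs, namely an upper bound on $\int \varphi\, g\,\rmd\nu$. The paper's proof gets it in a single stroke: since $k(\theta_0,\cdot)$ is a probability density w.r.t.\ $\nu$, write $\varphi(y)g(y)=k(\theta_0,y)\lrb{\tfrac{p(y)}{\mu k(y)}\,g(y)^{1/(1-\alpha)}}^{1-\alpha}$ and apply Jensen's inequality for the concave map $x\mapsto x^{1-\alpha}$ under the probability measure $k(\theta_0,y)\,\nu(\rmd y)$, which gives
\begin{equation*}
\int \varphi(y)\,g(y)\;\nu(\rmd y)\;\leq\;\lr{\int \frac{k(\theta_0,y)}{\mu k(y)}\,p(y)\,g(y)^{1/(1-\alpha)}\;\nu(\rmd y)}^{1-\alpha}\;;
\end{equation*}
finiteness then follows from a uniform bound on the ratio $k(\theta_0,y)/\mu k(y)$ together with \eqref{eq:cond-p-varphi-hyp} (the same inequality with $g\equiv1$ gives $\int\varphi\,\rmd\nu<\infty$, and $\int\varphi\,\rmd\nu>0$ is immediate from $k>0$ and the left-hand side of \eqref{eq:cond-p-varphi-hyp}, no quantitative lower bound on $\varphi$ being needed). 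You gesture at ``H\"older/Jensen'' but explicitly leave open the choice of reference measure and ``the precise bookkeeping'', and the intermediate bounds you do write down are either abandoned or incorrect (for instance $\varphi\geq c_0^{1-\alpha}p^{1-\alpha}$ does not follow, since the factor $k(\theta_0,y)$ does not cancel); so the finiteness claim \eqref{eq:cond-p-varphi-conc} is never established.

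The second unresolved point is the direction in which \eqref{eq:cond-p-varphi-hyp2} is used. You correctly sensed that a lower bound on $k(\theta_0,y)/\mu k(y)$ controls $\varphi$ from below, whereas the conclusion requires control from above, but you stop at that observation instead of resolving it. What the argument actually uses in the last step, and what the intended application supplies (for a mixture, $\mu_n k\geq\lambda_{j,n}\,k(\theta_{j,n},\cdot)$ yields $k(\theta_{j,n},y)/\mu_n k(y)\leq\lambda_{j,n}^{-1}$), is that $k(\theta_0,y)/\mu k(y)$ is bounded above, equivalently $\inf_y \mu k(y)/k(\theta_0,y)>0$; this is the reading under which \eqref{eq:cond-p-varphi-hyp2} feeds into the displayed Jensen bound (the printed form of the ratio, like the exponent on $k(\theta_0,y)$ in the definition of $\varphi$, differs from what the proof and the application of the lemma use, where $k(\theta_0,y)$ appears to the first power). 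A complete proof has to state and exploit the bound in this direction; leaving the sign question open leaves the main assertion unproven.
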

\begin{proof}
  We have $\mu k>0$ by assumption on $k$ and $\mu$. Thus $\varphi$ is
  well defined and $\int
  \varphi\;\rmd\nu>0$ as a consequence of $p\geq0$ with the left-hand
  side of~(\ref{eq:cond-p-varphi-hyp}). We now
  check~(\ref{eq:cond-p-varphi-conc}) in which $\check{\varphi}$ can
  equivalently be replaced by $\varphi$. We have
  \begin{align*}
\int \varphi(y)\,g(y)\,\nu(\rmd y)&=    \int 
                                    k(\theta_0,y)\,\lr{\frac{p(y)}{\mu k(y)}}^{1-\alpha}\,g(y)\,\nu(\rmd
                                    y) \\
&    \leq \lr{\int k(\theta_0,y)\,\frac{p(y)}{\mu k(y)}\,g(y)^{1/(1-\alpha)}\,\nu(\rmd
    y) }^\alpha\;,
  \end{align*}
by Jensen's inequality and concavity of $x\mapsto x^{1-\alpha}$
for $\alpha\in[0,1)$. Hence we obtain~(\ref{eq:cond-p-varphi-conc}) as
a consequence of~(\ref{eq:cond-p-varphi-hyp}) and~(\ref{eq:cond-p-varphi-hyp2}).
\end{proof}

  \begin{proof}[Proof of \Cref{thm:generic-update-canonic-expo-family}]
  Let  $\zeta_0\in\mathcal{O}$ and  $b\geq0$.
  In the following, we denote by $\mathcal{C}_0(\zeta)$ the value of the integral in
  the argmax in~(\ref{generic-update-canonic-expo-family-argmax}).
  Using that $\check{\varphi}$ is a probability density
  function paired up with \eqref{eq:partialA:expo-family} and \eqref{generic-update-canonic-expo-family-hyp}, we have
  \begin{align*}
       \int_\Yset \lrb{ \check{\varphi}(y) + b \, k^{(o)}(\zeta_0 , y)} 
   \nu(\rmd y)  & = 1+b >0 \;, \\
   \int_\Yset \lrb{ \check{\varphi}(y) + b \, k^{(o)}(\zeta_0 , y)} S(y)
   \nu(\rmd y)  & = \int_\Yset \check{\varphi}\,S \;\rmd\nu +b\,
   \nabla A(\zeta_0) \in E \;.
  \end{align*}
  By~(\ref{eq:basin-expo-family-convex-ineq}) in \Cref{lem:useful-1}, for all $\zeta'\neq\zeta$
  in $\mathcal{O}$ and all $y\in\Yset$, we have
  $$
  \log\left(\frac{k^{(o)}(\zeta', y)}{k^{(o)}(\zeta, y)}\right) >
  \pscal[E]{S(y)-\nabla A(\zeta')}{\zeta'-\zeta} \;.
  $$
  Hence, combining with the three previous identities, we get that, for all $\zeta'\neq\zeta$
  in $\mathcal{O}$,
  \begin{align}
    \mathcal{C}_0(\zeta')-\mathcal{C}_0(\zeta) & =
                                               \int_\Yset \lrb{\check{\varphi}(y) + b \, k^{(o)}(\zeta_0 , y)}
                                               \log\left(\frac{k^{(o)}(\zeta', y)}{k^{(o)}(\zeta, y)}\right) \nu(\rmd y) \nonumber \\
                                             & > \pscal[E]{ \int_\Yset \check{\varphi}\,S \;\rmd\nu +b\,
                                               \nabla A(\zeta_0) -\lr{1+b}\nabla
                                               A(\zeta')}{\zeta'-\zeta} \nonumber\\
                                             & =\lr{1+b}
                                               \pscal[E]{ \mathbf{s}^* -\nabla A(\zeta')}{\zeta'-\zeta}\;, \label{eq:generic-expo-canon-inter}
  \end{align}
  where we used the definition of $\mathbf{s}^*$
  in~(\ref{generic-update-canonic-expo-family-cond}). Now, since
  $\nabla A$ is a $\mathcal{C}^\infty$-diffeomorphism on $\intE_0$, if
  $\mathbf{s}^*$ belongs to $\nabla A(\mathcal{O})$ then there exists
  a unique $\zeta^* \in\mathcal{O}$ such that
  $\nabla A(\zeta^*)=\mathbf{s}^*$. As a result, plugging this into
  \eqref{eq:generic-expo-canon-inter} and setting $\zeta' = \zeta^*$,
  we have that $\mathcal{C}_0(\zeta^*) > \mathcal{C}_0(\zeta)$ for all
  $\zeta \neq \zeta^*$. This shows that $\zeta^*$ is the (unique)
  solution to the argmax
  problem~(\ref{generic-update-canonic-expo-family-argmax}).
  
  We conclude this proof with the proof of the reciprocal
  implication. Suppose that the argmax
  problem~(\ref{generic-update-canonic-expo-family-argmax}) has a
  solution $\zeta\in\mathcal{O}$ so that, for all
  $\zeta'\in\mathcal{O}$,
  $\mathcal{C}_0(\zeta')-\mathcal{C}_0(\zeta)\leq0$. Using
  \eqref{eq:generic-expo-canon-inter}, this would imply
  $\pscal{\mathbf{s}^* -\nabla A(\zeta')}{\zeta'-\zeta}< 0$ for all
  $\zeta'\in\mathcal{O}\setminus\{\zeta\}$. Since $\mathcal{O}$ is an
  open set, we can take $\zeta'=\zeta+\epsilon\lr{\mathbf{s}^* -\nabla
    A(\zeta)}$ with $\epsilon > 0$ small enough, which gives
    \begin{equation*}
    \pscal{\mathbf{s}^* -\nabla A(\zeta')}{\mathbf{s}^* -\nabla A(\zeta)}< 0 \eqsp.
    \end{equation*}
  Now letting $\epsilon\downarrow0$ and since $\nabla A$ is continuous, we get that
  $\mathbf{s}^* -\nabla A(\zeta)=0$ and thus that $\mathbf{s}^*$
  belongs to $\nabla A(\mathcal{O})$.
  \end{proof}
  
 \begin{proof}[Proof of \Cref{coro:GaussianTemp}]
  First note that letting $\nu$ be the $d$-dimensional Lebesgue measure and setting $S(y)=(y,-yy^T/2)$, the condition \eqref{generic-update-canonic-expo-family-hyp} written in \Cref{thm:generic-update-canonic-expo-family} takes the form  
  \begin{equation}
    \label{generic-update-canonic-gaussian-hyp-proof} \int_\Yset \|y\|^2\,\check{\varphi}(y) \;\rmd y<\infty\;. 
  \end{equation}
  Furthermore, setting $\changevarcomp(\theta)=(\Sigma^{-1}m,\Sigma^{-1})$ with $\theta = (m,\Sigma)$, we obtain that $\changevarcomp(\Tset)=\rset^d\times\mathcal{M}_{>0}(d)$, which is
    the whole interior set $\intE_0$, seen as an open set of
    $E=\rset^d\times\rset^{d\times d}$. Hence \Cref{thm:generic-update-canonic-expo-family} applies and we can
    use the interpretation \eqref{eq:generic-update-canonic-expo-family-solution-theta} of the equation \eqref{generic-update-canonic-expo-family-solution} with $k(\theta^*,y) = \mathcal{N}(y; m^*, \Sigma^*)$ and $k(\theta_0,y) = \mathcal{N}(y; m_0, \Sigma_0)$, that is
    \begin{equation}    \label{eq:generic-update-canonic-expo-family-solution-theta-inter}
      \int_\Yset S(y) ~ \mathcal{N}(y; m^*, \Sigma^*)\nu(\rmd y) = \int_\Yset S(y) \psi(y)\nu(\rmd y) \eqsp,
    \end{equation}
    where $\psi$ is
    the mixture
    \begin{equation*}
    %  \label{eq:mixture-armax-solution-expo-family}
      \psi(y)= \frac1{1+b}  \check{\varphi}(y) + \frac
      b{1+b} \mathcal{N}(y; m_0, \Sigma_0) \eqsp, \quad y \in \Yset \;.
    \end{equation*}
    Using now that $S(y)=(y,-yy^T/2)$, \eqref{eq:generic-update-canonic-expo-family-solution-theta-inter} is equivalent to saying that the distributions on both sides of this equation have the same mean and covariance matrix. Since the
    left-hand side distribution is a Gaussian distribution, we
    directly obtain that the solution
    of \eqref{eq:generic-update-canonic-expo-family-solution-theta-inter} is
    given by $\theta^*=(m^*,\Sigma^*)$ with $m^*$ and $\Sigma^*$ being
    the mean and covariance matrix of the random variable $Y$ with density $\psi$. 
    
    Observe that these mean and covariance matrix are well-defined due to the condition~(\ref{generic-update-canonic-gaussian-hyp-proof}). This means that we only have to check that the covariance matrix is positive definite in order to have a solution in
    $\Tset=\rset^d\times\mathcal{M}_{>0}(d)$. If $b>0$, then
    $1/(1+b)<1$ and the fact that $\theta\in\Tset$ guaranties that the
    covariance matrix of $Y$ is positive definite. If $b=0$,
    then $Y$ has density $\check\varphi$ and the non-degenerate
    condition on $\check{\varphi}$ guaranties that the covariance is
    also positive definite in this case. This concludes the proof.
 \end{proof}

 \begin{proof}[Proof of \Cref{cor:iterative-updates-expo-family}]
  Recall from \eqref{eq:respat} that $\respat =  k(\theta_{j,n}, y)
  (p(y) / \mu_n k(y))^{1-\alpha}$ with $k(\theta_{j,n}, y) > 0$. Then
  \Cref{lem:cond-p-varphi} gives that
  \begin{align}
  \label{eq:proofmMaxApproachGen}
  0 < \int_\Yset \lr{1+ \norm[E]{S(y)}}\; \respat \;\nu(\rmd y) < \infty \;.
  \end{align}
  We can thus apply \Cref{thm:generic-update-canonic-expo-family} with
  $\zeta_{0}=\changevarcomp(\theta_{j,n})$,
  $\mathcal{O}=\changevarcomp(\Tset)$,
  $b=b_{j,n}/\int_\Yset \ratiogen \;\rmd \nu$ and
  $\check{\varphi}=\ratiogen /\int \ratiogen \;\rmd \nu$, which
  satisfies~\eqref{generic-update-canonic-expo-family-hyp}.
  \end{proof}

\subsection{Proofs of \Cref{sec:gradient-smoothness-expo-family} }
\label{sec:gradient-smoothness-expo-family:proofs}
\begin{proof}[Proof of \Cref{thm:gradiant-g-canonic-expo-family}]
  Following the proof of \Cref{thm:generic-update-canonic-expo-family} but with
  $b=0$, we have using \eqref{eq:generic-expo-canon-inter} that: for all $\zeta',\zeta\in\intE_0$,
  $$
  \pscal[E]{ \nabla
  A(\zeta)-\int_\Yset S\,\check{\varphi}\;\rmd\nu}{\zeta'-\zeta}
\leq g^{(o)}(\zeta')-g^{(o)}(\zeta) \leq
\pscal[E]{\nabla A(\zeta')-\int_\Yset S\,\check{\varphi}\;\rmd\nu }{\zeta'-\zeta}\;.
$$
Since $\nabla A$ is continuous in $\intE_0$, we
get that $g^{(o)}$ is $\mathcal{C}^1$ over $\intE_0$ and, for all $\zeta\in\intE_0$,
$$
\nabla g^{(o)}(\zeta)=\nabla A(\zeta)-\int_\Yset S\,\check{\varphi}\;\rmd\nu \;.
$$
The $\beta_0$-smoothness condition follows from the mean value theorem
applied to $\nabla g^{(o)}$ whose Jacobian is $\nabla\nabla^T A$. 
\end{proof}

\begin{proof}[Proof of \Cref{coro:GaussianTempGrad}]
  The context is similar to the proof of \Cref{coro:GaussianTemp}
  except that the covariance matrix is known and thus the sufficient
  statistic reduces to 
  $S(y)=y$ and the canonical parameter to 
  $\changevarcomp(\theta)=\Sigma^{-1}\theta$. The given expressions of
  $h$ and $A$ follow. Then we have, for all $\zeta\in\rset^d$, $\nabla
  A(\zeta)=\Sigma\zeta$. Applying
  \Cref{thm:gradiant-g-canonic-expo-family}, we get that 
    $$
    \nabla g^{(o)}(\zeta) = \Sigma\zeta-\int_\Yset y  \check{\varphi}(y) \;\rmd y \eqsp.
    $$
    This gives that $g^{(o)}$ is $\beta_0$-smooth over $\rset^d$ with
    $\beta_0$ equal to the maximal eigenvalue of $\Sigma$ and
    leads to the gradient
    step
    $$
    \zeta=\zeta_0-\frac{\gamma}{\beta_0}\lr{\Sigma\zeta_0-\int_\Yset y  \check{\varphi}(y)
      \;\rmd y}\;.
    $$
    Multiplying on both sides by $\Sigma$ leads to the
    gradient-based update~(\ref{eq:gradient-step-canonical-gaussian}).
    
    We now derive~(\ref{eq:gradient-step-noncanonical-gaussian}). Using that $g=g^{(o)}\circ\changevarcomp$, we get that
    $\nabla g(\theta) = \Sigma^{-1}\,\nabla g^{(o)}(\Sigma^{-1}\theta)$
    and thus
  $$
  \nabla g(\theta) = \Sigma^{-1} \lr{\theta- \int_\Yset y~\check{\varphi}(y)} \eqsp.
  $$
  We obtain that $g$ is $\beta$-smooth over $\rset^d$ with $\beta$
  equal to the maximal eigenvalue of $\Sigma^{-1}$, which leads to the
  gradient update~(\ref{eq:gradient-step-noncanonical-gaussian}).
\end{proof}

\subsection{Proofs of \Cref{sec:partial-mixture-expo-family}}
\label{sec:proof:mix:exp}

\subsubsection{Proofs of Preliminary Results}
\label{sec:proof:mix:exp:prelim:res}

  It will be convenient to use the shorthand
  notation: for all $(\ell,y,\xi) \in\Zset \times \Yset\times F$,
  \begin{align*}
    &  k_{\xi}(\ell,y) \eqdef k^{(o)}(\ell(\xi),y)\;,\\
    % \label{eq:checkkxi}
    & \check{k}_{\xi,\tau}(\ell,y) \eqdef \frac{k_{\xi}(\ell,y)}{\tau k_\xi(y)}\;,
  \end{align*}
  where $\tau k_\xi$ denotes the
  mapping $y\mapsto\int k_{\xi}(\ell,y)\;\tau (\rmd \ell)$. 
  
\begin{proof}[Proof of \Cref{prop:argmax-linear-mixture-exponential}]
  We have,  for all $\tau_0,\tau\in\parammixproba$, $\xi_0,\xi\in\ F$ and
  $y\in\Yset$, using the definitions above and that of $\mathrm{D}_{\tau,\tau_0}$, 
  \begin{align*}
  \log\frac{\tau k_{\xi}(y)}{\tau_0 k_{\xi_0}(y)}
    & = \log \int_\Zset
      \left(\frac{k_{\xi}(\ell,y)}{k_{\xi_0}(\ell,y)}\right)\,\check{k}_{\xi_0,\tau_0}(\ell,y)\;\tau (\rmd \ell)\\
    & = \log \int_\Zset
      \lr{\frac{k_{\xi}(\ell,y)}{k_{\xi_0}(\ell,y)}\;\mathrm{D}_{\tau,\tau_0}(\ell)}\,\check{k}_{\xi_0,\tau_0}(\ell,y)\;\tau_0
      (\rmd \ell)\\
    &\geq
   \int_\Zset\,\check{k}_{\xi_0,\tau_0}(\ell,y)\,
      \log\lr{\frac{k_{\xi}(\ell,y)}{k_{\xi_0}(\ell,y)}\;\mathrm{D}_{\tau,\tau_0}(\ell)}\;\tau_0 (\rmd \ell) \;,
  \end{align*}
  where we used the concavity of the log function and the fact that
  $\check{k}_{\xi_0,\tau_0}(\cdot,y)$ is a probability density function with
  respect to $\tau_0$. Let  $\xi=\changevar(\vartheta)$ and
  $\xi_0=\changevar(\vartheta_{0})$ and set $\theta=(\vartheta,\tau)$
  and  $\theta_0=(\vartheta_0,\tau_0)$, so that, by
  \Cref{defi:LMEFfam}, we have $\tau
  k_{\xi}(y)=k^{(1)}(\theta,y)$ and $\tau_0 k_{\xi_0}(y)=k^{(1)}(\theta_0,y)$.
  Applying the previous bound  and integrating
  w.r.t. $\check{\varphi}(y)\nu(\rmd y)$ thus yields 
  \begin{multline}
  \int_\Yset \check{\varphi}(y) \, \log
  \lr{\frac{k^{(1)}(\theta, y)}{k^{(1)}(\theta_0, y)}} \nu(\rmd y)\\ \geq
   \int_{\Yset}\lr{\int_{\Zset} \varphi(\ell, y)
      \log\lr{\frac{k_{\xi}(\ell,y)}{k_{\xi_0}(\ell,y)}\;\mathrm{D}_{\tau,\tau_{0}}(\ell)}\;\tau_0 (\rmd \ell)}\nu(\rmd y)  \;, \label{eq:mixt:exp:one}
  \end{multline}
  where we used that
  $\check{\varphi}(y)\check{k}_{\xi_0,\tau_0}(\ell,y)=\varphi(\ell,
  y)$ as a consequence of~(\ref{eq:check-varphi-varphiL-carphiQ}). We
  conclude by noting
  that~(\ref{eq:update-cond-tau-general_mixture}),~(\ref{eq:update-cond-theta-general_mixture})
  and~(\ref{eq:mixt:exp:one}) imply~(\ref{eq:generic-decrease-theta-cond}).
\end{proof}
\begin{proof}[Proof of \Cref{cor:monot-decr-general-LM}]
  Suppose that $\changevar(\vartheta)$ belongs to the
  argmax~(\ref{eq:update-cond-theta-general_mixture_argmax}), and 
  let us show that~(\ref{eq:update-cond-theta-general_mixture}) holds.
  Note that the objective function in the argmax is zero for
  $\xi=\xi_0$ and thus must be non-negative for
  $\xi=\changevar(\vartheta)$ in the argmax. Note also
  that the left-hand side of~(\ref{eq:update-cond-theta-general_mixture}) is the sum of this
  objective function at $\xi=\changevar(\vartheta)$ with the term
  \begin{equation}
    \label{eq:last-todo-positif-to-get-update-theta_LM}
    -\int_\Yset\lr{\int_\Zset
      b_0(\ell) \, k_{\xi_0}(\ell,y) \log\lr{\frac{k_{\changevar(\vartheta)}(\ell,y)}{k_{\xi_0}(\ell,y)}}\;\tau_0 (\rmd \ell)}\nu(\rmd y)\;,
\end{equation}
and it only remains to show that this term is also non-negative. To
conclude, we observe that, for $\tau_0$-a.e. $\ell\in\Zset$,
Condition~\ref{asump:gradA-well-def} of \Cref{def:expo-family} implies
that $k_{\xi_{0}}(\ell,\cdot)$ and $k_{\changevar(\vartheta)}(\ell,\cdot)$ are
probability densities, so that
    $$
     - \int_\Yset  k_{\xi_0}(\ell,y) \log\lr{\frac{k_{\changevar(\vartheta)}(\ell,y)}{k_{\xi_0}(\ell,y)}}\;\nu(\rmd y)\geq0 \;,
    $$
  and thus, $b_0$ being non-negative,
  $$
  -\int_\Zset\lr{\int_\Yset
  b_0(\ell) \, k_{\xi_0}(\ell,y) \log\lr{\frac{k_{\changevar(\vartheta)}(\ell,y)}{k_{\xi_0}(\ell,y)}}\;\nu(\rmd y)}\tau_0 (\rmd \ell)\geq0  \;.
  $$
  Using that $\log_+(x)\leq x$ for all $x\geq0$, the positive part
  inside this double integral is at most $b_0(\ell) \,
  k_{\changevar(\vartheta)}(\ell,y)$, whose integral w.r.t. $\nu(\rmd
  y)\tau_{0}(\rmd \ell)$ is $\int b_0\;\rmd\tau_0<\infty$ by
  assumption.  Therefore, the same inequality holds when reversing the
  order of integration in the previous display. We get that the expression
  in~(\ref{eq:last-todo-positif-to-get-update-theta_LM}) is
  non-negative, which concludes the proof. 
\end{proof}
The following lemma is used to illustrate~\ref{asump:Upsilon-image}.
\begin{lem}
  \label{lem:C1cond}
  Consider an ELM family satisfying~\ref{asump:Upsilon-image}. Then, for all
  $\vartheta\in\tilde{\Tset}$, 
  $\tau\in\parammixproba$ and  $\tau$-a.e. $\ell$,
  $\ell\circ\changevar(\tilde{\Tset})$ is an open subset included in  $\intE_0$.
\end{lem}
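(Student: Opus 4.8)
\textbf{Proof plan for \Cref{lem:C1cond}.}
The plan is to unpack the definitions and use the assumption \ref{asump:Upsilon-image} together with the basic structural fact about exponential families that $\intE_0$ is an open convex subset of $E$. First I would fix $\vartheta\in\tilde{\Tset}$ and $\tau\in\parammixproba$. By \ref{asump:Upsilon-image}, the image set $\changevar(\tilde{\Tset})$ is an open subset of $F$, and for $\tau$-a.e. $\ell\in\Zset$ the operator $\ell$ is full rank. Restrict attention to such a full-rank $\ell$. Since $\ell:F\to E$ is linear and full rank, and $\dim F\leq\dim E$ (because $\Zset=\mathcal{L}(F,E)$ and full rank means injective here), $\ell$ is an injective linear map; hence it maps the open set $\changevar(\tilde{\Tset})\subseteq F$ onto a subset of $E$ which is open \emph{in the linear subspace} $\ell(F)$. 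One has to be a little careful: $\ell(F)$ need not be all of $E$, so $\ell\circ\changevar(\tilde{\Tset})$ is relatively open in $\ell(F)$ rather than open in $E$ in general. The statement of the lemma says ``open subset'', which I read as ``relatively open in $\ell(F)$'' — or, if one prefers, the essential content the lemma is used for is the inclusion in $\intE_0$, which I address next.

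Second I would show the inclusion $\ell\circ\changevar(\tilde{\Tset})\subseteq\intE_0$. By Assumption~\ref{asump:gradA-well-def} of \Cref{defi:LMEFfam}, for $\tau$-a.e. $\ell$ the set $\ell\circ\changevar(\tilde{\Tset})$ is included in $E_0$. I now want to upgrade this to the interior $\intE_0$. The key point is that $\changevar(\tilde{\Tset})$ is open in $F$ and $\ell$ is an \emph{open map onto its image} $\ell(F)$ (an injective linear map between finite-dimensional spaces is a homeomorphism onto its image); therefore $\ell\circ\changevar(\tilde{\Tset})$ is relatively open in $\ell(F)$. Combined with $E_0$ being convex with nonempty interior $\intE_0$, a relatively open subset of $\ell(F)$ that is contained in $E_0$ must actually be contained in $\intE_0$: indeed, if a point $\zeta=\ell\circ\changevar(\vartheta')\in E_0$ had $\zeta\notin\intE_0$, then $\zeta$ would lie on the boundary of the convex set $E_0$, but then no relatively-open-in-$\ell(F)$ neighborhood of $\zeta$ inside $\ell(F)$ could stay within $E_0$ unless $\ell(F)$ is entirely contained in the supporting hyperplane at $\zeta$ — and this possibility is what needs to be ruled out. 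This is where the full-rank hypothesis on $\ell$ and possibly \ref{hyp:expo}\ref{item:expo1} (no affine hyperplane containing $S(y)$ $\nu$-a.e., which forces $E_0$ to have nonempty interior and the relevant non-degeneracy) come into play: I would argue that $\ell(F)$ is not contained in any supporting hyperplane of $E_0$, hence the boundary case is impossible and $\zeta\in\intE_0$.

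The main obstacle is precisely this last point — carefully separating ``open in $E$'' versus ``relatively open in $\ell(F)$'' and ruling out the degenerate situation where $\ell(F)$ sits inside a face or supporting hyperplane of $E_0$. I expect the cleanest route is: (i) note $\ell\circ\changevar(\tilde{\Tset})\subseteq E_0$ by \ref{asump:gradA-well-def}; (ii) note it is relatively open in $\ell(F)$ because $\changevar(\tilde{\Tset})$ is open and $\ell$ is a linear homeomorphism onto $\ell(F)$ by the full-rank assumption \ref{asump:Upsilon-image}; (iii) invoke the fact that for a convex set $E_0$ with nonempty interior, any point of $E_0$ admitting a relatively-open-in-$\ell(F)$ neighborhood inside $E_0$ with $\ell(F)$ meeting $\intE_0$ must itself lie in $\intE_0$ — and $\ell(F)$ meets $\intE_0$ because by \ref{hyp:expo}\ref{item:expo1} the interior $\intE_0$ is nonempty and a full-rank $\ell$ composed with the open set $\changevar(\tilde{\Tset})$ lands on an affinely-spanning-enough piece. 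I would write the short topological lemma (relatively open subset of a convex set's affine-slice, contained in the set, with the slice meeting the interior, lies in the interior) as the technical heart, and then the lemma follows by assembling these pieces.
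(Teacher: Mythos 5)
There is a genuine gap, and it stems from your reading of ``full rank'' in \ref{asump:Upsilon-image}. In the paper, full rank is used in the sense that the rank of $\ell$ equals the dimension of $E$, i.e.\ $\ell$ is \emph{surjective}: taking $F'$ a linear supplement of $\ker\ell$ in $F$, the restriction $\ell_{|F'}$ is a linear bijection from $F'$ onto $E$, hence bicontinuous, so $\ell$ maps open subsets of $F$ to open subsets of $E$. With that, the lemma is a two-line argument: $\ell\circ\changevar(\tilde{\Tset})$ is open \emph{in $E$} because $\changevar(\tilde{\Tset})$ is open (first part of \ref{asump:Upsilon-image}), and it is contained in $E_0$ by Assumption~\ref{asump:gradA-well-def} of \Cref{defi:LMEFfam}; since $\intE_0$ is the largest open subset of $E_0$, the inclusion in $\intE_0$ is immediate. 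Note also that under your injective-but-not-surjective reading the lemma as stated would simply be false: a nonempty subset of a proper subspace $\ell(F)$ has empty interior in $E$, so it cannot be ``an open subset included in $\intE_0$''. Reinterpreting ``open'' as ``relatively open in $\ell(F)$'' changes the statement rather than proving it.

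Moreover, the step on which your alternative route hinges is not established and would fail in general. After reducing to ``relatively open in $\ell(F)$ and contained in $E_0$'', you must exclude the case where $\ell(F)$ lies inside a supporting hyperplane of $E_0$ at a boundary point, and you propose to rule this out via the full-rank hypothesis together with \ref{hyp:expo}\ref{item:expo1}. But \ref{hyp:expo}\ref{item:expo1} is a statement about the natural statistic $S$ (it guarantees the non-degeneracy of the covariance of $S$, hence of $\nabla\nabla^TA$); it says nothing about where the image of a particular operator $\ell$ sits relative to $E_0$. If $\ell$ is merely injective with $\ell(F)$ a proper subspace, nothing prevents $\ell(F)$ from being contained in a hyperplane supporting $E_0$, and your claim that $\ell(F)$ ``meets $\intE_0$'' is asserted, not proved. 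So the technical heart of your plan is missing, whereas the paper avoids it entirely by using surjectivity to get genuine openness in $E$.
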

\begin{proof}
   In~\ref{asump:Upsilon-image}, the linear operator $\ell\in\Zset$ being full rank implies that
the image of an open set is an open subset of $E$. Indeed, take 
  $F'\subset F$ a linear supplementary space of the kernel of $\ell$. Then $\ell_{|F'}$ is a bijective
  linear mapping from
  $F'$ to $E$ and thus, for any open set $O$ included in $F$,
  $\ell(O)=\ell(O\cap F')=\ell_{|F'}(O\cap F')$ is an open subset of
  $E$ since $O\cap F'$ is an open subset of $F'$ and $\ell_{|F'}$ is
  bicontinuous. Hence \ref{asump:Upsilon-image} implies that
$\ell\circ\changevar(\tilde{\Tset})$ is an open subset of $E$ for all
$\tau\in\parammixproba$ and  $\tau$-almost all $\ell\in \Zset$, which, by
Assumption~\ref{asump:gradA-well-def} from \Cref{defi:LMEFfam}, must
be included in $\intE_0$.
\end{proof}
We conclude with the following lemma, which relates
Assumption~\ref{item:condB-ident} to a kind of identifiability
property. 
\begin{lem}\label{lem:ident-elm}
  Consider an ELM family and let $\PP_{\vartheta,\tau}$ denote the
  probability of $(Y,L)$ defined on $\Yset\times\Zset$ as
  above. Then~\ref{item:condB-ident} is equivalent to have that for
  all
  $(\vartheta,\tau),(\vartheta',\tau')\in\tilde{\Tset}\times\parammixproba$,
  $\PP_{\vartheta,\tau}=\PP_{\vartheta',\tau'}$  if and only if
  $\changevar(\vartheta)=\changevar(\vartheta')$ and $\tau=\tau'$.
\end{lem}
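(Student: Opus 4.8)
The plan is to prove \Cref{lem:ident-elm} by unwinding the definition of $\PP_{\vartheta,\tau}$ and reducing the equality of two such laws to the equality of their conditional densities. Recall from~\eqref{eq:DefiKernelLinMixt-PE} that under $\PP_{\vartheta,\tau}$ the pair $(Y,L)$ has density $(y,\ell)\mapsto k^{(o)}(\ell\circ\changevar(\vartheta),y)$ with respect to $\nu\otimes\tau$. In particular, the second marginal of $\PP_{\vartheta,\tau}$ is $\tau$ itself (since $y\mapsto k^{(o)}(\ell\circ\changevar(\vartheta),y)$ integrates to $1$ against $\nu$ by Assumption~\ref{asump:gradA-well-def} of \Cref{defi:LMEFfam}), and the conditional distribution of $Y$ given $L=\ell$ is $k^{(o)}(\ell\circ\changevar(\vartheta),\cdot)\,\nu$. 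So $\PP_{\vartheta,\tau}=\PP_{\vartheta',\tau'}$ forces, first, $\tau=\tau'$ by looking at second marginals, and then, for $\tau$-a.e.\ $\ell$, the identity $k^{(o)}(\ell\circ\changevar(\vartheta),\cdot)=k^{(o)}(\ell\circ\changevar(\vartheta'),\cdot)$ $\nu$-a.e.

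The next step is to turn that last identity into $\ell(\xi)=\ell(\xi')$ where $\xi=\changevar(\vartheta)$ and $\xi'=\changevar(\vartheta')$. By~\ref{asump:Upsilon-image} and \Cref{lem:C1cond}, $\ell\circ\changevar(\vartheta)$ and $\ell\circ\changevar(\vartheta')$ lie in $\intE_0$ for $\tau$-a.e.\ $\ell$. The canonical exponential family with statistic $S$ is identifiable in its natural parameter on $\intE_0$: this is exactly the content of~\ref{hyp:expo}\ref{item:expo1} (no affine hyperplane contains $S(y)$ for $\nu$-a.e.\ $y$), which makes $\nabla\nabla^T A$ positive definite on $\intE_0$ and hence $A$ strictly convex there; equivalently, $\zeta\mapsto k^{(o)}(\zeta,\cdot)$ is injective on $\intE_0$ (one can also read this directly off~\eqref{eq:basin-expo-family-convex-ineq} in \Cref{lem:useful-1}, which gives strict inequality when $\zeta\neq\zeta'$ and thus forbids $k^{(o)}(\zeta,\cdot)=k^{(o)}(\zeta',\cdot)$). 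Therefore $\ell(\xi)=\ell(\xi')$ for $\tau$-a.e.\ $\ell$, i.e.\ $\tau\lr{\set{\ell\in\Zset}{\ell(\xi)\neq\ell(\xi')}}=0$. By the contrapositive of~\ref{item:condB-ident} this yields $\xi=\xi'$, that is $\changevar(\vartheta)=\changevar(\vartheta')$. Combined with $\tau=\tau'$ this is the claimed ``only if'' direction.

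For the converse direction, if $\changevar(\vartheta)=\changevar(\vartheta')$ and $\tau=\tau'$, then the density $(y,\ell)\mapsto k^{(o)}(\ell\circ\changevar(\vartheta),y)$ with respect to $\nu\otimes\tau$ coincides term-by-term with $(y,\ell)\mapsto k^{(o)}(\ell\circ\changevar(\vartheta'),y)$ with respect to $\nu\otimes\tau'$, so $\PP_{\vartheta,\tau}=\PP_{\vartheta',\tau'}$ trivially; this shows the stated biconditional implies~\ref{item:condB-ident} is \emph{not} needed here, but it does show one half of the equivalence. It remains to check that the biconditional property in turn implies~\ref{item:condB-ident}: suppose~\ref{item:condB-ident} fails, i.e.\ there are $\xi\neq\xi'$ in $\changevar(\tilde{\Tset})$ and $\tau_0\in\parammixproba$ with $\tau_0\lr{\set{\ell}{\ell(\xi)\neq\ell(\xi')}}=0$; then $\ell(\xi)=\ell(\xi')$ $\tau_0$-a.e., hence $k^{(o)}(\ell(\xi),\cdot)=k^{(o)}(\ell(\xi'),\cdot)$ $\tau_0$-a.e., so $\PP_{\vartheta,\tau_0}=\PP_{\vartheta',\tau_0}$ for any $\vartheta,\vartheta'$ with $\changevar(\vartheta)=\xi$, $\changevar(\vartheta')=\xi'$, contradicting the biconditional since $\changevar(\vartheta)\neq\changevar(\vartheta')$.

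The main obstacle, such as it is, is the identifiability step: making precise that for a member of the exponential family satisfying~\ref{hyp:expo}, the map $\zeta\mapsto k^{(o)}(\zeta,\cdot)$ is injective on $\intE_0$, and that ``equal $\nu$-a.e.'' upgrades to ``equal natural parameters.'' This is standard but uses both parts of~\ref{hyp:expo} (positivity of $h$ to compare densities pointwise, and the no-affine-hyperplane condition for strict convexity of $A$), so it is worth stating as an intermediate observation before applying~\ref{item:condB-ident}. Everything else is bookkeeping with marginals and conditionals of $\PP_{\vartheta,\tau}$.
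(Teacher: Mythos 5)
Your proposal follows the same route as the paper's proof: identify $\tau$ as the second marginal and $k^{(o)}(\ell\circ\changevar(\vartheta),\cdot)$ as the conditional density of $Y$ given $L=\ell$, reduce $\PP_{\vartheta,\tau}=\PP_{\vartheta',\tau'}$ to $\tau=\tau'$ together with $\ell\circ\changevar(\vartheta)=\ell\circ\changevar(\vartheta')$ for $\tau$-a.e.\ $\ell$, and then observe that \ref{item:condB-ident} is exactly the implication needed to upgrade this to $\changevar(\vartheta)=\changevar(\vartheta')$ (and conversely, that a failure of \ref{item:condB-ident} produces two distinct parameters with the same joint law). The one point where you deviate, and where your argument is slightly too strong in its hypotheses, is the identifiability step: you route it through \ref{asump:Upsilon-image} and \Cref{lem:C1cond} to place $\ell\circ\changevar(\vartheta)$ in $\intE_0$ and then invoke strict convexity of $A$ (or the strict inequality of \Cref{lem:useful-1}), but the lemma is stated for an ELM family alone, without \ref{asump:Upsilon-image}, and \Cref{defi:LMEFfam} only guarantees $\ell\circ\changevar(\tilde{\Tset})\subseteq E_0$, not $\intE_0$. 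The paper avoids this by noting that injectivity of $\zeta\mapsto k^{(o)}(\zeta,\cdot)$ (up to $\nu$-null sets) holds on all of $E_0$ directly from \ref{hyp:expo}: since $h>0$ one may take logarithms, and $k^{(o)}(\zeta,\cdot)=k^{(o)}(\zeta',\cdot)$ $\nu$-a.e.\ with $\zeta\neq\zeta'$ would force $\pscal[E]{\zeta-\zeta'}{S(y)}=A(\zeta)-A(\zeta')$ for $\nu$-a.e.\ $y$, i.e.\ $S(y)$ confined to an affine hyperplane, contradicting \ref{hyp:expo}\ref{item:expo1}. You in fact gesture at exactly this argument in your closing remark, so the fix is to use it in place of the $\intE_0$/Hessian detour; with that substitution your proof matches the paper's and needs no assumption beyond the ELM definition.
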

\begin{proof}
  Let
  $(\vartheta,\tau),(\vartheta',\tau')\in\tilde{\Tset}\times\parammixproba$.  Note that $\changevar(\vartheta)=\changevar(\vartheta')$ and
  $\tau=\tau'$ is always sufficient to have $\PP_{\vartheta,\tau}=\PP_{\vartheta',\tau'}$.
  Also, since
  $\tau$ is the marginal over $\Zset$ of $\PP_{\vartheta,\tau}$ and $k^{(o)}(\ell\circ
  \changevar(\vartheta),\cdot)$ is the conditional density of $Y$
  given $L=\ell$ with respect to $\nu$, it is
  clear that $\PP_{\vartheta,\tau}=\PP_{\vartheta',\tau'}$ is
  equivalent to have $\tau=\tau'$ and
  \begin{equation}
    \label{eq:ident-ineterm-cond-density}
   k^{(o)}(\ell\circ
  \changevar(\vartheta),y)=k^{(o)}(\ell\circ
  \changevar(\vartheta'),y)\quad\text{for $\nu$-a.e. $y$ and
    $\tau$-a.e. $\ell$}\;.    
\end{equation}
By~\ref{hyp:expo}, which is assumed for an ELM family in
\Cref{defi:LMEFfam}, for all $\zeta,\zeta'\in E_0$, we have
$k^{(o)}(\zeta,y)=k^{(o)}(\zeta',y)$ for $\nu$-a.e. $y$ if and only if
$\zeta=\zeta'$. By~\ref{asump:gradA-well-def} in \Cref{defi:LMEFfam},
we have $\ell\circ \changevar(\vartheta)\in E_0$ for all
$\vartheta\in\tilde{\Tset}$, $\tau\in\parammixproba$ and
$\tau$-a.e. $\ell$. Thus, for all
$(\vartheta,\vartheta',\tau)\in\tilde{\Tset}^2\times\parammixproba$,
Eq.~(\ref{eq:ident-ineterm-cond-density}) is equivalent to have
  \begin{equation}
    \label{eq:ident-ineterm-cond-density-2}
\ell\circ
  \changevar(\vartheta)=\ell\circ
  \changevar(\vartheta')\quad\text{for
    $\tau$-a.e. $\ell$}\;.    
\end{equation}
  Hence, to prove the result, we only need to show
  that~\ref{item:condB-ident} is equivalent to have that, for
  all
  $(\vartheta,\vartheta',\tau)\in\tilde{\Tset}^2\times\parammixproba$,
  Eq.~(\ref{eq:ident-ineterm-cond-density-2}) implies
  $\changevar(\vartheta)=\changevar(\vartheta')$. In
  fact~\ref{item:condB-ident} exactly says that for all
  $\vartheta,\vartheta'\in\tilde{\Tset}$,
  $\changevar(\vartheta)\neq\changevar(\vartheta')$ implies the
  opposite of~(\ref{eq:ident-ineterm-cond-density-2}). 
\end{proof}

\subsubsection{Proof of \Cref{thm:argrmax-objective-equivalence}}
\label{sec:proof-crefthm:-objective-equiv}
The proof of \Cref{thm:argrmax-objective-equivalence} relies on a
general proposition (\Cref{prop:argrmax-objective-equivalence}) and
two technical lemmas 
(Lemmas \ref{lem:technical-exp-moment-sup-bound} and \ref{lem:gradAw-continuity-fubini}). The
following definition will be useful to state the proposition.
\begin{defi}[Directional continuity]
  Let $g$ be defined on an
  open subset $\mathcal{O}$ of the Euclidean space $F$ and valued in a topological
  space. Let
  $\xi\in\mathcal{O}$ and $x\in F$ such that $\normev[F]{x}=1$. We say
  that $g$ is continuous at $\xi$ in  direction $x$ if
  $\displaystyle
  \lim_{t\downarrow0} g(\xi+t x)=g(\xi)$.
\end{defi}

\begin{prop}\label{prop:argrmax-objective-equivalence}
    Consider an ELM family as in \Cref{defi:LMEFfam} and $\changevar$
  satisfying~\ref{asump:Upsilon-image}. Let $\check{\varphi}$ be a probability density function w.r.t. $\nu$. Let $\tau_0 \in\parammixproba$ and
  $\varphi:\Zset\times\Yset\to\rset_+$ be measurable function such
  that
\begin{align}
\label{eq:varphiQ:condS} &\int_{\Yset \times \Zset}
\varphi(\ell,y)\,\normop{\ell}\,\normev[E]{S(y)}\;\nu(\rmd
                           y)\tau_0 (\rmd \ell)<\infty \,\\
  \label{eq:varphiQ:cond_ell}
&\int_\Zset
\tilde{\varphi}(\ell)\,\normev[F]{\gradA{\ell}(\xi)}\;\tau_0 (\rmd
                                \ell)<\infty\qquad\text{for all $\xi\in\changevar(\tilde{\Tset})$}\;,
\end{align}
where we set $\varphi(\ell,y)$ and
$\tilde{\varphi}(\ell)$ as in~(\ref{eq:check-varphi-varphiL-carphiQ}) and~(\ref{eq:check-WW-well-defined-carphiQ}). Define
$\gradA{\mathrm{w}_{\tau_0}}:\changevar(\tilde{\Tset})\to F$ as in~(\ref{eq:gradA-WW-well-defined-carphiQ}).
Let $b\in\rset_+$ and $\xi_0$ in $\changevar(\tilde{\Tset})$ and set
$\mathbf{w}^*$ as in~(\ref{eq:wstar-defined-check-varphiQ}).
Then the  reciprocal set
$$
\Xi^*=\set{\xi\in\changevar(\tilde{\Tset})}{\gradA{\mathrm{w}_{\tau_0}}(\xi)=\mathbf{w}^*}
$$
is included in the set of solutions of the argmax problem
\begin{equation}
  \label{eq:argrmax-objective-equivalence:armax-prob}
  \argmax_{\xi\in\changevar(\tilde{\Tset})}\int_\Yset\lr{\int_\Zset
  \lrb{\varphi(\ell,y) + b\,\tilde{\varphi}(\ell)\,k^{(o)}(\ell(\xi_0),y)}\,
  \log
  \lr{\frac{k^{(o)}(\ell(\xi),y)}{k^{(o)}(\ell(\xi_0),y)
    }}\; \tau_0 (\rmd \ell)}\nu(\rmd y) \;.
\end{equation}
Moreover the two following assertions hold.
\begin{enumerate}[label=(\roman*)]
\item\label{item:argrmax-objective-equivalence:1} If
  $\gradA{\mathrm{w}}_{\tau_0}$ is continuous on
$\changevar(\tilde{\Tset})$ in every direction, the opposite inclusion is true, namely, the
solution set of the argmax
problem~(\ref{eq:argrmax-objective-equivalence:armax-prob}) is
included in  $\Xi^*$.
\item\label{item:argrmax-objective-equivalence:2} If we have, for all  $\xi\neq\xi'\in\changevar(\tilde{\Tset})$,
  \begin{equation}
  \label{eq:cond-strict-ineq-varphi-L}
\int_{\set{\ell\in\Zset}{\ell(\xi-\xi')\neq0}}
\tilde{\varphi}(\ell) \;\tau_0 (\rmd \ell)>0 \;,
\end{equation}
then $\xi\mapsto \gradA{\mathrm{w}}_{\tau_0}(\xi)$ is one-to-one on
$\changevar(\tilde{\Tset})$ and thus $\Xi^*$ is either
a singleton or empty. 
\end{enumerate}
\end{prop}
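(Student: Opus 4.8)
The plan is to run a first-order optimality argument built on the convexity inequality~\eqref{eq:basin-expo-family-convex-ineq} of \Cref{lem:useful-1}. Write $w(\ell,y)=\varphi(\ell,y)+b\,\tilde{\varphi}(\ell)\,k^{(o)}(\ell(\xi_0),y)\geq0$ and let $\mathcal{C}(\xi)$ denote the value of the integral inside the argmax~\eqref{eq:argrmax-objective-equivalence:armax-prob}. By~\ref{asump:Upsilon-image} and \Cref{lem:C1cond}, $\ell\circ\changevar(\tilde{\Tset})\subseteq\intE_0$ for $\tau_0$-a.e. $\ell$, so~\eqref{eq:basin-expo-family-convex-ineq} may be applied with $\zeta'=\ell(\xi)$ and $\zeta=\ell(\xi')$. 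For $\xi\neq\xi'$ in $\changevar(\tilde{\Tset})$, and granting (see the last paragraph) that $\mathcal{C}$ is finite-valued, I would write $\mathcal{C}(\xi)-\mathcal{C}(\xi')=\int_\Yset\int_\Zset w(\ell,y)\log\lr{k^{(o)}(\ell(\xi),y)/k^{(o)}(\ell(\xi'),y)}\,\tau_0(\rmd\ell)\,\nu(\rmd y)$, bound the log ratio from below by $\pscal[E]{S(y)-\nabla A(\ell(\xi))}{\ell(\xi-\xi')}$, rewrite this via the adjoint identity~\eqref{eq:defajoint} and $\gradA{\ell}=\ell^T\circ\nabla A\circ\ell$ from~\eqref{eq:Q:gradA-def} as $\pscal[F]{\ell^T\circ S(y)-\gradA{\ell}(\xi)}{\xi-\xi'}$, multiply by $w(\ell,y)\geq0$ and integrate. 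Using~\eqref{eq:partialA:expo-family} (which gives $\int_\Yset k^{(o)}(\ell(\xi_0),y)\,\ell^T\circ S(y)\,\nu(\rmd y)=\gradA{\ell}(\xi_0)$) together with $\int_\Yset w(\ell,y)\,\nu(\rmd y)=(1+b)\tilde{\varphi}(\ell)$, the $S$-term integrates to $(1+b)\,\mathbf{w}^*$ by the definition~\eqref{eq:wstar-defined-check-varphiQ} of $\mathbf{w}^*$ and the $\nabla A$-term to $(1+b)\,\gradA{\mathrm{w}}_{\tau_0}(\xi)$ by~\eqref{eq:gradA-WW-well-defined-carphiQ}, so that
\[
\mathcal{C}(\xi)-\mathcal{C}(\xi')\;\geq\;(1+b)\,\pscal[F]{\mathbf{w}^*-\gradA{\mathrm{w}}_{\tau_0}(\xi)}{\xi-\xi'}\;,\qquad\xi\neq\xi'\;.
\]
Taking $\xi=\xi^*\in\Xi^*$ makes the right-hand side zero for every $\xi'$, hence $\mathcal{C}(\xi^*)\geq\mathcal{C}(\xi')$ for all $\xi'$, i.e. $\xi^*$ solves~\eqref{eq:argrmax-objective-equivalence:armax-prob}; this is the inclusion of $\Xi^*$ in the solution set of~\eqref{eq:argrmax-objective-equivalence:armax-prob}.

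For~\ref{item:argrmax-objective-equivalence:1}, let $\xi$ be any solution of~\eqref{eq:argrmax-objective-equivalence:armax-prob}. Exchanging the roles of $\xi$ and $\xi'$ in the displayed inequality, for every $\xi'\neq\xi$ one gets $0\geq\mathcal{C}(\xi')-\mathcal{C}(\xi)\geq(1+b)\pscal[F]{\mathbf{w}^*-\gradA{\mathrm{w}}_{\tau_0}(\xi')}{\xi'-\xi}$. Assuming $\mathbf{w}^*\neq\gradA{\mathrm{w}}_{\tau_0}(\xi)$ (otherwise we are done), since $\changevar(\tilde{\Tset})$ is open by~\ref{asump:Upsilon-image} I would take $\xi'=\xi+t\lr{\mathbf{w}^*-\gradA{\mathrm{w}}_{\tau_0}(\xi)}$ with $t>0$ small, divide by $t$ and let $t\downarrow0$; directional continuity of $\gradA{\mathrm{w}}_{\tau_0}$ forces $\gradA{\mathrm{w}}_{\tau_0}(\xi')\to\gradA{\mathrm{w}}_{\tau_0}(\xi)$ and hence $\normev[F]{\mathbf{w}^*-\gradA{\mathrm{w}}_{\tau_0}(\xi)}^2\leq0$, a contradiction. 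Therefore $\gradA{\mathrm{w}}_{\tau_0}(\xi)=\mathbf{w}^*$, i.e. $\xi\in\Xi^*$, which is the reverse inclusion.

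For~\ref{item:argrmax-objective-equivalence:2}, suppose $\gradA{\mathrm{w}}_{\tau_0}(\xi)=\gradA{\mathrm{w}}_{\tau_0}(\xi')$ for some $\xi\neq\xi'$. Using~\eqref{eq:gradA-WW-well-defined-carphiQ}, pairing the difference with $\xi-\xi'$ and applying the adjoint identity~\eqref{eq:defajoint} gives
\[
0=\int_\Zset\tilde{\varphi}(\ell)\,\pscal[E]{\nabla A(\ell(\xi))-\nabla A(\ell(\xi'))}{\ell(\xi)-\ell(\xi')}\,\tau_0(\rmd\ell)\;.
\]
By~\eqref{eq:kullback-exponential-canonical} of \Cref{lem:useful-1} (equivalently, by strict convexity of $A$ on $\intE_0$, whose Hessian is positive definite there under~\ref{hyp:expo}), the integrand is non-negative and strictly positive whenever $\ell(\xi)\neq\ell(\xi')$ and $\tilde{\varphi}(\ell)>0$; nullity of the integral thus forces $\tau_0\lr{\set{\ell\in\Zset}{\ell(\xi-\xi')\neq0,\ \tilde{\varphi}(\ell)>0}}=0$, contradicting~\eqref{eq:cond-strict-ineq-varphi-L}. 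Hence $\gradA{\mathrm{w}}_{\tau_0}$ is one-to-one on $\changevar(\tilde{\Tset})$ and $\Xi^*$, a fibre of $\gradA{\mathrm{w}}_{\tau_0}$, is empty or a singleton.

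The main obstacle is not the optimality manipulations but the preliminary integrability bookkeeping: one must show that $\mathcal{C}(\xi)$ is finite for every $\xi\in\changevar(\tilde{\Tset})$ and that~\eqref{eq:basin-expo-family-convex-ineq} integrates against $w$ by Fubini to give the displayed one-sided bound. The contribution of $\varphi$ is exactly controlled by~\eqref{eq:varphiQ:condS}, and the increments $A(\ell(\xi'))-A(\ell(\xi))$ (which, by convexity of $A$, also sandwich the log ratio between two affine-in-$S$ expressions) are controlled by~\eqref{eq:varphiQ:cond_ell}; the delicate term is the regularisation part $b\,\tilde{\varphi}(\ell)\,k^{(o)}(\ell(\xi_0),y)$, whose integral against $\normev[E]{S(y)}$ has to be estimated through moment properties of the canonical exponential family, which is precisely what the technical lemmas accompanying \Cref{thm:argrmax-objective-equivalence} are designed for.
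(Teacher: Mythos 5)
Your variational-inequality core is exactly the paper's argument: the same use of \eqref{eq:basin-expo-family-convex-ineq} from \Cref{lem:useful-1}, the same bound $\mathcal{C}(\xi)-\mathcal{C}(\xi')\geq(1+b)\pscal[F]{\mathbf{w}^*-\gradA{\mathrm{w}}_{\tau_0}(\xi)}{\xi-\xi'}$, the same choice $\xi'-\xi\propto\mathbf{w}^*-\gradA{\mathrm{w}}_{\tau_0}(\xi)$ with directional continuity for Assertion~\ref{item:argrmax-objective-equivalence:1}, and for Assertion~\ref{item:argrmax-objective-equivalence:2} your contradiction argument is equivalent to the paper's direct proof that $\pscal[F]{\gradA{\mathrm{w}}_{\tau_0}(\xi)-\gradA{\mathrm{w}}_{\tau_0}(\xi')}{\xi-\xi'}>0$ via positive definiteness of $\nabla\nabla^TA$ and \eqref{eq:cond-strict-ineq-varphi-L}. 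Up to that point there is nothing to object to.

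The genuine gap is in the step you yourself flag as the main obstacle and then resolve incorrectly: the finiteness of $\mathcal{C}(\xi)$ and the Fubini manipulations for the regularisation term $b\,\tilde{\varphi}(\ell)\,k^{(o)}(\ell(\xi_0),y)$. You propose to control its integral against $\normev[E]{S(y)}$ "through moment properties of the canonical exponential family, which is precisely what the technical lemmas accompanying \Cref{thm:argrmax-objective-equivalence} are designed for". This is not available here: \Cref{lem:technical-exp-moment-sup-bound} and \Cref{lem:gradAw-continuity-fubini} rely on \ref{item:condB-sup-exp-moment}--\ref{item:condB-stilde-ratio} and on the stronger integrability condition \eqref{eq:cond-checkVarphi-linear-mixture} on $\check{\varphi}$, none of which are hypotheses of \Cref{prop:argrmax-objective-equivalence}; and indeed a symmetric bound of the log-ratio by $\normev[F]{\ell^T\circ S(y)}+\normev[F]{\gradA{\ell}(\cdot)}$ integrated against $\tilde{\varphi}(\ell)\,k^{(o)}(\ell(\xi_0),y)$ is not controlled by \eqref{eq:varphiQ:condS}--\eqref{eq:varphiQ:cond_ell} alone. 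The paper closes this step differently, using only the stated hypotheses and treating the regularisation term asymmetrically: the positive part of the log is bounded via $\log_+(x)\leq x$, so that its weighted integral is at most $\int\tilde{\varphi}\,\rmd\tau_0=1$ (see \eqref{eq:argrmax-objective-well-defined-carphiQ-2bis}), while the negative part is bounded below through the negated Kullback--Leibler inequality \eqref{eq:kullback-exponential-canonical}, which only involves $\gradA{\ell}$ at the two points $\xi_0,\xi$ and is integrable by \eqref{eq:varphiQ:cond_ell} (see \eqref{eq:argrmax-objective-well-defined-carphiQ-2}); Tonelli on the positive part then yields \eqref{eq:argrmax-objective-well-defined-carphiQ-2ter} and legitimises the interchanges of integration, with \eqref{eq:partialA:expo-family} giving $\int k^{(o)}(\ell(\xi_0),y)\,\ell^T\circ S(y)\,\nu(\rmd y)=\gradA{\ell}(\xi_0)$ pointwise in $\ell$. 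Your sketch as written therefore either imports hypotheses the proposition does not have, or leaves the key integrability unproven; replacing the last paragraph by the above asymmetric treatment makes the proof complete and identical in substance to the paper's.
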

\begin{proof}
  In this proof, we take $b\geq0$ and $\xi_0\in\changevar(\tilde{\Tset})$, and
  we  denote the integral defining the objective function in the
  argmax problem~(\ref{eq:argrmax-objective-equivalence:armax-prob}) by
  $\mathcal{C}(\xi)$.

  A first remark is that, by~(\ref{eq:varphiQ:condS}),
  $\int \varphi(\ell,y)\,\ell^T\circ S(y) \;\nu(\rmd y)\tau_0 (\rmd \ell)$ is well-defined in $F$ and $\gradA{\mathrm{w}}_{\tau_0}(\xi)$   in~(\ref{eq:gradA-WW-well-defined-carphiQ}) is well-defined in $F$ for all $\xi\in\changevar(\tilde{\Tset})$ as a consequence of~(\ref{eq:varphiQ:cond_ell}). In particular we get that $\mathbf{w}^*$ in~(\ref{eq:wstar-defined-check-varphiQ}) is well-defined
  in $F$ (but does not necessarily belong to $w_{\tau_0}^\nabla \circ\changevar(\tilde{\Tset})$). 

  Next we show that the objective function $\mathcal{C}(\xi)$ is well
  defined and finite for all $\xi\in\changevar(\tilde{\Tset})$ and
  that Fubini's theorem applies (so that the double integral
  in~(\ref{eq:argrmax-objective-equivalence:armax-prob}) can be
  computed in the opposite order or as a single integral over
  $\Yset\times\Zset$ w.r.t. the product measure $\nu\otimes\tau_0$). To
  this end, we show successively that
\begin{align}
  \label{eq:argrmax-objective-well-defined-carphiQ}
&  \int_{\Yset\times\Zset}
\varphi(\ell,y)\,
  \lrav{\log
    \lr{\frac{k^{(o)}(\ell(\xi),y)}{k^{(o)}(\ell(\xi_0),y)
                                                      }}}\; \nu(\rmd y)\tau_0 (\rmd \ell) <\infty\;,\\
  \label{eq:argrmax-objective-well-defined-carphiQ-2bis}
& \int_{\Yset\times\Zset} \tilde{\varphi}(\ell)\,k^{(o)}(\ell(\xi_0),y)\,
  \log_+
  \lr{\frac{k^{(o)}(\ell(\xi),y)}{k^{(o)}(\ell(\xi_0),y)
    }}\; \nu(\rmd y)\tau_0 (\rmd \ell) < \infty \;,\\
  \label{eq:argrmax-objective-well-defined-carphiQ-2}
& -\infty < \int_\Zset \tilde{\varphi}(\ell)\,\lr{\int_\Yset\,k^{(o)}(\ell(\xi_0),y)\,
  \log
  \lr{\frac{k^{(o)}(\ell(\xi),y)}{k^{(o)}(\ell(\xi_0),y)
                                                        }}\; \nu(\rmd y)}\tau_0 (\rmd \ell) \leq 0\;,\\
  &
  \label{eq:argrmax-objective-well-defined-carphiQ-2ter}
\int_{\Yset\times\Zset} \tilde{\varphi}(\ell)\,k^{(o)}(\ell(\xi_0),y)\,
  \lrav{\log
  \lr{\frac{k^{(o)}(\ell(\xi),y)}{k^{(o)}(\ell(\xi_0),y)
    }}}\; \nu(\rmd y)\tau_0 (\rmd \ell) < \infty
\;.
\end{align}
  Let us now show
  that~(\ref{eq:argrmax-objective-well-defined-carphiQ}) holds. Let
  $\ell\in\Zset$ such that
  $\ell\circ\changevar(\tilde{\Tset})\subseteq\intE_0$, which by
  Condition~\ref{asump:gradA-well-def} and~\ref{asump:Upsilon-image},
  is true for $\tau_0$-a.e. $\ell\in\Zset$.
  Applying~(\ref{eq:basin-expo-family-convex-ineq}) in \Cref{lem:useful-1} with
  $(\zeta,\zeta')=(\ell(\xi_0),\ell(\xi))$ as well as $(\zeta,\zeta')=(\ell(\xi),\ell(\xi_0))$ and using the definition of $\ell^T$
  in~(\ref{eq:defajoint}) and that of
  $\gradA{\ell}$ in~(\ref{eq:Q:gradA-def}), we get that
  $$
\pscal[F]{\ell^T\circ S(y)-\gradA{\ell}(\xi)}{\xi-\xi_0} \leq \log
\lr{\frac{k^{(o)}(\ell(\xi),y)}{k^{(o)}(\ell(\xi_0),y)
  }}\leq \pscal[F]{\ell^T\circ S(y)-\gradA{\ell}(\xi_0)}{\xi-\xi_0}\;.
$$
It follows that, for all $\xi_0,\xi\in\changevar(\tilde{\Tset})$,
\begin{equation}
  \label{eq:argrmax-objective-well-defined-carphiQ-0}
\lrav{\log\lr{\frac{k^{(o)}(\ell(\xi),y)}{k^{(o)}(\ell(\xi_0),y) }}}\leq
\lr{\normev[F]{\ell^T\circ S(y)}+\max (\normev[F]{\gradA{\ell}(\xi_0)},\normev[F]{\gradA{\ell}(\xi)})}\,\normev[F]{\xi-\xi_0}\;.
\end{equation}
Using $\normev[F]{\ell^T\circ
  S(y)}\leq\normop{\ell^T}\normev[E]{S(y)}=\normop{\ell}\normev[E]{S(y)}$,~(\ref{eq:varphiQ:condS})
and~(\ref{eq:varphiQ:cond_ell}), we
obtain~(\ref{eq:argrmax-objective-well-defined-carphiQ}).

To show~(\ref{eq:argrmax-objective-well-defined-carphiQ-2bis}), we use
the bound $\log_+(x)\leq x$ for all $x\geq0$ so that the left-hand
side of~(\ref{eq:argrmax-objective-well-defined-carphiQ-2bis}) is upper-bounded by 
$\int_{\Yset\times\Zset} \tilde{\varphi}(\ell)\,k^{(o)}(\ell(\xi),y)\;
\nu(\rmd y)\tau_0 (\rmd \ell)=\int_{\Zset}
\tilde{\varphi}(\ell)\;\tau_0 (\rmd \ell)=1$.

Next, to show~(\ref{eq:argrmax-objective-well-defined-carphiQ-2}), we observe that, as
a consequence
of~(\ref{eq:kullback-exponential-canonical}),~(\ref{eq:defajoint})
and~(\ref{eq:Q:gradA-def}), we have 
\begin{align*}
  0\geq \int_\Yset
  k^{(o)}(\ell(\xi),y)\,
  \log
  \lr{\frac{k^{(o)}(\ell(\xi),y)}{k^{(o)}(\ell(\xi_0),y)
    }}\; \nu(\rmd y)
  &\geq\pscal[F]{\gradA{\ell}(\xi_0)-\gradA{\ell}(\xi)}{\xi-\xi_0}\\
  &\geq
  -\normev[F]{\gradA{\ell}(\xi_0)-\gradA{\ell}(\xi)}\,\normev[F]{\xi-\xi_0}\;.
\end{align*}
Hence, using~(\ref{eq:varphiQ:cond_ell}), we
get~(\ref{eq:argrmax-objective-well-defined-carphiQ-2}). 

Finally, Tonelli's theorem gives us that the same positive part as
in~(\ref{eq:argrmax-objective-well-defined-carphiQ-2bis}) has a finite
integral when integrating in the same order as
in~(\ref{eq:argrmax-objective-well-defined-carphiQ-2}), so
that~(\ref{eq:argrmax-objective-well-defined-carphiQ-2bis})
and~(\ref{eq:argrmax-objective-well-defined-carphiQ-2})
imply~(\ref{eq:argrmax-objective-well-defined-carphiQ-2ter}).

We now prove the inclusion of $\Xi^*$ in the set of argmax solutions. 
Since Fubini's theorem applies in the definition of the objective
function, we can write, for all $\xi,\xi'\in\changevar(\tilde{\Tset})$,
$$
\mathcal{C}(\xi')-\mathcal{C}(\xi)
=
\int_{\Yset\times\Zset}
  \lrb{\varphi(\ell,y) + b\,\tilde{\varphi}(\ell)\,k^{(o)}(\ell(\xi_0),y)}\,
  \log
  \lr{\frac{k^{(o)}(\ell(\xi'),y)}{k^{(o)}(\ell(\xi),y)
    }}\; \nu(\rmd y)\tau_0 (\rmd \ell)\;.
$$
Using~(\ref{eq:basin-expo-family-convex-ineq}) in \Cref{lem:useful-1} this time with $\zeta=\ell(\xi)$
and $\zeta'=\ell(\xi')$, we obtain that
$$
\mathcal{C}(\xi')-\mathcal{C}(\xi)
\geq\int_{\Yset\times\Zset}
  \lrb{\varphi(\ell,y) + b\,\tilde{\varphi}(\ell) k^{(o)}(\ell(\xi_0),y)}
\pscal[E]{\ell^T\circ S(y)-\gradA{\ell}(\xi')}{\xi'-\xi}
  \nu(\rmd y)\tau_0 (\rmd \ell).
$$
  Using~(\ref{eq:varphiQ:condS}),~(\ref{eq:varphiQ:cond_ell}) to
  integrate in any convenient order, we have 
\begin{multline*}
\int_{\Yset\times\Zset}
\varphi(\ell,y)\,
\pscal[E]{\ell^T\circ S(y)-\gradA{\ell}(\xi')}{\xi'-\xi}  \; \nu(\rmd y)\tau_0 (\rmd \ell)\\ =
\pscal[E]{\int_{\Yset\times \Zset}
\varphi(\ell,y)\,
\ell^T\circ S(y)\; \nu(\rmd y)\tau_0 (\rmd \ell)-\gradA{\mathrm{w}}_{\tau_0}(\xi')}{\xi'-\xi}
  \;,
\end{multline*}
where we have used the definition of $\gradA{\mathrm{w}}_{\tau_0}$ in (\ref{eq:gradA-WW-well-defined-carphiQ}), and
\begin{multline*}
  \int_{\Yset\times\Zset}
\tilde{\varphi}(\ell)\,k^{(o)}(\ell(\xi_0),y)\,
\pscal[E]{\ell^T\circ S(y)-\gradA{\ell}(\xi')}{\xi'-\xi}  \; \nu(\rmd
    y)\tau_0 (\rmd \ell) \\
 = \pscal[E]{\gradA{\mathrm{w}}_{\tau_0}(\xi_0)-\gradA{\mathrm{w}}_{\tau_0}(\xi')}{\xi'-\xi}\;.
\end{multline*}
where we have used once again the definition of
$\gradA{\mathrm{w}}_{\tau_0}$ in
(\ref{eq:gradA-WW-well-defined-carphiQ}) as well as the expression of
$\nabla A$ given in (\ref{eq:partialA:expo-family}).  Plugging these
two equalities in the previous inequality, we finally get that, for
all $\xi,\xi'\in\changevar(\tilde{\Tset})$,
\begin{align}
  \nonumber
\mathcal{C}(\xi')-\mathcal{C}(\xi)
&\geq
\pscal[E]{\int_{\Yset \times \Zset}
\varphi(\ell,y)\,
\ell^T\circ S(y)\; \nu(\rmd y)\tau_0 (\rmd
                                     \ell)-(1+b)\,\gradA{\mathrm{w}}_{\tau_0}(\xi')+b\,\gradA{\mathrm{w}}_{\tau_0} (\xi_0)}{\xi'-\xi}\\
  \label{eq:charac-argmax-varphi-Q}
&  =(1+b)\;\pscal[E]{\mathbf{w}^*-\gradA{\mathrm{w}}_{\tau_0}(\xi')}{\xi'-\xi}\;,
\end{align}
Hence any $\xi'\in\Xi^*$ must be a a solution of the
argmax~(\ref{eq:argrmax-objective-equivalence:armax-prob}). 

Let us now prove
Assertion~\ref{item:argrmax-objective-equivalence:1}. We use
again~(\ref{eq:charac-argmax-varphi-Q}) this time with $\xi$ in the
set of the argmax solutions and $\xi'$ in the neighborhood of $\xi$
(remember that $\changevar(\tilde{\Tset})$ is assumed to be open) such that
$\xi'-\xi=\epsilon\lr{\mathbf{w}^*-\gradA{\mathrm{w}}_{\tau_0}(\xi)}$ with $\epsilon>0$
small enough. We obtain that
$$
\pscal[E]{\mathbf{w}^*-\gradA{\mathrm{w}}_{\tau_0}(\xi')}{\mathbf{w}^*-\gradA{\mathrm{w}}_{\tau_0}(\xi)}\leq 0\;.
$$
The
assumption that $\gradA{\mathrm{w}}_{\tau_0}$ is continuous at $\xi$ in
direction $\lr{\mathbf{w}^*-\gradA{\mathrm{w}}_{\tau_0}(\xi)}$ gives us that
$\mathbf{w}^*=\gradA{\mathrm{w}}_{\tau_0}(\xi)$, that is, $\xi\in\Xi^*$ and the
opposite inclusion is shown.

We conclude with the proof of
Assertion~\ref{item:argrmax-objective-equivalence:2}. By~(\ref{eq:gradA-WW-well-defined-carphiQ}),
for all
$\xi,\xi'\in\changevar(\tilde{\Tset})$, we can write
$$
\gradA{\mathrm{w}}_{\tau_0}(\xi)-\gradA{\mathrm{w}}_{\tau_0}(\xi')=\int_\Zset \tilde{\varphi}(\ell)\,\lr{\gradA{\ell}(\xi)-\gradA{\ell}(\xi')}\;\tau_0 (\rmd \ell)\;.
$$
Since $\nabla A$ has a positive definite Jacobian on the convex set $\intE_0$, we
have, for all $\zeta\neq\zeta'\in\intE_0$,
$\pscal[E]{\nabla A(\zeta)-\nabla A(\zeta')}{\zeta-\zeta'}>0$.

Letting
$\xi \neq \xi'\in\changevar(\tilde{\Tset})$ and $\ell\in\Zset$ be such
that $\ell(\xi),\ell(\xi')\in\intE_0$ and $\ell(\xi-\xi')\neq0$, we
can apply the previous inequality with $\zeta=\ell(\xi)$ and
$\zeta'=\ell(\xi')$ leading to
$$
\pscal[F]{\gradA{\ell}(\xi)-\gradA{\ell}(\xi')}{\xi-\xi'}>0\;.
$$
Combining this result with the condition~(\ref{eq:cond-strict-ineq-varphi-L}) gives us that for all $\xi\neq\xi'\in\changevar(\tilde{\Tset})$
$$\pscal[F]{\gradA{\mathrm{w}}_{\tau_0}(\xi)-\gradA{\mathrm{w}}_{\tau_0}(\xi')}{\xi-\xi'}>0\;.$$ In
particular, $\gradA{\mathrm{w}}_{\tau_0}$ is one-to-one and the proof is concluded.
\end{proof}
Let $d_F$ denote the dimension of $F$ and let us now introduce and prove two key lemmas.

\begin{lem}\label{lem:technical-exp-moment-sup-bound}
  Let $\xi\in\changevar(\tilde{\Tset})$ and let $x\in F$ be such that
  $\normev[F]{x}=1$. Let us denote, for any $\epsilon>0$,
  \begin{align*}
  &B_F(\xi,\epsilon)=\set{\xi'\in F}{\normev[F]{\xi-\xi'}<\epsilon}\;,\\
  &  D_x(\xi,\epsilon)=\set{\xi+t\,\epsilon\,x}{0\leq t<1}\;.
  \end{align*}
  Let $\ell\in\Zset$ be such that $\ell\circ\changevar(\tilde{\Tset})\subset\intE_0$.
  Then the two following assertions hold.

  \begin{enumerateList}%{enumerate}[label=(\roman*)]
  \item \label{item:technical-exp-moment-sup-bound1}   For any $0<\epsilon\leq\epsilon'<\epsilon''$ such that
  $D_x(\xi,\epsilon'')\subset\changevar(\tilde{\Tset})$, we have
  \begin{align}\label{eq:diffAL}
    \sup_{\xi'\in D_x(\xi,\epsilon)}\lrav{A\circ \ell(\xi')-A\circ
      \ell(\xi)}\leq\epsilon\,\max\lr{\normev[F]{\gradA{\ell}(\xi)},\normev[F]{\gradA{\ell}(\xi+\epsilon'
      x)}}\;.
  \end{align}
  \item \label{item:technical-exp-moment-sup-bound2} For any $\epsilon>0$ such that
  $B_F\lr{\xi,2\lr{1+\sqrt{d_F}}\epsilon}\subset\changevar(\tilde{\Tset})$,
  we have, for all $\epsilon_0\in(0,\epsilon]$,
  \begin{align}
    \label{eq:sup-k-is-sup-gradA}
    &  \int_\Yset \rme^{\epsilon_0\normev[F]{\ell^T\circ S(y)}}\,
  \lr{\sup_{\xi'\in D_x(\xi,\epsilon_0)}k^{(o)}(\ell(\xi'),y)}\;\nu(\rmd y)\leq
  2\,d_F\,\max_{
    \xi'\in \Delta_{x,\epsilon}(\xi)}\lr{\rme^{\epsilon_0'\normev[F]{\gradA{\ell}(\xi')}}} \;,
\end{align}
where $\epsilon_0'=\lr{1+2\sqrt{d_F}}\epsilon_0$ and
$\Delta_{x,\epsilon}(\xi)$ is a set of $2(d_F+1)$ points in
$B_x(\xi,2\lr{1+\sqrt{d_F}}\epsilon)$
only depending on $\xi$, $x$ and $\epsilon$.
\end{enumerateList}
\end{lem}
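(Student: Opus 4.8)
\textbf{Proof plan for \Cref{lem:technical-exp-moment-sup-bound}.}

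The plan is to treat the two assertions in turn, both resting on the fundamental identity $\nabla(A\circ\ell)(\xi')=\ell^T\circ\nabla A(\ell(\xi'))=\gradA{\ell}(\xi')$, which follows from the chain rule together with the definition~\eqref{eq:Q:gradA-def} of $\gradA{\ell}$ and the fact that $\ell\circ\changevar(\tilde{\Tset})\subset\intE_0$ (so that $A$ is differentiable along the relevant segments). For assertion~\ref{item:technical-exp-moment-sup-bound1}, I would parametrise the segment $D_x(\xi,\epsilon)$ by $t\mapsto\xi+t\,\epsilon\,x$ for $t\in[0,1)$ and write $A\circ\ell(\xi+t\epsilon x)-A\circ\ell(\xi)=\int_0^t\epsilon\,\pscal[F]{x}{\gradA{\ell}(\xi+s\epsilon x)}\,\rmd s$. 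Taking absolute values, using $\normev[F]{x}=1$ and Cauchy--Schwarz gives the bound $\epsilon\,\sup_{0\le s\le\epsilon}\normev[F]{\gradA{\ell}(\xi+s x)}$, after which I would observe that $s\mapsto\normev[F]{\gradA{\ell}(\xi+s x)}$ is a convex function of $s$ (because $\gradA{\ell}$ is affine-composed with the convex gradient $\nabla A$ — more precisely, $\pscal[F]{u}{\gradA{\ell}(\xi+sx)}$ is monotone in $s$ for each fixed $u$ since $\nabla A$ is a monotone operator, hence the sup over the segment is attained at an endpoint). This yields the max of the two endpoint values $\normev[F]{\gradA{\ell}(\xi)}$ and $\normev[F]{\gradA{\ell}(\xi+\epsilon' x)}$, using $\epsilon\le\epsilon'$ and $D_x(\xi,\epsilon'')\subset\changevar(\tilde{\Tset})$ to ensure the endpoint lies in the domain.

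For assertion~\ref{item:technical-exp-moment-sup-bound2}, the strategy is to bound $\sup_{\xi'\in D_x(\xi,\epsilon_0)}k^{(o)}(\ell(\xi'),y)$ by controlling the exponent. Writing $k^{(o)}(\ell(\xi'),y)=h(y)\exp(\pscal[E]{\ell(\xi')}{S(y)}-A(\ell(\xi')))$ and using $\pscal[E]{\ell(\xi')}{S(y)}=\pscal[F]{\xi'}{\ell^T\circ S(y)}$ with $\normev[F]{\xi'-\xi}\le\epsilon_0$, one gets $\pscal[F]{\xi'}{\ell^T\circ S(y)}\le\pscal[F]{\xi}{\ell^T\circ S(y)}+\epsilon_0\normev[F]{\ell^T\circ S(y)}$; combined with assertion~\ref{item:technical-exp-moment-sup-bound1} (applied with $\epsilon=\epsilon_0$ and an appropriate $\epsilon'$) to handle the $-A(\ell(\xi'))$ term, this shows $\sup_{\xi'\in D_x(\xi,\epsilon_0)}k^{(o)}(\ell(\xi'),y)\le k^{(o)}(\ell(\xi),y)\,\exp(\epsilon_0\normev[F]{\ell^T\circ S(y)})\,\exp(\epsilon_0\max(\dots))$. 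Multiplying by $\rme^{\epsilon_0\normev[F]{\ell^T\circ S(y)}}$ and integrating, I am left with $\int_\Yset \rme^{2\epsilon_0\normev[F]{\ell^T\circ S(y)}}k^{(o)}(\ell(\xi),y)\,\nu(\rmd y)$ times a constant, i.e. an exponential-moment integral of $\normev[F]{\ell^T\circ S(Y)}$ under $k^{(o)}(\ell(\xi),\cdot)$. The natural way to bound such a moment is to compare $\normev[F]{v}$ with $\sum_{i}\lrav{\pscal[F]{e_i}{v}}$ over an orthonormal basis $(e_i)_{1\le i\le d_F}$ of $F$ (costing a factor $\sqrt{d_F}$), then split each $\lrav{\pscal[F]{e_i}{\ell^T\circ S(Y)}}$ into its two signs and use that $\PE[\rme^{t\pscal[F]{e_i}{\ell^T\circ S(Y)}}]$ under $k^{(o)}(\ell(\xi),\cdot)$ equals $\exp(A(\ell(\xi)+t\ell^{TT}e_i)-A(\ell(\xi)))=\exp(A\circ\ell(\xi+t e_i)-A\circ\ell(\xi))$ (here $\ell^{TT}=\ell$ on the finite-dimensional spaces after identification, with the understanding that $t e_i$ must be rescaled so that $\xi\pm t e_i$ stays in $\changevar(\tilde{\Tset})$, which is why the enlarged ball $B_F(\xi,2(1+\sqrt{d_F})\epsilon)$ appears). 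Applying assertion~\ref{item:technical-exp-moment-sup-bound1} once more to each such exponent-of-$A$ difference converts the whole thing into a finite maximum of $\rme^{\epsilon_0'\normev[F]{\gradA{\ell}(\xi')}}$ over the $2(d_F+1)$ points $\xi,\ \xi+\epsilon_0 x$, and $\xi\pm(1+\sqrt{d_F})\epsilon_0 e_i$ for $i=1,\dots,d_F$, which we collect into $\Delta_{x,\epsilon}(\xi)$; the combinatorial factor $2d_F$ and the inflated constant $\epsilon_0'=(1+2\sqrt{d_F})\epsilon_0$ come out of bookkeeping the Hölder/union-bound steps.

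The main obstacle I anticipate is the careful bookkeeping in assertion~\ref{item:technical-exp-moment-sup-bound2}: keeping track of exactly which points enter $\Delta_{x,\epsilon}(\xi)$, verifying they all lie in $B_F(\xi,2(1+\sqrt{d_F})\epsilon)$ so that the hypothesis $B_F(\xi,2(1+\sqrt{d_F})\epsilon)\subset\changevar(\tilde{\Tset})$ legitimises every use of assertion~\ref{item:technical-exp-moment-sup-bound1}, and making sure the constants $2d_F$ and $(1+2\sqrt{d_F})$ are achievable (for instance, whether to Hölder-split the product $\rme^{2\epsilon_0\normev[F]{\cdot}}$ with exponents $(2,2)$ or to bound $2\epsilon_0\normev[F]{v}\le 2\epsilon_0\sqrt{d_F}\max_i\lrav{\pscal[F]{e_i}{v}}$ directly and absorb the sum into a maximum). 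The exponential-moment formula for the exponential family and the convexity-based endpoint argument are both routine; the dimension-dependent constants are where the proof demands attention, but nothing deeper than elementary inequalities is involved.
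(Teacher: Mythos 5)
Your overall route coincides with the paper's: for assertion (i), differentiate $t\mapsto A\circ\ell(\xi+tx)$ and use the positive (semi)definiteness of $\nabla\nabla^TA$ along the segment; for assertion (ii), absorb $\epsilon_0\normev[F]{\ell^T\circ S(y)}$ into shifted canonical parameters via an orthonormal basis of $F$ and the normalisation of the exponential family, then invoke assertion (i) twice. Two steps, however, do not work as you state them. In (i), you apply Cauchy--Schwarz along the whole segment and then claim that $s\mapsto\normev[F]{\gradA{\ell}(\xi+sx)}$ is maximised at an endpoint, justified either by convexity of this norm or by monotonicity of $s\mapsto\pscal[F]{u}{\gradA{\ell}(\xi+sx)}$ for every fixed $u$. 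Neither claim holds in general: the derivative of the latter is $\pscal[E]{\ell(u)}{\lrb{\nabla\nabla^TA\circ\ell(\xi+sx)}\ell(x)}$, which has constant sign only when $u$ is proportional to $x$; monotonicity of $\nabla A$ controls only the component in the direction of the segment, and the norm of the gradient along a line need not peak at its endpoints. The quantity you actually need, $h_1(s)=\pscal[F]{\gradA{\ell}(\xi+sx)}{x}$, \emph{is} monotone (its derivative is the quadratic form $\pscal[E]{\lrb{\nabla\nabla^TA\circ\ell(\xi+sx)}\ell(x)}{\ell(x)}\geq0$), so the correct order is the paper's: reduce $\lrav{h_1(s)}$ on $[0,\epsilon']$ to $\max\lr{\lrav{h_1(0)},\lrav{h_1(\epsilon')}}$ first, then apply Cauchy--Schwarz only at those two endpoints.

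In (ii), your proposed set $\Delta$ consists of $\xi$, $\xi+\epsilon_0x$ and $\xi\pm(1+\sqrt{d_F})\epsilon_0e_i$, which depends on $\epsilon_0$. The lemma requires a single set $\Delta_{x,\epsilon}(\xi)$ depending only on $\xi$, $x$ and $\epsilon$, valid uniformly over $\epsilon_0\in(0,\epsilon]$; this uniformity is precisely why assertion (i) carries the two scales $\epsilon\leq\epsilon'$, and it matters in \Cref{lem:gradAw-continuity-fubini}, where $\epsilon_0$ is shrunk \emph{after} the finite set $\Delta$ (and the constants it produces through the exponential-moment assumption) has been fixed -- with an $\epsilon_0$-dependent $\Delta$ that argument becomes circular. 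The fix is to evaluate all gradient norms at points at scale $\epsilon$, e.g. $\xi$, $\xi+\epsilon'x$ and $\xi\pm\epsilon'x_k$ with $\epsilon'=(1+2\sqrt{d_F})\epsilon$, by applying assertion (i) with a segment of length of order $\epsilon_0$ but endpoint parameter $\epsilon'$. With that modification, your computation is equivalent to the paper's: your MGF identity $\int\rme^{t\pscal[F]{e_i}{\ell^T\circ S(y)}}k^{(o)}(\ell(\xi),y)\,\nu(\rmd y)=\rme^{A\circ\ell(\xi+te_i)-A\circ\ell(\xi)}$ together with the max-over-$2d_F$-signs bound (your second option; H\"older is neither needed nor helpful) is the same as the paper's pointwise domination by the $2d_F$ shifted kernels $k^{(o)}(\ell(\xi+2s\sqrt{d_F}\epsilon_0x_k),\cdot)$, each integrating to one.
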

\begin{proof} We prove
  Assertions~\ref{item:technical-exp-moment-sup-bound1}
  and~\ref{item:technical-exp-moment-sup-bound2} successively.

  \noindent\textbf{Proof of Assertion~\ref{item:technical-exp-moment-sup-bound1}} Let $0<\epsilon\leq\epsilon'<\epsilon''$ be such that
  $D_x(\xi,\epsilon'')\subset\changevar(\tilde{\Tset})$. First note that $t\mapsto A\circ \ell(\xi+t\,x)$ has its
  first and second derivative on $t\in[0,\epsilon']$ given respectively by
  \begin{align*}
  &h_1 : t\mapsto\pscal[E]{\nabla A\circ \ell(\xi+t\,x)}{\ell(x)} = \pscal[F]{\gradA{\ell} (\xi+t\,x)}{x} \\
  &h_2 : t\mapsto\pscal[E]{\lrb{\nabla\nabla^T A\circ
      \ell(\xi+t\,x)}\ell(x)}{\ell(x)} \eqsp.
  \end{align*}
  Since $\nabla\nabla^T A$ is positive definite on $\intE_0$, $h_2$ is non-negative, thus $h_1$ is monotonous
  on $[0,\epsilon']$ and its maximal absolute value is obtained at $t=0$ or $t=\epsilon'$, meaning that $|h_1(t)| \leq \max(|h_1(0)|, |h_1(\epsilon')|)$ for all $t \in [0,\epsilon']$. Therefore, the bound
  in~(\ref{eq:diffAL}) follows from the mean value theorem        
  combined with the fact that: for all $t \in [0,\epsilon']$ and $\xi'\in D_x(\xi,\epsilon)$,
      \begin{align*}
      & |h_1(t)| \leq \normev[F]{\gradA{\ell}(\xi+t
      x)} \normev[F]{x} = \normev[F]{\gradA{\ell}(\xi+t
      x)}\\
      & \normev[F]{\xi' - \xi} \leq \epsilon \normev[F]{x} \leq \epsilon
      \end{align*}
     since $\normev[F]{x}=1$ by assumption.

       \noindent\textbf{Proof of
           Assertion~\ref{item:technical-exp-moment-sup-bound2}} Take $0<\epsilon_0\leq\epsilon$
  such that
  $B_x(\xi,2\lr{1+\sqrt{d_F}}\epsilon)\subset\changevar(\tilde{\Tset})$, let
   $\epsilon_0'$ be as stated and set $\epsilon'=\lr{1+2\sqrt{d_F}}\epsilon$.  Let
  $x_1,\dots,x_{d_F}$ be an orthonormal basis of $F$ so that, for any
  $w\in E$, we have
  \begin{align}
    \nonumber
    \normev[F]{\ell^T(w)}=\lr{\sum_{k=1}^{d_F}\lrav{\pscal[F]{\ell^T(w)}{x_k}}^2}^{1/2}
    &\leq\sqrt{d_F}\max_{k,s}\lr{s\,\pscal[F]{\ell^T(w)}{x_k}}\\
     \label{eq:boundOrthonormalBasis}
    &=\max_{k,s}\pscal[E]{w}{\ell(s\,\sqrt{d_F}\,x_k)}\;,
  \end{align}
  where, here and in the following, $\max_{k,s}$ is the maximum over $s\in \{\pm1 \}$ and $k=1 \dots d_F$. 
    By definition of $k^{(o)}$ in \Cref{def:expo-family}, we also have: for all $\xi'\in
    D_x(\xi,\epsilon_0)$,
    \begin{align*}
      \rme^{\epsilon_0\normev[F]{\ell^T\circ S(y)}}\,k^{(o)}(\ell(\xi'),y) &= h(y) ~
       \rme^{\epsilon_0\normev[F]{\ell^T\circ S(y)}+\pscal[E]{S(y)}{\ell(\xi')} - A\circ \ell(\xi')}
      \\
      &= h(y) ~
       \rme^{\epsilon_0\normev[F]{\ell^T\circ S(y)}+\pscal[E]{S(y)}{\ell(\xi'-\xi)}+\pscal[E]{S(y)}{\ell(\xi)}- A\circ \ell(\xi')}
    \end{align*}
    Now using that 
    $$
    |\pscal[E]{S(y)}{\ell(\xi'-\xi)}| = |\pscal[F]{\ell^T \circ S(y)}{\xi'-\xi}| \leq \epsilon_0\normev[F]{\ell^T\circ S(y)}
    $$
    (since $\normev[F]{x} = 1$ and $\xi'\in
    D_x(\xi,\epsilon_0)$) paired up with \eqref{eq:boundOrthonormalBasis}, we can deduce that 
    \begin{align}
      \rme^{\epsilon_0\normev[F]{\ell^T\circ S(y)}}\,k^{(o)}(\ell(\xi'),y) &\leq h(y) ~
       \rme^{2\epsilon_0\normev[F]{\ell^T\circ S(y)}+\pscal[E]{S(y)}{\ell(\xi)}- A\circ \ell(\xi')}
        \nonumber \\                                             
   & \leq                   \max_{k,s}h(y) \, \exp\lr{\pscal[E]{S(y)}{\ell(\xi+2\,s\,\sqrt{d_F}\, \epsilon_0\,x_k)}-A\circ
                                                                  \ell(\xi')} \nonumber \\
      &\leq \max_{k,s}k^{(o)}(\ell(\xi''_{s,k}),y)\,
        \rme^{\lrav{A\circ \ell\lr{\xi''_{s,k}}- A\circ \ell(\xi')}}\;, \label{eq:interMaxks}
    \end{align}
    where $\xi''_{s,k}=\xi+2\,s\,\sqrt{d_F}\,\epsilon_0\,x_k$.
Note that we took $\epsilon$ small enough so that for all $s,k$, 
$$
D_x(\xi,\epsilon'+\epsilon)\cup
D_{s\,x_k}\lr{\xi,\epsilon'+\epsilon}\subset B_F\lr{\xi,2\lr{1+\sqrt{d_F}}\epsilon}
\subset\changevar(\tilde{\Tset})\;.
  $$
 $\xi'$ and $\xi''_{s,k}$ are in
$D_x(\xi,\epsilon_0)$ and $D_{s\,x_k}\lr{\xi,2\sqrt{d_F}\epsilon_0}$,
respectively. Using~(\ref{eq:diffAL}) of
Assertion~\ref{item:technical-exp-moment-sup-bound1} twice with $\epsilon\leq
\epsilon'<\epsilon''$ successively  replaced by
$2\sqrt{d_F}\epsilon_0\leq\epsilon'\leq 2\lr{1+\sqrt{d_F}}\epsilon$
and $\epsilon_0\leq\epsilon'\leq
2\lr{1+\sqrt{d_F}}\epsilon$, we get that
    \begin{align*}
    \lrav{A\circ \ell\lr{\xi''_{s,k}}- A\circ \ell(\xi')}&
                                                           \leq
                                                           \lrav{A\circ
                                                           \ell\lr{\xi''_{s,k}}- A\circ \ell(\xi)}+\lrav{A\circ
                                                           \ell\lr{\xi}- A\circ \ell(\xi')}\\
                                                         &
                                                           \leq
                                                           2\sqrt{d_F}\epsilon_0\,
                                                           \max\lr{\normev[F]{\gradA{\ell}(\xi)},\normev[F]{\gradA{\ell}(\xi+\epsilon'\,s\,x_k)}}\\
    &\phantom{\leq}+
                                                           \epsilon_0\max\lr{\normev[F]{\gradA{\ell}(\xi)},\normev[F]{\gradA{\ell}(\xi+\epsilon'\,x)}}\\
                                                         &\leq\epsilon_0'\, C_{\xi,x}(\ell)\;,
  \end{align*}
  where the last line follows by setting
  $$
  C_{\xi,x}(\ell)=
  \max\set{\normev[F]{\gradA{\ell}(\xi+\epsilon'\,x')}}{x'\in \{0,\,x\,,\,s\,x_{k}\}\,,\,\text{with}\,s\in \{\pm1\}\,,\,k=1\dots d_F}\;.
  $$
  Combining this inequality with \eqref{eq:interMaxks} then yields : for all $\xi'\in
  D_x(\xi,\epsilon_0)$,
  \begin{align*}
    \rme^{\epsilon_0\normev[F]{\ell^T\circ S(y)}}\,k^{(o)}(\ell(\xi'),y) & \leq \rme^{\epsilon_0'\,C_{\xi,x}(\ell)} \max_{k,s}k^{(o)}(\ell(\xi''_{s,k}),y) \\
    & \leq \rme^{\epsilon_0'\,C_{\xi,x}(\ell)}
    \,\sum_{k,s} k^{(o)}(\ell(\xi''_{s,k}),y)\;.
  \end{align*}
Observe that the right-hand side of the previous display does not
depend on $\xi'$ and that the summands in the sum (which has $2d_F$
terms) all have integrals equal w.r.t. $\nu$ equal to 1. The
bound~(\ref{eq:sup-k-is-sup-gradA}) follows and
Assertion~\ref{item:technical-exp-moment-sup-bound2} is proved.
\end{proof}
\begin{lem}\label{lem:gradAw-continuity-fubini}
  Consider an ELM family as in \Cref{defi:LMEFfam}. Assume~\ref{asump:Upsilon-image} and~\ref{item:condB-stilde-ratio}.  Let
  $\vartheta_0\in\tilde{\Tset}$, $\tau_0\in\parammixproba$ and
  $\check{\varphi}:\Yset\to\rset_+$ be a density function such that  
    \begin{equation}
    \label{eq:cond-checkVarphi-linear-mixture-left}
    \int_\Yset \check{\varphi}(y)\,\rme^{\tilde{\mathrm{s}}_{\vartheta_0,\tau_0}(y)} \;\nu(\rmd y) <\infty
    \;. 
  \end{equation}
  Define $\varphi(\ell,y)$, $\tilde{\varphi}(\ell)$ and
  $\gradA{\mathrm{w}}_{\tau_0}$ as
  in~(\ref{eq:check-varphi-varphiL-carphiQ}),~(\ref{eq:check-WW-well-defined-carphiQ})
  and~(\ref{eq:gradA-WW-well-defined-carphiQ}).  Then for all
  $\xi\in\changevar(\tilde{\Tset})$ and $x\in F$ such that
  $\normev[F]{x}=1$, there exists $\epsilon_0>0$ such that
\begin{align}\label{eq:cond-gradA-w-cont-y-wise}
  \int_{\Yset \times \Zset} \normev[F]{\ell^T\circ S(y)}\sup_{\xi'\in D_x(\xi,\epsilon_0)} \lr{k^{(o)}(\ell(\xi'),y)}
  \,\tilde{\varphi}(\ell)\;\nu(\rmd y)\tau_0 (\rmd \ell) < \infty\;.
\end{align}
Moreover, $\gradA{\mathrm{w}}_{\tau_0}$ is continuous on
$\changevar(\tilde{\Tset})$ in every direction, and, for all
$\vartheta\in\tilde{\Tset}$, it holds that
\begin{align}\label{eq:cond-gradA-w-cont-y-wise-expr}
  \gradA{\mathrm{w}}_{\tau_0}\circ\changevar(\vartheta)
  = \PE_{\vartheta,\tau_0}\lrb{L^T\circ S(Y)\,\tilde{\varphi}(L)}\;.
\end{align}
\end{lem}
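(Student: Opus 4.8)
The plan is to derive all three conclusions of the lemma from the integrability estimate~\eqref{eq:cond-gradA-w-cont-y-wise}, which I would establish first. Throughout I would use two preliminary facts. By~\ref{asump:Upsilon-image} and \Cref{lem:C1cond}, for $\tau_0$-almost every $\ell\in\Zset$ the set $\ell\circ\changevar(\tilde{\Tset})$ is an open subset of $\intE_0$; hence $\gradA{\ell}=\ell^T\circ\nabla A\circ\ell$ (see~\eqref{eq:Q:gradA-def}) is well-defined and $\mathcal{C}^\infty$ on $\changevar(\tilde{\Tset})$, and, by linearity of $\ell^T$ under the Bochner integral defining $\nabla A$ in~\eqref{eq:partialA:expo-family} (finiteness of the first $S$-moment of $k^{(o)}(\zeta,\cdot)$ holding at every $\zeta\in\intE_0$), one has $\gradA{\ell}(\xi)=\int_\Yset\ell^T\circ S(y)\,k^{(o)}(\ell(\xi),y)\,\nu(\rmd y)$ for all $\xi\in\changevar(\tilde{\Tset})$. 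Moreover $\tilde{\varphi}<\infty$ $\tau_0$-a.s., since Tonelli's theorem and the definitions~\eqref{eq:check-varphi-varphiL-carphiQ}--\eqref{eq:check-WW-well-defined-carphiQ} give $\int_\Zset\tilde{\varphi}\,\rmd\tau_0=\int_\Yset\check{\varphi}(y)\,\big(\int_\Zset k^{(o)}(\ell\circ\changevar(\vartheta_0),y)\,\tau_0(\rmd\ell)\big)\big/k^{(1)}(\theta_0,y)\,\nu(\rmd y)=\int_\Yset\check{\varphi}\,\rmd\nu=1$, where the middle equality uses~\eqref{eq:DefiKernelLinMixt}.

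To prove~\eqref{eq:cond-gradA-w-cont-y-wise}, fix $\xi\in\changevar(\tilde{\Tset})$ and a unit vector $x\in F$. Since $\changevar(\tilde{\Tset})$ is open by~\ref{asump:Upsilon-image}, choose $\epsilon>0$ with $B_F(\xi,2(1+\sqrt{d_F})\epsilon)\subset\changevar(\tilde{\Tset})$, so that Assertion~\ref{item:technical-exp-moment-sup-bound2} of \Cref{lem:technical-exp-moment-sup-bound} applies and produces the finite set $\Delta_{x,\epsilon}(\xi)\subset\changevar(\tilde{\Tset})$. For each $\xi'\in\Delta_{x,\epsilon}(\xi)$, Assumption~\ref{item:condB-stilde-ratio} yields constants $\epsilon_{\xi'},C_{\xi'}>0$ with $\PE_{\vartheta_0,\tau_0}\argcond{\rme^{\epsilon_{\xi'}\normev[F]{\gradA{L}(\xi')}}}{Y}\leq C_{\xi'}\,\rme^{\tilde{\mathrm{s}}_{\vartheta_0,\tau_0}(Y)}$, $\PP_{\vartheta_0,\tau_0}$-a.s. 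Set $\epsilon_0=\min\big(\epsilon,\,(1+2\sqrt{d_F})^{-1}\min_{\xi'\in\Delta_{x,\epsilon}(\xi)}\epsilon_{\xi'}\big)$, so that $\epsilon_0\in(0,\epsilon]$ and $\epsilon_0'=(1+2\sqrt{d_F})\epsilon_0\leq\epsilon_{\xi'}$ for all $\xi'\in\Delta_{x,\epsilon}(\xi)$. Using $t\leq(\epsilon_0\rme)^{-1}\rme^{\epsilon_0 t}$ for $t\geq0$, Tonelli's theorem to integrate first in $y$, and the bound~\eqref{eq:sup-k-is-sup-gradA}, I would majorize the left-hand side of~\eqref{eq:cond-gradA-w-cont-y-wise} by
\[
\frac{2\,d_F}{\epsilon_0\,\rme}\,\sum_{\xi'\in\Delta_{x,\epsilon}(\xi)}\int_\Zset\tilde{\varphi}(\ell)\,\rme^{\epsilon_0'\normev[F]{\gradA{\ell}(\xi')}}\;\tau_0(\rmd\ell)\;.
\]
Each summand is finite: combining~\eqref{eq:check-varphi-varphiL-carphiQ}--\eqref{eq:check-WW-well-defined-carphiQ}, Tonelli, and~\eqref{eq:DefiKernelLinMixt-PE} (together with the fact that $y\mapsto k^{(1)}(\theta_0,y)$ is the $Y$-marginal density under $\PP_{\vartheta_0,\tau_0}$) one rewrites $\int_\Zset\tilde{\varphi}(\ell)\,\rme^{\epsilon_0'\normev[F]{\gradA{\ell}(\xi')}}\,\tau_0(\rmd\ell)=\int_\Yset\check{\varphi}(y)\,\PE_{\vartheta_0,\tau_0}\argcond{\rme^{\epsilon_0'\normev[F]{\gradA{L}(\xi')}}}{Y=y}\,\nu(\rmd y)$; since $0\leq\epsilon_0'\leq\epsilon_{\xi'}$ the conditional expectation is at most $C_{\xi'}\rme^{\tilde{\mathrm{s}}_{\vartheta_0,\tau_0}(y)}$, so the summand is at most $C_{\xi'}\int_\Yset\check{\varphi}(y)\rme^{\tilde{\mathrm{s}}_{\vartheta_0,\tau_0}(y)}\,\nu(\rmd y)<\infty$ by~\eqref{eq:cond-checkVarphi-linear-mixture-left}. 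This establishes~\eqref{eq:cond-gradA-w-cont-y-wise}; applying it with $\xi'=\xi\in D_x(\xi,\epsilon_0)$ inside the dominated quantity and using $\normev[F]{\gradA{\ell}(\xi)}\leq\int_\Yset\normev[F]{\ell^T\circ S(y)}k^{(o)}(\ell(\xi),y)\,\nu(\rmd y)$ also shows that $\gradA{\mathrm{w}}_{\tau_0}$ is well-defined and $F$-valued on $\changevar(\tilde{\Tset})$.

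For~\eqref{eq:cond-gradA-w-cont-y-wise-expr}, given $\vartheta\in\tilde{\Tset}$ I would substitute $\gradA{\ell}\circ\changevar(\vartheta)=\int_\Yset\ell^T\circ S(y)\,k^{(o)}(\ell\circ\changevar(\vartheta),y)\,\nu(\rmd y)$ into~\eqref{eq:gradA-WW-well-defined-carphiQ}, swap the order of integration by Fubini — legitimate because, for any unit $x$, the estimate~\eqref{eq:cond-gradA-w-cont-y-wise} with the associated $\epsilon_0$ dominates the integrand (note $\changevar(\vartheta)\in D_x(\changevar(\vartheta),\epsilon_0)$) — and recognize the result as $\int_{\Yset\times\Zset}\ell^T\circ S(y)\,\tilde{\varphi}(\ell)\,k^{(o)}(\ell\circ\changevar(\vartheta),y)\,\nu(\rmd y)\tau_0(\rmd\ell)=\PE_{\vartheta,\tau_0}\lrb{L^T\circ S(Y)\,\tilde{\varphi}(L)}$ via~\eqref{eq:DefiKernelLinMixt-PE}. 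For directional continuity, fix $\xi\in\changevar(\tilde{\Tset})$ and a unit $x$ with the associated $\epsilon_0$; for $t\in[0,\epsilon_0)$ one has $\xi+tx\in D_x(\xi,\epsilon_0)\subset\changevar(\tilde{\Tset})$ and $\normev[F]{\tilde{\varphi}(\ell)\gradA{\ell}(\xi+tx)}\leq\tilde{\varphi}(\ell)\int_\Yset\normev[F]{\ell^T\circ S(y)}\sup_{\xi''\in D_x(\xi,\epsilon_0)}k^{(o)}(\ell(\xi''),y)\,\nu(\rmd y)$, which is a $\tau_0$-integrable majorant by~\eqref{eq:cond-gradA-w-cont-y-wise}; since $\xi''\mapsto\gradA{\ell}(\xi'')$ is continuous on $\changevar(\tilde{\Tset})$ for $\tau_0$-a.e.\ $\ell$, dominated convergence gives $\gradA{\mathrm{w}}_{\tau_0}(\xi+tx)\to\gradA{\mathrm{w}}_{\tau_0}(\xi)$ as $t\downarrow0$.

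The main obstacle will be coordinating the small parameters: one must fix $\epsilon$ first (so that \Cref{lem:technical-exp-moment-sup-bound} applies and the finite point-set $\Delta_{x,\epsilon}(\xi)$ is determined), then extract the finitely many exponents $\epsilon_{\xi'}$ from~\ref{item:condB-stilde-ratio}, and only then shrink $\epsilon_0$ so that the inflated radius $\epsilon_0'=(1+2\sqrt{d_F})\epsilon_0$ stays below all of them; one must also take care that every Tonelli/Fubini interchange is carried out on the correct iterated integral and that the various a.s.\ statements (full rank of $\ell$, $\ell\circ\changevar(\tilde{\Tset})\subset\intE_0$, $\tilde{\varphi}(\ell)<\infty$, $k^{(o)}(\ell(\cdot),\cdot)$ being a probability density in $y$) hold only $\tau_0$-a.s., which is enough everywhere. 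The remaining manipulations — the inequality $t\leq(\epsilon_0\rme)^{-1}\rme^{\epsilon_0 t}$, pushing $\ell^T$ through Bochner integrals, and the dominated-convergence passage — are routine.
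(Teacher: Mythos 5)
Your proposal is correct and follows essentially the same route as the paper's proof: the same application of Assertion~\ref{item:technical-exp-moment-sup-bound2} of \Cref{lem:technical-exp-moment-sup-bound} with the parameters $\epsilon$, $\Delta_{x,\epsilon}(\xi)$ and $\epsilon_0$ coordinated in the same order, the same use of~\ref{item:condB-stilde-ratio} at the finitely many points of $\Delta$ together with the conditional-density identity $\int_\Zset \tilde{\varphi}(\ell)h(\ell)\,\tau_0(\rmd\ell)=\int_\Yset\check{\varphi}(y)\,\PE_{\vartheta_0,\tau_0}\argcond{h(L)}{Y=y}\,\nu(\rmd y)$ and~(\ref{eq:cond-checkVarphi-linear-mixture-left}), and the same Fubini/domination arguments for the directional continuity and the identification~(\ref{eq:cond-gradA-w-cont-y-wise-expr}). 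The only differences are cosmetic (the sharper bound $t\leq(\epsilon_0\rme)^{-1}\rme^{\epsilon_0 t}$ in place of $t\leq\epsilon_0^{-1}\rme^{\epsilon_0 t}$, and per-point constants $C_{\xi'}$ instead of a uniform $C_\Delta$).
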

\begin{proof}
Let  $\ell \in \Zset$ is such that $\ell \circ \changevar(\tilde{\Tset}) \subset\intE_0$, $\xi\in\changevar(\tilde{\Tset})$ and $x\in F$ such that
$\normev[F]{x}=1$. Then for $\epsilon>0$ small enough, we have
$B_x(\xi,2\lr{1+\sqrt{d_F}}\epsilon)\subset\changevar(\tilde{\Tset})$,
and for all $\epsilon_0\in(0,\epsilon]$,
\begin{align*}
  % \int \normev[F]{\ell^T\circ S(y)}\sup_{\xi'\in D_x(\xi,\epsilon_0)} \lr{k^{(o)}(\ell(\xi'),y)}
  % \;\nu(\rmd y)& \leq\frac1{\epsilon_0}  
                 \int_\Yset
                 \rme^{\epsilon_0\normev[F]{\ell^T\circ S(y)}}
                 \sup_{\xi'\in D_x(\xi,\epsilon_0)} \lr{k^{(o)}(\ell(\xi'),y)}
                 \;\nu(\rmd y) 
  &\leq   %\frac{2\,d_F}{\epsilon_0}
    2\,d_F\,\sum_{
    \xi'\in \Delta}\rme^{\epsilon_0'\normev[F]{\gradA{\ell}(\xi')}} \;,
\end{align*}
where we used Assertion~\ref{item:technical-exp-moment-sup-bound2} of \Cref{lem:technical-exp-moment-sup-bound} with $\epsilon'=\lr{1+2\sqrt{d_F}}\epsilon$, $\epsilon_0'=\lr{1+2\sqrt{d_F}}\epsilon_0$ and
$\Delta$ is a set of $2(d_F+1)$ points in
$B_x(\xi,2\lr{1+\sqrt{d_F}}\epsilon)$
only depending on $\xi$, $x$ and $\epsilon$. 
Since $\normev[F]{\ell^T\circ S(y)}\leq\frac1{\epsilon_0}\rme^{\epsilon_0\normev[F]{\ell^T\circ S(y)}}$,
we get that: 
\begin{align*}
G_{\xi,\epsilon_0}(\ell) &\eqdef
  \int_\Yset \normev[F]{\ell^T\circ S(y)}\sup_{\xi'\in D_x(\xi,\epsilon_0)} \lr{k^{(o)}(\ell(\xi'),y)}
                        \;\nu(\rmd y)\\
  &\leq \frac{2\,d_F}{\epsilon_0}\,\sum_{
    \xi'\in \Delta} \rme^{\epsilon_0'\normev[F]{\gradA{\ell}(\xi')}}\;.
\end{align*}
In particular by~\ref{asump:Upsilon-image}, with \Cref{lem:C1cond}, this
bounds hold for $\tau_0$-almost all $\ell \in \Zset$. 
By~\ref{item:condB-stilde-ratio}, since $\Delta$ is a finite set only
depending on $\xi$, $x$ and $\epsilon$ and 
satisfying $\Delta\subset
B_x(\xi,2\lr{1+\sqrt{d_F}}\epsilon)\subset\changevar(\tilde{\Tset})$,
there exist  $\epsilon_\Delta,C_\Delta>0$ only
depending on $\xi$, $x$ and $\epsilon$ such that, for all $\xi'\in\Delta$,
$$
\PE_{\vartheta_0,\tau_0}\argcond{\rme^{\epsilon_\Delta\normev[F]{\gradA{L}(\xi')}}}{Y}\leq
C_\Delta\,\rme^{\tilde{\mathrm{s}}_{\vartheta_0,\tau_0}(Y)}
\quad\PP_{\vartheta_0,\tau_0}-\as
$$
Since $\Delta$ has cardinal $2(d_F+1)$, taking
$\epsilon_0\in(0,\epsilon]$ small enough to have
$\epsilon'_0\leq\epsilon_\Delta$, this gives us that 
$$
g_{\xi,\epsilon_0}(Y)\eqdef \PE_{\vartheta_0,\tau_0}\argcond{G_{\xi,\epsilon_0}(L)}{Y}\leq \frac{4\,C_\Delta\,d_F(d_F+1)}{\epsilon_0}\rme^{\tilde{\mathrm{s}}_{\vartheta_0,\tau_0}(Y)}
\quad\PP_{\vartheta_0,\tau_0}-\as
$$
Now observing that the ratio in~(\ref{eq:check-varphi-varphiL-carphiQ})
is the conditional density of $L$ applied to $\ell$ given $Y=y$ under $\PP_{\vartheta_0,\tau_0}$
and that the marginal distribution of  $\PP_{\vartheta_0,\tau_0}$ on $\Yset$ is equivalent to $\nu$, we obtain that
\begin{align*}
\int_\Zset G_{\xi,\epsilon_0}(\ell)\,\tilde{\varphi}(\ell)\;\tau_0 (\rmd
  \ell)&=\int_{\Zset\times\Yset} G_{\xi,\epsilon_0}(\ell)\,\varphi(\ell,y)\;\tau_0 (\rmd
  \ell)\nu(\rmd y)\\
    &=\int_\Yset g_{\xi,\epsilon_0}(y)\,\check{\varphi}(y)\;\nu(\rmd y)\\
    &\leq  \frac{4\,C_\Delta\,d_F(d_F+1)}{\epsilon_0}\int_\Yset \rme^{\tilde{\mathrm{s}}_{\vartheta_0,\tau_0}(y)}\,\check{\varphi}(y)\;\nu(\rmd y)
      \;,
\end{align*}
Hence, using condition~(\ref{eq:cond-checkVarphi-linear-mixture-left}), we deduce that 
\begin{align*}
  \int_\Zset G_{\xi,\epsilon_0}(\ell)\,\tilde{\varphi}(\ell)\;\tau_0 (\rmd
    \ell) < \infty \eqsp,
  \end{align*}
which is exactly the claimed bound \eqref{eq:cond-gradA-w-cont-y-wise}. This bound gives in particular that
$$
\xi'\mapsto \int_{\Yset \times \Zset} \ell^T\circ S(y) \, k^{(o)}(\ell(\xi'),y)
\,\tilde{\varphi}(\ell)\;\nu(\rmd y)\tau_0 (\rmd \ell)
$$
is well-defined in $F$ for $\xi'\in D_x(\xi,\epsilon_0)$, and
continuous at $\xi$ in direction $x$. Moreover, the Fubini theorem
applies so that (i) integrating first with respect to $y$ and (ii) using the expression of $\nabla A$ given in \eqref{eq:partialA:expo-family}, we obtain that this mapping is in fact $\xi'\mapsto\gradA{\mathrm{w}}_{\tau_0}(\xi')$. This being true for all
$\xi\in\changevar(\tilde{\Tset})$ and $x\in F$ such that
$\normev[F]{x}=1$, we get the last claim by observing that,
by~(\ref{eq:DefiKernelLinMixt-PE}), the expectation
in~(\ref{eq:cond-gradA-w-cont-y-wise-expr}) is another way to express
the same integral in the case where $\xi=\changevar(\vartheta)$.
\end{proof}

We can now prove \Cref{thm:argrmax-objective-equivalence}. \newline

\begin{proof}[Proof of \Cref{thm:argrmax-objective-equivalence}]
  We use \Cref{prop:argrmax-objective-equivalence} with $\varphi(\ell,y)$
  given by~(\ref{eq:check-varphi-varphiL-carphiQ}). We first
  prove that Conditions~(\ref{eq:varphiQ:condS}) and~(\ref{eq:varphiQ:cond_ell})
  hold in this setting, under the assumptions of \Cref{thm:argrmax-objective-equivalence}. 

  We use again that the ratio in~(\ref{eq:check-varphi-varphiL-carphiQ}) is
  the conditional density of $L$ applied to $\ell$ given $Y=y$ under $\PP_{\vartheta_0,\tau_0}$,
  and that the marginal $\PP_{\vartheta_0,\tau_0}$ on $\Yset$ is equivalent to
  the dominating measure $\nu$.
  Hence we can write the integral
  in~(\ref{eq:varphiQ:condS}) as
  $$
  \int_\Yset \check{\varphi}(y)\normev[E]{S(y)}g(y)\;\nu(\rmd
  y)\quad\text{with}\quad g(Y)=\PE_{\vartheta_0,\tau_0}\argcond{\normop{L}}{Y}\;.
  $$
By~\ref{item:condB-sup-exp-moment}
  and~(\ref{eq:cond-checkVarphi-linear-mixture}), we obtain
  Condition~(\ref{eq:varphiQ:condS}). Similarly the integral
  in~(\ref{eq:varphiQ:cond_ell}) reads, for all $\xi\in\changevar(\tilde{\Tset})$,
  $$
  \int_\Yset \check{\varphi}(y)g(y)\;\nu(\rmd
  y)\quad\text{with}\quad g(Y)=\PE_{\vartheta_0,\tau_0}\argcond{\normev[F]{\gradA{L}(\xi)}}{Y}\;.  
  $$
  Clearly, for this $g$, \ref{item:condB-stilde-ratio} gives that,
  for some $\epsilon,C>0$ 
  $g(y)\leq C\epsilon^{-1}\,\rme^{\tilde{\mathrm{s}}_{\vartheta_0,\tau_0}(y)}$
  for $\nu$-a.e. $y$, and~(\ref{eq:cond-checkVarphi-linear-mixture})
  implies~(\ref{eq:varphiQ:cond_ell}).

  We can thus apply \Cref{prop:argrmax-objective-equivalence} and to
  conclude, we only need to check that the conditions of
  Assertions~\ref{item:argrmax-objective-equivalence:1}
  and~\ref{item:argrmax-objective-equivalence:2} of
  \Cref{prop:argrmax-objective-equivalence} also applies, that is, we
  need to show that $\gradA{\mathrm{w}}_{\tau_0}$ is continuous on
  $\changevar(\tilde{\Tset})$ in every direction and
  that~(\ref{eq:cond-strict-ineq-varphi-L}) holds for all
  $\xi\neq\xi'\in\changevar(\tilde{\Tset})$. The continuity of
  $\gradA{\mathrm{w}}_{\tau_0}$ follows from
  \Cref{lem:gradAw-continuity-fubini} since
  \eqref{eq:cond-checkVarphi-linear-mixture-left} holds by
  \eqref{eq:cond-checkVarphi-linear-mixture} (and we obtain in passing \eqref{eq:cond-gradA-w-cont-y-wise-expr-}, which is exactly \eqref{eq:cond-gradA-w-cont-y-wise-expr}). Finally,
  Condition~(\ref{eq:cond-strict-ineq-varphi-L}) follows
  from~\ref{item:condB-ident}, since in the setting of
  \Cref{thm:argrmax-objective-equivalence}, $\tilde{\varphi}(\ell)>0$
  for all $\ell\in\Zset$.
\end{proof}
\subsubsection{Proof of \Cref{ex-thm:student-mixture}}
  \label{sec:elliptical-student-mixture-proof}

  The proof of \Cref{ex-thm:student-mixture} relies on some preliminary results. We first show in \Cref{prop:elleptical} below that \Cref{thm:argrmax-objective-equivalence} applies to a general mixture of Gaussian distributions (which notably includes the Student's $t$ distribution family considered in \Cref{ex:student-mixture}).
  \begin{prop}\label{prop:elleptical}
    Let $\check{\parammixproba}$ be a class of probability distributions
    on $(0,\infty)$ and suppose that for all  
  $\check{\tau}_0\in\check{\parammixproba}$, there exists      $\epsilon_0>0$ such that
    \begin{equation}
      \label{eq:gaussian-expo-setting-gq-cond}
      \int_0^\infty \rme^{\epsilon_0\,z}\;\check{\tau}_0(\rmd z)<\infty\;.
    \end{equation}
    Let $d\geq1$ and define for any
    $\vartheta=(m,\Sigma)\in\tilde{\Tset}=\rset^d\times\mathcal{M}_{>0}(d)$
    \begin{equation}
      \label{eq:eliptical-family}
      \check{k}(\vartheta,y) = \int_0^\infty \mathcal{N}\lr{y;m,z^{-1}\Sigma} \check{\tau}_0(\rmd z)\;,
    \end{equation}
    seen as a kernel density with respect to $\nu$ being the Lebesgue
    measure on $\rset^d$. Then the following assertions hold.
  \begin{enumerate}[label=(\roman*)]
    \item\label{item:prop:elleptical:assertion1} Let $k^{(1)}$ be  ELM
      family given by
    \Cref{defi:LMEFfam} with $F=E=\rset^d\times\rset^{d\times d}$,
    $E_0=\tilde{\Tset}=\rset^d\times\mathcal{M}_{>0}(d)$,
    $\changevar(m,\Sigma)=(\Sigma^{-1}m,\Sigma^{-1})$,
\begin{equation}
  \label{eq:gaussian-expo-setting--q-def}
k^{(o)}(\xi,y) 
=\mathcal{N}\lr{y;m,\Sigma}
\qquad\text{for all $\xi=\changevar(\vartheta)$ with $\vartheta=(m,\Sigma)$}\;,  
\end{equation}
and $\parammixproba$ the class of all push-forward probabilities $\tau_0$ of $\check{\tau}_0\in\check{\parammixproba}$ through
the mapping $z\mapsto z\mathrm{Id}$ defined from $(0,\infty)$ to
$\mathcal{L}(E)$, where $\mathrm{Id}$ is the identity operator on $E$.
Then, $\parammixproba$ is a class of distributions on positive scalar
operators on $E$ and,
for all
    $\vartheta=(m,\Sigma)\in\tilde{\Tset}$, we have
    $\check{k}(\vartheta,y)=k^{(1)}((\vartheta,\tau_0),y)$. 
  \item\label{item:prop:elleptical:assertion2} Setting, for all $y\in\rset^d$ and
    $\vartheta_0=(m_0,\Sigma_0) \in \tilde{\Tset}$,
    \begin{align}
      \label{eq:gaussian-expo-setting-quad-form-def}
      &q_{\vartheta_0}(y)=\frac12(y-m_0)^T\Sigma_0^{-1}(y-m_0)\;,\\
      \label{eq:gaussian-expo-setting-gq-def}
&    g_{\check{\tau}_0} (u,v)= \int_0^\infty z^u\rme^{-v\,z}\;\check{\tau}_0(\rmd z)\quad \text{for all  $u>-1$ and
    $v>-\epsilon_0$}\;,
    \end{align}
    this ELM family satisfies
    \ref{asump:Upsilon-image}--\ref{item:condB-ident}  with
    \begin{align}
      \label{eq:gaussian-expo-setting-m-cond-def}
      &
        \tilde{\mathrm{m}}_{\vartheta_0,\tau_0}(y)=\frac{g_{\check{\tau}_0}
        (d/2+1,q_{\vartheta_0}(y))}{g_{\check{\tau}_0}
        (d/2,q_{\vartheta_0}(y))}\;,\\
      \label{eq:gaussian-expo-setting-s-cond-def}
      &    \tilde{\mathrm{s}}_{\vartheta_0,\tau_0}(y)=\log\lr{\frac{g_{\check{\tau}_0} (d/2,q_{\vartheta_0}(y)-\epsilon_0)}{g_{\check{\tau}_0} (d/2,q_{\vartheta_0}(y))}}\;.
    \end{align}
  \item\label{item:prop:elleptical:assertion3} \Cref{thm:argrmax-objective-equivalence} applies for any
    $\theta_0 = (\vartheta_0, \tau_0) \in \tilde{\Tset} \times
    \parammixproba$ and any probability density function
    $\check{\varphi}:\rset^d\to\rset_+$
    satisfying
  \begin{equation}
    \label{eq:cond-checkVarphi-linear-mixture-elliptical}
    \int_\Yset \check{\varphi}(y)\,\lr{\lr{\|y\|+\|y\|^2}\,\tilde{\mathrm{m}}_{\vartheta_0,\tau_0}(y)
      +\rme^{\tilde{\mathrm{s}}_{\vartheta_0,\tau_0}(y)}}\;\nu(\rmd y) <\infty
    \;, 
  \end{equation}
in which case the
    argmax
    problem~(\ref{eq:argrmax-objective:armax-prob-check-varphi-Q })
    has a unique solution $\xi^*=\changevar(\vartheta^*)$ with
    $\vartheta^*=(m^*,\Sigma^*)$ given by
  \begin{align}\label{eq:update-mean-elliptical}
    m^* & =
          \frac{\PE_{\vartheta_0,\tilde{\tau}_0}\lrb{Z\,\tilde{Y}}}{\PE_{\vartheta_0,\tilde{\tau}_0}\lrb{Z}}\;,\\
    \label{eq:update-cov-elliptical}
    \Sigma^* &= \PE_{\vartheta_0,\tilde{\tau}_0}\lrb{Z\,\tilde{Y}\tilde{Y}^T}-\frac{\PE_{\vartheta_0,\tilde{\tau}_0}\lrb{Z\,\tilde{Y}}\PE_{\vartheta_0,\tilde{\tau}_0}\lrb{Z\,\tilde{Y}}^T}{\frtxt{\PE_{\vartheta_0,\tilde{\tau}_0}\lrb{Z}}}\;,
  \end{align}
  where, under $\PP_{\vartheta_0,\tilde{\tau}_0}$, $Z$ has density
  with respect to $\check{\tau}_0$ given by
  \begin{align}\label{eq:Zdens-elliptical}
  \tilde{\varphi}(z) & =\int_{\Yset} \check{\varphi}(y)\frac{z^{d/2}\rme^{-z\,q_{\vartheta_0}(y)}}{g_{\check{\tau}_0} (d/2,q_{\vartheta_0}(y))}\,
      \rmd y
  \end{align}
 and the conditional distribution of $\tilde{Y}$ given $Z=z$ has density
  \begin{align}\label{eq:tildeY-cond-dens-elliptical}
  \psi\argcond yz=\frac1{1+b}\,\frac{z^{d/2}\rme^{-z\,q_{\vartheta_0}(y)}\,\check{\varphi}(y)}
  {g_{\check{\tau}_0} (d/2,q_{\vartheta_0}(y))\,\tilde{\varphi}(z)}
  +\frac b{1+b}\,\mathcal{N}\lr{y;m_0,z^{-1}\Sigma_0}\;.
  \end{align}
\end{enumerate}
\end{prop}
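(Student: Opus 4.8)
The plan is to prove the three assertions in order, in each case specialising the general (E)LM machinery of \Cref{defi:LMEFfam}--\Cref{thm:argrmax-objective-equivalence} to the canonical Gaussian kernel $k^{(o)}$ with natural statistic $S(y)=(y,-yy^T/2)$, dominating measure $\nu$ equal to the Lebesgue measure on $\rset^d$, and mixing operators of the scalar form $\ell=z\,\mathrm{Id}$ with $z>0$.

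For Assertion~\ref{item:prop:elleptical:assertion1}, I would first recall that, with $h(y)=(2\pi)^{-d/2}$ and $S(y)=(y,-yy^T/2)$, the Gaussian density $\mathcal{N}(y;m,\Sigma)$ is the canonical exponential-family density $k^{(o)}(\zeta,\cdot)$ at $\zeta=\changevar(m,\Sigma)=(\Sigma^{-1}m,\Sigma^{-1})$, and that this $k^{(o)}$ satisfies \ref{hyp:expo} (positivity of $h$ and affine non-degeneracy of the components of $S$ under the Lebesgue measure). The decisive elementary fact is that scaling the canonical parameter by $z>0$ gives $k^{(o)}(z\,\zeta,\cdot)=\mathcal{N}(\cdot;m,z^{-1}\Sigma)$, since $z\,\zeta=(z\Sigma^{-1}m,z\Sigma^{-1})$ is the canonical parameter of a Gaussian with precision $z\Sigma^{-1}$ and mean $m$. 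Taking $\tau_0$ to be the push-forward of $\check{\tau}_0$ under $z\mapsto z\,\mathrm{Id}$, a change of variables in \eqref{eq:DefiKernelLinMixt} then yields $k^{(1)}((\vartheta,\tau_0),y)=\int_0^\infty\mathcal{N}(y;m,z^{-1}\Sigma)\,\check{\tau}_0(\rmd z)=\check{k}(\vartheta,y)$. The remaining requirements of \Cref{defi:LMEFfam} are immediate: $\ell\circ\changevar(\vartheta)=(z\Sigma^{-1}m,z\Sigma^{-1})\in E_0=\rset^d\times\mathcal{M}_{>0}(d)$ for all $z>0$, and the equivalence of any two members of $\parammixproba$ is inherited from that of the underlying members of $\check{\parammixproba}$ through the injection $z\mapsto z\,\mathrm{Id}$.

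Assertion~\ref{item:prop:elleptical:assertion2} amounts to verifying \ref{asump:Upsilon-image}--\ref{item:condB-ident} one by one. Conditions \ref{asump:Upsilon-image} (openness of $\changevar(\tilde{\Tset})=E_0$ in $F=\rset^d\times\rset^{d\times d}$, and $\tau_0$-a.e. $\ell=z\,\mathrm{Id}$ invertible hence full rank) and \ref{item:condB-ident} (for $\xi\neq\xi'$ one has $z\xi\neq z\xi'$ for every $z>0$, so the relevant event has $\tau_0$-probability one) are immediate. For the two exponential-moment conditions, the key observation is that under $\PP_{\vartheta_0,\tau_0}$ the pair $(Y,Z)$ has density $\mathcal{N}(y;m_0,z^{-1}\Sigma_0)\propto z^{d/2}\rme^{-z\,q_{\vartheta_0}(y)}$ with respect to $\nu\otimes\check{\tau}_0$, so the conditional law of $Z$ given $Y=y$ has $\check{\tau}_0$-density $z^{d/2}\rme^{-z\,q_{\vartheta_0}(y)}/g_{\check{\tau}_0}(d/2,q_{\vartheta_0}(y))$; since $\normop{L}=Z$, this gives $\PE_{\vartheta_0,\tau_0}\argcond{\normop{L}}{Y}=\tilde{\mathrm{m}}_{\vartheta_0,\tau_0}(Y)$, which is \ref{item:condB-sup-exp-moment} (with equality). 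For \ref{item:condB-stilde-ratio}, I would compute $\gradA{\ell}(\xi)=z\,\nabla A(z\xi)=(z\,m,-(\Sigma+z\,mm^T)/2)$ for $\ell=z\,\mathrm{Id}$ and $\xi=\changevar(m,\Sigma)$, whence $\normev[F]{\gradA{L}(\xi)}\leq a(\xi)+b(\xi)\,Z$ for constants depending only on $\xi$; then, choosing $\epsilon>0$ small enough that $\epsilon\,b(\xi)\leq\epsilon_0$ and evaluating the conditional moment as above, one gets $\PE_{\vartheta_0,\tau_0}\argcond{\rme^{\epsilon\normev[F]{\gradA{L}(\xi)}}}{Y=y}\leq\rme^{\epsilon\,a(\xi)}\,g_{\check{\tau}_0}(d/2,q_{\vartheta_0}(y)-\epsilon\,b(\xi))/g_{\check{\tau}_0}(d/2,q_{\vartheta_0}(y))\leq\rme^{\epsilon\,a(\xi)}\,\rme^{\tilde{\mathrm{s}}_{\vartheta_0,\tau_0}(y)}$, using $q_{\vartheta_0}\geq0$ and monotonicity of $v\mapsto g_{\check{\tau}_0}(d/2,v)$. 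This last step is the main obstacle: one has to dominate $\normev[F]{\gradA{L}(\xi)}$ by an affine function of the mixing variable $Z$ uniformly in $y$ and then absorb the ensuing exponential term using the exponential moment \eqref{eq:gaussian-expo-setting-gq-cond} of $\check{\tau}_0$, keeping track that every $g_{\check{\tau}_0}(u,v)$ in play — with $u\in\{d/2,d/2+1\}$ and $v\geq-\epsilon_0$ — is finite, which holds after shrinking $\epsilon_0$ so that $z^{d/2+1}\rme^{\epsilon_0 z}$ is $\check{\tau}_0$-integrable.

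Finally, for Assertion~\ref{item:prop:elleptical:assertion3}, since $\normev[E]{S(y)}$ is comparable to $\|y\|+\|y\|^2$, the hypothesis \eqref{eq:cond-checkVarphi-linear-mixture-elliptical} is exactly \eqref{eq:cond-checkVarphi-linear-mixture}, so \Cref{thm:argrmax-objective-equivalence} applies; for $b>0$ its solvability condition $\mathbf{w}^*\in\gradA{\mathrm{w}}_{\tau_0}\circ\changevar(\tilde{\Tset})$ is automatic (the conditional law of $\tilde{Y}$ given $Z=z$ contains the full-support mixand $\mathcal{N}(\cdot;m_0,z^{-1}\Sigma_0)$, forcing $\Sigma^*\succ0$ and $\vartheta^*\in\tilde{\Tset}$), and the unique solution $\xi^*=\changevar(\vartheta^*)$ is then characterised, via \eqref{eq:wstar-defined-check-varphiQ-new-sol-bis}, by $\PE_{\vartheta^*,\tilde{\tau}_0}\lrb{L^T\circ S(Y)}=\PE_{\vartheta_0,\tilde{\tau}_0}\lrb{L^T\circ S(\tilde{Y})}$. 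Here $L^T\circ S(Y)=(Z\,Y,-Z\,YY^T/2)$; on the left-hand side $Z\sim\tilde{\tau}_0$ and $Y\mid Z=z\sim\mathcal{N}(m^*,z^{-1}\Sigma^*)$ (by \eqref{eq:DefiKernelLinMixt-PE} and the scaling identity from Assertion~\ref{item:prop:elleptical:assertion1}), so the identity becomes $(\PE_{\vartheta_0,\tilde{\tau}_0}\lrb{Z}\,m^*,\,-(\Sigma^*+\PE_{\vartheta_0,\tilde{\tau}_0}\lrb{Z}\,m^*m^{*T})/2)=(\PE_{\vartheta_0,\tilde{\tau}_0}\lrb{Z\tilde{Y}},\,-\PE_{\vartheta_0,\tilde{\tau}_0}\lrb{Z\tilde{Y}\tilde{Y}^T}/2)$; solving componentwise yields \eqref{eq:update-mean-elliptical}--\eqref{eq:update-cov-elliptical}. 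It then only remains to rewrite $\tilde{\varphi}$ and the conditional mixture density $\psi(\cdot\mid\ell)$ of \Cref{thm:argrmax-objective-equivalence} in the present notation: substituting $\varphi(\ell,y)=\check{\varphi}(y)\,k^{(o)}(\ell\circ\changevar(\vartheta_0),y)/k^{(1)}(\theta_0,y)$ with $\ell=z\,\mathrm{Id}$, and using $k^{(o)}(z\changevar(\vartheta_0),y)/k^{(1)}(\theta_0,y)=z^{d/2}\rme^{-z\,q_{\vartheta_0}(y)}/g_{\check{\tau}_0}(d/2,q_{\vartheta_0}(y))$, produces precisely \eqref{eq:Zdens-elliptical} and \eqref{eq:tildeY-cond-dens-elliptical}.
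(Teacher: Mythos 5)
Your proposal is correct and follows essentially the same route as the paper's proof: specialise the ELM machinery to the canonical Gaussian kernel with $S(y)=(y,-yy^T/2)$ and scalar operators $\ell=z\,\mathrm{Id}$, compute $\gradA{\ell}(\xi)$ to verify \ref{asump:Upsilon-image}--\ref{item:condB-ident} via the conditional $\check{\tau}_0$-density $z^{d/2}\rme^{-zq_{\vartheta_0}(y)}/g_{\check{\tau}_0}(d/2,q_{\vartheta_0}(y))$ of $Z$ given $Y=y$, and then solve~(\ref{eq:wstar-defined-check-varphiQ-new-sol-bis}) componentwise to obtain~(\ref{eq:update-mean-elliptical})--(\ref{eq:update-cov-elliptical}) and the identifications~(\ref{eq:Zdens-elliptical})--(\ref{eq:tildeY-cond-dens-elliptical}). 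Your explicit remark that for $b>0$ the Gaussian mixand forces $\Sigma^*\succ0$ (hence solvability) is a minor point the paper leaves implicit, but it does not change the argument.
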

\begin{proof}
  We prove the claimed assertions successively.
  \begin{enumerate}[label=\textbf{Proof of
      Assertion~(\roman*)},wide=0pt, labelindent=0pt]
  \item The spaces $F=E=\rset^d\times\rset^{d\times d}$ are endowed with the
  usual inner product
$$
\pscal[E]{(x,M)}{(x',M')}=x^Tx'+\mathrm{Tr}\lr{M^TM'}\;.
$$
We have $\intE_0=E_0=\tilde{\Tset}=\rset^d\times\mathcal{M}_{>0}(d)$,
and set, for all $y\in\rset^d$, $S(y)=\lr{y,-yy^T/2}$ and, for all
$(x,M)\in\intE_0$, $A(x,M)=\frac12\lr{x^TM^{-1}x-\log\lrav{M}}$. In
this setting, $\ell^T=\ell$, $\ell=\normop{\ell}\,\mathrm{Id}$ for
$\tau_0$-almost all $\ell\in \Zset$, and $\normop{L}$ has distribution
$\check{\tau}_0$ for $L\sim \tau_0$.  Then, for all
$\vartheta=(m,\Sigma)\in\tilde{\Tset}$ and $\tau_0$-almost all $\ell\in \Zset$, we have
$\ell\circ\changevar(m,\Sigma)=\changevar(m,\Sigma/\normop{\ell})$. Consequently,
for all $\vartheta \in \tilde{\Tset}$, $y \in \rset^d$ and
$\tau_0$-almost all $\ell\in\Zset$, we have
 \begin{align}
   \nonumber
k^{(o)}( \ell \circ \changevar(\vartheta), y) & = \lr{ \frac{1}{2\pi} \lrav{\normop{\ell}^{-1}\Sigma}}^{1/2}\rme^{-(y-m)^T(\normop{\ell}^{-1} \Sigma)^{-1}(y-m)/2}
 \\
  \label{eq:gaussian-expo-setting--q-def-bis}
                                              & =\mathcal{N}\lr{y;m,\normop{\ell}^{-1}\Sigma}\;.
\end{align}
Thus the obtained ELM
family $k^{(1)}$ of \Cref{defi:LMEFfam} satisfies
$\check{k}(\vartheta,y)=k^{(1)}((\vartheta,\tau_0),y)$ for all
$\vartheta=(m,\Sigma)\in\tilde{\Tset}$.
\item Assumption \label{item:1}~\ref{asump:Upsilon-image} is obvious
  in the above setting. We then check \ref{item:condB-ident},~\ref{item:condB-sup-exp-moment}
    and~\ref{item:condB-stilde-ratio}, successively. For all $\xi\neq\xi' \in \changevar(\tilde{\Tset})$ and $\tau_0\in\parammixproba$, we easily see that
    $$
    \tau_0 (\set{\ell\in\Zset}{\ell(\xi-\xi')=0})=\check{\tau}_0 (\{0\})=0\;,
    $$
    and \ref{item:condB-ident} follows. To
    check~\ref{item:condB-sup-exp-moment}, we compute, for all
    $\vartheta_0=(m_0,\Sigma_0)$ and $\tau_0\in\parammixproba$,
\begin{align*}
    \PE_{\vartheta_0,\tau_0}\argcond{\normop{L}}{Y}& = \int_0^\infty z \;\frac{\mathcal{N}\lr{Y;m_0,z^{-1}\Sigma_0}}{\check{k}(\vartheta_0,Y)}\;\check{\tau}_0(\rmd z)\\
&=    \frac{\int z^{d/2+1}\rme^{-z\,q_{\vartheta_0}(Y)}\;\check{\tau}_0(\rmd z)}
    {\int z^{d/2}\rme^{-z\,q_{\vartheta_0}(Y)}\;\check{\tau}_0(\rmd z)}
\end{align*}
where we successively used~(\ref{eq:DefiKernelLinMixt-PE})
with~(\ref{eq:gaussian-expo-setting--q-def-bis}) and~(\ref{eq:gaussian-expo-setting-quad-form-def}).
Hence \ref{item:condB-sup-exp-moment} follows
from~(\ref{eq:gaussian-expo-setting-gq-def}) and~(\ref{eq:gaussian-expo-setting-m-cond-def}).
Let us now check~\ref{item:condB-stilde-ratio}. Straightforward computations yield, for all
    $\xi=(x,M)\in \changevar(\tilde{\Tset})$ and $\tau_0$-almost all $\ell\in \Zset$, %=\changevar(\vartheta)$ with $\vartheta=(m,\Sigma)$,
    \begin{align*}
      \nabla A\circ \ell(\xi)
        &=(M^{-1}x,-(M^{-1}xx^TM^{-1}+\normop{\ell}^{-1}\,M^{-1})/2)\\
        % &= (m, -(mm^T+\normop{\ell}^{-1}\,\Sigma)/2)\;,\\
      \gradA{\ell}(\xi)&= (\normop{\ell}\, M^{-1}x,-(\normop{\ell}\,M^{-1}xx^TM^{-1}+ M^{-1})/2)\\
      % &= (\normop{\ell} \,m, -(\normop{\ell}\,mm^T+\Sigma)/2)\;,\\
      \normev[F]{\gradA{\ell}(\xi)}&\leq(1+\normop{\ell})\,\normev[F]{(M^{-1}x,M^{-1}xx^TM^{-1}+M^{-1})}\;.
    \end{align*}
    It follows, that, for all $\vartheta_0=(m_0,\Sigma_0) \in \tilde{\Tset}$, $\tau_0\in\parammixproba$, $\xi=(x,M)\in \changevar(\tilde{\Tset})$ and $\epsilon>0$,
    \begin{align*}
    \PE_{\vartheta_0,\tau_0}\argcond{\rme^{\epsilon\,\normev[F]{\gradA{L}(\xi)}}}{Y}&\leq
    C_1\,\PE_{\vartheta_0,\tau_0}\argcond{\rme^{C_2\,\epsilon\,\normop{L}}}{Y}\quad\PP_{\vartheta_0,\tau_0}-\as\\
&=  C_1\,\log\lr{\frac{g_{\check{\tau}_0} (d/2,q_{\vartheta_0}(y)- C_2 \epsilon)}{g_{\check{\tau}_0} (d/2,q_{\vartheta_0}(y))}}\quad\PP_{\vartheta_0,\tau_0}-\as\,,      
    \end{align*}
    where $C_1,C_2>0$ only depend on
    $\xi$. Hence, since we assumed~(\ref{eq:gaussian-expo-setting-gq-cond}), the definition of
    $g_{\check{\tau}_0}$ 
    in~(\ref{eq:gaussian-expo-setting-gq-def}) holds for all $v>-\epsilon_0$, and
    \ref{item:condB-stilde-ratio} holds with
    $\tilde{\mathrm{s}}_{\vartheta_0,\tau_0}$ as
    in~(\ref{eq:gaussian-expo-setting-s-cond-def}) by taking $\epsilon<\epsilon_0/C_2$.
    \item  From the previous assertion, we can apply
  \Cref{thm:argrmax-objective-equivalence} and use
  \eqref{eq:wstar-defined-check-varphiQ-new-sol-bis} to solve the argmax
  problem~(\ref{eq:argrmax-objective:armax-prob-check-varphi-Q }) for a density
  $\check{\varphi}$ with respect to the Lebesgue measure on
  $\rset^d$ satisfying~(\ref{eq:cond-checkVarphi-linear-mixture}),
  which, since $S(y)=\lr{y,-yy^T/2}$ and
  $$
  \normev[E]{S(y)}=\lr{\|y\|^2+\frac14\mathrm{Tr}(yy^Tyy^T)}^{1/2}=O\lr{\|y\|+\|y\|^2}\;,
  $$
  is equivalent to~(\ref{eq:cond-checkVarphi-linear-mixture-elliptical}).
  Let
  us compute both sides of the
  equation~(\ref{eq:wstar-defined-check-varphiQ-new-sol-bis}) which
  characterizes the solution $\xi^*=\changevar(\vartheta^*)$.
  The left-hand side of~(\ref{eq:wstar-defined-check-varphiQ-new-sol-bis}) reads
  \begin{align*}
  \PE_{\vartheta^*,\tilde{\tau}_0}\lrb{L^T\circ S(Y)}&=\PE_{\vartheta^*,\tilde{\tau}_0}\lrb{Z\, S(Y)}\;,
  \end{align*}
  where $Z=\normop{L}$ has density with respect
  to $\check{\tau}_0$ given by~(\ref{eq:Zdens-elliptical}).
  The right-hand side of~(\ref{eq:wstar-defined-check-varphiQ-new-sol-bis}) reads
  \begin{align*}
   \PE_{\vartheta_0,\tilde{\tau}_0}\lrb{L^T\circ S(\tilde{Y})}& = \PE_{\vartheta_0,\tilde{\tau}_0}\lrb{Z\,S(\tilde{Y})} \;,
  \end{align*}
  where $Z$ is the same as above and the conditional distribution of
  $\tilde{Y}$ given $Z=z$ has density given by~(\ref{eq:tildeY-cond-dens-elliptical}).
  Finally, the solution $\xi^*=\changevar(\vartheta^*)$ is given by the
  equation
  \begin{equation}
    \label{eq:gaussian-expo-setting-final-update-eq}
  \PE_{\vartheta^*,\tilde{\tau}_0}\lrb{Z\, S(Y)}
  = \PE_{\vartheta_0,\tilde{\tau}_0}\lrb{Z\,S(\tilde{Y})}
  \end{equation}
  and this equation leads to the
  solutions~(\ref{eq:update-mean-elliptical}) and~(\ref{eq:update-cov-elliptical}).
  Indeed, note that 
  \begin{align}
    \PE_{\vartheta^*,\tilde\tau_0}[Y|Z]=m^*\, \quad \mbox{and}\quad \PE_{\vartheta^*,\tilde\tau_0}[YY^T|Z]-m^* (m^*)^T=\Sigma^*/Z \label{eq:cond:elliptique}
  \end{align}
  Applying the first identity in \eqref{eq:cond:elliptique}, then the tower property and the identity \eqref{eq:gaussian-expo-setting-final-update-eq} applied to the first component of $S(Y)=(Y,-YY^T/2)$,  
  \begin{align*}
    m^*=\frac{\PE_{\vartheta^*,\tilde\tau_0}[Zm^*]}{\PE_{\vartheta^*,\tilde\tau_0}[Z]}=\frac{\PE_{\vartheta^*,\tilde\tau_0}[ZY]}{\PE_{\vartheta^*,\tilde\tau_0}[Z]}=\frac{\PE_{\vartheta_0,\tilde{\tau}_0}\lrb{Z\tilde{Y}}}{\PE_{\vartheta^*,\tilde\tau_0}[Z]}=\frac{\PE_{\vartheta_0,\tilde{\tau}_0}\lrb{Z\tilde{Y}}}{\PE_{\vartheta_0,\tilde\tau_0}[Z]}
  \end{align*}
  where the last equality follows from the fact that under
  $\PP_{\vartheta^*,\tilde\tau_0}$ or
  $\PP_{\vartheta_0,\tilde\tau_0}$, $Z$ has the density $\tilde
  \varphi$ with respect to $\check
  \tau_0$.  Similarly, using the second equality in
  \eqref{eq:cond:elliptique}, then the tower property and the identity
  \eqref{eq:gaussian-expo-setting-final-update-eq} applied to the
  second component of $S(Y)=(Y,-YY^T/2)$
  \begin{align*}
    \Sigma^*&=\PE_{\vartheta^*,\tilde\tau_0}[Z (\Sigma^*/Z)]=\PE_{\vartheta^*,\tilde\tau_0}[Z YY^T]-\PE_{\vartheta^*,\tilde\tau_0}[Z] m^* (m^*)^T\\
    &=\PE_{\vartheta_0,\tilde{\tau}_0}\lrb{Z\tilde{Y}\tilde{Y}^T}-\PE_{\vartheta^*,\tilde\tau_0}[Z] m^* (m^*)^T
  \end{align*}
This proves \eqref{eq:update-mean-elliptical} and \eqref{eq:update-cov-elliptical}. 
\end{enumerate}    
\end{proof}  
\begin{rem}\label{rem:elliptical-student}
  The probability density function~(\ref{eq:eliptical-family}) is
  called an elliptical distribution. In particular if
  $\check{\tau}_0$ is the $\chi^2$ distribution with
  $\nut_0$ degrees of freedom, then it is the density of the Student's
  $t$ distribution with parameter
  $(m,\Sigma,\nut_0)$.  In this case we have
  $$
  g_{\check{\tau}_0} (u,v)
  =\frac{(\nut_0/2)^{\nut_0/2}\Gamma(\nut_0/2+u)}{(\nut_0/2+v)^{\nut_0/2+u}\,\Gamma(\nut_0/2)}\;,  
  $$
  and
  $\tilde{\mathrm{m}}_{\vartheta_0,\tau_0}$ and
  $\tilde{\mathrm{s}}_{\vartheta_0,\tau_0}$
  in~(\ref{eq:gaussian-expo-setting-m-cond-def})
  and~(\ref{eq:gaussian-expo-setting-s-cond-def}) read, for any $\epsilon_0\in(0,\nut_0/2]$,
  \begin{align*}
  &  \tilde{\mathrm{m}}_{\vartheta_0,\tau_0}(y)=\frac{\nut_0/2+d/2}{\nut_0/2+q_{\vartheta_0}(y)}=O(1)\;,\\
  &  \tilde{\mathrm{s}}_{\vartheta_0,\tau_0}(y)=(\nut_0/2+d/2)\log\lr{\frac{\nut_0/2+q_{\vartheta_0}(y)}{\nut_0/2-\epsilon_0+q_{\vartheta_0}(y)}}=O(1)\;,
  \end{align*}
  so that Condition~(\ref{eq:cond-checkVarphi-linear-mixture-elliptical}) boils
  down to $\int_\Yset \check{\varphi}(y)\,\|y\|^2\; \nu(\rmd y)<\infty$.  
\end{rem}

The following lemma is used in \Cref{ex:student-mixture}.
\begin{lem}\label{lem:kappa-def-invertible}
  Define, for all $x\in(0,\infty)$,
  $$\kappa(x)=\log(x)+\Gamma'(x)/\Gamma(x)\;.$$
  Then
  $\kappa$ is increasing and bijective from $(0,\infty)$ to
  $\rset$. 
\end{lem}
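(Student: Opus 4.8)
The plan is to study the function $\kappa(x)=\log(x)+\psi(x)$ where $\psi=\Gamma'/\Gamma$ is the digamma function, and establish monotonicity and surjectivity separately, then conclude bijectivity.

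First I would prove that $\kappa$ is increasing. The derivative is $\kappa'(x) = \frac1x + \psi'(x)$, so it suffices to show $\psi'(x) > -\frac1x$ for all $x>0$. Here $\psi'$ is the trigamma function, for which there is the classical series representation $\psi'(x) = \sum_{n\geq 0} \frac{1}{(x+n)^2}$. In particular $\psi'(x) > \frac{1}{x^2} > 0 > -\frac1x$, so $\kappa'(x)>0$ on $(0,\infty)$ and $\kappa$ is strictly increasing. (Alternatively, one can use the integral representation $\psi'(x) = \int_0^\infty \frac{t\,e^{-xt}}{1-e^{-t}}\,dt$, which is manifestly positive; either way the inequality is immediate and no delicate estimate is needed.)

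Next I would compute the two limits. As $x\to\infty$, the asymptotic expansion $\psi(x) = \log(x) - \frac{1}{2x} + o(1/x)$ gives $\kappa(x) = 2\log(x) - \frac{1}{2x} + o(1/x) \to +\infty$. As $x\to 0^+$, we use $\psi(x) = -\frac1x - \gamma + o(1)$ (the simple pole of $\psi$ at $0$ with residue $-1$, $\gamma$ being the Euler--Mascheroni constant), so $\kappa(x) = \log(x) - \frac1x - \gamma + o(1)$; since both $\log(x)\to -\infty$ and $-\frac1x\to -\infty$, we get $\kappa(x)\to -\infty$. Since $\kappa$ is continuous on $(0,\infty)$ (being a sum of $\log$ and the real-analytic digamma function on the positive axis), the intermediate value theorem together with strict monotonicity yields that $\kappa$ is a bijection from $(0,\infty)$ onto $\rset$.

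I do not anticipate a genuine obstacle here: the statement is a standard fact about the digamma function, and the only thing to be a little careful about is citing or recalling the correct series/asymptotic representations of $\psi$ and $\psi'$ rather than rederiving them. If one wants a fully self-contained argument avoiding the asymptotic expansion, the $x\to\infty$ limit can instead be obtained from $\psi(x)\geq \log(x) - \frac1x$ (a known lower bound, e.g.\ from $\psi(x) = \log x - \frac{1}{2x} - \int_0^\infty(\frac12 - \frac1t + \frac{1}{e^t-1})e^{-xt}dt$ with the integral bounded), giving $\kappa(x)\geq 2\log x - \frac1x\to\infty$; and the $x\to 0^+$ limit from the recurrence $\psi(x) = \psi(x+1) - \frac1x$ with $\psi(x+1)$ bounded near $0$, giving $\kappa(x) = \log x - \frac1x + \psi(x+1)\to -\infty$. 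This second route uses only elementary facts and is perhaps the cleanest to write out.
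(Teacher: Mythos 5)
Your proof is correct and follows essentially the same route as the paper, which simply invokes the fact that the digamma function is increasing from $(0,\infty)$ onto $\rset$ (so that adding the increasing function $\log$ preserves strict monotonicity and the limits $-\infty$ and $+\infty$ at the endpoints). Your version merely spells this out in detail via positivity of the trigamma function and the endpoint asymptotics of $\psi$, which is fine but not a different argument.
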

\begin{proof}
  This result follows from the fact that the digamma function
  $x\mapsto\Gamma'(x)/\Gamma(x)$ is increasing   from $(0,\infty)$ to
  $\rset$. 
\end{proof}

\begin{proof}[Proof of \Cref{ex-thm:student-mixture}]
We apply \Cref{thm:EM:MixtureModel} in the setting of~\Cref{ex:student-mixture}. Condition~(\ref{eq:posMixtureW})
holds by \Cref{thm:WeightsMixture} and the
update~\ref{item:update_weights_student} in \Cref{ex:student-mixture}.

The remainder of the proof consists in proving
Condition~(\ref{eq:posMixtureP}) for given $n\geq1$ and
$j=1,\dots,J$. We apply
\Cref{prop:elleptical}, whose Assertion~\ref{item:prop:elleptical:assertion1} with
\Cref{rem:elliptical-student} provides an ELM family $k^{(1)}$
such that, for any
$\vartheta=(m,\Sigma)\in\rset^d\times\mathcal{M}_{>0}(d)$ and $\nut>0$, $\check{k}(\vartheta,y)$
in~(\ref{eq:eliptical-family}) with
$\check{\tau}_0=\check{\tau}_{\nut}$ given
by~(\ref{eq:update-student-tau-def}) is
$k^{(1)}((\vartheta,\tau_0),y)$ but also 
$k(\theta,y)$ in \Cref{ex:student-mixture} for $\theta=(m,\Sigma,a)$. 
We will use
\Cref{prop:argmax-linear-mixture-exponential} to this ELM family with $\check{\varphi}=\normratiot$, 
$\theta=(\vartheta,\tau)$ defined with
$\vartheta=(m_{j,n+1},\Sigma_{j,n+1})$ and $\tau$ the measure on
obtained by pushing forward $\check{\tau}_{\nut_{j,n+1}}$ through the
mapping $z\mapsto z\mathrm{Id}$ and
$\theta_0=(\vartheta_0,\tau_0)$ defined similarly but using the parameters
$(m_{j,n},\Sigma_{j,n},\nut_{j,n})$. Then
conclusion~(\ref{eq:generic-decrease-theta-cond}) of
\Cref{prop:argmax-linear-mixture-exponential} is exactly
Condition~(\ref{eq:posMixtureP}) with $k$ as in
\Cref{ex:student-mixture}. Hence, to conclude the proof we only need
to show that \Cref{prop:argmax-linear-mixture-exponential} applies in
this setting. Namely, we need to
prove~(\ref{eq:update-cond-tau-general_mixture})
and~(\ref{eq:update-cond-theta-general_mixture}), which respectively
read
\begin{align}
  \label{eq:update-cond-tau-general_mixture-student-}
  &     \int_{\Yset} \lr{\int_0^{\infty}
               \varphi_{j,n}(z, y)\;
                \log
               \lr{\frac{\rmd\check{\tau}_{\nut_{j,n+1}}}{\rmd\check{\tau}_{\nut_{j,n}}}(z)}\;\check{\tau}_{\nut_{j,n}} (\rmd z)} \rmd y \geq0\;,\\
  \label{eq:update-cond-theta-general_mixture-student}
 &    \int_{\Yset} \lr{\int_{\Zset}
               \varphi_{j,n}(\normop{\ell}, y) 
                \log
               \lr{\frac{k^{(o)}(\ell \circ \changevar(\vartheta),y)}{k^{(o)}(\ell \circ \changevar(\vartheta_0),y)}}
               \;\tau_{\nut_{j,n}} (\rmd \ell)}\rmd y \geq0\;,
\end{align}
where $\varphi_{j,n}$ is defined
by~(\ref{eq:update-student-varphi-def}) and $k^{(o)}$ is the Gaussian
canonical kernel defined by~(\ref{eq:gaussian-expo-setting--q-def}).
We thus conclude with the proof of these two conditions in the
opposite order.

\noindent\textbf{Proof
  of~(\ref{eq:update-cond-theta-general_mixture-student})}
Assertion~\ref{item:prop:elleptical:assertion2}
of \Cref{prop:elleptical} with \Cref{rem:elliptical-student} gives that
the ELM family $k^{(1)}$ satisfies
\ref{asump:Upsilon-image}--\ref{item:condB-ident} with
$\tilde{\mathrm{m}}_{\vartheta_0,\tau_0}$ and
$\tilde{\mathrm{s}}_{\vartheta_0,\tau_0}$ such that
Condition~(\ref{eq:cond-checkVarphi-linear-mixture-elliptical}) boils
down to $\int \check{\varphi}(y)\,\|y\|^2\; \rmd\nu<\infty$. This latter
condition is
satisfied with $\check{\varphi}=\normratiot$ as a consequence
of~(\ref{eq:cond-student-on-p}) by using
\Cref{lem:cond-p-varphi}. Therefore
Condition~(\ref{eq:cond-checkVarphi-linear-mixture-elliptical}) holds and
Assertion~\ref{item:prop:elleptical:assertion3}
of \Cref{prop:elleptical} provides the solution of the argmax
problem~(\ref{eq:argrmax-objective:armax-prob-check-varphi-Q }). A
careful comparison of~(\ref{eq:update-mean-elliptical})
and~(\ref{eq:update-cov-elliptical})
with~(\ref{eq:update-student-m-param})
and~(\ref{eq:update-student-Sigma-param}) tells us that 
$\vartheta=(m_{j,n+1},\Sigma_{j,n+1})$ is the unique argmax
of~(\ref{eq:argrmax-objective:armax-prob-check-varphi-Q
}). Applying~\Cref{cor:monot-decr-general-LM} in this setting finally
gives us~(\ref{eq:update-cond-theta-general_mixture-student}).

\noindent\textbf{Proof
  of~(\ref{eq:update-cond-tau-general_mixture-student-})} Since the left-hand side of~(\ref{eq:update-cond-tau-general_mixture-student-}) is zero for $\nut=\nut_{j,n}$, it suffices to show that $\nut_{j,n+1}$
in~(\ref{eq:update-student-nut-param}) satisfies
\begin{align}
  \label{eq:update-cond-tau-general_mixture-student}
  \nut_{j,n+1}=\argmax_{\nut\in(0,\infty)}     \int_{\Yset} \lr{\int_0^{\infty}
               \varphi_{j,n}(z, y)\;
                \log
               \lr{\frac{\rmd\check{\tau}_{\nut}}{\rmd\check{\tau}_{\nut_{j,n}}}(z)}\;\check{\tau}_{\nut_{j,n}}
    (\rmd z)} \rmd y \;.
\end{align}
By~(\ref{eq:update-student-tau-def}), we have, for all $z>0$,
$$
\log
\lr{\frac{\rmd\check{\tau}_{\nut}}{\rmd\check{\tau}_{\nut_{j,n}}}(z)}
= \frac\nut2\log\lr{\frac \nut2}+\log \Gamma\lr{\frac
\nut2}+\lr{\frac\nut2-\frac{\nut_{j,n}}2}\,\lr{\log z -z}+ C_{j,n}\;,
$$
where $C_{j,n}$ is a positive constant not depending on $\nut$.
The derivative of $x\mapsto x \log\lr{x}+\log \Gamma\lr{x
}$ is $\kappa$ as defined in Step~\ref{item:update_param_student}
of \Cref{ex:student-mixture} and in
\Cref{lem:kappa-def-invertible}. Is is easy to check that
$\int \varphi_{j,n}(z, y)\check{\tau}_{\nut_{j,n}}
    (\rmd z)\rmd y =\int\check{\varphi}(y)\rmd
  y=1$. Thus~(\ref{eq:update-cond-tau-general_mixture-student})
  follows from~(\ref{eq:update-student-nut-param}) with
  \Cref{lem:kappa-def-invertible} if we can show that
  $$
  \int_{\Yset}\lr{\int_0^{\infty} \varphi_{j,n}(z, y)\,\lr{\lrav{\log z}+z}\;\check{\tau}_{\nut_{j,n}}
    (\rmd z)}\rmd y<\infty \;.
  $$
  In fact, we will show that there exists $\epsilon>0$ such that for
  all $t\in(-\epsilon,1]$,
  \begin{equation}
    \label{eq:last-moment-student-easy}
  \int_{\Yset}\lr{\int_0^{\infty} \varphi_{j,n}(z, y)\,z^t\;\check{\tau}_{\nut_{j,n}}
    (\rmd z)}\rmd y<\infty \;,
  \end{equation}
  which indeed implies the previous display. By definition of
  $\varphi_{j,n}$ in~(\ref{eq:update-student-varphi-def}), we find
  that~(\ref{eq:last-moment-student-easy}) is implied by
  $$
  I(t):=\int_{\Yset}\lr{\int_0^{\infty} \normratiot(y) z^{t+d/2}\,\rme^{-z
    q_{j,n}(y)}\,\lr{1+\frac2{\nut_{j,n}}q_{j,n}(y)}^{(\nut_{j,n}+d)/2}\;\check{\tau}_{\nut_{j,n}}
    (\rmd z)}\rmd y<\infty\;,
  $$
  where we set
  $q_{j,n}(y)=\frac12(y-m_{j,n})^T\Sigma_{j,n}(y-m_{j,n})$.
  Since, as in \Cref{rem:elliptical-student}, for all $t>-d/2$,
  $$
  \int_0^{\infty} z^{t+d/2}\,\rme^{-z
    q_{j,n}(y)}\;\check{\tau}_{\nut_{j,n}}
  (\rmd z) %= g_{\check{\tau}_{\nut_{j,n}}}(t+d/2,q_{j,n}(y))
  =\frac{(\nut_{j,n}/2)^{\nut_{j,n}/2}\Gamma(\nut_{j,n}/2+t+d/2)}{(\nut_{j,n}/2+q_{j,n}(y))^{\nut_{j,n}/2+t+d/2}\,\Gamma(\nut_{j,n}/2)}\;,
  $$
  we get that for all  $t>-d/2$, there exists $C'_{j,n}(t)>0$ such
  that 
  $$
   I(t) =
   C'_{j,n}(t)\,\int_{\Yset}\normratiot(y)\,\lr{1+\frac2{\nut_{j,n}}q_{j,n}(y)}^{-t}\;\rmd
   y\;.
   $$
   Since $q_{j,n}(y)$ is a quadratic form and we already checked that
   $\int \normratiot(y)\,(1+\|y\|^2)\; \rmd\nu<\infty$ in the proof   of~(\ref{eq:update-cond-theta-general_mixture-student}), we finally find that
   $I(t)<\infty$ if $t>-d/2$ and $t\geq-2$, which concludes the proof of~(\ref{eq:last-moment-student-easy}).
\end{proof}

\section{Additional Numerical Experiments}

\label{sec:addNumExp}

In this section we provide further plots based on the numerical experiments from \Cref{subsec:Exp}.

%\subsection{Additional Plots for the Case $\eta = 0.$}
%\label{app:subsec:expEtaZero} 
%\subsection{Comparing the RGD and the MG approaches with $\eta = 0.$ with the Is-unif sampler}

\begin{figure}[h]
  \centering
  \begin{tabular}{ccc}
    & $J = 10$ & $J = 50$ \\
    %\toprule
    \ref{itemExEWGMM} \vspace{-0.5cm} & \\
    & \includegraphics[width=7cm]{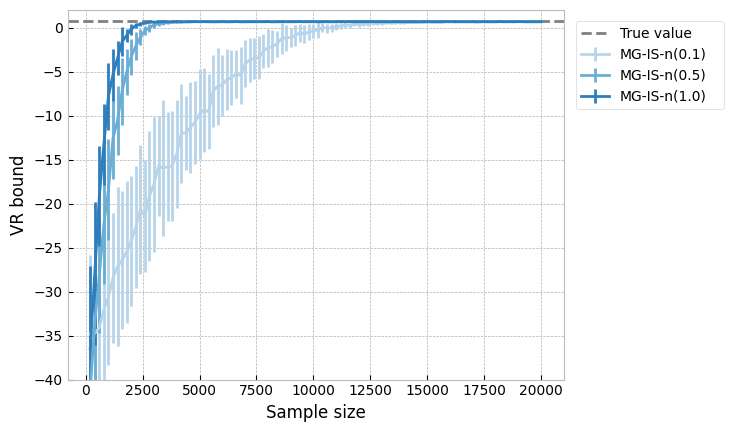} & \includegraphics[width=7cm]{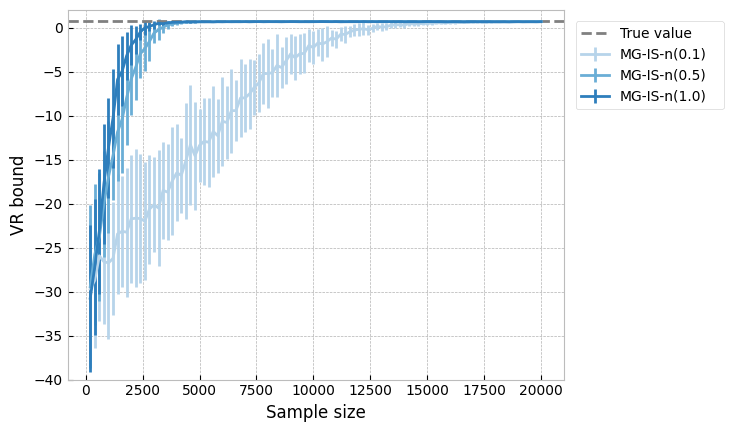} \\
    %\midrule
    \ref{itemExUWGMM} \vspace{-0.5cm} & \\ 
    &\includegraphics[width=7cm]{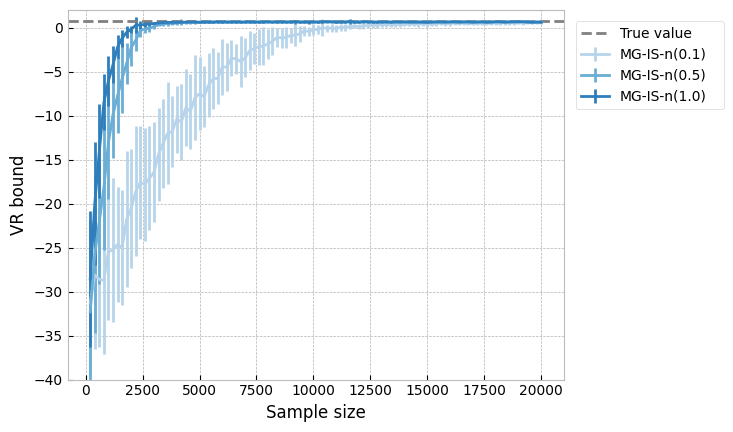} & \includegraphics[width=7cm]{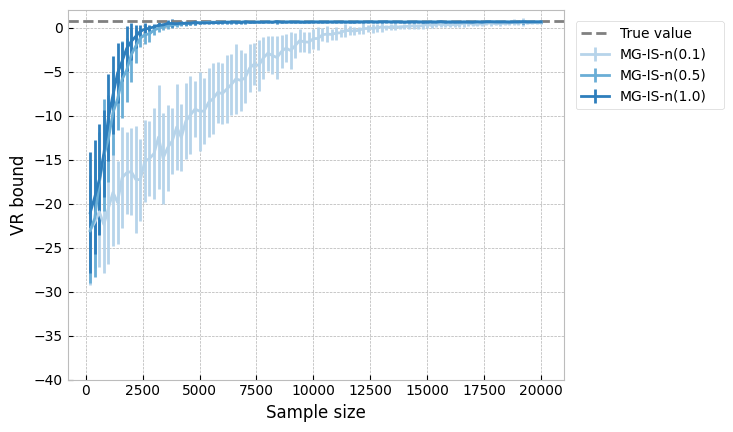} \\
    %\midrule
    \ref{itemExEWSMM} \vspace{-0.5cm} & \\
    &\includegraphics[width=7cm]{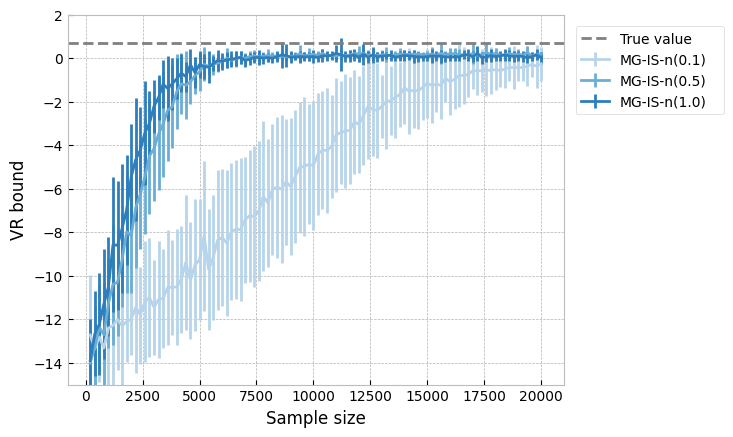} & \includegraphics[width=7cm]{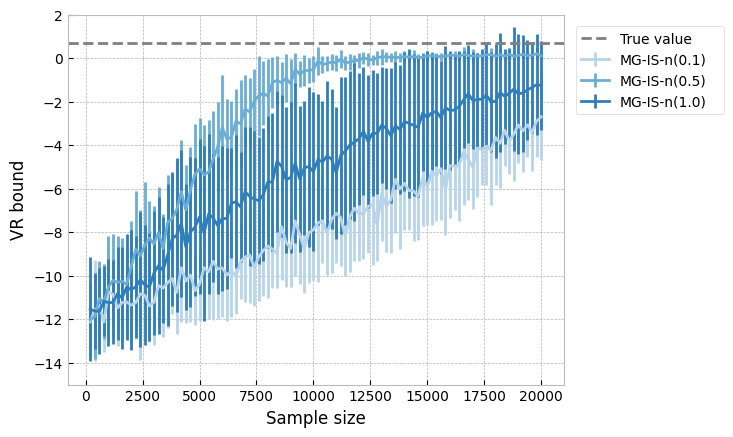} \\
    %\bottomrule
    \end{tabular}
    
  \caption{Error bounds for the Monte Carlo estimate of the VR Bound in the MG-IS-n approach (fixed mixture weights) when considering each of the target distributions \ref{itemExEWGMM}, \ref{itemExUWGMM} and \ref{itemExEWSMM}. } % Here, $\alpha = 0.2$, $d = 16$, $M = 200$, $\cte_n = 0$ and $q_n$ is the uniform sampler at all time $n$ and we investigate the cases $\gamma \in \lrcb{0.1, 0.5, 1.}$ and $J = \lrcb{10, 50}$.
  \label{fig:eta0appEBMGISn}
\end{figure}

\begin{figure}[h]
  \centering
  \begin{tabular}{ccc}
    & $J = 10$ & $J = 50$ \\
    %\toprule
    \ref{itemExEWGMM} \vspace{-0.5cm} & \\
    & \includegraphics[width=7cm]{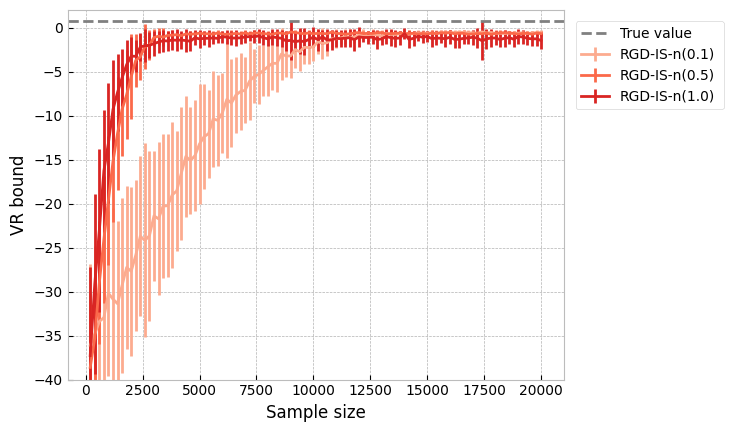} & \includegraphics[width=7cm]{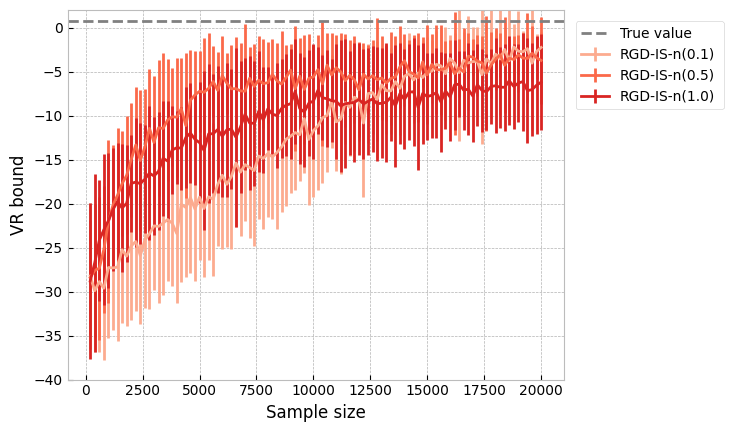} \\
    %\midrule
    \ref{itemExUWGMM} \vspace{-0.5cm} & \\ 
    &\includegraphics[width=7cm]{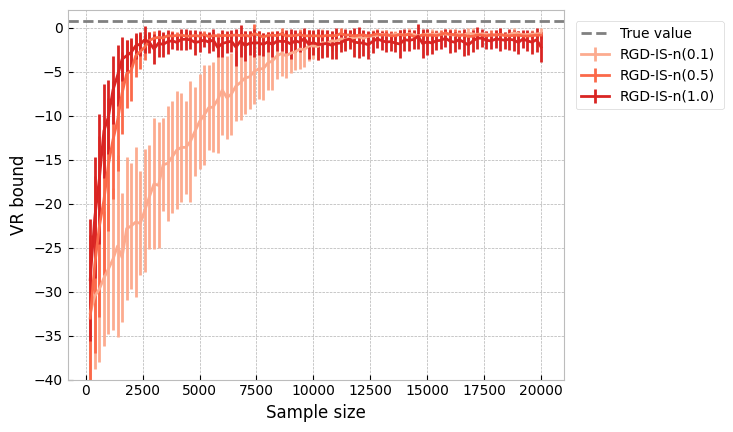} & \includegraphics[width=7cm]{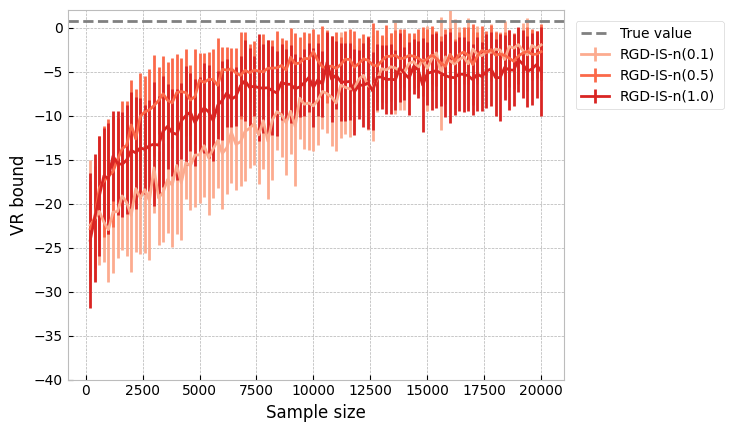} \\
    %\midrule
    \ref{itemExEWSMM} \vspace{-0.5cm} & \\
    &\includegraphics[width=7cm]{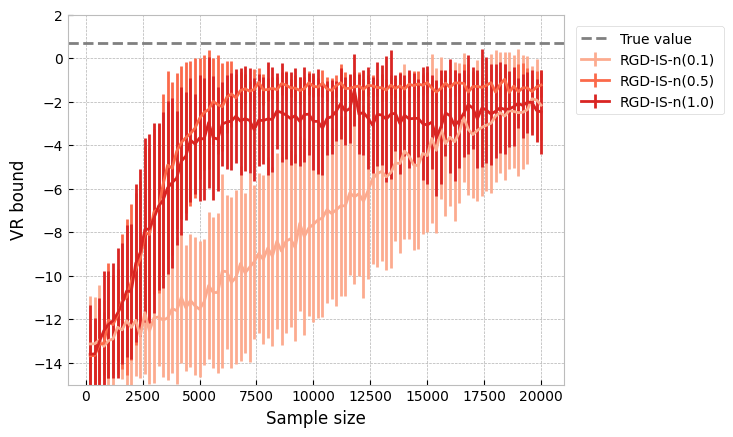} & \includegraphics[width=7cm]{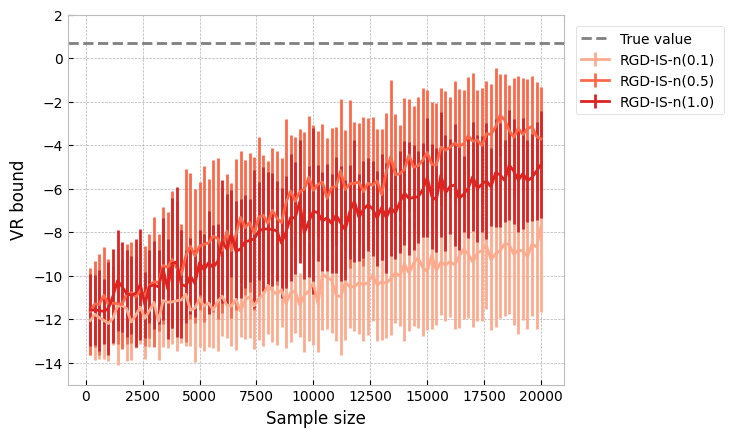} \\
    %\bottomrule
    \end{tabular}
    
  \caption{Error bounds for the Monte Carlo estimate of the VR Bound in the RGD-IS-n approach (fixed mixture weights) when considering each of the target distributions \ref{itemExEWGMM}, \ref{itemExUWGMM} and \ref{itemExEWSMM}.} % Here, $\alpha = 0.2$, $d = 16$, $M = 200$, $\cte_n = 0$ and $q_n$ is the uniform sampler at all time $n$ and we investigate the cases $\gamma \in \lrcb{0.1, 0.5, 1.}$ and $J = \lrcb{10, 50}$.
  \label{fig:eta0appEBRGDISn}
\end{figure}

%~

%\newpage

%\subsection{Comparing the RGD and the MG Approaches with $\eta \in \lrcb{0.05, 0.5}$}
%\label{app:subsec:expEtaBiggerZero}

%We provide additional plots regarding the RGD and the MG approaches with $\eta \in \lrcb{0.05, 0.5}$.

\begin{figure}[h]
  \centering
  \begin{tabular}{ccc}
    & $J = 10$ & $J = 50$ \\
    %\toprule
    \ref{itemExEWGMM} \vspace{-0.5cm} & \\
    & \includegraphics[width=7cm]{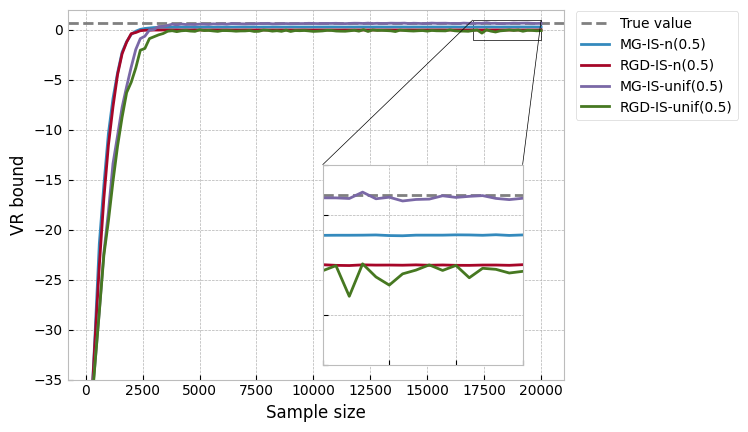} & \includegraphics[width=7cm]{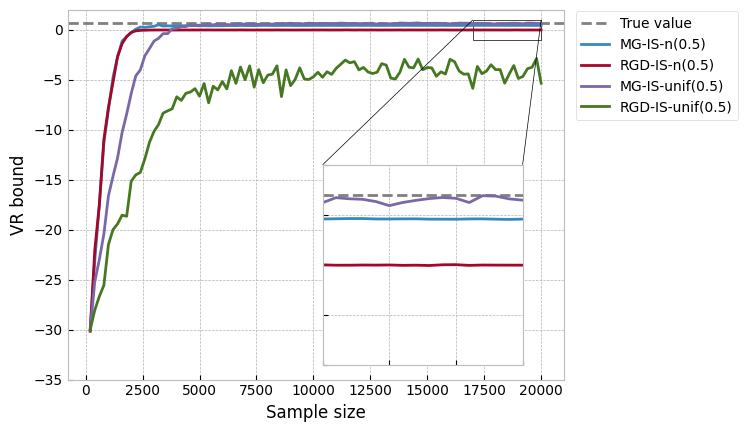} \\
    %\midrule
    \ref{itemExUWGMM} \vspace{-0.5cm} & \\
    &\includegraphics[width=7cm]{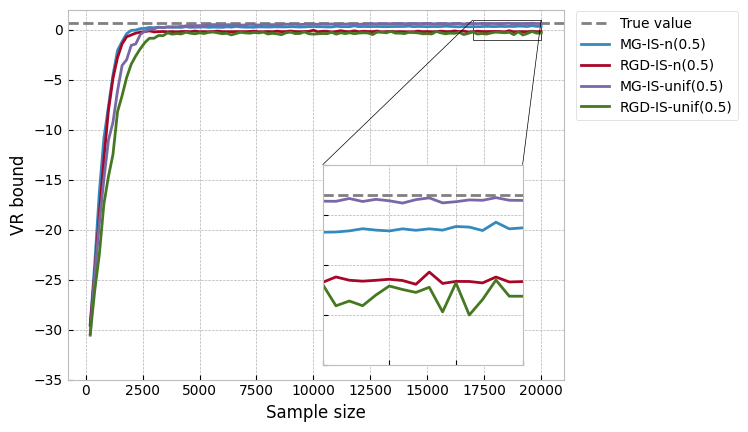} & \includegraphics[width=7cm]{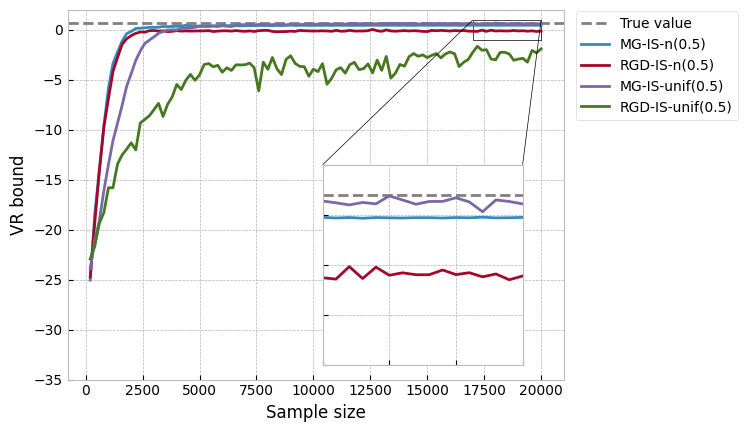} \\
    %\midrule
    \ref{itemExEWSMM} \vspace{-0.5cm} & \\
    &\includegraphics[width=7cm]{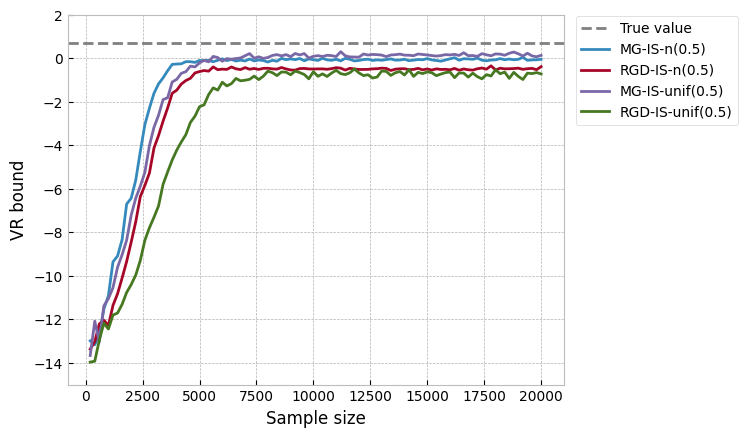} & \includegraphics[width=7cm]{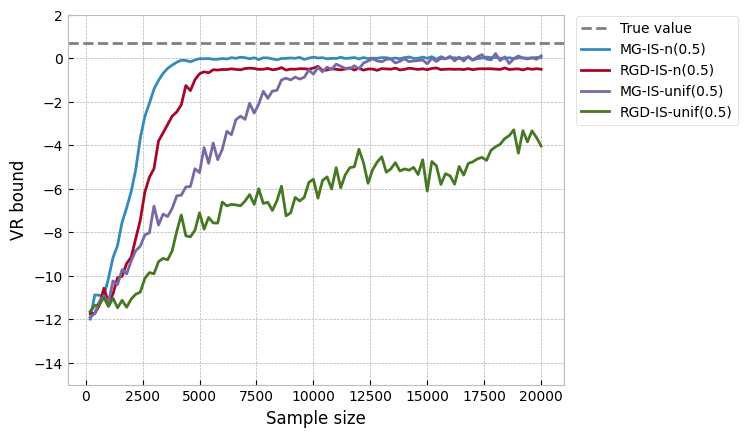} \\    %\bottomrule
    \end{tabular}
  
    \caption{Monte Carlo estimate of the VR Bound for the RGD and the MG approaches ($\eta = 0.05$) when considering each of the target distributions \ref{itemExEWGMM}, \ref{itemExUWGMM} and \ref{itemExEWSMM}.}
    \label{fig:eta0dot05} %Here, $\alpha = 0.2$, $d = 16$, $M = 200$, $J \in \lrcb{10, 50}$, , $\gamma = 0.5$ and $\cte_n = 0$ at all time $n$.
  \end{figure}

  \begin{figure}[h]
      \centering
      \begin{tabular}{ccc}
        & $J = 10$ & $J = 50$ \\
        %\toprule
        \ref{itemExEWGMM} \vspace{-0.5cm} & \\
    & \includegraphics[width=7cm]{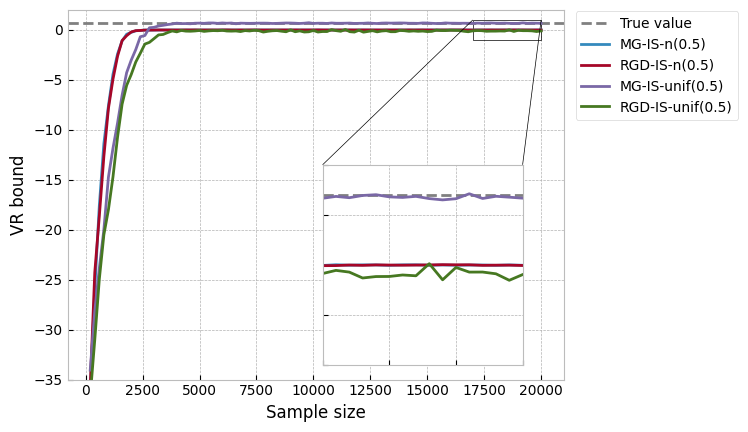} & \includegraphics[width=7cm]{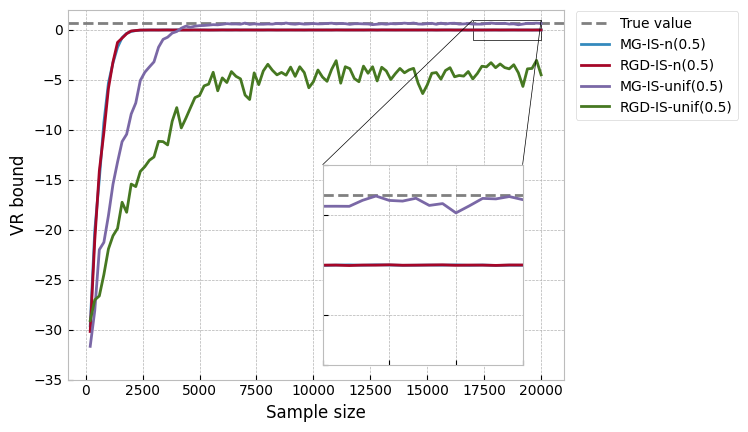} \\
    %\midrule
    \ref{itemExUWGMM} \vspace{-0.5cm} & \\
    &\includegraphics[width=7cm]{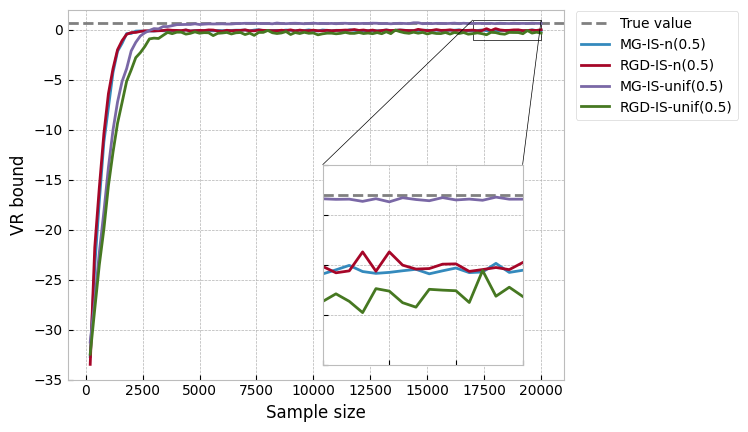} & \includegraphics[width=7cm]{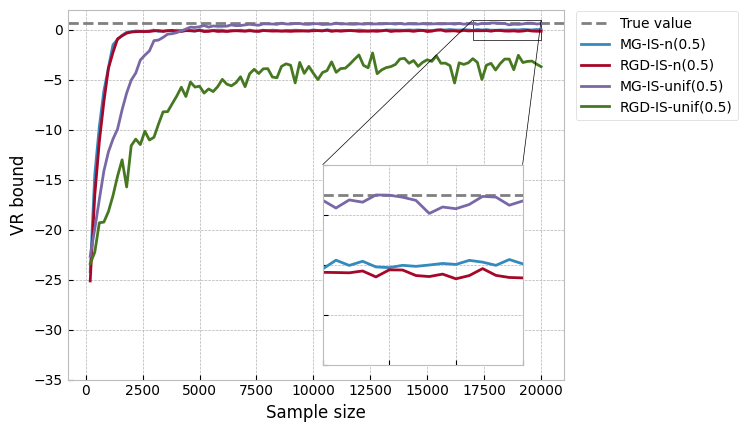} \\
    %\midrule
    \ref{itemExEWSMM} \vspace{-0.5cm} & \\
    &\includegraphics[width=7cm]{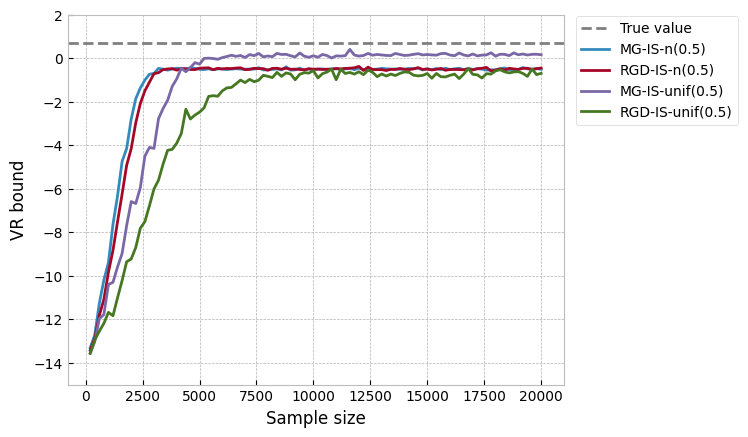} & \includegraphics[width=7cm]{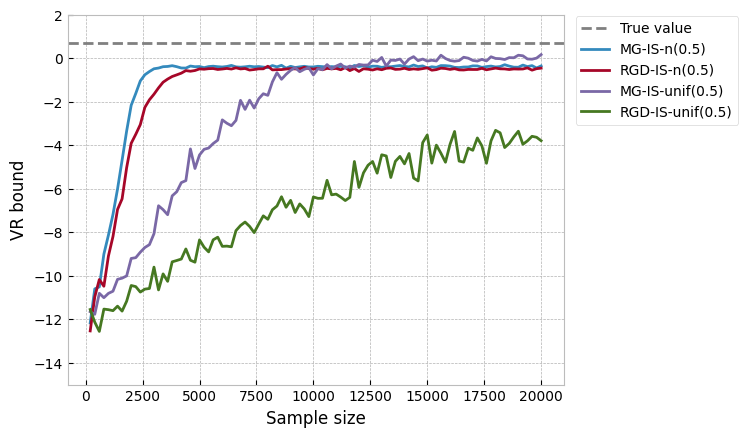} \\    %\bottomrule
    \end{tabular}

        \caption{Monte Carlo estimate of the VR Bound for the RGD and the MG approaches ($\eta = 0.5$) when considering each of the target distributions \ref{itemExEWGMM}, \ref{itemExUWGMM} and \ref{itemExEWSMM}.} %Here, $\alpha = 0.2$, $d = 16$, $M = 200$, $J \in \lrcb{10, 50}$, , $\gamma = 0.5$ and $\cte_n = 0$ at all time $n$.
        \label{fig:eta0dot5}
      \end{figure}

      \begin{table}
        \centering
      \begin{tabular}{clccccccc}
            \toprule
          & & & $J = 10$ & & & $J = 50$ & \\  
          & & $\gamma = 0.1$ &  $\gamma = 0.5$ &  $\gamma = 1.0$ & $\gamma = 0.1$ & $\gamma = 0.5$ & $\gamma = 1.0$  \\
             \midrule
          \ref{itemExEWGMM} & RGD-IS-n($\gamma$) & 0.372 & 0.510 & 0.384 & -0.616 & -0.713 & -0.778   \\
          & MG-IS-n($\gamma$) &1.104 & 1.074 & 0.387 & 1.135 & -0.077 & -0.060   \\
          & RGD-IS-unif($\gamma$) & 0.359 & 0.469 & 0.458 & -0.688 & -0.670 & -0.583 \\
          & MG-IS-unif($\gamma$) & \textbf{-0.200} & \textbf{-0.229} & \textbf{-0.515} & \textbf{-1.500} & \textbf{-1.462} & \textbf{-1.246} \\
          \midrule
          \ref{itemExUWGMM} & RGD-IS-n($\gamma$) & -0.025 & -0.056 & -0.087 & -1.027 & -0.997 & -0.969 \\
          & MG-IS-n($\gamma$) & -0.270 & -0.126 & 0.235 & -0.269 & -0.417 & -0.487  \\
          & RGD-IS-unif($\gamma$) & -0.121 & -0.111 & 0.052 & -1.097 & -0.966 & -0.883  \\
          & MG-IS-unif($\gamma$) & \textbf{-1.120} & \textbf{-0.938} & \textbf{-0.957} & \textbf{-1.764} & \textbf{-1.889} & \textbf{-1.192}  \\
          \midrule
          \ref{itemExEWSMM} & RGD-IS-n($\gamma$) & -0.329 & -0.197 & -0.238 & -1.691 & -1.612 & -1.637 \\
          & MG-IS-n($\gamma$) & 1.101 & 0.758 & 0.524 & 0.181 & -0.181 & 0.893 \\
          & RGD-IS-unif($\gamma$) & -0.370 & -0.224 & -0.212 & -1.708 & -1.627 & \textbf{-1.649} \\
          & MG-IS-unif($\gamma$) &\textbf{-1.211} & \textbf{-1.313} & \textbf{-1.083} & \textbf{-2.013} & \textbf{-1.882} & {-0.491}  \\
             \bottomrule
      \end{tabular}
      \caption{LogMSE for the RGD and the MG approaches ($\eta = 0.1$) when considering each of the target distributions \ref{itemExEWGMM}, \ref{itemExUWGMM} and \ref{itemExEWSMM}.}   \label{table:logMSEEtaO1}
      % Here, $\alpha = 0.2$, $d = 16$, $M = 200$, , $\cte_n = 0$ at all times $n$ and we investigate the cases $\gamma \in \lrcb{0.1, 0.5, 1.}$ and $J \in \lrcb{10, 50}$.
      \end{table}

      \begin{table}
        \centering
      \begin{tabular}{clcccccccc}
            \toprule
        &  & &  $J = 10$  & & & $J = 50$ &  \\  
        &  &  $\eta = 0.05$ &  $\eta = 0.1$ & $\eta = 0.5$ &  $\eta = 0.05$ &  $\eta = 0.1$ & $\eta = 0.5$  \\
        \midrule
        \ref{itemExEWGMM} & RGD-IS-n($\gamma$)  & 0.045 & 0.510 & 1.299 & -1.355 & -0.713 & 0.924 \\
          & MG-IS-n($\gamma$)  & 0.087 & 1.074 & 1.343 &  -1.205 & -0.077 & 1.329 \\
          & RGD-IS-unif($\gamma$)  & -0.018 & 0.469 & 1.328 & -1.385 & -0.670 & 0.928 \\
          & MG-IS-unif($\gamma$) & \textbf{-1.244} & \textbf{-0.229} & \textbf{1.100} & \textbf{-2.524} & \textbf{-1.462} & \textbf{0.309} \\
          \midrule
          \ref{itemExUWGMM} & RGD-IS-n($\gamma$) & -0.096 & -0.056 & 0.522 &  -1.509 & -0.997 & 0.542 \\
          & MG-IS-n($\gamma$) &  -0.629 & -0.126 & 0.100 & -1.430 & -0.417 & 0.348 \\
          & RGD-IS-unif($\gamma$)  & -0.195 & -0.111 & 0.489 & -1.542 & -0.966 & 0.529 \\
          & MG-IS-unif($\gamma$)  & \textbf{-1.814} & \textbf{-0.938} & \textbf{-0.149} & \textbf{-1.711} & \textbf{-1.889} & \textbf{-0.282}  \\
          \midrule
          \ref{itemExEWSMM} & RGD-IS-n($\gamma$)  & -0.091 & -0.197 & 0.339 & -1.596 & -1.612 & -0.282 \\
          & MG-IS-n($\gamma$) & -0.772 & 0.758 & 1.358 & -0.878 & -0.181 & 0.927 \\
          & RGD-IS-unif($\gamma$)  & -0.113 & -0.224 & 0.322 & -1.611 & -1.627 & -0.300 \\
          & MG-IS-unif($\gamma$) & \textbf{-1.608} & \textbf{-1.313} & \textbf{-0.253} & \textbf{-1.879} & \textbf{-1.882} & \textbf{-0.716} \\
             \bottomrule
      \end{tabular}
      \caption{LogMSE for the RGD and the MG approaches ($\gamma = 0.5$) when considering each of the target distributions \ref{itemExEWGMM}, \ref{itemExUWGMM} and \ref{itemExEWSMM}.}   \label{table:logMSEVaryingEta} %Here, $\alpha = 0.2$, $d = 16$, $M = 200$,, $\cte_n = 0$ at all times $n$ and we investigate the cases $\eta \in \lrcb{0.05, 0.1, 0.5}$ and $J \in \lrcb{10, 50}$.
      \end{table}
      
\begin{figure}[h]
  \centering
  \begin{tabular}{ccc}
    $\eta$ \vspace{1.5cm} & $J = 50$ & $J = 100$ \\
    %\toprule
    $0.1$ \vspace{-1.7cm} & & \\
    & \includegraphics[width=7cm]{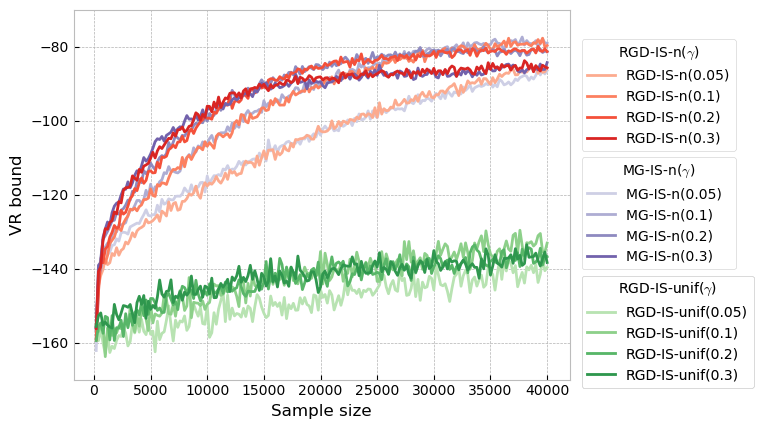} & \includegraphics[width=7cm]{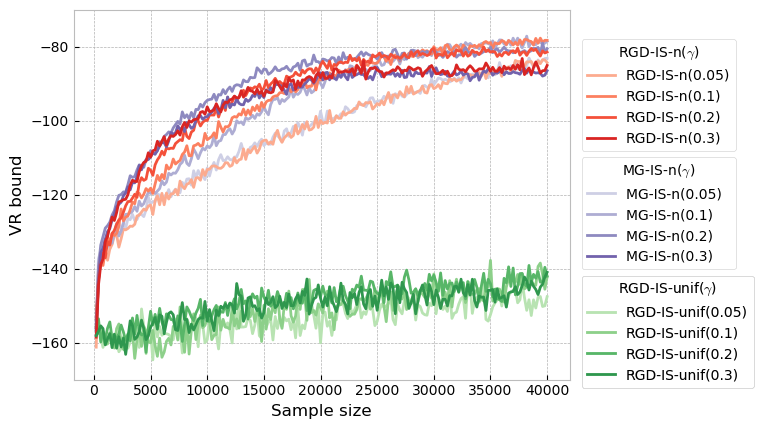} \vspace{1.1cm} \\
    %\midrule
    $0.5$ \vspace{-1.7cm} & & \\
    & \includegraphics[width=7cm]{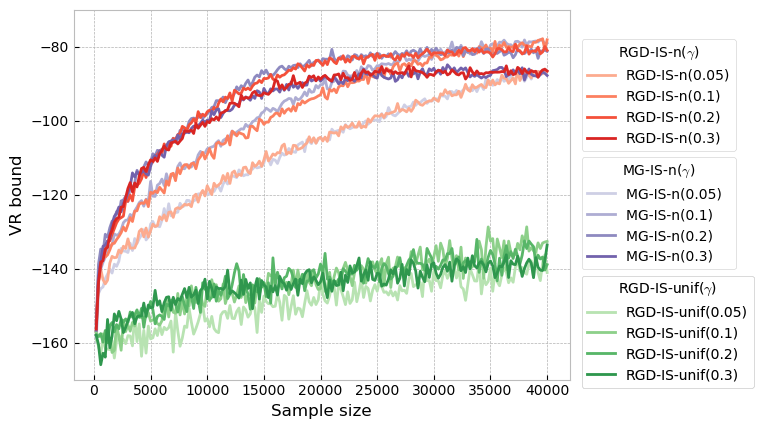} & \includegraphics[width=7cm]{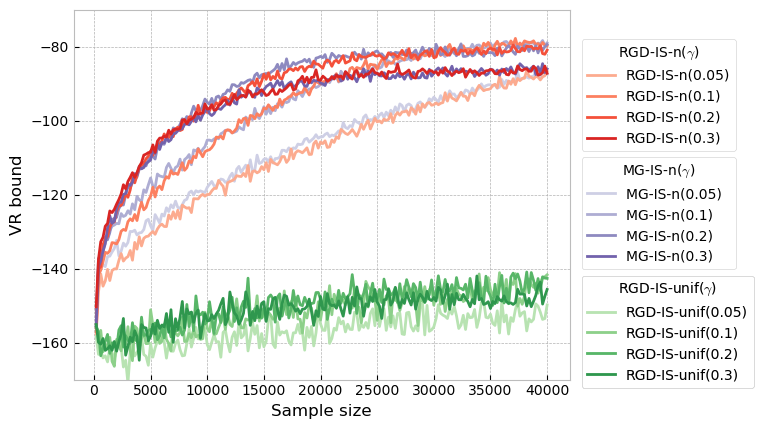} \\
    %\midrule
    %\ref{itemExEWSMM} \vspace{-0.5cm} & \\ &\includegraphics[width=7cm]{figs/dim16eta0.1main/MG_stud_mixture/alpha0.2/alpha0.2sigma1.0Dim16renyiM200kappa0.0J10eta0.1gamma0.5.png} & \includegraphics[width=7cm]{figs/dim16eta0.1main/MG_stud_mixture/alpha0.2/alpha0.2sigma1.0Dim16renyiM200kappa0.0J50eta0.1gamma0.5.png} \\    %\bottomrule
    \end{tabular}

  \caption{Monte Carlo estimate of the VR Bound for the RGD and the MG approaches when considering the Bayesian Logistic Regression on the Covertype data set.} %Here, $\alpha = 0.2$, $d = 16$, $M = 200$, $J \in \lrcb{10, 50}$, $\eta = 0.1$, $\gamma = 0.5$ and $\cte_n = 0$ at all times $n$.
  \label{fig:blr:app}
\end{figure}

%%% Local Variables:
%%% mode: latex
%%% ispell-local-dictionary: "american"
%%% TeX-master: "ddr_jmlr_revision.tex"
%%% End:

\bibliography{ddr_2023_final}
%\nocite{*}
%\bibliographystyle{numeric}

\end{document}